\ifdefined\pdfminorversion\pdfminorversion=7\fi
\documentclass[12pt,a4paper]{article}

\usepackage{amsmath,amssymb,amsthm}
\usepackage{mathtools}
\usepackage{booktabs}
\usepackage{multirow}
\usepackage{enumitem}
\usepackage{graphicx}
\graphicspath{{output/}}
\usepackage{float}
\usepackage[round]{natbib}
\usepackage{setspace}
\usepackage{xcolor}
\usepackage[normalem]{ulem}
\usepackage{placeins}
\definecolor{deepblue}{rgb}{0,0,0.35}

\usepackage{restud_20260812}
\usepackage[colorlinks=true,linkcolor=deepblue,citecolor=deepblue,urlcolor=deepblue]{hyperref}

\newcommand{\E}{\mathbb{E}} 
\DeclareMathOperator{\Var}{Var}
\DeclareMathOperator{\Cov}{Cov}
\newcommand{\LATE}{\mathrm{LATE}}
\newcommand{\Wald}{\mathrm{Wald}}
\newcommand{\MTE}{\mathrm{MTE}}
\newcommand{\ind}{\mathbf{1}}
\newcommand{\bZ}{\boldsymbol{Z}}
\newcommand{\bz}{\boldsymbol{z}}
\newcommand{\bg}{\boldsymbol{g}}
\newcommand{\EGMM}{\mathrm{EGMM}}
\newcommand{\CUE}{\mathrm{CUE}}
\newcommand{\TwoS}{\mathrm{TSGMM}}

\newcommand{\norm}[1]{\left\lVert #1 \right\rVert}
\newcommand{\inner}[2]{\left\langle #1,#2\right\rangle}
\newcommand{\PLATE}{\mathcal{P}_{\LATE}}
\newcommand{\TLATE}{\mathcal{G}}
\newcommand{\TLATEbar}{\overline{\mathcal{G}}}
\newcommand{\Feas}{\mathcal{F}}
\newcommand{\Riesz}{\boldsymbol\Xi}
\newcommand{\eif}{\phi^{\mathrm{eff}}}
\newcommand{\diag}{\mathrm{diag}}
\newcommand{\supp}{\mathrm{supp}}
\newcommand{\bWald}{\boldsymbol{\Wald}}

\newcommand{\bomega}{\boldsymbol{\omega}}
\newcommand{\blambda}{\boldsymbol{\lambda}}
\newcommand{\bpsi}{\boldsymbol{\psi}}
\newcommand{\bgamma}{\boldsymbol{\gamma}}
\newcommand{\bphi}{\boldsymbol{\phi}}
\newcommand{\bSigmaZ}{\boldsymbol{\Sigma}_Z}
\newcommand{\bOmega}{\boldsymbol{\Omega}}
\newcommand{\bGamma}{\boldsymbol{\Gamma}}
\newcommand{\bA}{\mathbf{A}}
\newcommand{\bW}{\boldsymbol{W}}
\newcommand{\bG}{\mathbf{G}}
\newcommand{\bM}{\mathbf{M}}

\newcommand{\bUpsilon}{\boldsymbol{\Upsilon}}
\newcommand{\bSigma}{\boldsymbol{\Sigma}}
\newcommand{\bnu}{\boldsymbol{\nu}}

\newcommand{\bkappa}{\boldsymbol{\kappa}}
\newcommand{\bc}{\mathbf{c}}

\newcommand{\bK}{\mathbf{K}}
\newcommand{\bu}{\mathbf{u}}
\newcommand{\bv}{\mathbf{v}}

\newcommand{\bw}{\mathbf{w}}
\newcommand{\bd}{\mathbf{d}}
\DeclareMathOperator*{\argmin}{arg\,min}

\newtheorem{assumption}{Assumption}
\newtheorem{lemma}{Lemma}
\newtheorem{proposition}[lemma]{Proposition}
\newtheorem{corollary}[lemma]{Corollary}
\newtheorem{theorem}[lemma]{Theorem}
\newtheorem{definition}{Definition}

\newtheorem{condition}{Condition}
\theoremstyle{remark}
\newtheorem{remark}{Remark}

\begin{document}

\bibliographystyle{econsoc}

{
\onehalfspacing
\title{\textbf{When Is GMM Actually LATE? Weighting Matrices and Causal Interpretation in Overidentified IV}\footnote{We thank Obleo Demandre, Vadim Marmer, Paul Schrimpf, and Kyungchul Song for helpful comments and suggestions. This paper previously circulated under the title ``Representativeness and Efficiency in Overidentified IV.''}}
\author{
Chun Pang Chow\\
Vancouver School of Economics\\
University of British Columbia\\
alexccp@student.ubc.ca \and
Hiroyuki Kasahara\\
Vancouver School of Economics\\
University of British Columbia\\
hkasahara@mail.ubc.ca
}
\date{August 12, 2026}
\maketitle

\begin{abstract}
Under heterogeneous treatment effects, the weighting matrix of overidentified IV-GMM selects the estimand, not just its precision. We characterize the selection exactly: for any parameter-free weighting-matrix map, the GMM estimand is a sum-to-one combination of instrument-specific Wald estimands, with closed-form weights and an exact non-negativity condition; efficient weighting adds a heterogeneity penalty. Continuously updated GMM exits this class through a variance-score remainder. Under positive regression dependence each Wald estimand is a convex combination of compliance-type LATEs, and under maintained validity a $J$-rejection indicates unequal Wald estimands rather than invalid instruments. We propose Representativeness Targeting (RT), which estimates a researcher-specified convex combination of the Wald estimands without imposing a common coefficient across moments; RT weights compliance types nonnegatively, attains the local asymptotic minimax bound for its target, and extends to unreachable policy targets via projection with identification-gap bounds.  
In Tennessee STAR, we find the $J$-test rejects the Wald-estimand equality while the heterogeneity penalty pulls the efficient-GMM estimate
substantially below 2SLS; in a patent-leniency design, RT delivers a policy-relevant surrogate that standard GMM weightings miss.
\end{abstract}

\medskip
\noindent\textbf{Keywords:} Overidentified IV, heterogeneous treatment effects, GMM, LATE, semiparametric efficiency, compliance types

\medskip
\noindent\textbf{JEL Classification:} C14, C26, C36
}

\newpage

\section{Introduction} 

A researcher with several valid instruments for one binary treatment must
choose how to combine them: two-stage least squares (2SLS), two-step GMM
(TSGMM), iterated efficient GMM (EGMM), or continuously updated GMM (CUE). In
the constant-coefficient linear model this choice is a matter of precision
alone: every weighting matrix delivers the same estimand, and textbook advice
is to weight for efficiency. Under heterogeneous treatment effects, however,
that separation between estimand and precision breaks down. In our reanalysis of the Tennessee STAR experiment, the same data yield an estimated small-class effect of 0.224 standard deviations (math scores standardized within grade) under 2SLS, 0.205 under TSGMM, 0.172 under EGMM, and 0.176 under CUE. Under constant effects the four would agree in the
limit; under heterogeneity they answer four different questions. In
multi-instrument environments, the GMM weighting matrix determines whose
treatment effects the estimand represents.

This paper asks three questions about that fact. First, which combination of
the instrument-specific Wald estimands does a given weighting matrix deliver?
Second, when are the implied weights on the underlying treatment effects
nonnegative, so that the estimand remains causally interpretable? Third, can
the researcher choose the weighting, or drop the common-coefficient
restriction altogether, so that the estimand is one specified in advance and
estimated efficiently? We answer all three: the estimand is a sum-to-one
combination of the Wald estimands, with closed-form weights that the weighting
matrix controls; nonnegativity of the implied weights reduces to a checkable
condition; and Representativeness Targeting, an estimator built directly on
the instrument-specific moments, returns the choice of estimand to the
researcher with full efficiency for the declared target. Answering the three questions gives
concrete causal content to the principle that under misspecification the
choice of estimator is a choice of estimand \citep{HallInoue2003,
AndrewsChen2025}, and it extends to the GMM class a question that
\citet{ImbensAngrist1994}  classically answered, and that
\citet{BlandholBonneyMogstadTorgovitsky2022} and \citet{Sloczynski2026}
recently refined for 2SLS with covariates: when is the estimand actually a local
average treatment effect (LATE).

The analysis rests on compliance types, the multi-instrument generalization of the compliance taxonomy of \citet{AngristImbensRubin1996}. We observe an outcome $Y$, a binary treatment $D$, and $L$ binary instruments $Z_1, \dots, Z_L$, indexed by $\ell$; each compliance type, indexed by $t$, records how an individual's treatment would respond to every configuration of the instruments. Applying \citet{ImbensAngrist1994} one instrument at a time gives $L$ Wald estimands: instrument $\ell$'s Wald estimand, written $\Wald_\ell$, is the difference in mean outcomes between the instrument's two values divided by the corresponding difference in mean treatment take-up. Under instrument validity, relevance, monotonicity, and the realized-treatment (consistency) conditions, each Wald estimand is a weighted sum of compliance type-specific average treatment effects, with weights $\alpha_{\ell,t}$ that sum to one across types, $\alpha_{\ell,t}$ being the weight that instrument $\ell$'s Wald estimand places on compliance type $t$ (Proposition \ref{prop:wald_decomp}). When instruments are dependent, these weights can be negative. Positive regression dependence (PRD; \citealp{Lehmann1966}), a positive-association restriction on the instruments' joint distribution, rules this out: it holds under mutually independent instruments or, more generally, whenever the instruments are nondecreasing functions of a common scalar variable, as in examiner and judge leniency designs where the $\ell$-th instrument indicates leniency exceeding the $\ell$-th of an increasing set of cutoffs, and it makes each Wald estimand a convex combination of type-specific effects (Proposition \ref{prop:prd}). The question is how GMM combines these $L$ causally interpretable building blocks.

The first answer is a decomposition. Let $P$ denote the population distribution of the data, let $\gamma_\ell$ be the covariance between the treatment and instrument $\ell$ (the instrument's first-stage strength), and let $\bgamma = (\gamma_1, \dots, \gamma_L)'$ stack these covariances. A weighting-matrix map $\bW(P)$ assigns a positive-definite weighting matrix to each population distribution; the matrix may depend on the distribution but not on the coefficient being estimated, and we suppress the argument $P$ when it is clear. For any such map, the GMM estimand is a linear combination of the $L$ Wald estimands whose weights sum to one, which we call a Wald-share combination:
$\beta^{\mathrm{GMM}}_{\bW} = \sum_{\ell=1}^{L} \lambda_\ell\,\Wald_\ell$ with
$\lambda_\ell = \gamma_\ell\,[\bW\bgamma]_\ell / \bgamma'\bW\bgamma$, where $[\bW\bgamma]_\ell$ is the $\ell$-th entry of the vector $\bW\bgamma$ (Proposition \ref{prop:weighted_iv}). The share $\lambda_\ell$ is negative exactly when $\gamma_\ell$ and $[\bW\bgamma]_\ell$ have opposite signs, which off-diagonal weighting can produce even when every instrument has a positive first stage. The algebra parallels the statistical decompositions of \citet{Andrews2019}, who characterizes 2SLS and two-step GMM estimands as functions of single-instrument estimands while remaining deliberately agnostic about their causal content; here the content is causal, because under PRD each $\Wald_\ell$ is itself a convex combination of compliance type-specific effects, so the composite weight on type $t$ is $\psi_t = \sum_\ell \lambda_\ell\,\alpha_{\ell,t}$. The decomposition also delimits its own scope. Because the CUE criterion re-weights as the coefficient being estimated moves \citep{HansenHeatonYaron1996}, its estimand splits into a Wald-share component plus an extra term, a variance-score remainder, and so is not a Wald-share combination at all. That failure is the GMM-class counterpart of the pathologies of limited-information maximum likelihood, whose estimand can leave the range of the per-instrument estimands, documented by \citet{Kolesar2013} and \citet{Andrews2019}.

The decomposition matters because heterogeneity makes the moment system misspecified. GMM imposes a single coefficient $\beta$ on all $L$ moment conditions, requiring the covariance between each instrument and the residual $Y - \beta D$ to be zero. When Wald estimands differ across instruments, no single $\beta$ satisfies all $L$ conditions, which is misspecification in the sense of \citet{HallInoue2003}, and the weighting matrix then dictates which sum-to-one combination the estimand targets \citep{AndrewsChen2025}. With variance-adaptive maps, weighting matrices built from the second moments of those residuals, the dependence becomes systematic. Efficient GMM inverts the residual second-moment matrix evaluated at its own probability limit under misspecification (its pseudo-true value), the coefficient at which further iteration locally settles (Definition \ref{def:egmm}). This creates a heterogeneity penalty. In the diagonal specialization, where the relevant second-moment matrices are diagonal, an instrument's EGMM share relative to its 2SLS share is inversely proportional to the second moment of that instrument's residual, scaled by the instrument's variance and evaluated at EGMM's own limit (Corollary \ref{cor:egmm_diag}). Holding fixed the coefficient at which the residuals are evaluated and the other components of that moment, it rises with the dispersion of treatment effects among the instrument's compliers, so EGMM downweights instruments whose complier groups have dispersed effects. The residual variance, an object with no causal meaning, thereby shapes whose effects the estimand represents. Negative shares are a separate, off-diagonal mechanism: under diagonal weighting the shares stay nonnegative, and it is the sign condition above that can push them below zero.

The second answer separates three layers of weights: the within-instrument weights $\alpha_{\ell,t}$ (inner), the cross-instrument shares $\lambda_\ell$ (outer), and the composite weights $\psi_t$. Under PRD the inner weights are nonnegative; the outer shares can be negative under the exact sign condition above; and causal interpretability turns on the composite weights, which negative outer shares threaten.   In our patent-examiner application, the estimated TSGMM and EGMM shares $\hat\lambda_\ell$ (hats denote sample estimates) are negative on the fourth and fifth of the six threshold instruments, counted from the lowest cutoff, with the highest-cutoff instrument positive. 

Heterogeneity also changes what the overidentification test tests. Under maintained instrument validity, rejection by the Hansen $J$-statistic, the standard test of the overidentifying restrictions, indicates unequal instrument-specific Wald estimands, not invalid instruments. Against local alternatives, the noncentrality of the limiting chi-square distribution is one and the same quadratic form in the local movements of the Wald estimands, each scaled by its instrument's first-stage covariance, for TSGMM, EGMM, and CUE, and it vanishes exactly when those movements coincide; under fixed heterogeneity, each statistic divided by the sample size converges to a positive estimator-specific constant, with CUE's constant no larger than EGMM's (Proposition \ref{prop:jtest}). The interpretive message is in \citet{Lee2018}, who shows that multiple LATEs misspecify the 2SLS moments and warns that the $J$-test rejects without implying invalidity,  and in \citet{AndrewsChen2025}, whose $J$-interval measures the estimates attainable across weightings at a given standard error. The local chi-square calculation is standard GMM theory; what the proposition adds is the Wald-movement reading of the noncentrality, its uniformity across the three estimators, and the estimator-specific fixed-alternative limits. Validity itself remains separately testable under heterogeneity \citep{Kitagawa2015}. In STAR, the $J$-test rejects Wald-estimand equality, and conditional on the randomized design and the maintained exclusion and no-interference conditions we read this as heterogeneity, not invalidity.

The third answer is the paper's proposal. Representativeness Targeting (RT) abandons the common-coefficient architecture: rather than fitting a single coefficient to all $L$ moments, RT estimates each instrument-specific Wald estimand separately and averages the estimates with researcher-specified weights $\omega_\ell(P)$, one per instrument, nonnegative and summing to one (Definition \ref{def:rt}). Under PRD, the RT estimand is a convex combination of type-specific LATEs for every admissible choice of weights (Proposition \ref{prop:rt_types}); the researcher chooses whose effects the estimand represents, and the estimator follows that choice rather than the residual-variance criterion. RT is not a departure from GMM so much as a disciplined selection within it: when every weight is strictly positive, the diagonal weighting matrix with $\ell$-th diagonal entry $\omega_\ell(P)/\gamma_\ell(P)^2$ implements the same estimand inside the GMM class (Corollary \ref{cor:witness}).

Choosing an estimand raises the question of what efficiency means for it. The answer separates weighting maps that agree at the true distribution but respond differently to perturbations of it. The estimator's first-order sensitivity to the data, its influence function, has a baseline Wald component plus a weight-drift component, the term that tracks how the weights themselves move with the distribution; the RT plug-in estimator, which substitutes sample analogues for the population quantities, attains the local asymptotic minimax efficiency bound for its target under the regularity, tangent-compatibility, and local-risk conditions collected in Appendix \ref{app:regularity} \citep[cf.][]{Chamberlain1987} (Proposition \ref{prop:eif-decomp}), and with a single instrument the bound collapses to \citet{Frolich2007}'s LATE bound. More is true. Call a GMM weighting map regular for a chosen target when its limiting behavior is stable under local perturbations of the data-generating process; regularity holds if and only if the resulting GMM estimand matches the target at the true distribution and their first-order responses agree to local changes that respect the LATE assumptions (Proposition \ref{thm:tangent-matching}). Matching the value is not enough: a variance-adaptive map is regular for a declared target only when its first-order response also matches, a condition its construction does nothing to enforce; CUE sits outside the Wald-share class and outside the scope of the classification altogether.

Some policy targets are not reachable: a policy-relevant treatment effect \citep{HeckmanVytlacil2005} specifies desired weights over compliance types, and the weights attainable by combining the $L$ instruments form at most an $(L-1)$-dimensional subset of all nonnegative type weights summing to one, a set whose dimension is the number of compliance types minus one. RT extends by projection: projecting the desired weights onto the attainable set, in a metric the researcher states, defines a point-identified surrogate; the plug-in is regular and efficient for it under the stated regularity conditions, including local stability of which constraints bind at the projection (Proposition \ref{prop:lam-projected}), and the gap between surrogate and target is set-identified with bounds under explicit restrictions on the type effects (Propositions \ref{prop:id-gap-distfree}, \ref{prop:id-gap-mte}).

The Tennessee STAR experiment provides the diagnostic benchmark. School-specific Wald estimates range from $-0.93$ to $+0.90$ standard deviations across the $L=80$ schools (Figure~\ref{fig:m1_late_landscape}). Reflecting treatment effect heterogeneity, the $J$-test
rejects Wald-estimand equality, and the four estimates separate as the
closed-form weights predict, with the heterogeneity penalty pulling EGMM 23
percent below 2SLS. The application shows the penalty operating in a
randomized design, where the maintained validity conditions are most credible
and the heterogeneity reading of the $J$-rejection is cleanest.

\begin{figure}[htbp]
\centering
\includegraphics[width=0.65\textwidth]{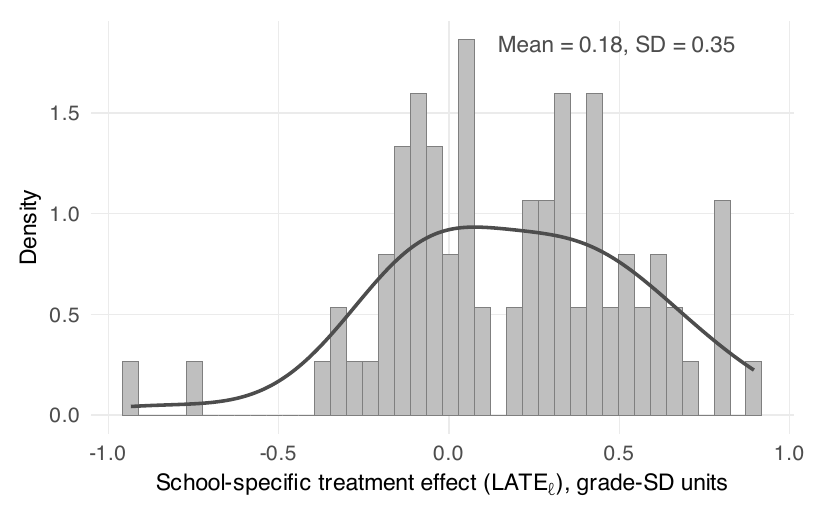}
\caption{Distribution of the $L = 80$ school-specific Wald estimands ($\LATE_\ell$) on grade-standardized math scores; histogram with kernel-density overlay (mean $0.18$, SD $0.35$, grade-SD units).}
\label{fig:m1_late_landscape}
\end{figure}

The patent-examiner leniency design \citep{FarreMensaHegdeLjungqvist2020} shows the estimand drift and the remedy. With cumulative-threshold instruments from a scalar leniency ladder, the estimated variance-adaptive weightings concentrate on the lowest threshold and assign negative shares to the fourth and fifth of the six cumulative thresholds (for CUE, through the Wald-share component of its estimand); the resulting estimates fall below every threshold-specific Wald estimate, so they match no convex weighting of the instrument-specific estimands.  The RT projection plug-in instead delivers the surrogate that best approximates the policy-relevant weighting, with identification-gap bounds and confidence regions.

Our characterization results sit in the literature that decomposes IV estimands into subgroup effects. For projection estimators, \citet{ImbensAngrist1994} and \citet{AngristImbens1995} give the convex combination, \citet{Kolesar2013} the two-step-IV class, estimators that first collapse the instruments
into a single constructed instrument, together with the pathologies of the contrasting minimum-distance class, \citet{MogstadTorgovitskyWalters2021} the complier-group weights for 2SLS with multiple instruments, and \citet{Goff2024} vector monotonicity conditions for multiple instruments. Closest to our decomposition, \citet{GoldsmithPinkhamSorkinSwift2020} give the per-instrument (Rotemberg) weight decomposition that Proposition \ref{prop:weighted_iv} generalizes to the fixed-map weighting-matrix class. \citet{BlandholBonneyMogstadTorgovitsky2022} show that with covariates the 2SLS estimand is weakly causal only under rich covariates, and their supplement treats a vector of excluded instruments only under 2SLS weighting; \citet{Sloczynski2026} shows that the interacted specification restores convexity while the standard one can sign-flip conditional LATEs. \citet{BorusyakHull2024} show that design-based specifications produce convex ex-ante unit-level weights within a single estimand, a guarantee that does not extend to the cross-instrument shares GMM places on multiple Wald estimands. Koles\'ar's unbiased jackknife IV estimator (UJIVE) repairs the many-instrument bias of the fixed two-step-IV target; RT instead makes the target the researcher's choice. The pedigree of negative weights on treatment effects is older still: \citet{HeckmanVytlacil2005} and \citet{HeckmanUrzuaVytlacil2006} document negative weights on marginal treatment effects for IV estimands, and our sign-reversal analysis is the instrument-specific Wald-share counterpart. That literature fixes projection weighting or, in Koles\'ar's minimum-distance analysis, varies a weight matrix on the reduced-form proportionality restriction; this paper makes the moment-system weighting map $\bW(P)$ itself the object of causal study.

A second strand studies moment-condition models under misspecification. \citet{HallInoue2003} and \citet{HansenLee2021} develop the pseudo-true (misspecification-limit) asymptotics, \citet{Lee2018} applies them to 2SLS with multiple LATEs, \citet{Andrews2019} maps the geometry of IV estimands while leaving their causal interpretation to future work, and \citet{AndrewsChen2025} synthesize the field's premise, that the estimator choice is an estimand choice, and recommend that researchers say what their estimators estimate. The present paper takes the researcher's stated target as the primitive: it supplies the causal content of the estimands, the diagnostic content of the $J$-statistic, and an estimator whose estimand is the stated target.

A third strand chooses estimands rather than estimators. \citet{Frolich2007} gives the efficiency bound our single-instrument case recovers; \citet{HiranoImbensRidder2003} study efficient estimation of weighted effects under unconfoundedness; \citet{AngristFernandezVal2013} reweight covariate-specific LATEs toward other populations; \citet{MogstadSantosTorgovitsky2018} bound fixed policy targets; \citet{PoirierSloczynski2025} quantify ex post how representative a weighted estimand is. RT's contribution is specific to overidentified IV: researcher-declared weights over the instrument-specific Wald estimands, inference that carries the weight-drift term, the classification of which GMM maps implement the target, with the construction of a weighting matrix that reproduces RT inside GMM and tangent matching, and, when the target is unreachable, a projection onto the attainable set in a stated metric with bounds on the gap.

Section~2 sets up compliance types, the Wald decomposition, and PRD. Section~3 develops the GMM weight characterization, the heterogeneity penalty, and the $J$-test diagnostic. Section~4 introduces RT, its efficiency theory, and the regularity classification. Section~5 develops projection targeting. Section~6 reports the two applications, and Section~7 concludes. Additional Assumptions and Proofs are in the Appendix and the Online Appendix.


\section{Framework}\label{sec:framework}

\subsection{Setup and compliance types}\label{sec:compliance-types}

The data consist of a scalar outcome $Y_i$, a binary treatment decision $D_i \in \{0,1\}$, and $L \geq 2$ binary instruments $Z_{1i}, \ldots, Z_{Li}$, collected into the vector $\bZ_i = (Z_{1i}, \ldots, Z_{Li})'$. Each individual has potential outcomes $\{Y_i(d) : d \in \{0,1\}\}$ and a potential treatment function $D_i(\cdot) : \{0,1\}^L \to \{0,1\}$, called the individual's \emph{compliance type}, which records the individual's treatment decision at each instrument configuration.\footnote{The compliance type generalizes the complier/always-taker/never-taker classification of \citet{AngristImbensRubin1996} to $L$ instruments and is a special case of the treatment response type in \citet{BaiHuangMoonSantosShaikhVytlacil2024}, who develop inference for treatment effects conditional on generalized principal strata under weaker monotonicity conditions and for nonbinary treatments.} 

\begin{assumption}[Joint independence]\label{ass:joint_indep}
$(Y_i(0), Y_i(1), D_i(\cdot)) \perp\!\!\!\perp \bZ_i$.
\end{assumption}

\begin{assumption}[Monotonicity]\label{ass:monotonicity}
For every $\ell \in \{1, \ldots, L\}$ and every $\bz_{-\ell} \in \{0,1\}^{L-1}$, $D_i(1, \bz_{-\ell}) \geq D_i(0, \bz_{-\ell})$.\footnote{Assumption~\ref{ass:monotonicity} is \citet{MogstadTorgovitskyWalters2021}'s actual monotonicity (their Assumption~AM). The stronger uniform-ordering form of \citet{Vytlacil2002}, equivalent to the latent-index representation, is invoked locally as Assumption~\ref{ass:latent_index} in Section~\ref{sec:mte_penalty}.}
\end{assumption}

\begin{assumption}[Realized treatment and exclusion]\label{ass:realized_exclusion}
The observed treatment and outcome satisfy $D_i = D_i(\bZ_i)$ and $Y_i = Y_i(D_i)$.
\end{assumption}

The data law $P$ governs the full vector $(Y_i(0),Y_i(1),D_i(\cdot),\bZ_i)$, with the observation $O_i\equiv (Y_i,D_i,\bZ_i)$ generated by the observed marginal. The expectation operator $\E$ and probability measure $\mathbb{P}$ leave the underlying data law $P$ implicit. Let $P_0$ denote the true law, $\hat P_n$ the empirical law of the observed data, and $P_{n,h}$ a contiguous local alternative in local asymptotics. 
Let $\supp(\bZ) \subseteq \{0,1\}^L$ be the support of $\bZ_i$, and let $\mathcal{T}$ denote the ambient compliance-type space: the set of functions $t : \{0,1\}^L \to \{0,1\}$ respecting Assumption~\ref{ass:monotonicity} and allowed by the maintained model. Type probabilities $\theta_t(P)\equiv \mathbb{P}(D_i(\cdot)=t)$ may be zero, and $\mathcal{T}_+(P)\equiv \{t\in\mathcal{T}:\theta_t(P)>0\}$ denotes the positive-support types at $P$. The type-specific local average treatment effect is $$\LATE_t(P) \equiv  \E[Y_i(1) - Y_i(0) \mid D_i(\cdot)=t]$$ for $t\in\mathcal{T}_+(P)$.
The positive-support \textit{complier group} for instrument $\ell$ is
\[
\mathcal{C}_\ell(P)\equiv
\{t\in\mathcal{T}_+(P):t(1,\bz_{-\ell})>t(0,\bz_{-\ell})\text{ for some }\bz_{-\ell}\},
\]
the positive-mass types whose treatment responds to instrument $\ell$.

For each instrument $\ell$, write $p^Z_\ell(P) \equiv \mathbb{P}[Z_{\ell i}=1]$ for the instrument probability, $\pi_\ell(P) \equiv  \E[D_i \mid Z_{\ell i} = 1] - \E[D_i \mid Z_{\ell i} = 0]$ for the first-stage coefficient, and $\rho_\ell(P) \equiv  \E[Y_i \mid Z_{\ell i} = 1] - \E[Y_i \mid Z_{\ell i} = 0]$ for the reduced-form coefficient. Let $p(z; P) \equiv \mathbb{P}(D_i = 1 \mid \bZ_i = z)$ denote the joint propensity score on $\supp(\bZ)$, abbreviated to $p(z)$ when context fixes the law. Define the covariance matrix of $\bZ$ as $\bSigmaZ(P)\equiv\Var_P(\bZ_i)$, with $\bSigmaZ(P_0)$ denoting its baseline value.

\begin{assumption}[Relevance and instrument distribution]\label{ass:relevance}
For each $\ell = 1, \ldots, L$, $p^Z_\ell(P_0) \in (0,1)$ and $\pi_\ell(P_0) > 0$, and  $\bSigmaZ(P_0)$ is positive definite.
\end{assumption}

Under Assumption~\ref{ass:relevance}, the Wald estimand for instrument $\ell$ is given by the functional $$\Wald_\ell(P) \equiv \frac{\rho_\ell(P)}{\pi_\ell(P)}=\frac{\E[Y_i \mid Z_{\ell i} = 1] - \E[Y_i \mid Z_{\ell i} = 0]}{\E[D_i \mid Z_{\ell i} = 1] - \E[D_i \mid Z_{\ell i} = 0]},$$ 
and the first-stage covariance vector $\bgamma(P) = (\gamma_1(P), \ldots, \gamma_L(P))'$ with $\gamma_\ell(P) \equiv  \Cov(D_i, Z_{\ell i})$ indexes the GMM-implied weights. Standard regularity conditions (i.i.d.\ sampling, SUTVA, exclusion, finite moments) and the additional rank, tangent-compatibility, and local-support conditions used in the LAM results are collected in Appendix~\ref{app:regularity}.

\subsection{Moment conditions}\label{sec:moments}

Using centered instruments, $Z_{\ell i} - \E[Z_\ell]$, is analytically equivalent to including a constant in the instrument matrix and concentrating out the intercept.\footnote{Concentrating out the intercept induces a transformed weighting matrix for the remaining moments via a Schur complement. We formulate our analysis directly within the centered moment system using this effective weighting matrix.} Solving the individual moment condition $\E[(Y_i - \beta_\ell D_i)(Z_{\ell i} - \E[Z_\ell])] = 0$ for $\beta_\ell$ yields the instrument-specific estimand $\Wald_\ell(P)$. 
When all Wald estimands coincide, all $L$ moment conditions hold simultaneously. When local average treatment effects vary across compliance types and the instruments weight those types differently, the Wald estimands $\Wald_\ell(P)$ differ across $\ell$, and no single choice of $\beta$ simultaneously satisfies $\E[(Y_i - \beta D_i)(Z_{\ell i} -\E[Z_\ell])] = 0$ for all $\ell$. In this scenario, the moment conditions are misspecified in the sense of \citet{HallInoue2003}.\footnote{\citet{AndrewsBarahonaGentzkow2025} develop a general framework for structural estimation under misspecification, of which this IV setting is a special case.} The choice of the weighting matrix $\bW$ then dictates which sum-to-one linear combination of the $\Wald_\ell(P)$ estimands the GMM estimand targets (Proposition~\ref{prop:weighted_iv}); different weighting matrices yield distinct estimands under misspecification \citep{HallInoue2003, AndrewsChen2025}.

\subsection{The Wald decomposition}\label{sec:wald_decomp}

\begin{proposition}[Wald decomposition]\label{prop:wald_decomp}
Under Assumptions~\ref{ass:joint_indep}, \ref{ass:monotonicity}, \ref{ass:realized_exclusion}, and~\ref{ass:relevance}, at every $P$ with $p^Z_\ell(P) \in (0,1)$ and $\pi_\ell(P) > 0$ for each $\ell$, the Wald estimand for instrument $\ell$ admits the decomposition
\begin{equation}\label{eq:wald_type}
  \Wald_\ell (P)= \sum_{t \in \mathcal{T}_+(P)} \LATE_t(P) \cdot \alpha_{\ell,t}(P),
\end{equation}
where $\sum_{t \in \mathcal{T}_+(P)} \alpha_{\ell,t}(P) = 1$ and
\[
  \alpha_{\ell,t}(P) = \frac{\theta_t(P) \cdot \varphi_{\ell,t}(P)}{\pi_\ell(P)}, \qquad \varphi_{\ell,t}(P)= \sum_{\bz_{-\ell}} \bigl[t(1, \bz_{-\ell})\, q_\ell(\bz_{-\ell}) - t(0, \bz_{-\ell})\, q^0_\ell(\bz_{-\ell})\bigr],
\]
with $t(z_\ell, \bz_{-\ell})$ the treatment decision of compliance type $t$ at instrument values $(z_\ell, \bz_{-\ell})$, $q_\ell(\bz_{-\ell}) = \mathbb{P}(\bZ_{-\ell} = \bz_{-\ell} \mid Z_\ell = 1)$, and $q^0_\ell(\bz_{-\ell}) = \mathbb{P}(\bZ_{-\ell} = \bz_{-\ell} \mid Z_\ell = 0)$.
\end{proposition}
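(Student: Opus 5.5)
We will compute the reduced-form and first-stage contrasts type by type, conditioning on the instrument configuration, and then take their ratio.

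\textbf{First stage.} Under Assumption~\ref{ass:realized_exclusion}, $D_i=\sum_{t}\ind\{D_i(\cdot)=t\}\,t(\bZ_i)$. Assumption~\ref{ass:joint_indep} makes the type independent of $\bZ_i$, so
\[
\E[D_i\mid Z_{\ell i}=1]=\sum_{t\in\mathcal{T}_+(P)}\theta_t(P)\sum_{\bz_{-\ell}}t(1,\bz_{-\ell})\,q_\ell(\bz_{-\ell}).
\]
The same argument applied to $Z_{\ell i}=0$ gives the analogous expression with weights $q^0_\ell$ and values $t(0,\bz_{-\ell})$. Subtracting yields $\pi_\ell(P)=\sum_t\theta_t\varphi_{\ell,t}$. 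Types with $\theta_t=0$ drop out, so the sum runs over $\mathcal{T}_+(P)$.

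\textbf{Reduced form.} Write $Y_i=Y_i(0)+D_i\,(Y_i(1)-Y_i(0))$. Joint independence makes the distribution of $Y_i(0)$ independent of $Z_{\ell i}$, so the $Y_i(0)$ term cancels in $\rho_\ell$. For the second term, condition on the type and on $\bZ_i$. Given $D_i(\cdot)=t$ and $\bZ_i=\bz$, the indicator $D_i$ equals the deterministic value $t(\bz)$. Independence then gives
\[
\E\bigl[D_i(Y_i(1)-Y_i(0))\mid Z_{\ell i}=1\bigr]=\sum_t\theta_t\,\LATE_t\sum_{\bz_{-\ell}}t(1,\bz_{-\ell})\,q_\ell(\bz_{-\ell}).
\]
Here we use that $\E[Y_i(1)-Y_i(0)\mid t,\bZ_i]=\LATE_t$, which holds by Assumption~\ref{ass:joint_indep}. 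Subtracting the $Z_{\ell i}=0$ analogue gives $\rho_\ell(P)=\sum_{t\in\mathcal{T}_+}\theta_t\varphi_{\ell,t}\LATE_t$.

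\textbf{Ratio and normalization.} Since $\pi_\ell(P)>0$, dividing the reduced form by the first stage gives \eqref{eq:wald_type} with $\alpha_{\ell,t}=\theta_t\varphi_{\ell,t}/\pi_\ell$. The first-stage identity $\pi_\ell=\sum_t\theta_t\varphi_{\ell,t}$ then immediately gives $\sum_t\alpha_{\ell,t}=1$. The conditioning events $\{Z_{\ell i}=1\}$ and $\{Z_{\ell i}=0\}$ have positive probability because $p^Z_\ell(P)\in(0,1)$. The representation uses only the Wald ratio itself, so it holds at every such $P$, not only at $P_0$.

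\textbf{Main obstacle: measurability and integrability.} We need $\LATE_t$ to be well defined and the interchange of sums and expectations to be valid. $\mathcal{T}$ is a finite set of Boolean functions, so all type sums are finite. Finite first moments of the potential outcomes, from the appendix regularity conditions, then suffice for conditional expectations given each positive-mass type. Monotonicity (Assumption~\ref{ass:monotonicity}) is not needed for the identity itself. It enters only in restricting $\mathcal{T}$, and it is what later makes $\varphi_{\ell,t}$ interpretable. When instruments are dependent, $q_\ell\neq q^0_\ell$, which is why $\varphi_{\ell,t}$, and hence $\alpha_{\ell,t}$, can be negative. We will remark on this explicitly but not prove anything about signs here.
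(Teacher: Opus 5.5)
Your proposal is correct and follows essentially the same route as the paper's proof. Both write $Y_i = Y_i(0) + D_i(\bZ_i)\,(Y_i(1)-Y_i(0))$ and apply joint independence cell by cell and type by type to get $\rho_\ell=\sum_{t}\theta_t\varphi_{\ell,t}\,\mathrm{LATE}_t$ and, dropping the outcome factor, $\pi_\ell=\sum_t\theta_t\varphi_{\ell,t}$, then divide; your remarks on integrability and on monotonicity not being needed for the identity itself are accurate refinements, not departures.
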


\noindent The compliance type weight $\alpha_{\ell,t}(P)$ on type $t$ is proportional to $\theta_t(P) \cdot \varphi_{\ell,t}(P)$: the probability of being type $t$, multiplied by how much instrument $\ell$ shifts treatment for that type.\footnote{\citet{MogstadTorgovitskyWalters2021} decompose the combined 2SLS estimand into complier-group-specific treatment effects. Proposition~\ref{prop:wald_decomp} operates at a finer level, decomposing each instrument-specific Wald estimand separately into compliance-\emph{type}-specific contributions.} This is a structural decomposition under the maintained compliance-type model. The weights  $\alpha_{\ell,t}(P)$ may become  data functionals under the additional identifying structures used below, such as the ordered-threshold structure. The type contribution $\varphi_{\ell,t}(P)$ compares type $t$'s expected treatment under $Z_\ell = 1$ versus $Z_\ell = 0$, averaging over the other instruments. The weights $\alpha_{\ell,t}(P)$ can be negative when instrument dependence makes the indirect component sufficiently negative; Lemma~\ref{lem:di} isolates the source by splitting $\varphi_{\ell,t}(P)$ into direct and indirect components.
\begin{lemma}[Direct-indirect decomposition]\label{lem:di}
Under Assumption~\ref{ass:monotonicity}, $\varphi_{\ell,t}(P)= \varphi_{\ell,t}^D(P) + \varphi_{\ell,t}^I(P)$, where
\begin{align}
  \varphi_{\ell,t}^D(P) &= \sum_{\bz_{-\ell}} \bigl[t(1, \bz_{-\ell}) - t(0, \bz_{-\ell})\bigr]\, q_\ell(\bz_{-\ell}) \;\geq\; 0, \notag \\[3pt]
  \varphi_{\ell,t}^I(P) &= \sum_{\bz_{-\ell}} t(0, \bz_{-\ell})\, \bigl[q_\ell(\bz_{-\ell}) - q^0_\ell(\bz_{-\ell})\bigr]. \notag
\end{align}
\end{lemma}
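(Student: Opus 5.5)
The plan is a direct algebraic rearrangement of the definition of $\varphi_{\ell,t}(P)$ in Proposition~\ref{prop:wald_decomp}, followed by a sign argument for the direct part. First, inside the sum over $\bz_{-\ell}$, I will add and subtract the term $t(0,\bz_{-\ell})\,q_\ell(\bz_{-\ell})$:
\[
t(1,\bz_{-\ell})\,q_\ell(\bz_{-\ell}) - t(0,\bz_{-\ell})\,q^0_\ell(\bz_{-\ell})
= \bigl[t(1,\bz_{-\ell})-t(0,\bz_{-\ell})\bigr]q_\ell(\bz_{-\ell}) + t(0,\bz_{-\ell})\bigl[q_\ell(\bz_{-\ell})-q^0_\ell(\bz_{-\ell})\bigr].
\]
Next, I will sum over $\bz_{-\ell}\in\{0,1\}^{L-1}$. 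The sums are finite, so linearity splits the total into exactly $\varphi^D_{\ell,t}(P)+\varphi^I_{\ell,t}(P)$. This identity is purely algebraic and needs no assumption.

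For the sign claim, I will invoke Assumption~\ref{ass:monotonicity}. Since $t\in\mathcal{T}$ respects monotonicity, $t(1,\bz_{-\ell})-t(0,\bz_{-\ell})\ge 0$ for every $\bz_{-\ell}$. Each $q_\ell(\bz_{-\ell})$ is a conditional probability, so it is nonnegative. It is well defined because $p^Z_\ell(P)\in(0,1)$, which makes the conditioning event $\{Z_\ell=1\}$ have positive probability. Hence every summand of $\varphi^D_{\ell,t}(P)$ is nonnegative, and so is their sum.

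There is no real obstacle here. The only points to check are that the conditional distributions are well defined, and that $\varphi^I_{\ell,t}(P)$ carries no sign restriction: $q_\ell$ and $q^0_\ell$ both sum to one, so their differences sum to zero and can have either sign. That sign freedom is the source of possible negative weights noted after the lemma.
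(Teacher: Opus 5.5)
Your proposal is correct and follows the paper's proof: add and subtract $t(0,\bz_{-\ell})\,q_\ell(\bz_{-\ell})$ inside the finite sum, split by linearity, and obtain $\varphi^D_{\ell,t}(P)\ge 0$ from monotonicity together with $q_\ell\ge 0$. Your extra remark that $q_\ell$ is well defined because $p^Z_\ell(P)\in(0,1)$ is implicit in the paper's setup and is a reasonable check to make.
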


\noindent The direct component $\varphi_{\ell,t}^D(P) \geq 0$ by monotonicity: switching $Z_\ell$ from 0 to 1 can only increase treatment for type $t$, holding $\bZ_{-\ell}$ fixed. The indirect component $\varphi_{\ell,t}^I(P)$ depends on how conditioning on $Z_\ell = 1$ shifts the distribution of $\bZ_{-\ell}$. Under independent instruments, $q_\ell = q^0_\ell$, the indirect component vanishes, and all numerators $\theta_t\varphi_{\ell,t}$ are non-negative; under Assumption~\ref{ass:relevance}, the normalized Wald weights $\alpha_{\ell,t}(P)$ are non-negative. If conditioning on $Z_\ell = 1$ shifts probability mass in $\bZ_{-\ell}$ toward configurations where type $t$ is less likely to take treatment at $Z_\ell=0$, the indirect component can be negative enough to overwhelm the direct component, and the Wald weights can be negative. Whether the weights are non-negative is therefore determined entirely by the instrument dependence structure and the positive first stage.

\subsection{Non-negative weights under positive regression dependence}\label{sec:prd_weights}

\begin{assumption}[Positive regression dependence (PRD)]\label{ass:prd}
For each $\ell = 1, \ldots, L$, the random vector $\bZ_{-\ell}$ is positively regression dependent on $Z_\ell$: $\E[f(\bZ_{-\ell}) \mid Z_\ell = 1] \geq \E[f(\bZ_{-\ell}) \mid Z_\ell = 0]$ for every nondecreasing $f : \{0,1\}^{L-1} \to \mathbb{R}$ \citep{Lehmann1966}.
\end{assumption}

\begin{proposition}[PRD implies non-negative inner Wald weights]\label{prop:prd}
Under Assumptions~\ref{ass:joint_indep}, \ref{ass:monotonicity}, \ref{ass:relevance}, and~\ref{ass:prd}, at every $P$ with $p^Z_\ell(P) \in (0,1)$ and $\pi_\ell(P) > 0$ for each $\ell$, $\alpha_{\ell,t}(P) \geq 0$ for every positive-support compliance type $t \in \mathcal{T}_+(P)$ and every instrument $\ell$.
\end{proposition}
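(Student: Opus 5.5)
The proof shows that the numerator $\theta_t(P)\varphi_{\ell,t}(P)$ in Proposition~\ref{prop:wald_decomp} is nonnegative. Since $\pi_\ell(P)>0$ by hypothesis, this gives $\alpha_{\ell,t}(P)\ge 0$. For $t\in\mathcal{T}_+(P)$ we have $\theta_t(P)>0$, so it suffices to show $\varphi_{\ell,t}(P)\ge 0$. Writing $\varphi_{\ell,t}$ directly is cleaner than working through the split of Lemma~\ref{lem:di}:
\[
\varphi_{\ell,t}(P)=\E\bigl[t(1,\bZ_{-\ell})\mid Z_\ell=1\bigr]-\E\bigl[t(0,\bZ_{-\ell})\mid Z_\ell=0\bigr].
\]
Here the expectations are over the conditional law of $\bZ_{-\ell}$. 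By Assumption~\ref{ass:joint_indep} this law is the same object as $q_\ell,q^0_\ell$ and does not depend on the type. We then insert the intermediate term $\E[t(0,\bZ_{-\ell})\mid Z_\ell=1]$, which splits $\varphi_{\ell,t}$ into exactly the direct and indirect pieces of Lemma~\ref{lem:di}.

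The direct piece is $\E[t(1,\bZ_{-\ell})-t(0,\bZ_{-\ell})\mid Z_\ell=1]$. It is nonnegative pointwise by Assumption~\ref{ass:monotonicity}, as Lemma~\ref{lem:di} already records. The indirect piece is $\E[f(\bZ_{-\ell})\mid Z_\ell=1]-\E[f(\bZ_{-\ell})\mid Z_\ell=0]$ with $f(\bz_{-\ell})\equiv t(0,\bz_{-\ell})$. To apply Assumption~\ref{ass:prd}, we need $f$ to be nondecreasing on $\{0,1\}^{L-1}$ in the coordinatewise order. This is the key step. For any $k\neq\ell$, Assumption~\ref{ass:monotonicity} applied to instrument $k$ says $t$ is nondecreasing in $z_k$ with all other coordinates held fixed, including $z_\ell=0$. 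Chaining single-coordinate flips gives $t(0,\bz_{-\ell})\le t(0,\bz'_{-\ell})$ whenever $\bz_{-\ell}\le\bz'_{-\ell}$. The PRD inequality then makes the indirect piece nonnegative, and summing the two pieces gives $\varphi_{\ell,t}\ge0$.

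The main point needing care is that step: recognising that monotonicity for the \emph{other} instruments is what makes $t(0,\cdot)$ an admissible test function for PRD. A secondary check is support. The sums in $\varphi_{\ell,t}$ run over configurations with positive conditional probability. The chained-flip argument uses the values of $t$ on all of $\{0,1\}^L$, which is fine because types are defined as functions on the whole cube and respect Assumption~\ref{ass:monotonicity} everywhere. If needed, we note that PRD is stated for functions on the full cube, so off-support values do not matter. Assumption~\ref{ass:relevance} is used only to guarantee that the conditional laws are well defined ($p^Z_\ell\in(0,1)$) and that the division by $\pi_\ell>0$ preserves sign.
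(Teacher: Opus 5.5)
Your proposal is correct and follows essentially the same route as the paper: it reduces to $\varphi_{\ell,t}(P)\ge 0$ through the direct--indirect split of Lemma~\ref{lem:di}, with the direct piece nonnegative by monotonicity and the indirect piece nonnegative by PRD applied to the test function $t(0,\cdot)$, which is nondecreasing because of monotonicity in the other instruments. One small point: rewriting $\varphi_{\ell,t}$ as a difference of conditional expectations does not need Assumption~\ref{ass:joint_indep}, since $q_\ell$ and $q^0_\ell$ are by definition the conditional laws of $\bZ_{-\ell}$ and $t$ is a fixed function.
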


PRD is a sufficient condition for positive weights in overidentified IV settings,\footnote{For $L = 2$, PRD reduces to non-negative covariance, recovering the sufficient condition in \citet{MogstadTorgovitskyWalters2021}; for $L \geq 3$, pairwise non-negative covariance no longer suffices. \citet{GoldsmithPinkhamSorkinSwift2020} require a same-sign first-stage condition and an exclusion restriction that jointly give each just-identified IV estimate a convex-combination interpretation in the shift-share setting; \citet{HahnKuersteinerSantosWilligrod2024} require share orthogonality or non-negatively correlated shocks. In a single-instrument, covariate-cell setting, \citet{PoirierSloczynski2025} develop a sharp-bound measure of implicit representativeness and relate non-negative-weight conditions to \citet{BlandholBonneyMogstadTorgovitsky2022}'s weakly-causal property; Proposition~\ref{prop:prd} is the multi-instrument analog. In applications with finite support, the FOSD inequalities have sample analogs, but they are diagnostics rather than a proof of the population restriction.} imposed on the joint distribution of the instruments. The following designs imply PRD:
\begin{enumerate}[label=(\roman*),itemsep=3pt]
  \item \emph{Independence Design:} if the instruments are randomly assigned independently of one another, the indirect component vanishes and the weights are non-negative.
  \item \emph{Cumulative/Common Factor Design:} when instruments derive from a common source $G_i$\footnote{The source $G_i$ is typically multi-valued and need not be monotone across all pairs of levels; distinct value comparisons may shift treatment in different directions for different individuals. The cumulative-threshold binarization replaces this with a scalar-ordering condition: if treatment is weakly increasing in $G_i$ for every individual, each separate-threshold first stage is nonnegative (Appendix~\ref{app:cumulative_monotone}).} such that $Z_{\ell i} = g_\ell(G_i)$ with each $g_\ell$ nondecreasing, as in examiner leniency designs with cumulative thresholds ($Z_{\ell i} = \ind\{G_i \geq \ell+1\}$), turning one instrument ``on'' makes the others more likely to be active.\footnote{The instruments are associated \citep{EsaryProschanWalkup1967}; association implies PRD.} The positive dependence shifts the indirect component upward and preserves non-negativity.
\end{enumerate}

PRD is a population-level design restriction, equivalent to first-order stochastic dominance of $\bZ_{-\ell} \mid Z_\ell = 1$ over $\bZ_{-\ell} \mid Z_\ell = 0$ for each $\ell$. PRD ensures non-negativity of the \emph{inner} compliance-type weights $\alpha_{\ell,t}(P)$ within each Wald estimand in~\eqref{eq:wald_type}; the \emph{outer} instrument weights depend on the weighting matrix map and can be negative under GMM, as Section~\ref{sec:egmm} shows.


\section{The GMM-implied weighting}\label{sec:egmm}

\subsection{GMM as affine combination of Wald estimands}

Applied practice often specifies not a weighting matrix directly but a \emph{weighting matrix map}: a rule that selects the weighting matrix as a functional of the data law $P$. Let $\PLATE$ denote the class of distributions compatible with the LATE model (Assumptions~\ref{ass:joint_indep}--\ref{ass:relevance}), and let $\mathcal{S}^{++}_L$ denote the cone of $L\times L$ positive definite matrices.

\begin{definition}\label{def:gmm-weight-map}
A {GMM weighting matrix map} is a functional $\bW(\cdot): \PLATE \to \mathcal{S}^{++}_L$ that depends on $P$ only through a finite-dimensional vector of moments $m(P) \in \mathbb{R}^d$, i.e., $\bW(P) = \tilde{\bW}(m(P))$ for some $\tilde{\bW}: U \to \mathcal{S}^{++}_L$ continuous on a neighborhood $U$ of $m(P_0)$ in $\mathbb{R}^d$.
\end{definition}

 All examples considered in this paper naturally satisfy this finite-dimensional requirement; for instance, the 2SLS map uses $m(P) = \mathrm{vech}(\bSigmaZ(P))$.

The GMM estimator with a weighting matrix map $\bW(\cdot)$ solves
\[
  \hat\beta^{\mathrm{GMM}}_{\bW} = \arg\min_\beta \, \bg_n(\beta)' \bW(\hat P_n) \, \bg_n(\beta),
\]
where $\bg_n(\beta) \equiv (g_{n,1}(\beta), \ldots, g_{n,L}(\beta))'$ with $g_{n,\ell}(\beta) \equiv  n^{-1} \sum_{i=1}^n (Y_i - \beta D_i)(Z_{\ell i} - \bar z_\ell)$ and $\bar z_\ell \equiv n^{-1}\sum_{i=1}^n Z_{\ell i}$.
Since the model is linear in $\beta$, $\bg_n(\beta) = \bg_n(0) - \beta \hat\bgamma$, where $\hat\bgamma = (\hat{\gamma}_1, \ldots, \hat{\gamma}_L)'$ with $\hat{\gamma}_\ell = n^{-1} \sum_{i=1}^n D_i (Z_{\ell i} - \bar z_\ell)$, the sample analog of $\gamma_\ell(P)$ from Assumption~\ref{ass:sampling}(e). The first-order condition yields
\begin{equation*}
  \hat\beta^{\mathrm{GMM}}_{\bW}  = \frac{\hat\bgamma' \bW(\hat P_n) \, \bg_n(0)}{\hat\bgamma' \bW(\hat P_n) \, \hat\bgamma}=\sum_{\ell=1}^L \lambda_\ell(\bW(\hat P_n );\hat P_n) \, \widehat\Wald_\ell,
\end{equation*}
where $\widehat\Wald_\ell\equiv \Wald_\ell(\hat P_n)$.  Let $\blambda(\bW(P); P)\equiv (\lambda_1(\bW(P); P),...,\lambda_L(\bW(P); P))'$, $\bWald(P)\equiv (\Wald_1(P),...,\Wald_L(P))'$, and  $\widehat\bWald\equiv (\widehat\Wald_1,...,\widehat\Wald_L)'$. The map $\blambda_{\bW}(P) \equiv \blambda(\bW(P); P)$, with components $\lambda_{\bW,\ell}(P)$, is the \emph{implicit weight map} of $\bW(\cdot)$.

\begin{proposition}[GMM as affine combination of Wald estimands]\label{prop:weighted_iv}
Let $\bW(\cdot)$ be a GMM weighting matrix map, and assume the data $\{O_i\}_{i=1}^n$ are i.i.d.\ from $P_0$ satisfying Assumption~\ref{ass:relevance} and Assumption~\ref{ass:sampling}(a),(d),(e).
\begin{enumerate}[label=(\alph*),leftmargin=*]
  \item \emph{Population formula.} For every $P$ in a neighborhood of $P_0$ and $\gamma_\ell(P) \neq 0$ for every $\ell$, the GMM estimand admits the affine decomposition
  \begin{equation}\label{eq:weighted_iv}
    \beta^{\mathrm{GMM}}_{\bW}(P) \;=\; \blambda(\bW(P); P)'\,\bWald(P) \;=\; \sum_{\ell=1}^L \lambda_\ell(\bW(P); P)\,\Wald_\ell(P),
  \end{equation}
  with implicit instrument weights
  \begin{equation}
    \lambda_\ell(\bW(P); P) \;=\; \frac{\gamma_\ell(P)\,[\bW(P)\,\bgamma(P)]_\ell}{\bgamma(P)'\,\bW(P)\,\bgamma(P)}, \qquad \sum_{\ell=1}^L \lambda_\ell(\bW(P); P) \;=\; 1.
  \end{equation}
  \item \emph{Sample formula and consistency.} With probability tending to one, the GMM estimator equals the sample-Wald affine combination
  \[
    \hat\beta^{\mathrm{GMM}}_{\bW} \;=\; \blambda(\bW(\hat P_n); \hat P_n)'\,\widehat\bWald \;=\; \sum_{\ell=1}^L \lambda_\ell(\bW(\hat P_n); \hat P_n)\,\widehat\Wald_\ell,
  \]
  with $\sum_{\ell=1}^L \lambda_\ell(\bW(\hat P_n); \hat P_n) = 1$ also w.p.a.1, and $\hat\beta^{\mathrm{GMM}}_{\bW} \xrightarrow{p} \beta^{\mathrm{GMM}}_{\bW}(P_0)$.
\end{enumerate}
\end{proposition}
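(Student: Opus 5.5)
The plan is to work first at the population level with the linear structure of the moments, and then transfer everything to the sample by continuity. Define the population moment vector $\bg(\beta;P) \equiv \E_P[(Y_i-\beta D_i)(\bZ_i-\E_P\bZ_i)]$. Its components are $\bg(\beta;P)=\bg(0;P)-\beta\bgamma(P)$, with $g_\ell(0;P)=\Cov_P(Y_i,Z_{\ell i})$. The GMM estimand is the minimizer of $\bg(\beta;P)'\bW(P)\bg(\beta;P)$, which is a quadratic in $\beta$ with leading coefficient $\bgamma'\bW\bgamma$. That coefficient is strictly positive because $\bW(P)\in\mathcal{S}^{++}_L$ and $\bgamma(P)\neq 0$; in fact every entry of $\bgamma(P)$ is nonzero by hypothesis. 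The minimizer is therefore unique and equals $\beta^{\mathrm{GMM}}_{\bW}(P)=\bgamma'\bW\bg(0)/\bgamma'\bW\bgamma$.

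The key identity links each covariance to a Wald estimand. For binary $Z_\ell$ with $p^Z_\ell\in(0,1)$,
\[
\Cov(Y,Z_\ell)=p^Z_\ell(1-p^Z_\ell)\,\rho_\ell, \qquad \Cov(D,Z_\ell)=p^Z_\ell(1-p^Z_\ell)\,\pi_\ell .
\]
Hence $g_\ell(0;P)=\gamma_\ell(P)\,\Wald_\ell(P)$ whenever $\gamma_\ell\neq 0$. Substituting into the numerator gives
\[
\bgamma'\bW\bg(0)=\sum_\ell [\bW\bgamma]_\ell\,\gamma_\ell\,\Wald_\ell ,
\]
which is exactly \eqref{eq:weighted_iv} with the stated $\lambda_\ell$. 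The shares sum to one because $\sum_\ell\gamma_\ell[\bW\bgamma]_\ell=\bgamma'\bW\bgamma$. Part (a) thus needs only positive definiteness and nonvanishing $\gamma_\ell$. It does not use the LATE assumptions beyond the fact that $p^Z_\ell\in(0,1)$ holds in a neighborhood of $P_0$, which follows from Assumption~\ref{ass:relevance} and continuity of these moments in $P$.

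For part (b), the sample objects repeat the same algebra with $\hat P_n$ in place of $P$. Here $\hat\gamma_\ell$ equals the empirical covariance, and $g_{n,\ell}(0)$ equals the empirical covariance of $Y$ and $Z_\ell$. The same identity holds exactly under the empirical law provided $\bar z_\ell\in(0,1)$, so that the sample Wald ratio is defined and $\hat\gamma_\ell\neq 0$. The main step is to show that the conditions needed for the algebra hold with probability tending to one:
\begin{enumerate}[label=(\roman*),leftmargin=*]
\item $\bar z_\ell\in(0,1)$;
\item $\hat\gamma_\ell\neq0$ for all $\ell$;
\item $m(\hat P_n)\in U$, so that $\bW(\hat P_n)=\tilde\bW(m(\hat P_n))$ is defined and positive definite.
\end{enumerate}
By the law of large numbers under the moment conditions in Assumption~\ref{ass:sampling}(a),(d),(e), we have $\bar z_\ell\to p^Z_\ell(P_0)\in(0,1)$, $\hat\gamma_\ell\to\gamma_\ell(P_0)=p^Z_\ell(1-p^Z_\ell)\pi_\ell>0$, and $m(\hat P_n)\to m(P_0)$. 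These limits deliver (i)–(iii) on an event whose probability tends to one. On that event the first-order-condition formula displayed before the proposition holds exactly, and the shares sum to one by the same identity.

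Consistency then follows from the continuous mapping theorem. On a neighborhood of the limits, the estimator is the continuous function
\[
(\hat\bgamma,\bg_n(0),m(\hat P_n))\mapsto \hat\bgamma'\tilde\bW(m)\bg_n(0)\big/\hat\bgamma'\tilde\bW(m)\hat\bgamma .
\]
Its denominator is bounded away from zero near $\bgamma(P_0)'\bW(P_0)\bgamma(P_0)>0$, so $\hat\beta\xrightarrow{p}\beta^{\mathrm{GMM}}_{\bW}(P_0)$. The only delicate point is checking that the moments entering $m(P)$ are ones for which Assumption~\ref{ass:sampling} gives a weak law of large numbers. I expect this bookkeeping, rather than the algebra, to be the main obstacle. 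In particular, if $m$ includes second moments of residuals, finite fourth-type moments of $(Y,D)$ would be needed, and I would appeal to (d) for this.
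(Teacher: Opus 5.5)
Your proposal is correct and follows essentially the paper's route. You use the strictly convex quadratic in $\beta$ and substitute $g_\ell(0;P)=\gamma_\ell(P)\Wald_\ell(P)$, which you verify via the binary-instrument identity $\Cov(Y,Z_\ell)=p^Z_\ell(1-p^Z_\ell)\rho_\ell$ while the paper simply asserts it. Part (b) then rests on an event of probability tending to one plus the continuous mapping theorem; you apply it to the ratio $\hat{\bgamma}'\tilde{\bW}(m)\bg_n(0)/\hat{\bgamma}'\tilde{\bW}(m)\hat{\bgamma}$ directly rather than to $\lambda_\ell$ and $\widehat\Wald_\ell$ separately.

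Two small refinements:
\begin{enumerate}
\item Your condition (i) is implied by (ii), since $\hat\gamma_\ell\neq0$ forces $\bar z_\ell\in(0,1)$. Likewise $\gamma_\ell(P)\neq0$ forces $p^Z_\ell(P)\in(0,1)$, so no continuity argument is needed in part (a).
\item Because $D$ and $\bZ$ are bounded, the entries of $\bOmega$ involve only second moments of $Y$. The weak law therefore needs just $\E[Y^2]<\infty$, not fourth moments, and Assumption~\ref{ass:sampling}(d) supplies this.
\end{enumerate}
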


\noindent By the Wald decomposition (Proposition~\ref{prop:wald_decomp}), the GMM estimand admits a sum-to-one decomposition over compliance types:\footnote{Proposition~\ref{prop:weighted_iv} generalizes the Rotemberg decomposition of \citet{GoldsmithPinkhamSorkinSwift2020} from a single estimator to the  fixed-map weighting-matrix GMM class.}
$$\beta^{\mathrm{GMM}}_{\bW}(P) = \sum_{t \in \mathcal{T}_+(P)}\left(\sum_{\ell=1}^L \lambda_\ell(\bW(P); P)\, \alpha_{\ell,t}(P)\right) \LATE_t(P).$$
The composite weights have a two-level structure: under PRD, the inner weights $\alpha_{\ell,t}(P)$ are non-negative (Proposition~\ref{prop:prd}), but the outer weights $\lambda_\ell(\bW(P); P)$ can be negative by Proposition~\ref{prop:weighted_iv}(a): $\lambda_\ell(\bW(P); P) < 0$ iff $\gamma_\ell(P)$ and $[\bW(P)\bgamma(P)]_\ell$ have opposite signs; off-diagonal entries in $\bW(P)$ can generate this sign reversal when their contribution to $[\bW(P)\bgamma(P)]_\ell$ dominates the diagonal contribution. Composite weights can therefore be negative even in PRD designs, with negativity originating in the off-diagonal structure of $\bW(P)$. Section~\ref{sec:egmm_weights} works out the $L=2$ case under 2SLS, where a negative weight arises when strong inter-instrument correlation overwhelms the direct first-stage contribution.

\subsection{2SLS, TSGMM, EGMM, CUE, and the heterogeneity penalty}\label{sec:2sls}\label{sec:egmm_weights}

The \emph{2SLS weighting matrix map} sets $\bW^{2\mathrm{SLS}}(P) \equiv \bSigmaZ(P)^{-1}$. Substituting into~\eqref{eq:weighted_iv}:
\[
\lambda^{2\mathrm{SLS}}_\ell(P) \;=\; \frac{\gamma_\ell(P)\,[\bSigmaZ(P)^{-1}\bgamma(P)]_\ell}{\bgamma(P)'\bSigmaZ(P)^{-1}\bgamma(P)}.
\]
The weight $\lambda^{2\mathrm{SLS}}_\ell$ is the partial first-stage contribution of instrument $\ell$ after partialling out the other instruments; with correlated instruments ($\bSigmaZ$ non-diagonal), this partial contribution can be negative even when every instrument has a positive marginal first stage.

The product $\bSigmaZ(P)^{-1}\bgamma(P)$ represents the population OLS coefficient from regressing $D_i$ on $\bZ_i$. Component $\ell$, $[\bSigmaZ(P)^{-1}\bgamma(P)]_\ell$, is therefore the \emph{partial first-stage contribution} of $Z_\ell$ after partialling out the remaining instruments. The 2SLS weight $\lambda_\ell^{2\mathrm{SLS}}(P) \propto \gamma_\ell(P)\,[\bSigmaZ(P)^{-1}\bgamma(P)]_\ell$ therefore isolates this partial contribution rather than the marginal first stage $\gamma_\ell(P)$.

For $L = 2$, write $\bSigmaZ(P) = \bigl(\begin{smallmatrix}\sigma_1^2 & \sigma_{12} \\ \sigma_{12} & \sigma_2^2\end{smallmatrix}\bigr)$ with $\sigma_\ell^2 = \Var(Z_\ell)$ and $\sigma_{12} = \Cov(Z_1, Z_2)$. The cofactor expansion gives $[\bSigmaZ(P)^{-1}\bgamma(P)]_1 = (\sigma_2^2\,\gamma_1(P) - \sigma_{12}\,\gamma_2(P))/\det(\bSigmaZ(P))$, so
\begin{equation}
\lambda_1^{2\mathrm{SLS}}(P) \;\propto\; \gamma_1(P)\bigl(\gamma_1(P)\,\sigma_2^2 \;-\; \gamma_2(P)\,\sigma_{12}\bigr).
\end{equation}
A strong positive correlation $\sigma_{12} > 0$ contributes a negative penalty $-\gamma_2(P)\,\sigma_{12}$ that can overwhelm the direct first-stage contribution $\gamma_1(P)\,\sigma_2^2$, forcing $\lambda_1^{2\mathrm{SLS}}(P) < 0$ even when both instruments have positive marginal first stages. When $Z_1$ is highly correlated with $Z_2$, its incremental contribution flips sign, and 2SLS attaches a negative weight to $\Wald_1(P)$ in the affine decomposition of Proposition~\ref{prop:weighted_iv}(a).

TSGMM, EGMM, and CUE refine 2SLS by incorporating residual second moments into the weighting matrix. Under heterogeneous treatment effects, this variance adaptation can change the estimand, yielding three distinct estimands depending on how the second-moments matrix is evaluated.

Denote the second-moment matrix by $\bOmega(\beta;P)$$\equiv \E[\bg(\beta;P)\bg(\beta;P)']$, where $\bg(\beta;P)$$\equiv\bigl(\tilde V_i(\beta)(Z_{1i}-\E[Z_{1i}]),\ldots,\tilde V_i(\beta)(Z_{Li}-\E[Z_{Li}])\bigr)'$ and $\tilde V_i(\beta) \equiv (Y_i - \beta D_i) - \E_P[Y_i - \beta D_i]$ is the demeaned residual; demeaning matches the centering induced by the estimated instrument means in the sample moment, and the empirical implementations, which residualize $Y_i$ and $D_i$ on fixed effects before forming moments, impose it automatically. The second-moments matrix depends on $\beta$ through the residuals.

\begin{definition}\label{def:2step}
The \emph{TSGMM estimator} of \citet{Hansen1982} uses a first-step estimator obtained with the identity weight matrix,
which, under Assumption~\ref{ass:relevance}, admits the Wald-share form $\beta^I(P)  \;\equiv\;  \sum_{\ell=1}^L \bigl(\gamma_\ell(P)^2/\|\bgamma(P)\|^2\bigr)\,\Wald_\ell(P).$ The second-step weighting matrix is $\bW^{\TwoS}(P) \equiv \bOmega(\beta^I(P);P)^{-1}$, and the TSGMM estimand is the fixed-weight GMM target at $\bW^{\TwoS}$:
\[
  \beta^{\TwoS}(P) \;\equiv\; \sum_{\ell=1}^L \lambda_\ell\bigl(\bW^{\TwoS}(P);P\bigr)\,\Wald_\ell(P).
\]
\end{definition}

\begin{definition}\label{def:egmm}
Define the population EGMM update map as the variance-weighted average of the $L$ Wald estimands:
\[
  T(\beta;P) \equiv \sum_{\ell=1}^L \lambda_\ell\bigl(\bOmega(\beta;P)^{-1};P\bigr)\,\Wald_\ell(P).
\]
An \emph{EGMM branch} is a locally attracting fixed point where the estimate and its implicit weights align:
\[
  \beta^{\EGMM}(P) = T(\beta^{\EGMM}(P);P).
\]
The associated EGMM weighting matrix map is $\bW^{\EGMM}(P) \equiv  \bOmega( \beta^{\EGMM}(P); P)^{-1}$. When multiple fixed points exist, one is taken as selected throughout, and ``EGMM'' refers exclusively to this selected branch.
\end{definition}

\begin{definition}\label{def:cue}

The \emph{continuously updating estimator} (CUE) of \citet{HansenHeatonYaron1996} jointly evaluates the moment system and weighting matrix at the same candidate:
\[
\begin{aligned}
  \beta^{\CUE}(P) &\;\equiv\; \argmin_{\beta\in\mathcal B} \,  \E[\bg(\beta;P)]'\,\bOmega(\beta;P)^{-1}\, \E[\bg(\beta;P)],\\
  \bW^{\CUE}(P) &\;\equiv\; \bOmega\bigl(\beta^{\CUE}(P);P\bigr)^{-1},
\end{aligned}
\]
where $\mathcal B\subset\mathbb R$ is a fixed compact interval.

\end{definition}

Definitions~\ref{def:2step}--\ref{def:cue} yield three pseudo-true values that are distinct under heterogeneous treatment effects. TSGMM solves $\bgamma(P)'\bW^{\TwoS}(P)\,\E[\bg(\beta;P)]=0$ in closed form at the plug-in weight $\bW^{\TwoS}(P)=\bOmega(\beta^{I}(P);P)^{-1}$. EGMM solves the fixed-point condition $\bgamma(P)'\bOmega(\beta;P)^{-1}\E[\bg(\beta;P)]=0$, with $\bW^{\EGMM}(P)\equiv\bOmega(\beta^{\EGMM}(P);P)^{-1}$. CUE solves the same equation with a variance-score correction,
\begin{equation}\label{eq:cue_foc}
  \bgamma(P)'\bOmega(\beta;P)^{-1}\E[\bg(\beta;P)] \;=\; -\tfrac{1}{2}R(\beta;P),
\end{equation}
where $R(\beta;P)\equiv\E[\bg(\beta;P)]'\bOmega(\beta;P)^{-1}\bOmega_\beta(\beta;P)\bOmega(\beta;P)^{-1}\E[\bg(\beta;P)]$ and $\bOmega_\beta\equiv\partial\bOmega/\partial\beta$ \citep[the ``extra term'' of][]{HansenHeatonYaron1996}. Rearranging~\eqref{eq:cue_foc} with $[\E[\bg(0;P)]]_\ell=\gamma_\ell(P)\Wald_\ell(P)$ and $\bW^{\CUE}(P)\equiv\bOmega(\beta^{\CUE}(P);P)^{-1}$ gives
\begin{equation}\label{eq:cue_decomp}
  \beta^{\CUE}(P) \;=\; \sum_{\ell=1}^L \lambda_\ell\bigl(\bW^{\CUE}(P);P\bigr)\,\Wald_\ell(P) \;+\; \frac{R\bigl(\beta^{\CUE}(P);P\bigr)}{2\,\bgamma(P)'\bW^{\CUE}(P)\bgamma(P)}.
\end{equation}
TSGMM and EGMM are thus exact Wald-share combinations at $\bW^{\TwoS}(P)$ and $\bW^{\EGMM}(P)$, differing only through their anchor, $\beta^{I}(P)$ versus $\beta^{\EGMM}(P)$; the non-Wald remainder in~\eqref{eq:cue_decomp} removes any interpretation of the CUE target as a weighted average of the instrument-specific Wald estimands. All three coincide under correct specification ($\E[\bg(\beta_0;P)]=0$ for some $\beta_0$).

The 2SLS input $\bSigmaZ(P)$ reflects only the instrument covariance, while the TSGMM, EGMM, and CUE inputs evaluate $\bOmega(\beta;P)$ at $\beta^{I}(P)$, $\beta^{\EGMM}(P)$, and $\beta^{\CUE}(P)$, respectively, absorbing the residual variance from fitting a single $\beta$ to $L$ distinct Wald estimands. In the diagonal specialization, increasing an instrument's normalized residual second moment lowers its EGMM weight (Corollary~\ref{cor:egmm_diag}, Appendix~\ref{app:diagonal}); the same reallocation operates for TSGMM and CUE through their $\lambda_\ell$, with CUE additionally absorbing the variance-score remainder of \eqref{eq:cue_decomp}, signed by $\partial_\beta\log[\bOmega(\beta;P)]_{\ell\ell}$. Appendix~\ref{app:general_omega} gives the local channel under general $\bOmega$. This variance-mediated reallocation is the heterogeneity penalty: the residual variance structure shapes the estimand alongside the researcher's substantive target. Since instruments load differently on compliance types, the penalty passes through to the composite type weights $\psi_t^{\mathrm{est}}(P) = \sum_{\ell=1}^L \lambda_\ell(\bW^{\mathrm{est}}(P);P)\,\alpha_{\ell,t}(P)$, $\mathrm{est} \in \{\TwoS, \EGMM, \CUE\}$, which may be negative for types weighted heavily by penalized instruments.\footnote{Inner weights $\alpha_{\ell,t}(P)$ are non-negative under PRD; outer weights $\lambda_\ell$ can be negative under correlated instruments. \citet{MogstadTorgovitskyWalters2021} show 2SLS complier-group weights can turn negative under partial monotonicity; \citet{GoldsmithPinkhamSorkinSwift2020} document negative Rotemberg weights. The heterogeneity penalty is a residual-variance mechanism through which Hansen-efficient GMM can amplify this channel.}

\subsection{The $J$-test and Wald-estimand equality}\label{sec:jtest_wald}

Under maintained instrument validity, the Hansen $J$-test of overidentifying restrictions is a joint test of Wald-estimand equality and is available for each of the three Hansen-efficient estimators of Definitions~\ref{def:2step}--\ref{def:cue}. The asymptotic null distribution is $\chi^2_{L-1}$ for all three; under contiguous local alternatives the limit is noncentral $\chi^2_{L-1}(\delta^2)$ with the same noncentrality parameter $\delta^2$ for all three; under fixed Wald-heterogeneity the three statistics diverge at rate $n$ to three distinct positive constants tied to their estimator-specific pseudo-true values. The diagnostic is conditional: rejection does not by itself separate treatment-effect heterogeneity from violations of exclusion, monotonicity, or the sampling restrictions.

Let $\hat\beta^I\equiv\beta^I(\hat P_n)$, $\hat\beta^{\TwoS}\equiv\beta^{\TwoS}(\hat P_n)$, $\hat\beta^{\EGMM}\equiv\beta^{\EGMM}(\hat P_n)$, and $\hat\beta^{\CUE}\equiv\beta^{\CUE}(\hat P_n)$ denote the sample analogs of the estimands in Definitions~\ref{def:2step}--\ref{def:cue}, and write $\widehat\bOmega(\beta)\equiv\bOmega(\beta;\hat P_n)$. Define the three $J$-statistics
\begin{align*}
  J_n^{\TwoS}  &\equiv n\,\bg_n(\hat\beta^{\TwoS})'\,\widehat\bOmega(\hat\beta^I)^{-1}\,\bg_n(\hat\beta^{\TwoS}),\\
  J_n^{\EGMM}  &\equiv n\,\bg_n(\hat\beta^{\EGMM})'\,\widehat\bOmega(\hat\beta^{\EGMM})^{-1}\,\bg_n(\hat\beta^{\EGMM}),\\
  J_n^{\CUE}   &\equiv n\,\widehat Q(\hat\beta^{\CUE})\;=\;\min_{\beta\in\mathcal B} n\,\widehat Q(\beta), \qquad \widehat Q(\beta)\equiv \bg_n(\beta)'\widehat\bOmega(\beta)^{-1}\bg_n(\beta).
\end{align*}
The TSGMM statistic freezes the weight at the first-step plug-in $\widehat\bOmega(\hat\beta^I)$; the EGMM statistic uses the self-consistent fixed-point weighting matrix $\widehat\bOmega(\hat\beta^{\EGMM})$; the CUE statistic is the minimum of the GMM criterion over $\mathcal B$.

\begin{proposition}[$J$-test and Wald-estimand equality]\label{prop:jtest}
Suppose Assumption~\ref{ass:relevance}, Assumption~\ref{ass:sampling}(a),(d),(e), and Assumption~\ref{ass:gmm-reg} hold. For parts (a) and (b), suppose further that a common value $\beta_0$ satisfies $\E[(Y_i-\beta_0 D_i)(Z_{\ell i}-p^Z_\ell(P_0))]=0$ for every $\ell$. The moment Jacobian $-\bgamma(P_0)$ has rank one and the GMM projection has overidentifying rank $L-1$.
\begin{enumerate}[label=\textnormal{(\alph*)},leftmargin=*]
\item \textnormal{(Null limit.)} Under $H_0:\Wald_1=\Wald_2=\cdots=\Wald_L$,
\[
  J_n^{\TwoS},\ J_n^{\EGMM},\ J_n^{\CUE} \;\Rightarrow\; \chi^2_{L-1}.
\]
\item \textnormal{(Local-alternative limit.)} If contiguous local alternatives $\{P_{n,h}\}\subset\PLATE$ satisfy
\[
\sqrt n\,\E_{P_{n,h}}\bigl[\bg(\beta_0;P_{n,h})\bigr] \to \dot{\bd} \in {\mathbb R}^L,
\qquad
\sqrt n\,\bg_n(\beta_0)\;\Rightarrow_{P_{n,h}}\;N\bigl(\dot{\bd},\,\bOmega(\beta_0;P_0)\bigr),
\]
then
\[
  J_n^{\TwoS},\ J_n^{\EGMM},\ J_n^{\CUE} \;\Rightarrow\; \chi^2_{L-1}(\delta^2),
  \qquad
  \delta^2 \;=\; \dot{\bd}'\,\bM_{\bOmega}\,\dot{\bd},
\]
with the common projector
\[
  \bM_{\bOmega} \;=\; \bOmega^{-1} \;-\; \bOmega^{-1}\bgamma\,\bigl(\bgamma'\bOmega^{-1}\bgamma\bigr)^{-1}\bgamma'\bOmega^{-1},
  \qquad (\bgamma,\bOmega) \;\text{evaluated at } (\beta_0,P_0).
\]
If the Wald estimands move locally, $\Wald_\ell(P_{n,h})=\beta_0+n^{-1/2}\alpha_\ell+o(n^{-1/2})$, then $\dot{\bd}=\diag\{\gamma_\ell(P_0)\}\boldsymbol\alpha$: the noncentrality vanishes exactly when all $\alpha_\ell$ coincide and is positive under local Wald heterogeneity.
\item \textnormal{(Fixed-alternative divergence.)} Under fixed alternatives for which no scalar $\beta$ zeros all $L$ population moments, $n^{-1}J_n^{\mathrm{est}}\xrightarrow{p} c^{\mathrm{est}}_*>0$ for $\mathrm{est}\in\{\TwoS,\EGMM,\CUE\}$, where
\[
  \begin{aligned}
    c^{\TwoS}_*  &= \E[\bg(\beta^{\TwoS};P)]'\,\bOmega(\beta^I;P)^{-1}\E[\bg(\beta^{\TwoS};P)],\\
    c^{\EGMM}_*  &= \E[\bg(\beta^{\EGMM};P)]'\,\bOmega(\beta^{\EGMM};P)^{-1}\E[\bg(\beta^{\EGMM};P)],\\
    c^{\CUE}_*   &= \min_{\beta\in\mathcal B} \E[\bg(\beta;P)]'\,\bOmega(\beta;P)^{-1}\E[\bg(\beta;P)] \;\le\; c^{\EGMM}_*.
  \end{aligned}
\]
Each statistic therefore diverges to $+\infty$ at rate $n$. The constants $c^{\TwoS}_*$ and $c^{\CUE}_*$ (and $c^{\TwoS}_*$ and $c^{\EGMM}_*$) are not in general comparable because $c^{\TwoS}_*$ uses the inner product $\bOmega(\beta^I;P)^{-1}$ while the others use $\bOmega$ evaluated at a different value.
\end{enumerate}
\end{proposition}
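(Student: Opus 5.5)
The plan is to reduce all three statistics to a common quadratic form in $\sqrt n\,\bg_n(\beta_0)$ under (a)–(b), and to use continuous-mapping arguments for (c). Write $\hat\bOmega_*$ for whichever weighting matrix a statistic uses. By linearity, $\bg_n(\beta)=\bg_n(\beta_0)-(\beta-\beta_0)\hat\bgamma$. The fixed-weight GMM estimator at weight $\hat\bOmega_*^{-1}$ satisfies
\[
\hat\beta-\beta_0=\frac{\hat\bgamma'\hat\bOmega_*^{-1}\bg_n(\beta_0)}{\hat\bgamma'\hat\bOmega_*^{-1}\hat\bgamma}.
\]
Hence $\bg_n(\hat\beta)=\bigl(I-\hat\bgamma(\hat\bgamma'\hat\bOmega_*^{-1}\hat\bgamma)^{-1}\hat\bgamma'\hat\bOmega_*^{-1}\bigr)\bg_n(\beta_0)$, and therefore
\[
n\,\bg_n(\hat\beta)'\hat\bOmega_*^{-1}\bg_n(\hat\beta)=\sqrt n\,\bg_n(\beta_0)'\,\widehat{\bM}\,\sqrt n\,\bg_n(\beta_0),
\]
with $\widehat\bM$ the sample projector built from $(\hat\bgamma,\hat\bOmega_*)$. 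This identity is exact for TSGMM, where $\hat\bOmega_*=\widehat\bOmega(\hat\beta^I)$, and for EGMM, where $\hat\bOmega_*=\widehat\bOmega(\hat\beta^{\EGMM})$, since each is a Wald-share combination at its weight (Proposition~\ref{prop:weighted_iv}).

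The key steps for (a)–(b) then follow. First, under the null or the local alternatives, $\hat\beta^I,\hat\beta^{\EGMM},\hat\beta^{\CUE}\xrightarrow{p}\beta_0$. For $\hat\beta^I$ this is Proposition~\ref{prop:weighted_iv}(b) together with contiguity. For EGMM it comes from the locally attracting fixed-point property. For CUE it comes from consistency of the extremum estimator under Assumption~\ref{ass:gmm-reg}, with the population criterion uniquely zero at $\beta_0$. Second, since $\bOmega(\beta;P)$ is a quadratic polynomial in $\beta$ with sample-moment coefficients, $\widehat\bOmega(\tilde\beta)\xrightarrow{p}\bOmega(\beta_0;P_0)$ for each such $\tilde\beta$, and $\hat\bgamma\xrightarrow{p}\bgamma$; these limits hold under $P_{n,h}$ by Le Cam's third lemma or contiguity. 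Hence $\widehat\bM\xrightarrow{p}\bM_{\bOmega}$.

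Third, $\bOmega^{1/2}\bM_{\bOmega}\bOmega^{1/2}$ is the orthogonal projector onto the complement of $\bOmega^{-1/2}\bgamma$, which has rank $L-1$. Writing $\sqrt n\bg_n(\beta_0)=\bOmega^{1/2}\xi$ with $\xi\Rightarrow N(\bOmega^{-1/2}\dot\bd,I)$, the quadratic form converges to $\chi^2_{L-1}(\dot\bd'\bM_{\bOmega}\dot\bd)$, with $\dot\bd=0$ under $H_0$. For the Wald reading, use $[\E\bg(\beta_0)]_\ell=\gamma_\ell(\Wald_\ell-\beta_0)$; the first-stage covariance varies only at order $n^{-1/2}$, so it contributes at second order. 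This gives $\dot\bd=\diag\{\gamma_\ell\}\boldsymbol\alpha$. Then $\delta^2=0$ iff $\bOmega^{-1/2}\dot\bd\parallel\bOmega^{-1/2}\bgamma$, that is, iff $\dot\bd=c\bgamma$, which holds iff all $\alpha_\ell=c$ (using $\gamma_\ell\neq0$).

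For CUE I would not use the identity directly. Instead I would bound $J^{\CUE}_n$ between two quantities sharing the same limit. From above, $J_n^{\CUE}\le n\widehat Q(\hat\beta^{\EGMM})=J_n^{\EGMM}$, because the CUE statistic is the minimum over $\mathcal B$ (for $n$ large, $\hat\beta^{\EGMM}\in\mathcal B$ w.p.a.1). From below, for $\beta$ in an $O_p(n^{-1/2})$ neighborhood, $\widehat\bOmega(\beta)=\widehat\bOmega(\beta_0)+o_p(1)$. So $n\widehat Q(\beta)$ equals the fixed-weight criterion at weight $\widehat\bOmega(\beta_0)^{-1}$ up to $o_p(1)$ uniformly on the neighborhood, and the minimum of that fixed-weight criterion is the projector form. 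The main obstacle is showing $\sqrt n(\hat\beta^{\CUE}-\beta_0)=O_p(1)$. I would get this from the standard expansion of the CUE first-order condition~\eqref{eq:cue_foc}: the remainder $R$ is quadratic in $\bg_n$ and hence $O_p(n^{-1})$, so it is negligible relative to the linear term $\hat\bgamma'\widehat\bOmega^{-1}\bg_n=O_p(n^{-1/2})$.

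For (c), apply uniform laws of large numbers on compact $\mathcal B$: $\widehat Q(\beta)\to Q(\beta)$ uniformly, and $\hat\beta^I,\hat\beta^{\TwoS},\hat\beta^{\EGMM}$ converge to their pseudo-true values. The EGMM convergence uses local attraction and the implicit function theorem at the selected branch. Continuous mapping then yields $n^{-1}J_n^{\mathrm{est}}\to c_*^{\mathrm{est}}$, with $n^{-1}J^{\CUE}_n=\min\widehat Q\to\min Q$. Each constant is positive because no $\beta$ zeros $\E\bg(\beta)$ and the inner products are positive definite; for CUE this also uses compactness of $\mathcal B$ and continuity of $Q$, so the infimum is attained and is strictly positive. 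The inequality $c^{\CUE}_*\le c^{\EGMM}_*=Q(\beta^{\EGMM})$ holds because $c^{\CUE}_*$ minimizes $Q$, provided $\beta^{\EGMM}\in\mathcal B$. The non-comparability claim is a remark and would be supported by noting that the inner products differ.
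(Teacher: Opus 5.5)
Your overall architecture works, and it differs from the paper's proof in useful ways, but one step is not justified as written.

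\textbf{Where you differ.} Throughout, let $0<c<C$ be uniform bounds on the eigenvalues of $\widehat\bOmega(\beta)$ over $\mathcal B$, which hold with probability tending to one.

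For TSGMM and EGMM you use the exact identity $n\,\bg_n(\hat\beta)'\hat\bOmega_*^{-1}\bg_n(\hat\beta)=\sqrt n\,\bg_n(\beta_0)'\,\widehat\bM\,\sqrt n\,\bg_n(\beta_0)$. It needs only convergence of the weighting matrix. The paper instead first proves $\hat\beta-\beta_0=O_p(n^{-1/2})$ and then expands.

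For CUE, the paper gets the rate from the criterion chain $C^{-1}\|\bg_n(\hat\beta^{\CUE})\|^2\le\widehat Q(\hat\beta^{\CUE})\le\widehat Q(\beta_0)\le c^{-1}\|\bg_n(\beta_0)\|^2$ plus the affine identity. That route needs no first-order condition, no prior consistency and no interiority. The paper then derives the argmin expansion. You instead sandwich $J_n^{\CUE}$ between $J_n^{\EGMM}$ and the minimized fixed-weight criterion at $\widehat\bOmega(\beta_0)^{-1}$. This is fine, because both bounds are quadratic forms in the same vector $\sqrt n\,\bg_n(\beta_0)$, with projectors that share a limit.

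Your first-order-condition route to the CUE rate is valid only after consistency. As written, ``$R$ is quadratic in $\bg_n$, hence $O_p(n^{-1})$'' presupposes $\bg_n(\hat\beta^{\CUE})=O_p(n^{-1/2})$, which is the rate being proved. The correct version bounds $\widehat R(\hat\beta^{\CUE})\le K\{\|\bg_n(\beta_0)\|^2+\|\hat\bgamma\|^2(\hat\beta^{\CUE}-\beta_0)^2\}$. The second term is $o_p(1)\,|\hat\beta^{\CUE}-\beta_0|$ and is absorbed into the left side.

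In (c), your step $n^{-1}J_n^{\CUE}=\min_{\mathcal B}\widehat Q\to\min_{\mathcal B}Q$ by uniform convergence is simpler than the paper's detour through CUE consistency.

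\textbf{The gap: consistency of $\hat\beta^{\EGMM}$.} Both your EGMM identity and your CUE upper bound rest on it.
\begin{itemize}
\item ``Local attraction'' is a property of the population map $T(\cdot;P)$. It says nothing about where a sample fixed point lies.
\item In (c), the implicit function theorem gives only \emph{some} sample root near $\beta^{\EGMM}(P)$, not that the selected root is that one. It also needs $\partial_\beta T\neq1$, which attraction does not guarantee.
\end{itemize}

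The fix in (a)--(b) is the paper's normal-equation bound. Every sample fixed point $\hat\beta$ in $\mathcal B$ satisfies $\hat\beta-\beta_0=\hat\bgamma'\bW_n\bg_n(\beta_0)/(\hat\bgamma'\bW_n\hat\bgamma)$, with $\bW_n=\widehat\bOmega(\hat\beta)^{-1}$ having eigenvalues in $[C^{-1},c^{-1}]$. Hence $|\hat\beta^{\EGMM}-\beta_0|\le (C/c)\|\bg_n(\beta_0)\|/\|\hat\bgamma\|=O_p(n^{-1/2})$, whichever root is selected.

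In (c), use uniqueness of the population fixed point in $\mathcal B$ (Assumption~\ref{ass:gmm-reg}(d)) together with uniform convergence of the sample update map on compact $\mathcal B$. Then $|\beta-T(\beta;P)|$ is bounded away from zero off any neighborhood of $\beta^{\EGMM}(P)$. So every sample fixed point in $\mathcal B$ enters that neighborhood with probability tending to one.

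With this repair the rest of your argument goes through. For $\dot\bd$, only $\gamma_\ell(P_{n,h})\to\gamma_\ell(P_0)$ is needed; the paper gets this from contiguity, which keeps $n$ times the squared Hellinger distance bounded. Your $O(n^{-1/2})$ claim follows from the same argument but is stated without proof.
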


\begin{remark}\label{rem:jtest_yardsticks}
Proposition~\ref{prop:jtest} shows that the three $J$-statistics test the same null and share a first-order limit, but differ in their pseudo-true values under fixed heterogeneity. Since the CUE statistic globally minimizes $n\,\widehat Q$ over $\mathcal B$, $c^{\CUE}_*\le c^{\EGMM}_*$ and $J_n^{\CUE}\le J_n^{\EGMM}$ whenever $\widehat\bOmega(\beta)$ is invertible on $\mathcal B$, the CUE minimizer exists, and the selected EGMM fixed point lies in $\mathcal B$: $\hat\beta^{\EGMM}$ satisfies the plug-in normal equation alone, omitting $\widehat R(\hat\beta^{\EGMM})$ from \eqref{eq:cue_foc}. The frozen weight $\widehat\bOmega(\hat\beta^I)^{-1}$ leaves the TSGMM residual-maker Loewner-incomparable to the other two, so no deterministic ordering ties $J_n^{\TwoS}$ to $J_n^{\EGMM}$ or $J_n^{\CUE}$.
\end{remark}

\noindent Under maintained instrument validity, rejection by any of $J_n^{\TwoS}$, $J_n^{\EGMM}$, or $J_n^{\CUE}$ indicates unequal instrument-specific Wald estimands, not necessarily invalid instruments.\footnote{This conditional diagnostic reading aligns with \citet{GoldsmithPinkhamSorkinSwift2020} and \citet{MogstadTorgovitskyWalters2021}, while \citet{AndrewsChen2025} show the $J$-statistic characterizes the range of achievable estimates under local misspecification. The first-order equivalence of parts~(a) and~(b) of Proposition~\ref{prop:jtest} across the three estimators is the standard GMM noncentrality calculation; see \citet{NeweyMcFadden1994} for the parametric submodel projection underlying $\bM_{\bOmega}$.} This reflects treatment-effect heterogeneity interacting with varying instrument weights, causing different GMM weighting matrix maps to target distinct combinations of $\LATE_t(P)$, conditional on the maintained exclusion restriction.

\subsection{Diagonal specialization}\label{sec:diagonal}

\begin{assumption}[Independent instruments]\label{ass:diag_instruments}
$Z_{1i}, \ldots, Z_{Li}$ are mutually independent.
\end{assumption}

\begin{assumption}[Non-overlapping compliance]\label{ass:diag_nonoverlap}
$\mathcal{C}_\ell(P_0) \cap \mathcal{C}_k(P_0) = \emptyset$ for all $\ell \neq k$.
\end{assumption}

\noindent Under Assumptions~\ref{ass:joint_indep}--\ref{ass:relevance} and~\ref{ass:diag_instruments}, $\bSigmaZ = \diag(p^Z_\ell(1-p^Z_\ell))$, the indirect effect in Lemma~\ref{lem:di} vanishes, and all weights $\alpha_{\ell,t}(P)$ are non-negative. Combined with Assumption~\ref{ass:diag_nonoverlap}, each complier group $\mathcal{C}_\ell$ contains a single compliance type, and $\Wald_\ell = \E[Y_i(1)-Y_i(0)\mid i\in\mathcal{C}_\ell]$: each Wald estimand is the single-type LATE of \citet{ImbensAngrist1994}.

Throughout the diagonal specialization, write $s_\ell(P)\equiv p^Z_\ell(P)\{1-p^Z_\ell(P)\}$ and $\sigma^2_{\epsilon,\ell}(\beta;P)\equiv[\bOmega(\beta;P)]_{\ell\ell}/s_\ell(P)$, the normalized residual second moment for instrument $\ell$ at evaluation point $\beta$. For brevity, we write $s_\ell(P)$ and $\sigma^2_{\epsilon,\ell}(\beta;P)$ as $s_\ell$ and $\sigma^2_{\epsilon,\ell}(\beta)$ when no confusion arises.

\begin{corollary}[2SLS under independent instruments]\label{cor:2sls_diag}
Under Assumption~\ref{ass:relevance}, Assumption~\ref{ass:sampling}(a),(d),(e), and Assumption~\ref{ass:diag_instruments}, the 2SLS weights are
\[
    \lambda^{2\mathrm{SLS}}_\ell \;=\; \frac{\pi_\ell^2 \, s_\ell}{\sum_k \pi_k^2 \, s_k}.
\]
Each weight is proportional to the instrument's marginal first-stage variance $\pi_\ell^2 s_\ell$.
\end{corollary}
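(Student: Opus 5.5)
The plan is to substitute the 2SLS weighting map into the closed-form shares of Proposition~\ref{prop:weighted_iv}(a) and simplify using the diagonal covariance structure. Under Assumption~\ref{ass:diag_instruments}, the covariance matrix is diagonal, $\bSigmaZ=\diag(s_1,\dots,s_L)$ with $s_\ell=p^Z_\ell(1-p^Z_\ell)>0$ by Assumption~\ref{ass:relevance}. Its inverse is therefore $\diag(1/s_\ell)$, and so $[\bSigmaZ^{-1}\bgamma]_\ell=\gamma_\ell/s_\ell$.

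Plugging this into the formula for $\lambda^{2\mathrm{SLS}}_\ell$ gives
\[
\lambda^{2\mathrm{SLS}}_\ell=\frac{\gamma_\ell^2/s_\ell}{\sum_k \gamma_k^2/s_k}.
\]

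The second step is to express the covariance $\gamma_\ell$ in terms of the first-stage coefficient $\pi_\ell$. For a binary $Z_\ell$, the standard identity is
\[
\Cov(D,Z_\ell)=p^Z_\ell(1-p^Z_\ell)\bigl(\E[D\mid Z_\ell=1]-\E[D\mid Z_\ell=0]\bigr),
\]
which follows by writing $\Cov(D,Z_\ell)=\E[DZ_\ell]-\E[D]p^Z_\ell$ and decomposing $\E[D]$ by the law of total expectation over $Z_\ell$. Hence $\gamma_\ell=s_\ell\pi_\ell$. This uses only that $Z_\ell$ is binary, not independence.

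Substituting, $\gamma_\ell^2/s_\ell=\pi_\ell^2 s_\ell$, and the stated formula follows. The denominator $\sum_k\pi_k^2s_k$ is strictly positive because each $\pi_k>0$ and each $s_k>0$. Since every term is nonnegative, the shares are also nonnegative and sum to one, consistent with the general sum-to-one property.

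The sampling assumptions cited in the statement are needed only so that Proposition~\ref{prop:weighted_iv} applies. The population formula is purely algebraic, so I expect no real obstacle. The one point requiring care is confirming that the centered-moment definition of $\gamma_\ell$ matches the conditional-mean difference scaled by $s_\ell$, and that the diagonal form of $\bSigmaZ$ follows from mutual independence (which in fact needs only pairwise zero covariance).
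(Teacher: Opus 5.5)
Your proposal is correct and follows essentially the same route as the paper: substitute the diagonal $\bSigmaZ$ into the 2SLS share formula and use $\gamma_\ell=\pi_\ell s_\ell$, which the paper takes from Assumption~\ref{ass:sampling}(e) and you derive directly from $Z_\ell$ being binary. The only difference is the order of simplification: the paper first writes $[\bSigmaZ^{-1}\bgamma]_\ell=\pi_\ell$ and then forms $\gamma_\ell\pi_\ell$, whereas you form $\gamma_\ell^2/s_\ell$ first and then convert it to $\pi_\ell^2 s_\ell$.
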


\begin{corollary}[EGMM under diagonal $\bOmega$]\label{cor:egmm_diag}
Under Assumptions~\ref{ass:joint_indep}, \ref{ass:monotonicity}, \ref{ass:relevance}, \ref{ass:diag_instruments}, and~\ref{ass:diag_nonoverlap}, Assumption~\ref{ass:sampling}, and the regularity conditions of Assumption~\ref{ass:gmm-reg}, $\bOmega$ is diagonal and the EGMM weights are
\[
    \lambda^{\EGMM}_\ell \;=\; \frac{\pi_\ell^2 \,  s_\ell\, / \, \sigma^2_{\epsilon,\ell}(\beta^{\EGMM})}{\sum_k \pi_k^2 \,  s_k \, / \, \sigma^2_{\epsilon,k}(\beta^{\EGMM})}. 
\]
\end{corollary}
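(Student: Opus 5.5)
The plan is to plug a diagonal weighting matrix into the closed-form shares of Proposition~\ref{prop:weighted_iv}(a). Three steps are needed: show that $\bOmega(\beta;P_0)$ is diagonal for every $\beta$, so in particular at the selected fixed point $\beta^{\EGMM}$ of Definition~\ref{def:egmm}; rewrite the first-stage covariance $\gamma_\ell$ in terms of $\pi_\ell$ and $s_\ell$; and substitute $\bW^{\EGMM}=\bOmega(\beta^{\EGMM})^{-1}$ into $\lambda_\ell=\gamma_\ell[\bW\bgamma]_\ell/\bgamma'\bW\bgamma$. Assumption~\ref{ass:gmm-reg} is used only to guarantee that the EGMM branch exists and that $\bOmega(\beta^{\EGMM})$ is invertible.

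\textbf{Step 1 (each type responds to at most one instrument).} Take a positive-mass type $t\in\mathcal C_\ell(P_0)$. I claim that $t(\bz)$ depends on $\bz$ only through $z_\ell$. If $t$ also varied with $z_k$ for some $k\neq\ell$, then by definition $t\in\mathcal C_k(P_0)$, contradicting Assumption~\ref{ass:diag_nonoverlap}. Types in no complier group are constant in $\bz$.

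\textbf{Step 2 ($\bOmega$ is diagonal; the main obstacle).} Write $\tilde V_i(\beta)=Y_i(D_i)-\beta D_i-c(\beta)$, where $c(\beta)$ is a constant. Condition on the type $D_i(\cdot)=t$ and on $(Y_i(0),Y_i(1))$. By Assumption~\ref{ass:joint_indep} together with Step 1, $\tilde V_i(\beta)^2$ is then a function of at most one instrument, say $Z_{m}$, where $m$ is the unique index with $t\in\mathcal C_m$; if no such $m$ exists, $\tilde V_i(\beta)^2$ is constant given the conditioning.

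Now fix $\ell\neq k$. At least one of them, say $k$, differs from $m$. By Assumption~\ref{ass:joint_indep}, $Z_k$ is independent of the conditioning variables, and by Assumption~\ref{ass:diag_instruments} it is independent of $(Z_\ell,Z_m)$. Hence
\[
\E\bigl[\tilde V_i^2 (Z_\ell-p^Z_\ell)(Z_k-p^Z_k)\mid t, Y_i(0),Y_i(1)\bigr]=\E\bigl[\tilde V_i^2(Z_\ell-p^Z_\ell)\mid t,Y_i(0),Y_i(1)\bigr]\cdot \E[Z_k-p^Z_k]=0 .
\]
Integrating over the conditioning variables gives $[\bOmega(\beta)]_{\ell k}=0$ for all $\beta$. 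The care needed here is the bookkeeping: confirming that no type links two instruments, and that demeaning by the constant $c(\beta)$ does not reintroduce dependence across instruments.

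\textbf{Step 3 (substitution).} Since $Z_\ell$ is binary, $\gamma_\ell=\Cov(D,Z_\ell)=\pi_\ell s_\ell$. With $\bW=\diag\{[\bOmega(\beta^{\EGMM})]_{\ell\ell}^{-1}\}$ and $[\bOmega(\beta^{\EGMM})]_{\ell\ell}=s_\ell\sigma^2_{\epsilon,\ell}(\beta^{\EGMM})$, we get $[\bW\bgamma]_\ell=\gamma_\ell/(s_\ell\sigma^2_{\epsilon,\ell})$. Therefore
\[
\lambda_\ell\propto \frac{\pi_\ell^2 s_\ell^2}{s_\ell\,\sigma^2_{\epsilon,\ell}(\beta^{\EGMM})}=\frac{\pi_\ell^2 s_\ell}{\sigma^2_{\epsilon,\ell}(\beta^{\EGMM})}.
\]
Normalizing by $\bgamma'\bW\bgamma=\sum_k \pi_k^2 s_k/\sigma^2_{\epsilon,k}(\beta^{\EGMM})$ gives the stated formula. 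This also shows directly that the shares sum to one.
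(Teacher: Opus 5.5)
Your proposal is correct and follows essentially the same route as the paper. The paper proves that $\bOmega(\beta;P)$ is diagonal at every anchor by the same type-by-type conditioning argument (Lemma~\ref{lem:omega-diagonal}): each positive-mass type depends on at most one instrument coordinate, so for $\ell\neq k$ one centered instrument factor is independent of the rest and has mean zero. It then substitutes $\bW^{\EGMM}=\bOmega(\beta^{\EGMM};P)^{-1}$ into Proposition~\ref{prop:weighted_iv}(a) with $\gamma_\ell=\pi_\ell s_\ell$. The one step you leave implicit in Step 1 is the use of Assumption~\ref{ass:monotonicity}. It is what turns ``$t$ varies with $z_k$'' into a strict increase, hence $t\in\mathcal C_k(P_0)$, and it is also what makes a type outside every complier group constant in $\bz$.
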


\begin{corollary}[TSGMM under diagonal $\bOmega$]\label{cor:2step_diag}
Under the assumptions of Corollary~\ref{cor:egmm_diag}, the first-step Wald-share weights are
\begin{equation}
  \lambda^I_\ell \;=\; \frac{\gamma_\ell^2}{\|\bgamma\|^2} \;=\; \frac{\pi_\ell^2\,s_\ell^2}{\sum_{k=1}^L \pi_k^2\,s_k^2},
\end{equation}
and the second-step Wald-share weights are
\begin{equation}\label{eq:2step_diag_weights}
  \lambda^{\TwoS}_\ell \;=\; \frac{\pi_\ell^2\,s_\ell\,/\,\sigma^2_{\epsilon,\ell}(\beta^I)}{\sum_{k=1}^L \pi_k^2\,s_k\,/\,\sigma^2_{\epsilon,k}(\beta^I)},
\end{equation}
formally identical to $\lambda^{\EGMM}_\ell$ but evaluated at $\beta^I$ rather than $\beta^{\EGMM}$.
\end{corollary}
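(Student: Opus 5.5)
The plan is to reduce both displays to the Wald-share formula of Proposition~\ref{prop:weighted_iv}(a), $\lambda_\ell(\bW;P)=\gamma_\ell[\bW\bgamma]_\ell/\bgamma'\bW\bgamma$, evaluated at two specific weighting matrices. The first input is the factorization of the first-stage covariance under binary instruments. Because $Z_\ell$ is binary, $\gamma_\ell=\Cov(D,Z_\ell)=s_\ell\,\pi_\ell$, where $s_\ell=p^Z_\ell(1-p^Z_\ell)$. This follows from $\Cov(D,Z_\ell)=p^Z_\ell(1-p^Z_\ell)\{\E[D\mid Z_\ell=1]-\E[D\mid Z_\ell=0]\}$ and does not need independence.

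For the first-step weights, the identity map gives $[\bW\bgamma]_\ell=\gamma_\ell$. Hence $\lambda^I_\ell=\gamma_\ell^2/\|\bgamma\|^2$, which is also the form recorded in Definition~\ref{def:2step}. Substituting $\gamma_\ell=\pi_\ell s_\ell$ yields $\pi_\ell^2 s_\ell^2/\sum_k\pi_k^2 s_k^2$, and Assumption~\ref{ass:relevance} guarantees that the denominator is positive.

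For the second-step weights, I will use that $\bOmega(\beta;P)$ is diagonal at every $\beta$ under the assumptions of Corollary~\ref{cor:egmm_diag}, and in particular at $\beta=\beta^I(P)$. Diagonality at a generic $\beta$ is exactly what is needed here, whereas Corollary~\ref{cor:egmm_diag} only uses it at $\beta^{\EGMM}$. If the argument behind Corollary~\ref{cor:egmm_diag} is stated only at the fixed point, I will redo it for arbitrary $\beta$. The off-diagonal entry is $\E[\tilde V_i(\beta)^2(Z_\ell-p_\ell)(Z_k-p_k)]$. Condition on $\bZ$: under joint independence, $\E[\tilde V^2\mid\bZ]$ depends on $\bZ$ only through the types that respond, and non-overlapping compliance plus mutually independent instruments make it additively separable across instruments. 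Write this as $\E[\tilde V^2\mid\bZ]=c+\sum_j h_j(Z_j)$. Then the cross moment with $(Z_\ell-p_\ell)(Z_k-p_k)$ vanishes for $\ell\neq k$ by independence and centering.

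I expect this additive-separability step to be the main obstacle. I would argue it as follows. Since each type lies in at most one complier group, each individual's treatment, and hence residual, depends on at most one coordinate of $\bZ$. Therefore $\E[\tilde V^2\mid\bZ]=\sum_t\theta_t\,\E[\tilde V^2\mid t,\bZ]$, where each summand is a function of a single $Z_j$ (or of none).

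With $\bW^{\TwoS}=\bOmega(\beta^I)^{-1}=\diag(1/[\bOmega(\beta^I)]_{\ell\ell})$, we get $[\bW\bgamma]_\ell=\gamma_\ell/[\bOmega(\beta^I)]_{\ell\ell}$. So $\lambda_\ell\propto\gamma_\ell^2/[\bOmega(\beta^I)]_{\ell\ell}=\pi_\ell^2s_\ell^2/(s_\ell\sigma^2_{\epsilon,\ell}(\beta^I))=\pi_\ell^2 s_\ell/\sigma^2_{\epsilon,\ell}(\beta^I)$, using the definition $\sigma^2_{\epsilon,\ell}(\beta)=[\bOmega(\beta)]_{\ell\ell}/s_\ell$. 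Normalizing by the sum gives \eqref{eq:2step_diag_weights}. Positive definiteness of $\bOmega(\beta^I)$, from Assumption~\ref{ass:gmm-reg}, ensures that each diagonal entry is positive. The identical algebra at $\beta^{\EGMM}$ reproduces Corollary~\ref{cor:egmm_diag}, which confirms the claim that the two formulas coincide except for the evaluation point.
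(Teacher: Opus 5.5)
Your proposal is correct and follows essentially the same route as the paper. You substitute $\bW=\mathbf{I}$ and $\bW^{\TwoS}=\bOmega(\beta^I;P)^{-1}$ into the Wald-share formula of Proposition~\ref{prop:weighted_iv}(a), factor $\gamma_\ell=\pi_\ell s_\ell$, and use diagonality of $\bOmega$ at $\beta^I$. That diagonality is already Lemma~\ref{lem:omega-diagonal}, stated at every anchor $\beta$ and proved by the same type-by-type argument as your additive-separability step; note that the step ``a type outside $\mathcal{C}_m$ is constant in $Z_m$'' also needs monotonicity, not only non-overlap.
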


\begin{corollary}[CUE under diagonal $\bOmega$]\label{cor:cue_diag}
Under the assumptions of Corollary~\ref{cor:egmm_diag}, the CUE Wald-share weights are
\begin{equation}\label{eq:cue_diag_weights}
  \lambda^{\CUE}_\ell \;=\; \frac{\pi_\ell^2\,s_\ell\,/\,\sigma^2_{\epsilon,\ell}(\beta^{\CUE})}{\sum_{k=1}^L \pi_k^2\,s_k\,/\,\sigma^2_{\epsilon,k}(\beta^{\CUE})},
\end{equation}
formally identical to $\lambda^{\EGMM}_\ell$ but evaluated at $\beta^{\CUE}$ rather than $\beta^{\EGMM}$. The variance-score remainder in~\eqref{eq:cue_decomp} admits the closed form
\begin{equation}\label{eq:cue_diag_remainder}
  \frac{R\bigl(\beta^{\CUE};P\bigr)}{2\,\bgamma(P)'\bW^{\CUE}(P)\,\bgamma(P)}
  \;=\;
  \frac{1}{2}\sum_{\ell=1}^L \lambda^{\CUE}_\ell\,\bigl(\Wald_\ell(P) - \beta^{\CUE}\bigr)^2\,\frac{\partial}{\partial\beta}\log \sigma^2_{\epsilon,\ell}(\beta^{\CUE}).
\end{equation}
\end{corollary}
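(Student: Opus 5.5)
The plan is to reduce everything to Proposition~\ref{prop:weighted_iv}(a) evaluated at the diagonal weighting matrix $\bW^{\CUE}(P)=\bOmega(\beta^{\CUE};P)^{-1}$, and then to read off the remainder in \eqref{eq:cue_decomp} entry by entry.

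\emph{Step 1: diagonality on a neighbourhood.} Corollary~\ref{cor:egmm_diag} asserts that $\bOmega$ is diagonal. For the remainder I need more than that: $\bOmega(\beta;P)$ must be diagonal for every $\beta$ in a neighbourhood of $\beta^{\CUE}$, so that $\bOmega_\beta$ is diagonal too. I would re-derive this directly and expect it to be the main step.

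Write $Y_i-\beta D_i=Y_i(0)+D_i(\bZ_i)\{Y_i(1)-Y_i(0)-\beta\}$. Condition on the type $t$, on the potential outcomes, and on $\bZ_{-\ell,-k}$. By Assumption~\ref{ass:joint_indep}, conditioning on the type and potential outcomes leaves the law of $\bZ_i$ unchanged. By Assumption~\ref{ass:diag_nonoverlap}, a type in $\mathcal{C}_\ell$ does not respond to $Z_k$, a type in $\mathcal{C}_k$ does not respond to $Z_\ell$, and other types respond to neither. Hence $\tilde V_i(\beta)^2$, the demeaned residual squared, is conditionally a function of at most one of $Z_\ell$, $Z_k$; demeaning by a constant does not affect this.

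By Assumption~\ref{ass:diag_instruments}, $Z_\ell$ and $Z_k$ are independent of each other and of the rest. So $\E[\tilde V_i(\beta)^2(Z_\ell-p^Z_\ell)(Z_k-p^Z_k)\mid\cdot\,]$ factorizes, and it contains a centred factor with zero mean. The off-diagonal entries therefore vanish identically in $\beta$, and $\bOmega_\beta$ is diagonal.

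\emph{Step 2: the weights.} Independence gives $\gamma_\ell=\Cov(D,Z_\ell)=\pi_\ell s_\ell$. By definition $[\bOmega]_{\ell\ell}=s_\ell\,\sigma^2_{\epsilon,\ell}(\beta)$. With $\bW$ diagonal, $[\bW\bgamma]_\ell=\gamma_\ell/[\bOmega]_{\ell\ell}$. The formula in Proposition~\ref{prop:weighted_iv}(a) then gives
\[
\lambda^{\CUE}_\ell\propto \frac{\gamma_\ell^2}{[\bOmega]_{\ell\ell}}=\frac{\pi_\ell^2 s_\ell}{\sigma^2_{\epsilon,\ell}(\beta^{\CUE})},
\]
and normalizing yields \eqref{eq:cue_diag_weights}. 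This is the same computation as in Corollary~\ref{cor:egmm_diag}, with the evaluation point changed.

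\emph{Step 3: the remainder.} First, $[\E\bg(\beta;P)]_\ell=\Cov(Y-\beta D,Z_\ell)=\gamma_\ell(\Wald_\ell-\beta)$. Using diagonality of $\bOmega$ and $\bOmega_\beta$,
\[
R=\sum_\ell \gamma_\ell^2(\Wald_\ell-\beta)^2\,\frac{\partial_\beta[\bOmega]_{\ell\ell}}{[\bOmega]_{\ell\ell}^2}
=\sum_\ell \frac{\gamma_\ell^2}{[\bOmega]_{\ell\ell}}(\Wald_\ell-\beta)^2\,\partial_\beta\log[\bOmega]_{\ell\ell}.
\]
Since $s_\ell$ does not depend on $\beta$, $\partial_\beta\log[\bOmega]_{\ell\ell}=\partial_\beta\log\sigma^2_{\epsilon,\ell}(\beta)$. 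The denominator in \eqref{eq:cue_decomp} is $2\bgamma'\bW^{\CUE}\bgamma=2\sum_k\gamma_k^2/[\bOmega]_{kk}$. Dividing, each ratio $\bigl(\gamma_\ell^2/[\bOmega]_{\ell\ell}\bigr)\big/\sum_k\gamma_k^2/[\bOmega]_{kk}$ is exactly $\lambda^{\CUE}_\ell$. Evaluating at $\beta=\beta^{\CUE}$ gives \eqref{eq:cue_diag_remainder}.

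Differentiability of $\sigma^2_{\epsilon,\ell}$ and invertibility of $\bOmega$ near $\beta^{\CUE}$ are taken from Assumption~\ref{ass:gmm-reg}. Positivity of $[\bOmega]_{\ell\ell}$ is also needed and is presumably covered there as well.
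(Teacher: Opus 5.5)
Your proposal is correct and follows the paper's own proof. The paper likewise relies on diagonality of $\bOmega(\beta;P)$ at every anchor $\beta$ (Lemma~\ref{lem:omega-diagonal}, proved by the same type-by-type argument as your Step~1, which isolates a mean-zero centred instrument factor), reads the weights off Proposition~\ref{prop:weighted_iv}(a) at $\bW^{\CUE}$, and computes $R$ entrywise before dividing by $2\bgamma'\bW^{\CUE}\bgamma$. The regularity you flag is covered: positivity of $[\bOmega]_{\ell\ell}$ follows from Assumption~\ref{ass:gmm-reg}(b),(c), and differentiability in $\beta$ is automatic because each entry of $\bOmega(\beta;P)$ is a quadratic polynomial in $\beta$.
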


  \noindent The ratio $\lambda^{\EGMM}_\ell/\lambda^{\mathrm{2SLS}}_\ell \propto 1/\sigma^2_{\epsilon,\ell}(\beta^{\EGMM})$ gives the heterogeneity penalty in closed form. Holding the evaluation point and other residual components fixed, $\sigma^2_{\epsilon,\ell}$ increases with the within-complier treatment-effect variance $\sigma^2_{\tau,\ell} \equiv \Var(Y_i(1)-Y_i(0)\mid D_i(\cdot)\in\mathcal{C}_\ell(P))$, so under diagonality $\partial\lambda^{\EGMM}_\ell/\partial\sigma^2_{\tau,\ell}<0$ (Appendix~\ref{app:diagonal}): EGMM downweights instruments whose complier groups exhibit high treatment-effect dispersion, with 2SLS and EGMM weights coinciding iff $\sigma^2_{\epsilon,\ell}(\beta^{\EGMM})$ is constant across $\ell$. The TSGMM and CUE shares $\lambda^{\TwoS}_\ell$ and $\lambda^{\CUE}_\ell$ inherit the same mechanism at $\beta^{I}$ and $\beta^{\CUE}$. TSGMM carries no variance-score remainder, so the TSGMM-vs-EGMM gap stems entirely from evaluating $\sigma^2_{\epsilon,\ell}$ at $\beta^{I}$ versus $\beta^{\EGMM}$; CUE adds the non-Wald remainder \eqref{eq:cue_diag_remainder}, vanishing under homogeneity ($\Wald_\ell = \beta^{\CUE}$ for all $\ell$), each term otherwise signed by $\partial_\beta\log\sigma^2_{\epsilon,\ell}(\beta^{\CUE})$.

\subsection{MTE representation}\label{sec:mte_penalty} 
 
The compliance-type weights $\alpha_{\ell,t}(P)$ of Proposition~\ref{prop:wald_decomp} are discrete objects indexed by positive-support response types. A supplementary latent-index assumption places these discrete weights on a continuous latent-resistance scale.
\begin{assumption}[Latent-index representation]\label{ass:latent_index}
The propensity score $p(z)$ is coordinate-wise nondecreasing on $\supp(\bZ)$ under the instrument partial order. There exists a latent variable $U_i \sim \mathrm{Uniform}(0,1)$ with $(Y_i(0),Y_i(1),U_i)\perp\!\!\!\perp \bZ_i$ such that, for every $z\in\supp(\bZ)$,
\[
  D_i(z)=\ind\{p(z)\geq U_i\}.
\]
\end{assumption} 
\noindent 
\citet{Vytlacil2002} shows that Assumption~\ref{ass:latent_index} is equivalent to the uniform-ordering (index-assignment) strengthening of coordinate-wise monotonicity.\footnote{Under the maintained  Assumptions~\ref{ass:joint_indep}-\ref{ass:realized_exclusion},  Assumption \ref{ass:latent_index}  is equivalent to  
the pairwise uniform monotonicity condition that, for every pair $z,z' \in \supp(\bZ)$, either $D_i(z)\geq D_i(z')$ a.s.\ or $D_i(z)\leq D_i(z')$ a.s. holds.}

Under Assumption~\ref{ass:latent_index}, the space of realized-support compliance patterns reduces to a subset with threshold types on the realized propensity ordering. Let $0 \leq p_1 < p_2 < \cdots < p_K \leq 1$ be the distinct values of $p(z)$ on $\supp(\bZ)$, with boundary convention $p_0 = 0$ and $p_{K+1} = 1$. The threshold types are $t_k(z) \equiv \ind\{p(z) \geq p_{k+1}\}$ for $k = 0, 1, \ldots, K$, with type probabilities $\theta_{t_k} = p_{k+1} - p_k$. Type $t_0$ is the always-taker (active when $p_1>0$), and $t_K$ is the never-taker (active when $p_K<1$). The threshold indexing is used only under this supplementary structure.  

In common-source designs, this is a restriction on how potential treatment decisions respect the underlying scalar ordering, not a consequence of the instrument labels alone. For instance, in a cumulative-threshold design where the instrument support forms a chain, applying Assumption~\ref{ass:monotonicity} along that chain automatically delivers the required uniform ordering within each cell.\footnote{Non-overlapping-compliance designs do not require this global scalar-resistance ordering for the discrete diagonal results; the STAR application uses the non-overlap specialization of Section~\ref{sec:diagonal}. The Patent MTE and PRTE analysis maintains the nested threshold ordering and checks the local-support implications in Section~\ref{sec:patent}.}

  Assumption~\ref{ass:latent_index} maps each active threshold type to a contiguous interval of latent resistance, turning the type-level decomposition into a continuous integral over the marginal treatment effect curve.

\begin{lemma}[Threshold types as latent-resistance intervals]\label{lem:type_interval}
Under Assumption~\ref{ass:latent_index}, each active threshold type $t_k$ induced by the realized propensity ordering with $\theta_{t_k} > 0$ corresponds to a single contiguous interval $\mathcal{R}_{t_k} \subset [0,1]$, determined by the distinct ordered values of $p(z)$ across $z \in \supp(\bZ)$. The intervals $\{\mathcal{R}_{t_k}\}$ are non-overlapping and partition $(0,1]$ up to the choice of half-open endpoints, with always-takers at the bottom and never-takers at the top.
\end{lemma}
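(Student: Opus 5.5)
The plan is to compute, for each realized compliance pattern, the set of latent resistances $u$ that generate it, and to show that this set is exactly one interval between consecutive distinct propensity values. Under Assumption~\ref{ass:latent_index}, an individual with $U_i=u$ has potential treatment function $z\mapsto \ind\{p(z)\ge u\}$ on $\supp(\bZ)$. I would first fix $u\in(0,1]$ and let $k$ be the unique index with $p_k < u \le p_{k+1}$, using the boundary conventions $p_0=0$ and $p_{K+1}=1$. For every $z\in\supp(\bZ)$, $p(z)$ is one of $p_1,\dots,p_K$. Hence $p(z)\ge u$ holds if and only if $p(z)\ge p_{k+1}$, because no distinct propensity value lies strictly between $p_k$ and $p_{k+1}$. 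This gives $D_i(z)=t_k(z)$ on $\supp(\bZ)$ whenever $U_i\in(p_k,p_{k+1}]$. I would then define $\mathcal{R}_{t_k}\equiv(p_k,p_{k+1}]$, so that $\{D_i(\cdot)=t_k\}=\{U_i\in\mathcal{R}_{t_k}\}$ up to the null event $U_i=0$.

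Second, I would check that this map from types to intervals is well defined and injective on active types. The patterns $t_k$ and $t_j$ with $k<j$ differ at any $z$ with $p(z)=p_{k+1}$: such a $z$ exists when $k+1\le K$, and it gives $t_k(z)=1$ while $t_j(z)=0$. So distinct indices yield distinct types. The case $k=K$ is the all-zero pattern and differs from every other $t_k$, each of which has some $z$ with $t_k(z)=1$. Consequently each active type $t_k$, meaning $\theta_{t_k}=p_{k+1}-p_k>0$ under uniformity of $U_i$, corresponds to exactly one nonempty contiguous interval. Types with $\theta_{t_k}=0$ occur only at the boundaries, when $p_1=0$ or $p_K=1$, and they receive empty intervals, which is why the lemma restricts to active types.

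Finally, the intervals $(p_k,p_{k+1}]$ for $k=0,\dots,K$ are pairwise disjoint and their union is $(0,1]$, since the $p_k$ are strictly increasing from $0$ to $1$; endpoint conventions affect only measure-zero sets. The bottom interval $(0,p_1]$ belongs to $t_0$, which takes treatment at every $z$ and is therefore the always-taker. The top interval $(p_K,1]$ belongs to $t_K$, which never takes treatment and is therefore the never-taker.

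The only delicate step is the ``no value strictly between'' argument together with the endpoint bookkeeping at $p_1=0$ or $p_K=1$. I do not expect either to pose a real obstacle.
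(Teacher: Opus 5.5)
Your proof is correct and follows essentially the same route as the paper's. For $U_i\in(p_k,p_{k+1}]$, no realized propensity value lies strictly between consecutive $p_k$, so $D_i(z)=\ind\{p(z)\ge p_{k+1}\}=t_k(z)$ on $\supp(\bZ)$; the intervals $(p_k,p_{k+1}]$ then partition $(0,1]$ with $\theta_{t_k}=p_{k+1}-p_k$, and the bottom and top intervals carry the always- and never-takers.

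Your explicit injectivity check is a useful addition, since the paper simply asserts the bijection. However, you should state it only for active indices. When $p_K=1$, the inactive type $t_K(z)=\ind\{p(z)\ge 1\}$ coincides with $t_{K-1}$, so the claims ``distinct indices yield distinct types'' and ``$t_K$ is the all-zero pattern'' hold only when $t_K$ is active.
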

 
\begin{remark}[Monotonicity vs.\ latent-index compliance types]
For $L=1$, Assumption~\ref{ass:monotonicity} rules out defiers and leaves a single complier group (Table~\ref{tab:type_dictionaries}, left); Assumption~\ref{ass:latent_index} adds nothing. For $L=2$ with $0<p(0,0)<p(1,0)<p(0,1)<p(1,1)<1$, coordinate-wise monotonicity admits six response patterns (Table~\ref{tab:type_dictionaries}, right). Assumption~\ref{ass:latent_index} enforces strict nestedness on this ordering, retaining the ordered-threshold types but excluding the crossing pattern: anyone whose resistance $U$ is overcome by the weaker encouragement $p(1,0)$ must also comply under the stronger $p(0,1)$. Consequently, there are only five threshold types, with their latent-resistance intervals given by $\mathcal{R}_{t_0}=(0,p(0,0)]$, $\mathcal{R}_{t_1}=(p(0,0),p(1,0)]$, $\mathcal{R}_{t_2}=(p(1,0),p(0,1)]$, $\mathcal{R}_{t_3}=(p(0,1),p(1,1)]$, and $\mathcal{R}_{t_4}=(p(1,1),1]$.
\end{remark}

\begin{table}[t]
\centering
\small
\caption{Compliance-type response patterns. \textbf{Left:} The $L=1$ benchmark where Assumption~\ref{ass:monotonicity} rules out the defier. \textbf{Right:} $L=2$ with $p(0,0)<p(1,0)<p(0,1)<p(1,1)$. Assumption~\ref{ass:monotonicity} (coordinate-wise) admits all six patterns shown, including the crossing row below the rule; Assumption~\ref{ass:latent_index} additionally excludes the crossing pattern, leaving the five ordered-threshold types $t_0,\ldots,t_4$.}
\label{tab:type_dictionaries}
\vspace{0.2cm}

\begin{minipage}[t]{\textwidth}
    \centering
    \begin{tabular}[t]{lcc}
    \toprule
    Type & $Z=0$ & $Z=1$ \\
    \midrule
    Always-taker & 1 & 1 \\
    Complier & 0 & 1 \\
    Never-taker & 0 & 0 \\
    \midrule
    \emph{Defier (ruled out)} & \emph{1} & \emph{0} \\
    \bottomrule
    \end{tabular}
\end{minipage}%
\par\smallskip
\begin{minipage}[t]{\textwidth}
    \centering
    \begin{tabular}[t]{lcccc}
    \toprule
    Type & $(0,0)$ & $(1,0)$ & $(0,1)$ & $(1,1)$ \\
    \midrule
    Always-taker ($t_0$) & 1 & 1 & 1 & 1 \\
    $Z_1\!\cup\!Z_2$-complier ($t_1$) & 0 & 1 & 1 & 1 \\
    $Z_2$-only complier ($t_2$) & 0 & 0 & 1 & 1 \\
    $Z_1\!\cap\!Z_2$-complier ($t_3$) & 0 & 0 & 0 & 1 \\
    Never-taker ($t_4$) & 0 & 0 & 0 & 0 \\
    \midrule
    \emph{Crossing (ruled out by Assumption~\ref{ass:latent_index})} & \emph{0} & \emph{1} & \emph{0} & \emph{1} \\
    \bottomrule
    \end{tabular}
\end{minipage}
\end{table}

Under Assumption~\ref{ass:latent_index}, each Wald estimand admits the \citet{HeckmanVytlacil2005, HeckmanUrzuaVytlacil2006} marginal treatment effect (MTE) representation
\begin{equation}\label{eq:wald_mte}
\begin{aligned}
  \Wald_\ell(P)
  &= \int_0^1 \MTE(u; P) \, \alpha^{\MTE}_\ell(u; P) \, du,\\
  \alpha^{\MTE}_\ell(u; P)
  &=
  \frac{\mathbb{P}(p(\bZ_i) \geq u \mid Z_{\ell i} = 1)
        - \mathbb{P}(p(\bZ_i) \geq u \mid Z_{\ell i} = 0)}
       {\E[p(\bZ_i) \mid Z_{\ell i} = 1]
        - \E[p(\bZ_i) \mid Z_{\ell i} = 0]} ,
\end{aligned}
\end{equation}
where $\MTE(u; P) = \E[Y_i(1) - Y_i(0) \mid U_i = u]$ is the average treatment effect at latent resistance $u$, $\alpha^{\MTE}_\ell(\cdot; P)$ integrates to one, and $\int_{\mathcal{R}_t(P)} \alpha^{\MTE}_\ell(u; P)\,du = \alpha_{\ell,t}(P)$ recovers the discrete compliance-type weight of Proposition~\ref{prop:wald_decomp}. Because the $L$ instruments identify only $L$ linear combinations of treatment effects, the continuous MTE curve is generally not point-identified; Section~\ref{sec:feasibility_assessment} formalizes this identification gap and bounds it under MTE shape restrictions. The composite MTE-weight function $\psi^{\MTE}(u; \bomega; P) = \sum_{\ell=1}^L \omega_\ell\, \alpha^{\MTE}_\ell(u; P)$ integrates to one for any $\bomega$ summing to one; it converts a linear combination of Wald estimands into an integral over the MTE curve.

\begin{proposition}[MTE representation of the GMM estimand]\label{prop:mte_representation}
Under Assumptions~\ref{ass:joint_indep}, \ref{ass:monotonicity}, \ref{ass:realized_exclusion}, \ref{ass:relevance}, \ref{ass:latent_index} (which implies coordinate-wise monotonicity on $\supp(\bZ)$), and~\ref{ass:sampling}(d), for any GMM weighting matrix map $\bW(\cdot)$,
\[
  \beta^{\mathrm{GMM}}_{\bW(P)}(P) \;=\; \int_0^1 \MTE(u; P) \, \psi^{\MTE}\bigl(u; \blambda(\bW(P); P); P\bigr) \, du.
\]
\end{proposition}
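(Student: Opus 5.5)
The plan is to combine Proposition~\ref{prop:weighted_iv}(a) with the MTE representation~\eqref{eq:wald_mte} of each Wald estimand, and then use linearity of the integral. Proposition~\ref{prop:weighted_iv}(a) gives $\beta^{\mathrm{GMM}}_{\bW}(P)=\sum_\ell \lambda_\ell(\bW(P);P)\Wald_\ell(P)$ with $\sum_\ell\lambda_\ell=1$. Substituting~\eqref{eq:wald_mte} and exchanging the finite sum with the integral yields $\int_0^1 \MTE(u;P)\sum_\ell \lambda_\ell\alpha^{\MTE}_\ell(u;P)\,du$, which is exactly $\int_0^1\MTE(u;P)\,\psi^{\MTE}(u;\blambda;P)\,du$ by the definition of the composite MTE-weight function with $\bomega=\blambda(\bW(P);P)$. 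No sign restriction on $\blambda$ is needed, because $\psi^{\MTE}$ is defined for any sum-to-one vector and integrates to one.

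The substantive step is to justify~\eqref{eq:wald_mte} itself under Assumption~\ref{ass:latent_index}, since the text states it without proof.

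\begin{itemize}
\item \emph{Numerator.} Write $Y_i=Y_i(0)+D_i(Y_i(1)-Y_i(0))$ with $D_i=\ind\{p(\bZ_i)\ge U_i\}$. By independence of $(Y(0),Y(1),U)$ from $\bZ$, we get $\E[Y_i\mid Z_\ell=z]=\E[Y(0)]+\E\bigl[\int_0^{p(\bZ)}\MTE(u)\,du\mid Z_\ell=z\bigr]$.
\item \emph{Rewriting by Fubini.} Since $\int_0^{p}\MTE(u)\,du=\int_0^1\MTE(u)\ind\{p\ge u\}\,du$, Fubini gives $\rho_\ell=\int_0^1 \MTE(u)\,[\mathbb{P}(p(\bZ)\ge u\mid Z_\ell=1)-\mathbb{P}(p(\bZ)\ge u\mid Z_\ell=0)]\,du$.
\item \emph{Denominator.} The same calculation with $\MTE\equiv1$ gives $\pi_\ell=\E[p(\bZ)\mid Z_\ell=1]-\E[p(\bZ)\mid Z_\ell=0]$, which is positive by Assumption~\ref{ass:relevance}.
\end{itemize}

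Dividing the two expressions gives~\eqref{eq:wald_mte}, and the same $\MTE\equiv1$ calculation shows that $\alpha^{\MTE}_\ell$ integrates to one.

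I expect the main obstacle to be integrability. Fubini requires $\E|Y(1)-Y(0)|<\infty$, so that $\MTE$ is integrable on $[0,1]$ and the interchange is valid. I would take this from the finite-moment part of Assumption~\ref{ass:sampling}(d). Since $Z_\ell$ is binary with $p^Z_\ell\in(0,1)$, the conditioning events have positive probability and no further care is needed there.

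Finally, I would note that the domain requirement of Proposition~\ref{prop:weighted_iv}(a), $\gamma_\ell\neq0$, holds because $\gamma_\ell=\pi_\ell s_\ell>0$.
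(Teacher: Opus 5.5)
Your proposal is correct and follows the paper's proof essentially step for step. Like the paper, you derive the Wald MTE representation~\eqref{eq:wald_mte} from the latent-index structure via Fubini, with integrability taken from Assumption~\ref{ass:sampling}(d), divide by $\pi_\ell$, and then substitute into the Wald-share form of Proposition~\ref{prop:weighted_iv}(a) and use linearity. Your explicit checks that $\pi_\ell=\E[p(\bZ)\mid Z_\ell=1]-\E[p(\bZ)\mid Z_\ell=0]$ and that $\gamma_\ell\neq0$ are small details the paper leaves implicit.
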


\noindent The composite weight function $\psi^{\MTE}\bigl(u; \blambda^{\EGMM}(P); P\bigr)$ can contract on latent-resistance ranges emphasized by instruments with greater residual dispersion; Appendix~\ref{app:general_omega} gives the corresponding local channel under a general $\bOmega$, where responses of the off-diagonal entries of $\bOmega$ can also matter. 

\begin{proposition}[PRD implies non-negative MTE weights]\label{prop:prd_mte}
Under Assumptions~\ref{ass:joint_indep}, \ref{ass:monotonicity}, \ref{ass:realized_exclusion}, \ref{ass:relevance}, \ref{ass:prd}, and~\ref{ass:latent_index}, $\alpha^{\MTE}_\ell(u; P) \geq 0$ for all $\ell$ and $u \in [0,1]$.
\end{proposition}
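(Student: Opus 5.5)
We show that the numerator of $\alpha^{\MTE}_\ell(u;P)$ in \eqref{eq:wald_mte} is nonnegative for every $u$ and that its denominator is strictly positive. For fixed $u\in[0,1]$, define $f_u(z)\equiv \ind\{p(z)\ge u\}$ on $\supp(\bZ)$. By Assumption~\ref{ass:latent_index}, $p$ is coordinate-wise nondecreasing on $\supp(\bZ)$, so $f_u$ is a nondecreasing indicator on the support. The numerator is
\[
\E[f_u(\bZ)\mid Z_\ell=1]-\E[f_u(\bZ)\mid Z_\ell=0].
\]

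First, decompose this difference into a direct and an indirect part, mirroring Lemma~\ref{lem:di}. Add and subtract $\E[f_u(0,\bZ_{-\ell})\mid Z_\ell=1]$. This gives
\[
\E\bigl[f_u(1,\bZ_{-\ell})-f_u(0,\bZ_{-\ell})\mid Z_\ell=1\bigr]+\Bigl(\E[f_u(0,\bZ_{-\ell})\mid Z_\ell=1]-\E[f_u(0,\bZ_{-\ell})\mid Z_\ell=0]\Bigr).
\]
The direct term is nonnegative pointwise because $f_u$ is nondecreasing in $z_\ell$. The indirect term is nonnegative by Assumption~\ref{ass:prd}, applied to $g(\bz_{-\ell})\equiv f_u(0,\bz_{-\ell})$, which is nondecreasing in $\bz_{-\ell}$.

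The main obstacle is support. PRD quantifies over nondecreasing $f$ on all of $\{0,1\}^{L-1}$, while $p$, and hence $f_u$, is defined only on $\supp(\bZ)$. Moreover, $(1,\bz_{-\ell})$ and $(0,\bz_{-\ell})$ need not both lie in the support, so the pointwise direct term may be undefined. To avoid this, I would not split pointwise. Instead I would extend $p$ to a nondecreasing function $\bar p$ on all of $\{0,1\}^L$ via $\bar p(z)\equiv\max\{p(z'):z'\in\supp(\bZ),\,z'\le z\}$, with $\bar p(z)\equiv 0$ if no such $z'$ exists. This extension is nondecreasing and agrees with $p$ on $\supp(\bZ)$, because $p$ is monotone there. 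The conditional expectations only charge support points, so replacing $f_u$ by $\bar f_u\equiv\ind\{\bar p\ge u\}$ leaves both conditional expectations in the numerator unchanged. The decomposition above then goes through with $\bar f_u$, which is defined and nondecreasing everywhere. Since $\bar f_u$ is an indicator, it remains nondecreasing after fixing $z_\ell=0$, so $g$ is a valid PRD test function. Hence the numerator is nonnegative for all $u$.

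Finally, the denominator is positive. By Fubini, $\E[p(\bZ)\mid Z_\ell=z]=\int_0^1\mathbb{P}(p(\bZ)\ge u\mid Z_\ell=z)\,du$. Therefore the denominator equals the integral of the nonnegative numerator, so it is nonnegative. It also equals $\pi_\ell(P)$: under Assumptions~\ref{ass:joint_indep} and~\ref{ass:realized_exclusion}, $\E[D\mid \bZ]=p(\bZ)$, and iterating expectations gives $\E[D\mid Z_\ell=z]=\E[p(\bZ)\mid Z_\ell=z]$. By Assumption~\ref{ass:relevance}, $\pi_\ell(P)>0$. Combining the nonnegative numerator with the positive denominator gives $\alpha^{\MTE}_\ell(u;P)\ge0$ for all $\ell$ and $u\in[0,1]$. (Alternatively, one could integrate Proposition~\ref{prop:prd} over the intervals of Lemma~\ref{lem:type_interval}, but that yields only interval averages, not pointwise nonnegativity. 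The direct argument above gives the pointwise statement.)
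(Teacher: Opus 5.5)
Your proof is correct and is essentially the paper's argument: the same direct/indirect split of the numerator, PRD applied to the nondecreasing indicator $z_{-\ell}\mapsto\ind\{p(0,z_{-\ell})\ge u\}$, and Assumption~\ref{ass:relevance} for the denominator. The only difference is how you handle cells off the support. The paper takes $p(z)\equiv\E[D_i(z)]$, which is defined on all of $\{0,1\}^L$ by Assumption~\ref{ass:joint_indep} and nondecreasing by Assumption~\ref{ass:monotonicity}, whereas you build the monotone envelope $\bar p$; both work.

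Your closing parenthetical undersells the alternative route. Under Assumption~\ref{ass:latent_index}, $\alpha^{\MTE}_\ell(\cdot;P)$ is constant on each $\mathcal{R}_{t_k}$, so the interval averages from Proposition~\ref{prop:prd} would in fact give the pointwise claim.
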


\noindent With $\alpha^{\MTE}_\ell(u; P) \geq 0$, each Wald estimand is a convex weighted average of marginal treatment effects, and any convex combination $\psi^{\MTE}(u; \bomega; P) = \sum_\ell \omega_\ell\, \alpha^{\MTE}_\ell(u; P)$ with $\bomega \in \Delta^{L-1}$ is non-negative.


\section{Representativeness Targeting}\label{sec:rt}

Under heterogeneity, the implicit weight map $\blambda_{\bW}(\cdot)$ of a GMM weighting matrix map drifts with $P$ through $\bgamma(P)$ and $\bW(P)$, and the GMM estimator is the plug-in of this map (Proposition~\ref{prop:weighted_iv}(b)): choosing $\bW(\cdot)$ selects the weight map indirectly, and the estimand follows whatever compliance-type allocation the residual variance structure dictates. The direct alternative specifies target weights $\bomega(\cdot)$ as a map from the model to the instrument simplex, evaluated at the empirical distribution. The plug-in's implicit weights then equal the researcher's target weights at every realization, and the compliance-type allocation $\{\psi_t\bigl(\bomega; P\bigr)\}_{t\in \mathcal{T}_+(P)}$ is determined by the target, not by the variance structure.
   
	 \subsection{The estimator and its target class}

\begin{definition}[Representativeness Targeting]\label{def:rt}
An RT weight map is a simplex-valued functional $\bomega(\cdot) : \PLATE \to \Delta^{L-1}$ continuous at $P_0$; the collection of admissible weight maps is $\mathcal{W}(P_0)$. The Representativeness Targeting (RT) estimand and estimator for $\bomega(\cdot) \in \mathcal{W}(P_0)$ are, respectively,
\begin{equation*}
\beta^{\mathrm{RT}}_{\bomega}(P)\equiv \bomega(P)'\, \bWald(P) \;=\; \sum_{\ell=1}^{L} \omega_\ell(P)\, \Wald_\ell(P)\ \text{ and }\  \hat\beta^{\mathrm{RT}}_{\bomega} \equiv \bomega(\hat P_n)'\, \widehat\bWald.
\end{equation*} 
\end{definition}

\noindent The RT estimator evaluates the RT weight map at the empirical distribution and averages the sample Wald estimands with the resulting weights. Proposition~\ref{prop:rt_types} characterizes the resulting RT estimand as a convex combination of type-specific LATEs.

\begin{proposition}[Causal validity of RT]\label{prop:rt_types}
Under Assumptions~\ref{ass:joint_indep}, \ref{ass:monotonicity}, \ref{ass:realized_exclusion}, \ref{ass:relevance}, and~\ref{ass:prd}, for any $\bomega(\cdot) \in \mathcal{W}(P_0)$ and any $P$ in the neighborhood where the map is evaluated, the RT estimand  is a convex combination of type-specific treatment effects:
\begin{equation}\label{eq:rt_type_decomp}
  \beta^{\mathrm{RT}}_{\bomega}(P) \;=\; \sum_{t \in \mathcal{T}_+(P)} \psi_t\bigl(\bomega; P\bigr) \cdot \LATE_t(P), \qquad \psi_t(\bomega; P) \;=\; \sum_{\ell=1}^{L} \omega_\ell(P)\, \alpha_{\ell,t}(P),
\end{equation}
with $\psi_t(\bomega; P) \geq 0$ and $\sum_{t \in \mathcal{T}_+(P)} \psi_t(\bomega; P) = 1$.
\end{proposition}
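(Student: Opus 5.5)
The argument combines three earlier results: the definition of the RT estimand, the Wald decomposition of Proposition~\ref{prop:wald_decomp}, and the sign result of Proposition~\ref{prop:prd}. Throughout, fix $\bomega(\cdot)\in\mathcal{W}(P_0)$ and a $P$ in the neighborhood where the map is evaluated. I take that neighborhood to be one on which $p^Z_\ell(P)\in(0,1)$ and $\pi_\ell(P)>0$ for every $\ell$. This is available because $p^Z_\ell(P_0)\in(0,1)$ and $\pi_\ell(P_0)>0$ by Assumption~\ref{ass:relevance}, and these functionals are continuous in $P$. On this neighborhood both Proposition~\ref{prop:wald_decomp} and Proposition~\ref{prop:prd} apply at $P$.

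\emph{Representation.} Start from $\beta^{\mathrm{RT}}_{\bomega}(P)=\sum_\ell \omega_\ell(P)\Wald_\ell(P)$ and substitute \eqref{eq:wald_type} for each $\Wald_\ell(P)$. Both sums are finite: $\ell$ ranges over $L$ values and $\mathcal{T}_+(P)\subseteq\mathcal{T}$ is finite because types are maps $\{0,1\}^L\to\{0,1\}$. Exchanging the order of summation therefore gives $\sum_{t\in\mathcal{T}_+(P)}\bigl(\sum_\ell\omega_\ell(P)\alpha_{\ell,t}(P)\bigr)\LATE_t(P)$, which is \eqref{eq:rt_type_decomp} with $\psi_t(\bomega;P)=\sum_\ell \omega_\ell(P)\alpha_{\ell,t}(P)$. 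The sum ranges over the common index set $\mathcal{T}_+(P)$ for every $\ell$, so types outside instrument $\ell$'s complier group simply contribute $\alpha_{\ell,t}=0$. A quick check confirms this: for such a type, monotonicity forces $t(1,\cdot)=t(0,\cdot)$, so $\varphi_{\ell,t}$ reduces to its indirect part, and Proposition~\ref{prop:wald_decomp} already accounts for this.

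\emph{Sum to one.} We have $\sum_t\psi_t=\sum_\ell\omega_\ell(P)\sum_t\alpha_{\ell,t}(P)$. The inner sum equals $1$ by Proposition~\ref{prop:wald_decomp}, and $\sum_\ell\omega_\ell(P)=1$ because $\bomega(P)\in\Delta^{L-1}$.

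\emph{Nonnegativity.} Proposition~\ref{prop:prd} gives $\alpha_{\ell,t}(P)\ge0$ for every $t\in\mathcal{T}_+(P)$ and every $\ell$, and the simplex constraint gives $\omega_\ell(P)\ge0$. Each $\psi_t$ is then a sum of products of nonnegative numbers, so $\psi_t\ge 0$.

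The only step needing care is making sure the hypotheses of Propositions~\ref{prop:wald_decomp} and~\ref{prop:prd} hold at $P$ rather than only at $P_0$. Specifically, $P$ must satisfy PRD, monotonicity and relevance. I handle this by reading ``the neighborhood where the map is evaluated'' as the set of $P\in\PLATE$ that satisfy Assumption~\ref{ass:prd} and the positivity conditions above. Continuity of $p^Z_\ell$ and $\pi_\ell$ at $P_0$ guarantees that this set contains a neighborhood of $P_0$ within the model. Everything else is algebra.
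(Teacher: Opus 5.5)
Your argument is the paper's proof: substitute the Wald decomposition into $\sum_\ell\omega_\ell(P)\Wald_\ell(P)$ and swap the finite sums. Nonnegativity then follows from Proposition~\ref{prop:prd} and the simplex constraint, and the sum-to-one property from $\sum_t\alpha_{\ell,t}(P)=1$ and $\sum_\ell\omega_\ell(P)=1$; your extra care about which $P$ the earlier propositions apply to is a harmless refinement. One aside is wrong, though nothing in your argument uses it: a type outside $\mathcal{C}_\ell(P)$ need not have $\alpha_{\ell,t}(P)=0$. Only the direct component $\varphi^D_{\ell,t}$ vanishes for such a type, and the indirect component $\varphi^I_{\ell,t}$ is generally nonzero (nonnegative under PRD) when the instruments are dependent, so you should drop that remark.
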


\noindent Unlike 2SLS, TSGMM, EGMM, and CUE, whose composite type weights $\psi_t(\bW; P)$ can be negative under correlated instruments (Section~\ref{sec:egmm_weights}), RT with simplex-valued $\bomega(\cdot)$ implies $\psi_t(\bomega; P) \geq 0$ under PRD: the RT estimand is a convex weighted average of type-specific LATEs.\footnote{All results extend to settings with covariates $X_i$ under conditional versions of joint independence, monotonicity, relevance, and PRD, together with full first-stage saturation \citep{BlandholBonneyMogstadTorgovitsky2022}; Proposition~\ref{prop:covariates} in Appendix~\ref{app:covariates} formalizes the conditional and marginal RT estimands. For a single instrument ($L = 1$), $\omega \equiv 1$ and RT recovers the \citet{ImbensAngrist1994} LATE.}

\subsection{LAM bound and plug-in attainment}\label{sec:lam-attain}

The pathwise derivative of a single Wald estimand $\Wald_\ell(\cdot)$ is represented by the per-instrument linearization below.
\[
\phi_\ell(O_i; P_0) \;=\; \frac{[(Y_i - \E[Y_i]) - \Wald_\ell(P_0)(D_i - \E[D_i])](Z_{\ell i} - \E[Z_{\ell i}])}{\gamma_\ell(P_0)} \;\in\; L^2_0(P_0)
\]
derived via the quotient rule on $\Wald_\ell = \Cov(Y_i,Z_{\ell i})/\gamma_\ell$, the covariance form of $\rho_\ell/\pi_\ell$; the $L \times L$ \emph{Wald covariance matrix}
\[
\bGamma^{\Wald}(P_0) \;=\; \E\!\bigl[\bphi(O_i; P_0)\,\bphi(O_i; P_0)'\bigr] \;\in\; \mathbb{R}^{L \times L}
\]
collects their second moments.

\begin{definition}[LATE tangent space]\label{def:late-tangent}
Let $\mathcal{G}\subset L^2_0(P_0)$ denote the linear span of all scores arising from one-parameter LATE-compatible submodels $\{P_t\}\subset\PLATE$ through $P_0$ that perturb exactly one primitive:
\begin{enumerate}[label=\textnormal{(\roman*)},leftmargin=*,itemsep=2pt,topsep=4pt]
\item the instrument mass $\mu_z$, by a bounded signed measure of zero total mass;
\item a positive-support type probability $\theta_r$, by a bounded direction in the simplex on $\mathcal{T}_+(P_0)$;
\item a type-conditional outcome density $f_{r,d}$ for $r\in\mathcal{T}_+(P_0)$, by a bounded $L^2_0(F_{r,d})$ direction.
\end{enumerate}
\end{definition}

The submodels are \emph{quadratic mean differentiable} (QMD) \citep{vanderVaart1998} under Assumption~\ref{ass:late-regularity}. The \emph{LATE tangent space} is its $L^2(P_0)$-closure $\overline{\mathcal{G}}$.

\begin{proposition}[RT attainment]\label{prop:eif-decomp}
Assume Assumptions~\ref{ass:relevance} and~\ref{ass:sampling}, and suppose $\bomega(\cdot)$ is pathwise differentiable at $P_0$ with Riesz representers $\Riesz_{\omega,\ell}\in L^2_0(P_0)$. Define
\begin{equation}\label{eq:eif-general}
\eif_{\omega(\cdot)}(O_i;P_0)
=
\sum_{\ell=1}^{L}\omega_\ell(P_0)\phi_\ell(O_i;P_0)
+\sum_{\ell=1}^{L}\Wald_\ell(P_0)\Riesz_{\omega,\ell}(O_i;P_0),
\end{equation}
and write $V_\omega(P_0)\equiv\Var_{P_0}[\eif_{\omega(\cdot)}]$. By Lemma~\ref{lem:rt-pathwise} in the Supplementary Appendix, $\beta^{\mathrm{RT}}_{\bomega}(\cdot)$ is then pathwise differentiable at $P_0$ with observed-law Riesz representer $\eif_{\omega(\cdot)}$.
\begin{enumerate}[label=\emph{(\roman*)},leftmargin=*]
\item \emph{LAM lower bound.} Under the further Assumptions~\ref{ass:joint_indep}, \ref{ass:monotonicity}, \ref{ass:realized_exclusion}, and~\ref{ass:late-regularity}, and the tangent-compatibility condition $\eif_{\omega(\cdot)}\in\TLATEbar$, $\eif_{\omega(\cdot)}$ is the LATE-model efficient influence function and, along LATE-compatible contiguous alternatives $\{P_{n,h}\}$ with score $h$, every estimator sequence $\{\hat\beta_n\}$ satisfies
\begin{equation}\label{eq:lam-bound}
\sup_{\substack{I \subset \TLATE\\ \dim I < \infty}}
\lim_{C\to\infty}
\liminf_{n\to\infty}
\sup_{\substack{h\in I\\ \norm{h}_{L^2(P_0)}\le C}}
n\,\E\!\left[\{\hat\beta_n-\beta^{\mathrm{RT}}_{\bomega}(P_{n,h})\}^2\right]
\ge V_\omega(P_0).
\end{equation}

\item \emph{Plug-in attainment.} If $\bomega(\cdot)$ is also Hadamard differentiable along LATE-compatible charts, $\eif_{\omega(\cdot)}$ has a finite $(2+\delta)$ moment, the von Mises expansion
\[
\omega_\ell(\hat P_n)-\omega_\ell(P_0)
=n^{-1}\sum_{i=1}^n\Riesz_{\omega,\ell}(O_i;P_0)+o_p(n^{-1/2})
\qquad (\ell=1,\ldots,L)
\]
holds, and Assumption~\ref{ass:local-risk-ui} is in force, then the plug-in is asymptotically normal along every LATE-compatible contiguous sequence,
\begin{equation}\label{eq:plug-in-asylinear}
\sqrt n\,\{\hat\beta^{\mathrm{RT}}_{\bomega}-\beta^{\mathrm{RT}}_{\bomega}(P_{n,h})\}
\xrightarrow{d}_{P_{n,h}} N\bigl(0,V_\omega(P_0)\bigr),
\end{equation}
and attains the lower bound~\eqref{eq:lam-bound} with equality.
\end{enumerate}
\end{proposition}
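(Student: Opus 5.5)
The plan is to treat part (i) as an application of the Hájek–Le Cam local asymptotic minimax theorem for differentiable functionals, and part (ii) as a delta-method argument followed by Le Cam's third lemma. Both parts rest on pathwise differentiability from Lemma~\ref{lem:rt-pathwise}. By the product rule applied to $\beta^{\mathrm{RT}}_{\bomega}(P)=\sum_\ell\omega_\ell(P)\Wald_\ell(P)$, the derivative along any QMD submodel with score $h\in\TLATE$ is
\[
\sum_\ell \omega_\ell(P_0)\,\E[\phi_\ell h]+\sum_\ell\Wald_\ell(P_0)\,\E[\Riesz_{\omega,\ell}h]=\E[\eif_{\omega(\cdot)}h].
\]
For this I need $\phi_\ell$ to represent the derivative of $\Wald_\ell$. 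That follows from the quotient rule on $\Cov(Y,Z_\ell)/\gamma_\ell$: the covariances are smooth moment functionals with influence functions $(Y-\E Y)(Z_\ell-\E Z_\ell)-\Cov(Y,Z_\ell)$, and similarly for $D$. The centering constants cancel in the combination $\rho$-part minus $\Wald_\ell$ times the $\pi$-part, which is exactly why the displayed $\phi_\ell$ has mean zero.

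\textbf{Part (i).} The representer is not unique in $L^2_0(P_0)$; only its projection onto $\TLATEbar$ is canonical. The tangent-compatibility condition $\eif_{\omega(\cdot)}\in\TLATEbar$ makes $\eif_{\omega(\cdot)}$ equal to that projection, so it is the canonical gradient, i.e.\ the efficient influence function relative to the LATE model. I then restrict to a finite-dimensional subspace $I\subset\TLATE$ spanned by scores $h_1,\dots,h_k$. Under Assumption~\ref{ass:late-regularity}, the product experiments $P_{n,h}^n$ with $h\in I$ are locally asymptotically normal with Fisher information $\E[h_ih_j]$. The local parameter is $\sqrt n\{\beta^{\mathrm{RT}}_{\bomega}(P_{n,h})-\beta^{\mathrm{RT}}_{\bomega}(P_0)\}\to\E[\eif_{\omega(\cdot)}h]$. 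The finite-dimensional LAM theorem (van der Vaart 1998, Thm.~8.11, with squared-error loss) then bounds the inner $\lim_C\liminf_n\sup_h$ from below by $\Var$ of the projection of $\eif_{\omega(\cdot)}$ onto $I$. Taking the supremum over $I$, and using density of $\TLATE$ in $\TLATEbar\ni\eif_{\omega(\cdot)}$, the projections converge in $L^2$ to $\eif_{\omega(\cdot)}$, which yields the bound $V_\omega(P_0)$.

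\textbf{Part (ii).} I would show that the plug-in is asymptotically linear under $P_0$. Write
\[
\hat\beta^{\mathrm{RT}}_{\bomega}-\beta^{\mathrm{RT}}_{\bomega}(P_0)=\sum_\ell\omega_\ell(P_0)(\widehat\Wald_\ell-\Wald_\ell)+\sum_\ell\Wald_\ell(P_0)(\omega_\ell(\hat P_n)-\omega_\ell(P_0))+\sum_\ell(\omega_\ell(\hat P_n)-\omega_\ell(P_0))(\widehat\Wald_\ell-\Wald_\ell).
\]
The first term is $n^{-1}\sum_i\sum_\ell\omega_\ell\phi_\ell(O_i)+o_p(n^{-1/2})$ by the smooth delta method, using $\gamma_\ell(P_0)\ne0$ from Assumption~\ref{ass:relevance} and the finite moments in Assumption~\ref{ass:sampling}. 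The second term is handled by the assumed von Mises expansion. The third is a product of two $O_p(n^{-1/2})$ terms, hence negligible. Summing gives $\sqrt n\{\hat\beta^{\mathrm{RT}}_{\bomega}-\beta^{\mathrm{RT}}_{\bomega}(P_0)\}=n^{-1/2}\sum_i\eif_{\omega(\cdot)}(O_i)+o_p(1)$.

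Next I transfer this to $P_{n,h}$. LAN expands the log-likelihood ratio as $n^{-1/2}\sum h(O_i)-\tfrac12\E h^2+o_p(1)$, so the joint limit with the linear term is bivariate normal with covariance $\E[\eif_{\omega(\cdot)}h]$. Le Cam's third lemma shifts the mean by $\E[\eif_{\omega(\cdot)}h]$, which cancels the drift $\sqrt n\{\beta^{\mathrm{RT}}_{\bomega}(P_{n,h})-\beta^{\mathrm{RT}}_{\bomega}(P_0)\}$ computed in part (i). This gives~\eqref{eq:plug-in-asylinear} along every LATE-compatible sequence.

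For attainment, I need the risk to converge, not just the distribution. The $(2+\delta)$ moment on $\eif_{\omega(\cdot)}$ together with Assumption~\ref{ass:local-risk-ui} should give uniform integrability of $n\{\hat\beta-\beta(P_{n,h})\}^2$ uniformly over $\norm{h}\le C$ in each finite-dimensional $I$. Then $n\E[\cdot]\to V_\omega(P_0)$ uniformly, so the left side of~\eqref{eq:lam-bound} equals $V_\omega(P_0)$.

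The main obstacle is this uniform-in-$h$ risk convergence, since convergence in distribution along individual sequences does not control the supremum inside the $\liminf$. I would try to argue by contradiction: take a sequence $h_n$ in the compact ball of $I$ that nearly achieves the supremum, pass to a convergent subsequence $h_n\to h$, and use that $P_{n,h_n}$ is itself contiguous with the same limit experiment. I would then invoke Assumption~\ref{ass:local-risk-ui} precisely on such sequences to conclude.
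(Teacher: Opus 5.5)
Your proposal is correct and follows the paper's own route. For (i) it uses LAN on finite-dimensional LATE-compatible submodels plus the parametric local asymptotic minimax theorem, with the squared norms of the projections of $\eif_{\omega(\cdot)}$ onto subspaces $I\subset\TLATE$ increasing to $\norm{\eif_{\omega(\cdot)}}^2=V_\omega(P_0)$ by tangent compatibility. For (ii) it uses the three-term delta-method expansion, Le Cam's third lemma, and Assumption~\ref{ass:local-risk-ui} along arbitrary sequences $h_n$ in the compact $C$-ball of $I$.

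The paper makes two steps explicit that you leave implicit. First, it builds a joint exponential-tilt chart in the structural coordinates; this is needed because Assumption~\ref{ass:late-regularity} supplies QMD paths only for single-primitive directions, while $I$ contains mixed ones. Second, your subsequence argument requires $\sqrt n\{\beta^{\mathrm{RT}}_{\bomega}(P_{n,h_n})-\beta^{\mathrm{RT}}_{\bomega}(P_0)\}-\inner{\eif_{\omega(\cdot)}}{h_n}_{L^2(P_0)}\to0$ along such sequences, and the paper obtains this from the Hadamard-differentiability hypothesis.
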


\noindent The proposition adds the LATE efficiency claim.
 The tangent-compatibility condition asks that the candidate representer in~\eqref{eq:eif-general} belong to the closed observed-data tangent space generated by LATE-compatible structural perturbations of the instrument law, the positive-support type-probability simplex, and the type-conditional outcome distributions. Appendix~\ref{app:tc-primitive-supp} gives primitive sufficient conditions for this closure membership: when every instrument cell has positive probability and the three canonical cell scores (cell mass, conditional treatment probability, and conditional outcome mean) are themselves expressible as observed scores of LATE-compatible structural perturbations, every Wald representer $\phi_\ell$ lies in $\TLATEbar$. If in addition $\bomega(\cdot)$ is a continuously differentiable function of these cell primitives, then so is the weight-drift representer $\Riesz_{\omega,\ell}$, and the full expression~\eqref{eq:eif-general} then satisfies the tangent-closure condition (Theorem~\ref{thm:tc-primitive}).\footnote{For projected RT, the same conclusion holds at a locally stable active face when the projection primitives $(\alpha_{\ell,t},\bpsi^\star,\bkappa)$ are themselves so expressible (Corollary~\ref{cor:tc-projected}); for PRTE with endogenous policy weights, direct expressibility of the interval-weight derivative representers as observed structural scores is a primitive sufficient verification of the tangent-closure requirement that Assumption~\ref{ass:projection-tangent} imposes and that the PRTE analysis maintains alongside Assumption~\ref{ass:prte-weight-regularity}.}
 
 The decomposition in~\eqref{eq:eif-general} distinguishes this framework from standard textbook GMM. The first term is the fixed-weight Wald Riesz component with weights frozen at $\bomega(P_0)$, while the second term captures how the weight map $\bomega(\cdot)$ varies with $P$. For fixed weights ($\bomega(\cdot)\equiv\bomega_0$), $\Riesz_\omega\equiv\mathbf{0}$ and the representer reduces to $\sum_\ell \omega_{0,\ell}\phi_\ell$, with variance $\bomega_0'\bGamma^{\Wald}(P_0)\bomega_0$. For data-adaptive maps, $\Riesz_\omega$ is derived from the von Mises expansion of $\bomega(\hat P_n)$. The expanding radius ($\lim_{C\to\infty}$) in~\eqref{eq:lam-bound} is essential: fixed-radius local balls can be beaten by shrinkage. Both parts are proved in Appendix~\ref{app:proof_thm_pathwise}. The fixed-weight specialization matches the semiparametric efficiency bound for LATE at $L=1$ of \citet{Frolich2007}; for $L\geq2$ with a $P$-dependent RT weight map, the baseline quadratic form is augmented by the weight-drift cross-term $2\bomega(P_0)'\bUpsilon\bWald(P_0)$ and the pure weight-drift variance $\bWald(P_0)'\bSigma_{\Riesz}\bWald(P_0)$.

\begin{proposition}[RT variance estimation]\label{prop:var-estimation}
Under the plug-in-attainment conditions of Proposition~\ref{prop:eif-decomp} and Assumption~\ref{ass:riesz-consistency}, let $\hat\bGamma^{\Wald}$, $\hat\bUpsilon$, and $\hat\bSigma_{\Riesz}$ be sample-analog covariance matrices formed from the Wald influence functions $\hat\phi_\ell$ and the estimated weight-map Riesz representers $\hat\Riesz_{\omega,\ell}$ (explicit formulas in Appendix~\ref{app:proof_var_estimation}). The RT variance estimator
\begin{equation}\label{eq:Vhat-sample}
\hat V_n \;=\; \bomega(\hat P_n)'\, \hat\bGamma^{\Wald}\, \bomega(\hat P_n) \;+\; 2\, \bomega(\hat P_n)'\, \hat\bUpsilon\, \widehat\bWald \;+\; \widehat\bWald'\, \hat\bSigma_{\Riesz}\, \widehat\bWald
\end{equation}
is consistent: $\hat V_n \xrightarrow{p} V_\omega(P_0)$. The Wald interval $\hat\beta^{\mathrm{RT}}_{\bomega} \pm z_{1-\alpha/2}\sqrt{\hat V_n / n}$ has pointwise asymptotic coverage $1 - \alpha$ along every LATE-compatible contiguous submodel.
\end{proposition}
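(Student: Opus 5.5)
The statement to prove is Proposition~\ref{prop:var-estimation}. The strategy is a standard plug-in consistency argument. First I use the algebraic identity behind \eqref{eq:Vhat-sample}. Write $\bUpsilon\equiv\E[\bphi\,\Riesz_\omega']$ and $\bSigma_{\Riesz}\equiv\E[\Riesz_\omega\Riesz_\omega']$. Expanding $\Var_{P_0}[\eif_{\omega(\cdot)}]$ from \eqref{eq:eif-general} then gives
\[
V_\omega(P_0)=\bomega(P_0)'\bGamma^{\Wald}(P_0)\bomega(P_0)+2\bomega(P_0)'\bUpsilon\bWald(P_0)+\bWald(P_0)'\bSigma_{\Riesz}\bWald(P_0).
\]
This uses the fact that both $\phi_\ell$ and $\Riesz_{\omega,\ell}$ are mean zero, so second moments equal covariances. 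Consistency of $\hat V_n$ then follows from the continuous mapping theorem, once the five ingredients $\bomega(\hat P_n)$, $\widehat\bWald$, $\hat\bGamma^{\Wald}$, $\hat\bUpsilon$ and $\hat\bSigma_{\Riesz}$ are shown to be consistent.

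\textbf{The weights and Wald estimates.} $\bomega(\hat P_n)\xrightarrow{p}\bomega(P_0)$ follows from the von Mises expansion assumed in Proposition~\ref{prop:eif-decomp}(ii), or alternatively from continuity at $P_0$. $\widehat\bWald\xrightarrow{p}\bWald(P_0)$ follows from the law of large numbers and $\gamma_\ell(P_0)\neq0$ under Assumption~\ref{ass:relevance}.

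\textbf{The matrix $\hat\bGamma^{\Wald}$.} I write $\hat\phi_\ell(O_i)$ as $\phi_\ell(O_i;P_0)$ evaluated at the estimated nuisances $(\E[Y],\E[D],\E[Z_\ell],\Wald_\ell,\gamma_\ell)$. Because $\hat\phi_\ell$ is a polynomial in the data with coefficients that are smooth in these nuisances, I bound $n^{-1}\sum_i(\hat\phi_\ell-\phi_\ell)^2$ by $O_p(1)$ times the nuisance estimation error, using finite fourth moments from Assumption~\ref{ass:sampling}. Then $n^{-1}\sum_i\hat\phi_\ell\hat\phi_k\to\E[\phi_\ell\phi_k]$ by Cauchy--Schwarz and the law of large numbers.

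\textbf{The cross and drift terms.} For $\hat\bUpsilon$ and $\hat\bSigma_{\Riesz}$ I invoke Assumption~\ref{ass:riesz-consistency}, which I expect to deliver $n^{-1}\sum_i\|\hat\Riesz_{\omega}(O_i)-\Riesz_\omega(O_i;P_0)\|^2=o_p(1)$. Combining this with the previous step, the identity $\hat a\hat b-ab=(\hat a-a)\hat b+a(\hat b-b)$ and Cauchy--Schwarz give convergence of both empirical second-moment matrices. Here $\E\|\Riesz_\omega\|^2<\infty$ because the representers lie in $L^2_0(P_0)$.

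\textbf{Coverage.} Coverage follows by combining \eqref{eq:plug-in-asylinear} with $\hat V_n\to V_\omega(P_0)$ in probability, which requires $V_\omega(P_0)>0$, and applying Slutsky's lemma along $P_{n,h}$. For this I need $\hat V_n\xrightarrow{p}V_\omega(P_0)$ under $P_{n,h}$ as well, not only under $P_0$. That holds because $P_{n,h}$ is contiguous to $P_0$ (Le Cam's first lemma), so convergence in probability under $P_0$ transfers.

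\textbf{Main obstacle.} I expect the main obstacle to be the $\Riesz$ step. For data-adaptive weights, such as variance-adaptive or projected maps, $\hat\Riesz_\omega$ involves estimated derivatives of $\bomega$, so its $L^2(\hat P_n)$ consistency has to come from Assumption~\ref{ass:riesz-consistency}. If that assumption is stated only pointwise, I would instead exploit the structure described in Appendix~\ref{app:tc-primitive-supp}: $\Riesz_{\omega,\ell}$ is a finite linear combination of cell scores with coefficients $\nabla\bomega$, evaluated at the cell primitives. Consistency then reduces to continuity of $\nabla\bomega$ plus consistency of the cell frequencies.
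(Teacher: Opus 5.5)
Your proposal is correct and follows essentially the same route as the paper's proof. Both use the three-term expansion of $\Var_{P_0}[\eif_{\omega(\cdot)}]$ with mean-zero representers, consistency of $\hat\bGamma^{\Wald}$ from the polynomial structure of $\hat\phi_\ell$ in the estimated nuisances, Assumption~\ref{ass:riesz-consistency} with Cauchy--Schwarz for $\hat\bUpsilon$ and $\hat\bSigma_{\Riesz}$, and contiguity with Slutsky for coverage. One small correction: Assumption~\ref{ass:sampling}(d) supplies only $2+\delta$ moments, not fourth moments, but none are needed: $\hat\phi_\ell-\phi_\ell$ is linear in $Y_i$, so its square has degree at most two in $Y_i$ and $\E[Y_i^2]<\infty$ suffices, as in the paper's argument.
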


\noindent The first term in~\eqref{eq:Vhat-sample} is the baseline Wald variance; the second and third are the weight-estimation correction. For fixed weights ($\bomega(\cdot) \equiv \bomega_0$), the correction vanishes and $\hat V_n = \bomega_0'\hat\bGamma^{\Wald}\bomega_0$. The naive GMM sandwich omits the correction. Under the supplementary uniformity conditions of Appendix~\ref{app:uniform-coverage} (Assumptions~\ref{ass:late-regularity-uniform} and~\ref{ass:omega-uniform-hadamard}), this coverage holds uniformly over a restricted local class (Proposition~\ref{prop:uniform-coverage}).

\subsection{RT as a GMM weighting matrix map}\label{sec:witness}
 
Given RT's LAM attainment (Proposition~\ref{prop:eif-decomp}), can RT be implemented inside the GMM class? A GMM estimand reads its weighting matrix entirely through the $L$-vector $\bW\bgamma$; therefore, the answer is a single alignment condition.

\begin{proposition}[Alignment between RT and GMM]\label{cor:witness}
Any weighting matrix map in the family $\bW(P) = c(P)\,\bW^\dagger(P) + \bK(P)$ implements the target weights, satisfying $\blambda_{\bW}(P) = \bomega(P)$ for every $P$. Here, $c(P) > 0$ is a scaling factor, and $\bK(P)$ is a symmetric matrix satisfying $\bK(P)\bgamma(P) = \mathbf{0}$ such that the sum $c(P)\,\bW^\dagger(P) + \bK(P) \succ 0$. When $\omega_\ell(P)>0$ for every $\ell$, the canonical diagonal matrix $\bW^\dagger(P) = \diag\!\bigl(\omega_\ell(P)/\gamma_\ell(P)^2\bigr)$ is the unique diagonal member of this family up to scale.
\end{proposition}

\noindent This result reverses the standard GMM logic: rather than letting a residual-variance criterion dictate the estimand, the researcher fixes the target $\bomega$ first, and the canonical diagonal matrix map implements it when the target is interior. The extra term $\bK$ acts as an invisible gauge: because the implicit weights in~\eqref{eq:weighted_iv} depend on $\bW$ only through the product $\bW\bgamma$, any $\bK$ with $\bK\bgamma = \mathbf{0}$ leaves the final target completely unchanged. For the general implementing condition, see Appendix~\ref{app:witness-class-supp}.

\subsection{Regularity for the RT target}\label{sec:tangent-matching}

Proposition~\ref{cor:witness} exhibits the weighting matrix maps whose implicit weight map reproduces the target exactly. A general map need not: $\blambda_{\bW}(\cdot)$ may drift with $P$ in a direction that the target map $\bomega(\cdot)$ does not follow, and the resulting first-order discrepancy biases inference along LATE-compatible local alternatives.

\begin{proposition}[Regularity characterization]\label{thm:tangent-matching}
Let $\bW(\cdot)$ be a weighting matrix map with $\bW(\hat P_n) \xrightarrow{p} \bW(P_0)$, and let $\bomega(\cdot) \in \mathcal{W}(P_0)$ satisfy the plug-in-attainment conditions of Proposition~\ref{prop:eif-decomp}. Suppose the implicit weight map admits the von Mises expansion
\begin{equation}\label{eq:W-asy-linear-main}
\lambda_{\bW,\ell}(\hat P_n)-\lambda_{\bW,\ell}(P_0)
=n^{-1}\sum_{i=1}^n\psi^{\blambda}_{\bW,\ell}(O_i)+o_p(n^{-1/2}),\qquad \ell=1,\ldots,L,
\end{equation}
with $\psi^{\blambda}_{\bW,\ell} \in L^2_0(P_0)$. Define the ambient linearization $\phi^{\bW} \equiv \sum_{\ell=1}^L \lambda_{\bW,\ell}(P_0)\,\phi_\ell + \sum_{\ell=1}^L \Wald_\ell(P_0)\,\psi^{\blambda}_{\bW,\ell}$ and the influence-function gap $\Delta_{\bW,\omega} \equiv \phi^{\bW} - \eif_{\omega(\cdot)}$. \footnote{The linearization $\phi^{\bW}$ is the influence function~\eqref{eq:eif-general} evaluated at the implicit weight map, with $\psi^{\blambda}_{\bW,\ell}$ in place of the Riesz representers.} Then $\hat\beta^{\mathrm{GMM}}_{\bW}$ is regular for $\beta^{\mathrm{RT}}_{\bomega}(\cdot)$ at $P_0$ in the LATE model if and only if
\begin{enumerate}[label=\emph{(\roman*)},leftmargin=*,itemsep=2pt]
\item \emph{estimand matching:} $\beta^{\mathrm{GMM}}_{\bW}(P_0) = \beta^{\mathrm{RT}}_{\bomega}(P_0)$,
\item \emph{first-order tangent matching:} $\Pi_{\overline{\mathcal{G}}}\,\Delta_{\bW,\omega}=0$.
\end{enumerate}
\end{proposition}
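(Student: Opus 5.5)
The plan is to reduce regularity of $\hat\beta^{\mathrm{GMM}}_{\bW}$ to a statement about its asymptotically linear expansion under $P_0$, and then transfer that expansion to local alternatives by Le Cam's third lemma. First, Proposition~\ref{prop:weighted_iv}(b) writes the estimator w.p.a.1 as $\blambda_{\bW}(\hat P_n)'\widehat\bWald$. The $\phi_\ell$ are the Wald linearizations, and \eqref{eq:W-asy-linear-main} gives the weight expansion. A product-rule expansion then yields
\[
\sqrt n\{\hat\beta^{\mathrm{GMM}}_{\bW}-\beta^{\mathrm{GMM}}_{\bW}(P_0)\}=n^{-1/2}\textstyle\sum_i\phi^{\bW}(O_i)+o_p(1).
\]
In parallel, Proposition~\ref{prop:eif-decomp} gives the RT plug-in the expansion with $\eif_{\omega(\cdot)}$. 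It also gives pathwise differentiability of $\beta^{\mathrm{RT}}_{\bomega}$ along LATE-compatible paths with representer $\eif_{\omega(\cdot)}$, so $\sqrt n\{\beta^{\mathrm{RT}}_{\bomega}(P_{n,h})-\beta^{\mathrm{RT}}_{\bomega}(P_0)\}\to\E[\eif_{\omega(\cdot)}h]$ for $h\in\TLATE$.

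Second, fix a score $h\in\TLATE$. By QMD (Assumption~\ref{ass:late-regularity}) and the third lemma, $n^{-1/2}\sum_i\phi^{\bW}(O_i)$ converges under $P_{n,h}$ to $N(\E[\phi^{\bW}h],\Var\phi^{\bW})$. Centering at the target gives
\[
\sqrt n\{\hat\beta^{\mathrm{GMM}}_{\bW}-\beta^{\mathrm{RT}}_{\bomega}(P_{n,h})\}\Rightarrow N\bigl(b(h),\Var\phi^{\bW}\bigr),
\]
with $b(h)=\sqrt n\{\beta^{\mathrm{GMM}}_{\bW}(P_0)-\beta^{\mathrm{RT}}_{\bomega}(P_0)\}+\E[\Delta_{\bW,\omega}h]$. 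Regularity means this limit does not depend on $h$.

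Third, prove the two directions.

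\emph{Sufficiency.} Under (i) the first term of $b(h)$ is zero. Under (ii), $\E[\Delta_{\bW,\omega}h]=\E[(\Pi_{\TLATEbar}\Delta_{\bW,\omega})h]=0$ for every $h\in\TLATE\subset\TLATEbar$. The limit is therefore $N(0,\Var\phi^{\bW})$ for all $h$, which is regularity.

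\emph{Necessity.} Take $h=0$ (i.e.\ $P_0$ itself). If (i) fails, $\sqrt n$ times a nonzero constant diverges, so the centered statistic is not even tight and regularity fails. Given (i), the limit law is $h$-invariant only if the drift is, so $\E[\Delta_{\bW,\omega}h]=0$ for all $h$ in the linear span $\TLATE$. By $L^2$-continuity this extends to the closure $\TLATEbar$, which gives $\Pi_{\TLATEbar}\Delta_{\bW,\omega}=0$.

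The main obstacle is the first step's transfer of the expansions to contiguous alternatives. The $o_p(1)$ remainders must remain negligible under $P_{n,h}$; contiguity of QMD paths handles this. The joint convergence of $(n^{-1/2}\sum\phi^{\bW},\log dP_{n,h}/dP_0)$ must be justified; the Lindeberg CLT with the stated $L^2$ moments handles this. One also has to confirm that the pathwise derivative of the target is taken along the same LATE-compatible charts, which Proposition~\ref{prop:eif-decomp} already supplies. I would also note that the ambient $\phi^{\bW}$ need not lie in $\TLATEbar$. This is why only the projected gap enters: the component of $\Delta_{\bW,\omega}$ orthogonal to $\TLATEbar$ affects the variance but not the drift.
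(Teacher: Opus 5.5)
Your proposal is correct and follows essentially the same route as the paper's proof: you linearize $\hat\beta^{\mathrm{GMM}}_{\bW}$ through the Wald-share plug-in form of Proposition~\ref{prop:weighted_iv}(b) to obtain the influence function $\phi^{\bW}$, transfer it to LATE-compatible contiguous alternatives by Le Cam's third lemma, and read both directions off the drift $\langle\Delta_{\bW,\omega},h\rangle_{L^2(P_0)}$ plus the $\sqrt n$-scaled estimand gap. Your non-tightness argument at $h=0$ for the necessity of (i) is a sharper phrasing of the paper's ``regularity at $P_0$ implies consistency along the baseline.''
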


\noindent Proposition~\ref{thm:tangent-matching} makes estimand matching necessary but not sufficient for regularity; the implicit weight map's drift must also align with the target's along the LATE tangent directions. Importantly, the class does not include CUE whose estimand departs from the Wald-share form~\eqref{eq:weighted_iv} by the variance-score remainder of~\eqref{eq:cue_decomp}. 

\section{Weighting maps by projection}\label{sec:specifications}

The previous section estimates a target once the instrument weights are specified. In many applications the researcher starts one level deeper, with a compliance-type target, while the available instruments reach only the allocations generated by their Wald decompositions. Projection is the device that turns this mismatch into a well-defined weighted-Wald target.

\subsection{The projection and the plug-in}\label{sec:projection_kkt}

For each instrument $\ell$ and type $t\in\mathcal{T}_+(P_0)$, the compliance-type weight $\alpha_{\ell,t}(P)$ of Proposition~\ref{prop:wald_decomp} sends an instrument-weight vector $\bomega\in\Delta^{L-1}$ to the type allocation
\[
  \psi_t(\bomega;P)=\sum_{\ell=1}^L\omega_\ell\,\alpha_{\ell,t}(P),\qquad t\in\mathcal{T}_+(P_0).
\]
The feasible set is what the instruments can reach,
\begin{equation}
  \Feas(P) \;\equiv\; \bigl\{(\psi_t(\bomega;P))_{t\in\mathcal{T}_+(P_0)} : \bomega \in \Delta^{L-1}\bigr\},
\end{equation}
the geometric object behind the projection. Indexing over the positive-support types is without loss: every Wald-loaded type has positive mass and thus lies in $\mathcal{T}_+(P_0)$, and by~\eqref{eq:wald_type} the loadings close, $\sum_{t\in\mathcal{T}_+(P)}\alpha_{\ell,t}(P)=1$ for each $\ell$. Under PRD the $\alpha_{\ell,t}(P)$ are nonnegative (Proposition~\ref{prop:prd}), and $\Feas(P)$ lies in the type simplex.

On $\mathcal{T}_+(P_0)$, the researcher's specification pairs a normalized target primitive $\bpsi^\star(P)$ with a \emph{metric weight} $\bkappa(P) = (\kappa_t(P))_{t\in\mathcal{T}_+(P_0)}$, $\kappa_t(P) > 0$, governing how compliance-type-specific residuals are penalized in the projection criterion. The weighted Euclidean projection is any solution of
\begin{equation}\label{eq:projection-omega-dagger}
  \bomega^\dagger(\bpsi^\star; P) \;\in\; \argmin_{\bomega \in \Delta^{L-1}}\; \sum_{t\in\mathcal{T}_+(P_0)} \kappa_t(P)\, \Bigl(\textstyle\sum_{\ell=1}^L \omega_\ell\, \alpha_{\ell,t}(P) - \psi^\star_t(P)\Bigr)^2.
\end{equation}

The projection targets the weighted-Wald scalar, equivalently a weighted average of type effects,
\[
\beta^\dagger(\bpsi^\star; P)= \bomega^\dagger(\bpsi^\star; P)'\bWald(P) = \sum_{t\in\mathcal{T}_+(P_0)} \psi_t(\bomega^\dagger(\bpsi^\star; P); P)\, \LATE_t(P),
\]
the second equality by the Wald decomposition~\eqref{eq:wald_type}.

The projection~\eqref{eq:projection-omega-dagger} is a finite-dimensional least-squares projection over the instrument simplex, a convex quadratic program whose Karush--Kuhn--Tucker (KKT) conditions characterize the solution through the active set $I(\bpsi^\star; P) = \{\ell : \omega^\dagger_\ell(\bpsi^\star; P) > 0\}$ of instruments carrying positive weight.

\begin{lemma}[Active-set stability]\label{lem:active-set-stability}
Under Assumptions~\ref{ass:strict-complementarity} and~\ref{ass:projection-dictionary}, the active set $I(\bpsi^\star; P) = I_0$ is constant on a neighborhood $\mathcal{U}_0$ of $P_0$ and the active-face first-order system is nonsingular throughout.\footnote{The positivity condition requires $\theta_t(P_0) > 0$ only for types loaded by $\bpsi^\star(P_0)$ or entering the metric through a denominator; it is weaker than full threshold-support; the conditions are those stated in Assumptions~\ref{ass:strict-complementarity} and~\ref{ass:projection-dictionary}, with the proof in Appendix~\ref{app:proof_lem_active}.}
\end{lemma}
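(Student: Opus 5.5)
The projection \eqref{eq:projection-omega-dagger} is a strictly convex quadratic program on the simplex, so I will run a standard parametric-QP / implicit-function argument on its KKT system. Write $\bA(P)$ for the $|\mathcal{T}_+(P_0)|\times L$ matrix with entries $\alpha_{\ell,t}(P)$ and $\bK(P)=\diag(\kappa_t(P))$. The objective is then $(\bA\bomega-\bpsi^\star)'\bK(\bA\bomega-\bpsi^\star)$ with Hessian $2\bA'\bK\bA$. I read Assumption~\ref{ass:projection-dictionary} as supplying the regularity needed here: continuity of $\alpha_{\ell,t}$, $\bpsi^\star$ and $\bkappa$ in $P$ near $P_0$, positivity of the relevant $\theta_t$ (so the denominators in Proposition~\ref{prop:wald_decomp} are well behaved), and linear independence of the active columns $\{\alpha_{\ell,\cdot}(P_0)\}_{\ell\in I_0}$. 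I read Assumption~\ref{ass:strict-complementarity} as strict complementarity at $P_0$: every inactive instrument has a strictly positive KKT multiplier.

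\emph{Step 1 (KKT system at $P_0$).} The KKT conditions are $2\bA'\bK(\bA\bomega-\bpsi^\star)+\nu\mathbf 1-\bmu=0$, $\mathbf 1'\bomega=1$, $\bmu\ge0$, $\bomega\ge0$, $\mu_\ell\omega_\ell=0$. Restricted to the face $\{\omega_\ell=0,\ \ell\notin I_0\}$, they reduce to the linear system
\[
F(\bomega_{I_0},\nu;P)\equiv\begin{pmatrix}2\bA_{I_0}'\bK(\bA_{I_0}\bomega_{I_0}-\bpsi^\star)+\nu\mathbf 1\\ \mathbf 1'\bomega_{I_0}-1\end{pmatrix}=0.
\]
Its Jacobian in $(\bomega_{I_0},\nu)$ is the bordered matrix $\bigl(\begin{smallmatrix}2\bA_{I_0}'\bK\bA_{I_0}&\mathbf 1\\ \mathbf 1'&0\end{smallmatrix}\bigr)$. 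Because $\bK\succ0$ and the active columns are independent, $\bA_{I_0}'\bK\bA_{I_0}\succ0$, and a bordered matrix with positive definite block and nonzero border is nonsingular. This gives the nonsingularity claim at $P_0$.

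\emph{Step 2 (persistence).} Since the entries of $F$ are continuous in $P$, continuity of the determinant keeps the bordered matrix nonsingular on a neighborhood of $P_0$. Because $F$ is linear in $(\bomega_{I_0},\nu)$, its solution is explicit and continuous in $P$; no full implicit-function theorem is needed. Strict complementarity gives $\omega^\dagger_\ell(P_0)>0$ for $\ell\in I_0$ and $\mu_\ell(P_0)>0$ for $\ell\notin I_0$, where $\mu_\ell$ is computed from the stationarity row. By continuity both strict inequalities survive on a smaller neighborhood $\mathcal U_0$. The candidate therefore satisfies all the KKT conditions and, by convexity, is a minimizer. It is the unique minimizer because the objective is strictly convex in $\bA\bomega$, and injectivity of $\bA$ on the face is what pins down $\bomega$. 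Hence $I(\bpsi^\star;P)=I_0$ throughout $\mathcal U_0$.

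\emph{Main obstacle.} The delicate point is uniqueness when the full matrix $\bA$ has dependent columns. Then $\bA\bomega^\dagger$ is unique but $\bomega^\dagger$ may not be, so another minimizer with a different support could exist and make the active set ill-defined. My first move is to use the assumed linear independence on $I_0$ together with strict complementarity. Any other minimizer $\bomega'$ has the same fitted allocation $\bA\bomega'=\bA\bomega^\dagger$, hence the same gradient and multipliers. Strict complementarity then forces $\omega'_\ell=0$ for $\ell\notin I_0$, because $\mu_\ell>0$ and complementary slackness must hold for $\bomega'$ as well. Injectivity on $I_0$ then yields $\bomega'=\bomega^\dagger$. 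If the assumptions give only pointwise rather than uniform continuity of $\alpha_{\ell,t}(P)$, I will instead shrink the neighborhood using the specific weak topology in which $\mathcal U_0$ is defined.
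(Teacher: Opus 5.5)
Your proposal is correct and is essentially the paper's proof. Both restrict the KKT system to the face $I_0$, carry nonsingularity of the bordered matrix to a neighborhood by continuity, read the inactive multipliers off the stationarity residual, and let strict complementarity survive on a smaller neighborhood, so that the face solution satisfies the full KKT system of the convex program.

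The differences are minor. The paper takes nonsingularity at $P_0$ directly from the curvature clause of Assumption~\ref{ass:strict-complementarity}, namely positivity of $v'\bG_{I_0I_0}(P_0)v$ on $\mathcal{K}_{I_0}$, not from Assumption~\ref{ass:projection-dictionary}. That clause is equivalent to your linear independence of the active $\alpha_{\ell,\cdot}$, because each $\alpha_{\ell,\cdot}$ sums to one over $\mathcal{T}_+(P_0)$. Your multiplier-based uniqueness argument, which the paper leaves implicit, is a useful addition, and it needs only this curvature clause since differences of simplex points lie in $\mathcal{K}_{I_0}$.
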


The plug-in $\hat\beta^\dagger=\bomega^\dagger(\bpsi^\star;\hat P_n)'\widehat\bWald$ replaces $P$ by $\hat P_n$. By Lemma~\ref{lem:active-set-stability}, the weight map $P\mapsto\bomega^\dagger$ is differentiable at $P_0$; since $\hat P_n \in \mathcal{U}_0$ with probability tending to one, the sample projection has active set $I_0$ and the delta method applies on that event.

\begin{proposition}[Projected RT regularity and LAM attainment]\label{prop:lam-projected}
Assume Assumptions~\ref{ass:joint_indep}, \ref{ass:monotonicity}, \ref{ass:realized_exclusion}, \ref{ass:relevance}, \ref{ass:sampling}, \ref{ass:late-regularity}, \ref{ass:strict-complementarity}, \ref{ass:projection-dictionary}, and~\ref{ass:projection-tangent}, with Assumption~\ref{ass:prte-weight-regularity} for PRTE. Suppose the influence function~\eqref{eq:eif-general}, evaluated at $\bomega^\dagger(\bpsi^\star;\cdot)$ with inactive coordinates set to zero, has a finite $(2+\delta)$ moment, and let $I_0=I(\bpsi^\star;P_0)$. Then the plug-in $\hat\beta^\dagger$ is asymptotically linear for $\beta^\dagger(\bpsi^\star;P_0)$ with the influence function in~\eqref{eq:eif-general}, and along every LATE-compatible contiguous sequence $\{P_{n,h}\}$,
\[
\sqrt n\,\{\hat\beta^\dagger - \beta^\dagger(\bpsi^\star; P_{n,h})\} \xrightarrow{d}_{P_{n,h}} N\bigl(0, V^\dagger(P_0)\bigr),
\]
with $V^\dagger(P_0)\equiv\Var_{P_0}[\eif_{\omega^\dagger(\cdot)}]$ the variance of the projected-RT influence function at $P_0$. Under Assumption~\ref{ass:local-risk-ui}, $\hat\beta^\dagger$ attains the LAM lower bound for $\beta^\dagger(\bpsi^\star;\cdot)$:
\begin{equation}\label{eq:lam-bound-projected}
\sup_{\substack{I \subset \mathcal{G}\\ \dim I < \infty}}
\lim_{C\to\infty}
\liminf_{n\to\infty}
\sup_{\substack{h\in I\\ \norm{h}_{L^2(P_0)}\le C}}
n\,\E\!\left[\{\hat\beta^\dagger-\beta^\dagger(\bpsi^\star; P_{n,h})\}^2\right]
\;=\; V^\dagger(P_0).
\end{equation}
\end{proposition}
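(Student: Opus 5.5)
The plan is to reduce Proposition~\ref{prop:lam-projected} to Proposition~\ref{prop:eif-decomp}, treating the projection weight map $P\mapsto\bomega^\dagger(\bpsi^\star;P)$ as one particular RT weight map in $\mathcal{W}(P_0)$. That reduction needs three things: (a) $\bomega^\dagger$ is simplex-valued and continuous at $P_0$; (b) it is pathwise and Hadamard differentiable along LATE-compatible charts, with a von Mises expansion whose Riesz representers $\Riesz_{\omega^\dagger,\ell}$ lie in $L^2_0(P_0)$; (c) the resulting representer $\eif_{\omega^\dagger(\cdot)}$ lies in $\TLATEbar$. Once these hold, part (ii) of Proposition~\ref{prop:eif-decomp} delivers asymptotic normality along contiguous LATE-compatible sequences with variance $V^\dagger(P_0)$. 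Part (i) gives the lower bound, and Assumption~\ref{ass:local-risk-ui} upgrades attainment to the equality in \eqref{eq:lam-bound-projected}.

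For (a) and (b) I would work on the active face. By Lemma~\ref{lem:active-set-stability}, on a neighborhood $\mathcal{U}_0$ of $P_0$ the active set is $I_0$ and the active-face KKT system is nonsingular. On $I_0$ the solution is therefore characterized by linear equations. Write $\bA(P)=(\alpha_{\ell,t}(P))$ restricted to $\ell\in I_0$ and $\bK(P)=\diag(\kappa_t(P))$. The equations are $\bA\bK\bA'\bomega_{I_0}+\nu\mathbf 1=\bA\bK\bpsi^\star$ together with $\mathbf 1'\bomega_{I_0}=1$, and $\omega_\ell=0$ off $I_0$. Hence $\bomega^\dagger_{I_0}(P)$ is the solution of a linear system whose coefficients are smooth in $(\alpha_{\ell,t},\bpsi^\star,\bkappa)$, and whose matrix is invertible at $P_0$ and hence nearby. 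By the implicit function theorem (here, simply Cramer's rule), $\bomega^\dagger$ is a $C^1$ function of these primitives on $\mathcal{U}_0$. The chain rule then gives Hadamard differentiability, taking as input the differentiability of the primitives assumed in Assumptions~\ref{ass:projection-dictionary} and~\ref{ass:projection-tangent} (and Assumption~\ref{ass:prte-weight-regularity} for PRTE). It also gives the von Mises expansion, with $\Riesz_{\omega^\dagger,\ell}$ equal to the derivative of the linear solve applied to the primitives' influence functions, and $\Riesz_{\omega^\dagger,\ell}=0$ for $\ell\notin I_0$.

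The sample side follows from the same structure. Since $\hat P_n\in\mathcal{U}_0$ with probability tending to one, the sample projection has active set $I_0$ on an event of probability tending to one. On that event $\hat\beta^\dagger$ equals the smooth active-face map evaluated at $\hat P_n$, so the delta method yields asymptotic linearity with influence function \eqref{eq:eif-general} evaluated at $\bomega^\dagger$. The finite $(2+\delta)$ moment supplies the Lindeberg condition along $P_{n,h}$. Le Cam's third lemma, using the joint normality of the linear statistic and the log-likelihood ratio of the QMD submodels, converts the $P_0$ expansion into the stated limit centered at $\beta^\dagger(\bpsi^\star;P_{n,h})$. This centering is valid because the pathwise derivative of $\beta^\dagger$ along $h$ equals $\E[\eif\,h]$.

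I expect the main obstacle to be (c), tangent compatibility, together with checking that $\eif_{\omega^\dagger(\cdot)}$ really is the efficient influence function and not merely some gradient. My first approach would be to invoke Theorem~\ref{thm:tc-primitive}, in the projected form of Corollary~\ref{cor:tc-projected}. Each $\phi_\ell$ lies in $\TLATEbar$, and the weight-drift terms are linear combinations, with bounded coefficients, of the representers of $(\alpha_{\ell,t},\bpsi^\star,\bkappa)$. Assumption~\ref{ass:projection-tangent} asserts that those representers are expressible as observed structural scores, so closure under finite linear combinations puts $\eif_{\omega^\dagger}$ in $\TLATEbar$. If the $\alpha_{\ell,t}$ are only identified under the ordered-threshold structure, I would check expressibility cell by cell, through the cell mass, conditional treatment probability, and conditional mean scores. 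A gradient lying in the tangent space is the canonical gradient, so the LAM bound \eqref{eq:lam-bound} applies with $V^\dagger(P_0)$, and the plug-in attains it.
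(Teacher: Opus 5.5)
Your proposal is correct and follows essentially the paper's own proof. You treat $P\mapsto\bomega^\dagger(\bpsi^\star;P)$ as an RT weight map, use Lemma~\ref{lem:active-set-stability} and the nonsingular bordered KKT system to obtain a $C^1$ active-face solution whose von Mises representer chains the primitive representers of Assumption~\ref{ass:projection-dictionary} (zero off $I_0$), and then invoke Proposition~\ref{prop:eif-decomp}.

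One small correction on step (c). Assumption~\ref{ass:projection-tangent} directly asserts that the full representer $\eif_{\omega^\dagger(\cdot)}$ lies in $\TLATEbar$, so (c) is immediate rather than the main obstacle. Your detour through Theorem~\ref{thm:tc-primitive} would also require Condition~\ref{cond:cell-lift-paper} to place each $\phi_\ell$ in $\TLATEbar$, and that condition is not among the proposition's hypotheses. Relatedly, primitive differentiability comes from Assumption~\ref{ass:projection-dictionary}, not Assumption~\ref{ass:projection-tangent}.
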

\noindent Proposition~\ref{prop:var-estimation} applies with $\bomega(\cdot) = \bomega^\dagger(\bpsi^\star;\cdot)$, whose weight-map Riesz representer comes from differentiating the active-face KKT map (Proposition~\ref{prop:lam-projected}): the resulting $\hat V_n$ estimates $V^\dagger(P_0)$ and carries the generated-weight correction induced by the projection map. The sandwich that freezes $\bomega^\dagger(\hat P_n)$ omits the correction and is inconsistent for $V^\dagger(P_0)$.
\subsection{Identification-gap bounds for the projected target}\label{sec:feasibility_assessment}

Under Assumption~\ref{ass:projection-dictionary}, the projection delivers the weighted-Wald scalar $\beta^\dagger(\bpsi^\star;P)=\bomega^\dagger(\bpsi^\star;P)'\bWald(P)$, a surrogate for the estimand the researcher specified. That estimand is generally not point-identified since the $L$ Wald estimands pin down only $L$ combinations of the $|\mathcal{T}_+(P_0)|$ type effects, leaving the individual $\LATE_t$ free in the directions the instruments cannot reach. The identification gap is the discrepancy between the two,
\begin{equation}\label{eq:gap-def}
{\Delta(\bpsi^\star; P)\;\equiv\;\sum_{t\in\mathcal{T}_+(P_0)}\psi^\star_t(P)\LATE_t(P)-\beta^\dagger(\bpsi^\star;P)\;=\;\sum_{t\in\mathcal{T}_+(P_0)} r_t(\bpsi^\star; P)\LATE_t(P),}
\end{equation}
with type-$t$ residual weight $r_t(\bpsi^\star;P)=\psi^\star_t(P)-\psi_t(\bomega^\dagger(\bpsi^\star;P);P)$: known weights on unidentified effects, summing to zero for normalized $\bpsi^\star(P)$. The gap vanishes when $\bpsi^\star(P)\in\Feas(P)$, where the projection matches the specification and $\beta^\dagger$ is itself the desired estimand. For counterfactual policies, these definitions extend to a finite effect dictionary $\mathcal J(P_0)$ that may augment $\mathcal T_+(P_0)$ with compliance margin outside the Wald span, with $\alpha_{\ell,t}(P)=0$ whenever no observed instrument moves margin $t$. 
\begin{proposition}[Envelope-LP]\label{prop:id-gap-distfree}
Under Assumptions~\ref{ass:joint_indep}, \ref{ass:monotonicity}, \ref{ass:realized_exclusion}, \ref{ass:relevance}, and~\ref{ass:projection-dictionary}, suppose that $\sum_{t\in\mathcal J(P_0)}\psi^\star_t(P)=1$. For coordinatewise bounds $-\infty\leq\underline M_t\leq\overline M_t\leq\infty$, let $B^{\mathrm{LP}}(\underline M,\overline M)$ contain the collections $(b_t)_{t\in\mathcal J(P_0)}$ satisfying $\sum_{t\in\mathcal J(P_0)}\alpha_{\ell,t}(P)b_t=\Wald_\ell(P)$ for every $\ell$ and $\underline M_t\leq b_t\leq\overline M_t$ for every $t\in\mathcal J(P_0)$.
If $\LATE_t(P)\in[\underline M_t,\overline M_t]$ for every $t\in\mathcal J(P_0)$, then this set is nonempty and
\begin{equation}\label{eq:gap-bound-lp}
\Delta(\bpsi^\star;P)\in
\left[
\inf_{(b_t)\in B^{\mathrm{LP}}(\underline M,\overline M)}
\sum_{t\in\mathcal J(P_0)}r_tb_t,\;
\sup_{(b_t)\in B^{\mathrm{LP}}(\underline M,\overline M)}
\sum_{t\in\mathcal J(P_0)}r_tb_t
\right].
\end{equation}
For any polyhedron $\mathcal R(P)$ containing $(\LATE_t(P))_{t\in\mathcal J(P_0)}$,~\eqref{eq:gap-bound-lp} holds with $B^{\mathrm{LP}}(\underline M,\overline M)$ replaced by $B^{\mathrm{LP}}(\underline M,\overline M)\cap\mathcal R(P)$. In particular, if $\delta_0(P)\equiv\sum_{t\notin\mathcal U}r_t\LATE_t(P)$ is separately identified under additional structure for some $\mathcal U\subseteq\mathcal J(P_0)$, one may take $\mathcal R(P)=\{(b_t)_{t\in\mathcal J(P_0)}:\sum_{t\notin\mathcal U}r_tb_t=\delta_0(P)\}$.
\end{proposition}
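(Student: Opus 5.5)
The argument is a feasibility argument for a linear program: the true type effects are themselves a feasible point. First I will fix $P$ and set $b^\circ_t\equiv\LATE_t(P)$ for $t\in\mathcal J(P_0)$.

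\emph{Step 1: the constraints hold at $b^\circ$.} I need $\sum_{t\in\mathcal J(P_0)}\alpha_{\ell,t}(P)\,b^\circ_t=\Wald_\ell(P)$ for every $\ell$. This is the Wald decomposition~\eqref{eq:wald_type} of Proposition~\ref{prop:wald_decomp}, extended to the dictionary $\mathcal J(P_0)$. Types in $\mathcal T_+(P)$ carry their weights $\alpha_{\ell,t}(P)$. Augmented margins carry $\alpha_{\ell,t}(P)=0$ by the dictionary convention. Any $t\in\mathcal T_+(P_0)\setminus\mathcal T_+(P)$ has $\theta_t(P)=0$, so $\alpha_{\ell,t}(P)=0$. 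In each of the last two cases the term contributes nothing whatever value is assigned to $\LATE_t$. I will invoke Assumption~\ref{ass:projection-dictionary} to ensure every Wald-loaded type is in $\mathcal J(P_0)$, so that no loaded term is dropped.

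\emph{Step 2: the box holds at $b^\circ$.} The hypothesis $\LATE_t(P)\in[\underline M_t,\overline M_t]$ gives the box constraints directly. Together with Step 1 this places $b^\circ$ in $B^{\mathrm{LP}}(\underline M,\overline M)$, so the set is nonempty.

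\emph{Step 3: the gap is the LP objective at $b^\circ$.} By~\eqref{eq:gap-def}, $\Delta(\bpsi^\star;P)=\sum_t r_t\LATE_t(P)=\sum_t r_t b^\circ_t$. This requires that the residual weights $r_t$ and the representation $\beta^\dagger=\sum_t\psi_t(\bomega^\dagger)\LATE_t$ be extended to $\mathcal J(P_0)$ with $\psi_t(\bomega^\dagger)=0$ on augmented margins. The normalization $\sum_t\psi^\star_t=1$ is needed only so that the $r_t$ sum to zero and the gap is well defined. It is not needed for the inclusion itself.

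\emph{Step 4: conclude.} A linear functional evaluated at a feasible point lies between its infimum and supremum over the feasible set. This yields~\eqref{eq:gap-bound-lp}, with the endpoints allowed to be $\pm\infty$ when the bounds are infinite.

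\emph{Refinements.} If $(\LATE_t)\in\mathcal R(P)$, then $b^\circ\in B^{\mathrm{LP}}\cap\mathcal R(P)$, and Step 4 applies verbatim. For the special case, $b^\circ$ satisfies $\sum_{t\notin\mathcal U}r_tb^\circ_t=\delta_0(P)$ by the definition of $\delta_0(P)$. That set is a hyperplane, hence a polyhedron, so it is covered by the general refinement.

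The main obstacle is bookkeeping rather than analysis. Step 1 must hold on the full dictionary $\mathcal J(P_0)$ at an arbitrary $P$, where $\mathcal T_+(P)$ may differ from $\mathcal T_+(P_0)$. The argument relies on zero-mass and unmoved types entering with $\alpha_{\ell,t}(P)=0$, which is exactly what Assumption~\ref{ass:projection-dictionary} is meant to guarantee.
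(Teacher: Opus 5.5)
Your proof is correct and follows the paper's own argument. The true effects $(\LATE_t(P))_{t\in\mathcal J(P_0)}$ are feasible for the LP by the Wald decomposition and the maintained bounds, and the objective evaluated there equals $\Delta(\bpsi^\star;P)$, so the infimum and supremum bracket the gap; your bookkeeping on zero-mass and augmented margins just spells out what the paper's one-line proof leaves implicit.
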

\noindent The constants $\underline M_t$ and $\overline M_t$ are sensitivity parameters, not identified features of the model. When $\LATE_t(P)$ have known support, that support can supply such natural bounds. For unbounded outcomes, the researcher could instead choose a plausible heuristic calibration. A plausible choice is $\max_\ell c|\Wald_\ell(P)|$: because the Wald estimands are weighted averages of the LATEs in the same units, their largest absolute magnitude provides a within-support scale benchmark, and the scale factor $c$ determines how far the effects are allowed to depart from that benchmark. 

\begin{proposition}[Monotone MTE and adjacent LATEs]\label{prop:id-gap-mte}
Under Assumptions~\ref{ass:latent_index} and~\ref{ass:local-threshold-support}, suppose that $\{t_0,\ldots,t_K\}$ correspond to ordered adjacent propensity intervals. If $\MTE(\cdot;P)$ is weakly increasing, then $
\LATE_{t_0}(P)\leq\LATE_{t_1}(P)\leq\cdots\leq\LATE_{t_K}(P).
$
Conversely, any collection $b_{t_0}\leq\cdots\leq b_{t_K}$ is rationalizable as the adjacent-interval averages of a weakly increasing MTE. The inequalities are reversed under weakly decreasing MTE.
\end{proposition}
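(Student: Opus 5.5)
The argument has two directions, and both rest on Lemma~\ref{lem:type_interval}, which identifies each threshold type $t_k$ with a contiguous latent-resistance interval $\mathcal{R}_{t_k}=(p_k,p_{k+1}]$, ordered from always-takers at the bottom to never-takers at the top. Assumption~\ref{ass:local-threshold-support} ensures that each $t_k$ in the list has $\theta_{t_k}=p_{k+1}-p_k>0$, so every interval is nondegenerate.

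\textbf{Forward direction.} Under Assumption~\ref{ass:latent_index}, the event $\{D_i(\cdot)=t_k\}$ coincides with $\{U_i\in\mathcal{R}_{t_k}\}$. Since $U_i\sim\mathrm{Uniform}(0,1)$, iterated expectations give
\[
\LATE_{t_k}(P)=\frac{1}{p_{k+1}-p_k}\int_{p_k}^{p_{k+1}}\MTE(u;P)\,du ,
\]
which is an interval average of the MTE. For adjacent intervals $(p_{k-1},p_k]$ and $(p_k,p_{k+1}]$ with $\MTE$ weakly increasing, every point of the lower interval lies weakly below every point of the upper one. Hence
\[
\LATE_{t_{k-1}}\le \MTE(p_k^-)\le \MTE(p_k^+)\le \LATE_{t_k},
\]
where I use the bounds $\sup_{\text{lower}}\MTE\le\inf_{\text{upper}}\MTE$ rather than pointwise values, to avoid continuity issues. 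Chaining over $k$ gives the ordering of all the $\LATE_{t_k}$.

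\textbf{Converse.} Given $b_{t_0}\le\cdots\le b_{t_K}$, I would construct the step function
\[
m(u)=\sum_k b_{t_k}\,\ind\{u\in\mathcal{R}_{t_k}\},
\]
taking care of endpoint conventions so that the intervals partition $(0,1]$. This $m$ is weakly increasing because the $b$'s are ordered and the intervals are ordered, and its average over each $\mathcal{R}_{t_k}$ is exactly $b_{t_k}$. The remaining step, which I expect to be the only real obstacle, is to say what "rationalizable" means: the step function must be realizable as $\MTE(\cdot;P)$ for some joint law compatible with the observed data. I would handle this by keeping the law of $(U_i,\bZ_i)$ and of $Y_i(0)\mid U_i$ fixed and shifting $Y_i(1)\mid U_i=u$ by $m(u)-\MTE(u;P)$. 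The delicate point is that this alters observed outcome distributions for treated units, so the claim should be read as rationalizability at the level of the adjacent-interval averages, i.e., for the ordering restriction itself. Within each interval the mean of $Y_i(1)$ can be matched by choosing the shift to integrate to zero there, so the observed-data constraint only involves interval averages. A cleaner option is simply to state that the set of interval averages of increasing functions equals the ordered cone. Finally, the decreasing case follows by applying the argument to $-\MTE$ (equivalently, reversing the inequalities).
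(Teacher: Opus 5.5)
Your argument is essentially the paper's proof. You write each $\LATE_{t_k}$ as the average of $\MTE(\cdot;P)$ over $\mathcal{R}_{t_k}=(p_k,p_{k+1}]$, compare adjacent intervals and average, take for the converse the step function equal to $b_{t_k}$ on $\mathcal{R}_{t_k}$, and obtain the decreasing case by symmetry. The paper reads ``rationalizable'' in your ``cleaner'' sense: some weakly increasing function has these interval averages, and the data constraints enter separately through the LP of Proposition~\ref{prop:id-gap-distfree}. You should therefore drop the $Y_i(1)$-shifting digression. It cannot work for general $b$, because any such shift integrates to $\theta_{t_k}(b_{t_k}-\LATE_{t_k})$ over $\mathcal{R}_{t_k}$, while the interior-type LATEs are already pinned down by the observed cell moments.
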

\noindent The Envelope-LP bound in Proposition~\ref{prop:id-gap-distfree} allows the researcher to combine type-specific effect bounds with other maintained linear restrictions; the LP form draws on the partial-identification tradition of \citet{Manski1990, Manski1997} and is operationalized for IV models with shape restrictions on the underlying response functions by \citet{MogstadSantosTorgovitsky2018}. Proposition~\ref{prop:id-gap-mte} shows the complementarity between MTE monotonicity and the LATE ordering. MTE monotonicity finds direct empirical support in the declining returns to college of \citet{CarneiroHeckmanVytlacil2011}; its plausibility is design-specific.\footnote{First-stage monotonicity (Assumption~\ref{ass:monotonicity}) is a distinct, separately maintained restriction governing the existence of defiers; the recent diagnostics of \citet{Sigstad2026} and \citet{FrandsenLefgrenLeslie2023} in judge and examiner designs, and the specification test of \citet{BaiHuangMoonSantosShaikhVytlacil2024} more generally, target that first-stage restriction and should not be conflated with shape restrictions on the MTE curve.} Since adjacent-LATE order restrictions in Proposition~\ref{prop:id-gap-mte} are linear, they can be imposed jointly with the envelope restrictions in Proposition~\ref{prop:id-gap-distfree}. LP endpoint maps can be directionally rather than differentiable when their binding constraints change. \citet{FangSantos2019} develop general inference that is applicable to such maps, and \citet{HongLi2018} provide a numerical directional-delta implementation.

\subsubsection{Compliance-share and equal-weight targets (CS-ATE and EW-ATE)}\label{sec:cs_ate}\label{sec:ew_ate}

Let $\mathcal{C}_{\mathrm{ever}}(P) \equiv \bigcup_{\ell=1}^L \mathcal{C}_\ell(P)$ denote the set of ever-compliers, comprising all positive-support types that respond to at least one instrument. CS-ATE weights each compliance type by its population share among this group; EW-ATE weights all types equally regardless of prevalence. Both targets use the unweighted $\ell^2$ metric $\kappa_t \equiv 1$ and differ only in how they weight the ever-complier types. The CS-ATE and EW-ATE target weights are respectively given by
\begin{equation*}
  \psi^{\star,\mathrm{CS}}_{t}(P) \;=\; \frac{\theta_{t}(P)\, \ind\{t \in \mathcal{C}_{\mathrm{ever}}(P)\}}{\sum_{t' \in \mathcal{C}_{\mathrm{ever}}(P)} \theta_{t'}(P)}\quad\text{and}\quad
  \psi^{\star,\mathrm{EW}}_{t}(P) \;=\; \frac{\ind\{t \in \mathcal{C}_{\mathrm{ever}}(P)\}}{|\mathcal{C}_{\mathrm{ever}}(P)|},
\end{equation*}
for each type $t\in\mathcal{T}_+(P_0)$.

Under the diagonal non-overlap specialization with a one-to-one instrument-type map, both targets lie in $\Feas(P)$, and the projection is exact. The EW-ATE plug-in recovers the familiar equal-weight Wald average as the projection of a compliance-type primitive. In cumulative-threshold and other overlapping-compliance designs, feasibility is target-specific and need not hold; when it fails, the projection reverts to the KKT problem of Section~\ref{sec:projection_kkt}. Because $\bpsi^{\star,\mathrm{CS}}(P)$ depends on $P$ through $\theta_t(P)$, the CS-ATE LAM variance carries a weight-drift contribution from estimated type shares. On the other hand, for EW-ATE with $|\mathcal{C}_{\mathrm{ever}}|$ constant (the STAR application, where the exact projection gives implemented weights $\omega^\dagger_\ell=1/|\mathcal{C}_{\mathrm{ever}}|$ independent of $P$), the weights are fixed and no drift correction arises.

 \subsubsection{The policy-relevant treatment effect (PRTE) target}\label{sec:mte}
Consider a one-directional policy that shifts the joint instrument distribution from $F^Z_0$ to $F^Z_1$, raising each compliance type's treatment probability by
\[
\Delta_k^{\mathrm{PRTE}}(P)=\mathbb{P}_{F^Z_1}(D_i(\bZ)=1\mid D_i(\cdot)=t_k)-\mathbb{P}_{F^Z_0}(D_i(\bZ)=1\mid D_i(\cdot)=t_k)\ge0.
\]
Type $t_k$'s PRTE weight is its share of the total treatment shift,
\begin{equation}\label{eq:prte_integral}
\psi^{\star,\mathrm{PRTE}}_{t_k}(F^Z_0 \to F^Z_1; P) = \frac{\theta_{t_k}(P)\,\Delta_k^{\mathrm{PRTE}}(P)}{\E_{F^Z_1}[D_i] - \E_{F^Z_0}[D_i]} = \int_{p_k(P)}^{p_{k+1}(P)} \psi^{\mathrm{PRTE}}(u)\,du,
\end{equation}
equivalently the mass on type $t_k$'s threshold interval of $\psi^{\mathrm{PRTE}}$, the density of the treatment shift across the latent-resistance margin. The policy alters the assignment distribution alone: potential outcomes, the latent resistance, and the compliance types are invariant to the shift, satisfying the policy-invariance condition of \citet{HeckmanVytlacil2005}. Averaging the marginal treatment effect by this density gives the policy-relevant treatment effect $\int_0^1 \MTE(u; P)\, \psi^{\mathrm{PRTE}}(u)\,du$ \citep{HeckmanVytlacil2005, CarneiroHeckmanVytlacil2011}.

The PRTE metric is the inverse type mass, $\kappa^{\mathrm{PRTE}}_{t_k}(P)=\theta_{t_k}(P)^{-1}$. Unlike the unit metric of CS- and EW-ATE, it is not a modeling choice: it is the metric under which the type-level projection~\eqref{eq:projection-omega-dagger} agrees with the $L^2[0,1]$ projection of the policy weight $\psi^{\mathrm{PRTE}}$ onto the instrument weight functions $\{\alpha^{\MTE}_\ell\}_{\ell=1}^L$ (Lemma~\ref{lem:prte-equivalence}), with minimizer unique under Assumption~\ref{ass:prte-rank}.

From the $L$ Wald estimands, the exact PRTE is identified by a linear combination only when the target MTE weight function $\psi^{\mathrm{PRTE}}(\cdot;P)$ lies in the linear span of $\{\alpha^{\MTE}_\ell(\cdot; P)\}_{\ell=1}^L$.\footnote{\citet{MogstadSantosTorgovitsky2018} bound the PRTE via linear-programming restrictions on marginal treatment response functions; they accommodate continuous instruments through the propensity score, whereas the present analysis focuses on binary instruments.} 

\section{Applications}\label{sec:application}
\subsection{Class size and student achievement}\label{sec:star}

Tennessee's STAR experiment \citep{Krueger1999} randomly assigned kindergarteners to small, regular, or regular-with-aide classes within schools. Following the standard comparison of small versus regular classes, the aide arm is excluded. The kindergarten wave contributes 78 schools to the analysis sample, and two additional schools first appear in the retained sample in grade 1, giving 80 schools across K--3.\footnote{Schools with fewer than 10 students or fewer than 3 students per arm are dropped, as are students observed in more than one school across the panel and the few school--grade cells in which only one class type is observed; every student then belongs to one school and every cell carries a within-school comparison. The analysis sample matches the excluded observations on demographics (Table~\ref{tab:star_attrition}). Results are robust across subjects (Table~\ref{tab:star_robustness} in Appendix~\ref{app:star_robustness}).} We stack all four experimental grades (K--3) of the single STAR cohort into one sample. The outcome is the Stanford Achievement Test math score, standardized within grade; all effects are reported in grade-standardized (SD) units. Sample restrictions and robustness checks are in Appendix~\ref{app:star_robustness} (Tables~\ref{tab:star_attrition} and~\ref{tab:star_robustness}).

Each school conducts an independent within-school randomization, and every student belongs to exactly one school.\footnote{The classroom and the student are the two design-relevant groupings. The second-moment matrix $\hat\bOmega$, the Wald influence-function covariance, and all standard errors for the estimators in this section are therefore two-way cluster-robust at the classroom and student level, computed with the estimator of \citet{CameronGelbachMiller2011}; this is the clustering structure applied to the STAR panel by \citet{Mueller2013}.} The analysis isolates this design by comparing the small-class and regular-class arms school by school, using the attended class type as the treatment comparison (the pre-switch restriction in Table~\ref{tab:star_robustness} leaves the estimates nearly unchanged). 

These school-specific comparisons serve as the primitive causal objects for STAR. By partialling out school$\times$grade fixed effects, each school is reduced to a conditional single-instrument randomization problem. FWL aggregation then maps these conditional Wald estimands into a pooled coefficient, preserving their causal interpretation \citep{BlandholBonneyMogstadTorgovitsky2022}.\footnote{The analysis does not treat the school dummy variables as globally manipulable instruments. It applies the binary-instrument theory within school and then uses the FWL aggregation described in Appendix~\ref{app:fwl_bridge}.} This school$\times$grade conditioning ensures that $\bSigmaZ$, the two-way clustered $\bOmega$, and the Wald influence-function covariance (Lemma~\ref{lem:rt-diagonal-wald-cov}) are all perfectly diagonal by school. This verifies the conditional analogs of Assumptions~\ref{ass:diag_instruments} and~\ref{ass:diag_nonoverlap}, allowing us to apply the diagonal specialization of Section~\ref{sec:diagonal} school by school. The required per-school support ($p^Z_\ell \in (0,1)$), monotonicity, and non-overlap conditions all hold by construction. 

Under the diagonal specialization, each instrument $\ell$ indexes its own unique compliance type $t_\ell$. The CS-ATE compliance-type primitive is therefore $\psi^{\star,\mathrm{CS}}_{t_\ell} = \theta_\ell$, and the projection~\eqref{eq:projection-omega-dagger} is exact: $\omega^{\dagger,\mathrm{CS}}_\ell = \psi^{\star,\mathrm{CS}}_{t_\ell} = \theta_\ell$.\footnote{Because attended class type is used as the assignment comparison, the first-stage factor in the within-school analog of Corollary~\ref{cor:2sls_diag} equals one; hence the 2SLS implicit weight on school $\ell$ is $\lambda^{2\mathrm{SLS}}_\ell \propto \theta_\ell \cdot p^Z_\ell(1-p^Z_\ell)$, including an additional factor $p^Z_\ell(1-p^Z_\ell)$ comparing to CS-ATE.}

School-specific Wald estimates span a wide range (Figure~\ref{fig:m1_late_landscape}). Consequently, the $J$-statistic decisively rejects the equality of school-level Wald estimands (Table~\ref{tab:star_estimators}, $p < 0.001$). Rather than a test of instrument validity, this rejection serves as direct evidence of treatment effect heterogeneity. Table~\ref{tab:star_estimators} also reports the RT plug-in at CS-ATE and EW-ATE alongside 2SLS, TSGMM, EGMM, and CUE. Notably, the TSGMM, EGMM, and CUE estimates all fall below 2SLS.\footnote{The gap is not the canonical first-stage-estimation channel for many-instruments bias in this design: the demeaned treatment lies in the column space of $\bZ_i$, the first-stage projection is exact, and LIML, JIVE, and 2SLS coincide (Appendix~\ref{app:star_mib}). This addresses the \citet{BoundJaegerBaker1995} bias channel for the constructed within-school first stage.}

In the diagonal specialization, the residual-variance component of the EGMM weight is inversely proportional to $\sigma^2_{\epsilon,\ell}$ (Figure~\ref{fig:m3_het_penalty_b}), though the full weight also contains enrollment and propensity-strength factors. To verify that this penalty is distinct from standard weak-instrument concerns, Figure~\ref{fig:distortion_map} separates the heterogeneity penalty from the instrument-strength adjustment. The two dimensions are nearly orthogonal, confirming that the penalty is not just a relabeling of ``weak-instrument'' schools.

\begin{table}[t]
\centering
\caption{Effect of Small Class Size on Math Scores}
\label{tab:star_estimators}
\small
\setlength{\tabcolsep}{6pt}
\begin{tabular}{lcccccc}
\toprule
 & \multicolumn{4}{c}{GMM estimators} & \multicolumn{2}{c}{RT plug-in} \\
\cmidrule(lr){2-5}\cmidrule(lr){6-7}
 & 2SLS & TSGMM & EGMM & CUE & CS-ATE & EW-ATE \\
\midrule
Estimate & 0.224 & 0.205 & 0.172 & 0.176 & 0.224 & 0.185 \\
 & (0.027) & (0.028) & (0.031) & (0.045) & (0.031) & (0.029) \\
\midrule
$J$-statistic & & 135.49 & 136.24 & 136.23 & & \\
$p$-value & & $< 0.001$ & $< 0.001$ & $< 0.001$ & & \\
$N$ & \multicolumn{6}{c}{15,056} \\
Schools ($L$) & \multicolumn{6}{c}{80} \\
Classrooms; students & \multicolumn{6}{c}{903; 7,598} \\
\bottomrule
\end{tabular}
\begin{minipage}{0.9\textwidth}
\footnotesize\textit{Notes:} Small class (13--17 students) versus regular (22--25); outcome is the Stanford Achievement Test math score, standardized within grade (grade-SD units), for the single STAR cohort. Within-school randomization, school$\times$grade fixed effects residualized via FWL. Standard errors in parentheses are two-way cluster-robust at the classroom and student level \citep{CameronGelbachMiller2011}, the structure applied to STAR data by \citet{Mueller2013}. 2SLS, TSGMM, EGMM, and CUE are the GMM estimators of Section~\ref{sec:egmm_weights}, with the two-way cluster-robust second-moment matrix $\hat\bOmega$; their standard errors are misspecification-robust \citep{HallInoue2003, HansenLee2021}, consistent for the sampling variance around each estimator's probability limit under the Wald-estimand heterogeneity the $J$-statistic detects. CS-ATE and EW-ATE are the RT plug-ins of Section~\ref{sec:cs_ate}, weighting schools by enrollment share and equally; the CS-ATE standard error uses the full influence-function variance $\hat V_n$ (Proposition~\ref{prop:var-estimation}). The 2SLS and CS-ATE estimates are close but not identical (a difference of $0.0008$), reflecting the additional factor $p^Z_\ell(1-p^Z_\ell)$ in the 2SLS weights, which is close to its maximum for the large majority of schools. Inference uses standard-normal critical values.
\end{minipage}
\end{table}

\begin{figure}[htbp]
\centering
\includegraphics[width=0.65\textwidth]{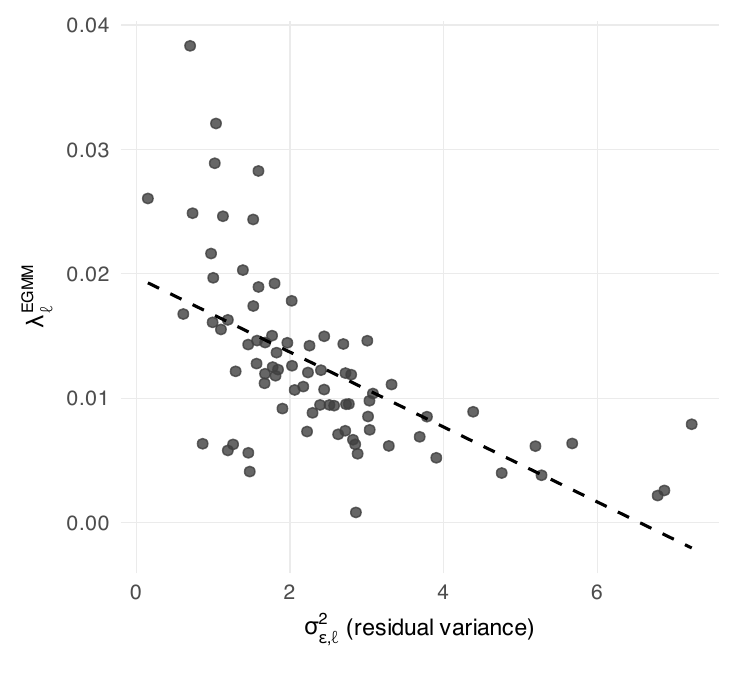}
\caption{The heterogeneity penalty in the diagonal specialization: estimated EGMM weight $\lambda^{\EGMM}_\ell$ against residual variance $\sigma^2_{\epsilon,\ell}$ for the 80 STAR schools (grade-SD units); the dashed line is the OLS fit. The EGMM weight carries an inverse residual-variance component alongside enrollment and propensity-strength factors (Corollary~\ref{cor:egmm_diag} and Appendix~\ref{app:diagonal}).}
\label{fig:m3_het_penalty_b}
\end{figure}

\begin{figure}[htbp]
\centering
\includegraphics[width=0.60\textwidth]{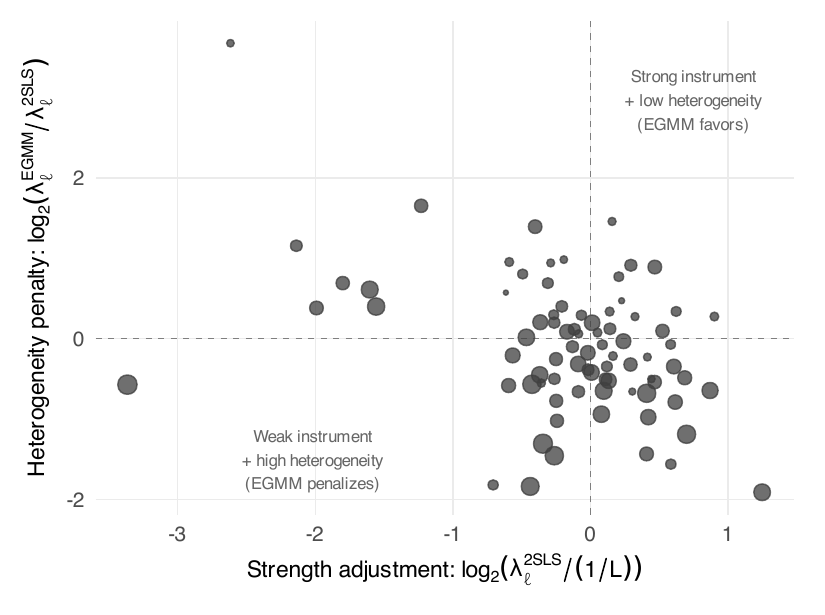}
\caption{Weight distortion map for 80 STAR schools. Horizontal axis: strength adjustment $\log_2(\lambda^{2\mathrm{SLS}}_\ell / (1/L))$; vertical axis: heterogeneity penalty $\log_2(\lambda^{\EGMM}_\ell / \lambda^{2\mathrm{SLS}}_\ell)$. Point size: $|\LATE_\ell|$.}
\label{fig:distortion_map}
\end{figure}

To evaluate a hypothetical expansion of the program, suppose the state funds a fixed number of new small-class placements and assigns a share $\omega_\ell$ of them to school $\ell$. Because assignment within schools remains random, a new placement raises a pupil's score by the school's specific effect, $\LATE_\ell$. The average gain per placement is therefore $\sum_\ell \omega_\ell \LATE_\ell$. This target is a PRTE, and it is point-identified because the experimental data fix the school effects while the policy moves only the allocation weights. 

We consider a \emph{compensatory} expansion that sends more placements to the lowest-baseline schools. Formally, letting $S_i \in \{1, \dots, L\}$ denote the school of student $i$, we assign shares $\omega_\ell \propto \exp(-\mu_{0\ell}/c)$, where $\mu_{0\ell} \equiv \E[Y_i(0) \mid S_i = \ell]$ is the regular-class mean. By varying the multiplier $1/c$, we smoothly shift this maximum-entropy allocation from equal targeting ($\omega_\ell = 1/L$ as $c \to \infty$) to full concentration on the most disadvantaged schools ($c \to 0$). Tilting the allocation toward these schools naturally raises the average gain. Figure~\ref{fig:star_prte_byc} traces this increasing trend as the share of placements reaching the disadvantaged half of schools grows from $50\%$ (no targeting) to $95\%$.\footnote{Small classes help the lowest-baseline schools most, matching the pattern \citet{Krueger1999} documents for free-lunch-eligible and Black students.}

Because both the school effects $\LATE_\ell$ and the target weights $\omega_\ell$ are estimated, a valid confidence interval must account for the statistical uncertainty in both (Proposition~\ref{prop:var-estimation}). To prevent winner's-curse bias, the point estimate is computed via repeated cross-fitting: each school's weight is derived from one random half-sample of students and its Wald from the other, averaging across splits to preserve efficiency.\footnote{A low baseline $\hat\mu_{0\ell}$ by chance draws extra weight and inflates the Wald; estimating them on independent half-samples removes this student-level alignment.} A naive standard error follows a traditional textbook GMM approach, holding the target weights fixed and capturing only the uncertainty in the school effects. This is exact only under no targeting, where weights are fixed at $1/L$.\footnote{The no-targeting point is the cross-fit EW-ATE estimate.} As targeting sharpens, the weights rely heavily on the estimated baselines $\hat\mu_{0\ell}$. The naive standard error omits this variation, causing the interval to narrow artificially (Figure~\ref{fig:star_prte_byc}). Our calibrated simulation (Appendix~\ref{app:mc_star}) confirms this represents genuine under-coverage (Figure~\ref{fig:star_prte_coverage}): while the full influence function remains near nominal, naive coverage falls sharply as the tilt increases.

\begin{figure}[htbp]
\centering
\includegraphics[width=0.6\textwidth]{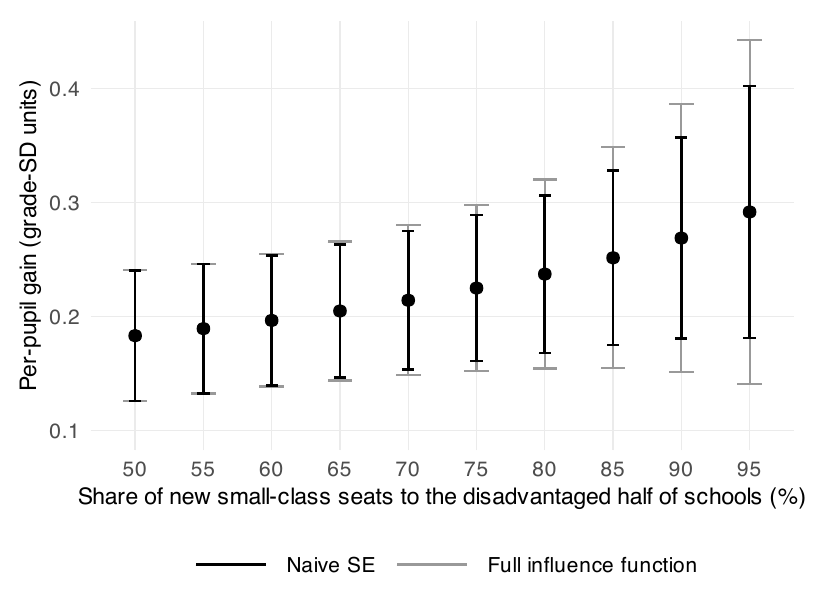}
\caption{Compensatory targeting against the share of new small-class seats reaching the disadvantaged (below-median-baseline) half of schools (the 40 lowest-baseline of 80), from a $50\%$ share (equal allocation across schools, no targeting) to $95\%$. Each point is the cross-fit compensatory PRTE, $\sum_\ell \omega_\ell(\hat\mu_{0\ell})\,\LATE_\ell$, with two $95\%$ intervals: the full influence function (grey, the reported interval) and the weights-fixed naive standard error (black).}
\label{fig:star_prte_byc}
\end{figure}

\begin{figure}[htbp]
\centering
\includegraphics[width=0.6\textwidth]{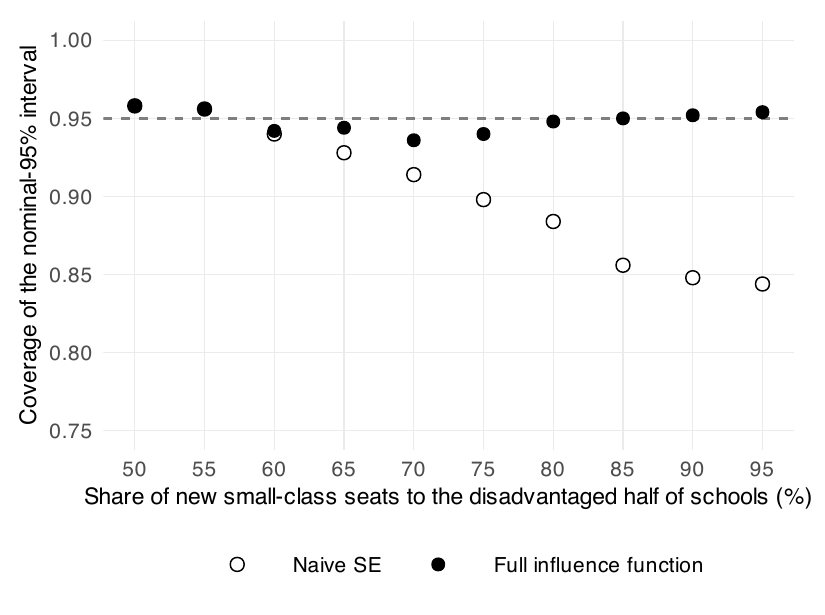}
\caption{Coverage of the nominal-$95\%$ interval for the compensatory PRTE across the share reaching the disadvantaged half of schools, from a $50\%$ share (equal allocation, no targeting) to $95\%$, on the calibrated STAR design ($B = 500$ replications). Open circles: the naive weights-fixed standard error; filled circles: the full influence function; the dashed line marks $0.95$. Both series use the cross-fit estimate.}
\label{fig:star_prte_coverage}
\end{figure}

\subsection{Patent examiner leniency and innovation}\label{sec:patent}

We apply our framework to a patent examiner leniency design, drawn from the replication data of \citet{FarreMensaHegdeLjungqvist2020}. The analysis sample consists of first-time patent applications at the United States Patent and Trademark Office from 2001 to 2009 (Table~\ref{tab:patent_groups}). The treatment $D_i$ is patent approval. The primary outcome $Y_i$ is the total citations received by all subsequent patent applications filed by the same startup, counted over the five years following each application's public disclosure. We also evaluate follow-on applications as an alternative outcome for the policy counterfactual. Examiner leniency is measured by the leave-one-out approval rate, which we use to assign applications into $Q = 7$ leniency quantile groups. This yields $L = 6$ raw cumulative instruments $Z_{\ell i} = \ind\{G_i \geq \ell + 1\}$, where $G_i$ denotes the leniency group.

To account for the assignment level, the moment implementation residualizes these instruments on art unit $\times$ year fixed effects via Frisch--Waugh--Lovell (FWL). These residualized variables implement a saturated conditional aggregation for the threshold Wald estimates.\footnote{Appendix~\ref{app:fwl_bridge} gives the FWL bridge and the resulting gamma-weighted conditional Wald interpretation.} Because the Wald estimates and the compliance-type projection are built from the exact same within-cell residualization, the projection weights and the Wald estimands they combine describe the identical set of compliers. 

Because the thresholds are nondecreasing functions of a common scalar leniency source, the cumulative instruments are positively correlated and their compliance groups overlap. Positive regression dependence (PRD) therefore holds by construction, both within each cell and unconditionally. Accordingly, all analyses use the general projection weight formulas with the full (non-diagonal) second moment matrix $\hat\bOmega$, clustered at the examiner level.\footnote{The main efficiency theory is stated under iid sampling. The patent application reports examiner-clustered analogs of the same moment and influence-function calculations; these clustered standard errors are not labeled as iid LAM standard errors. Many-instrument concerns for leniency designs are less acute here because $L = 6$ cumulative threshold instruments compress the many-examiner first stage into a small threshold design with art-unit--year controls.} The design is highly relevant: the overall first-stage $F$-statistic is 255.3, and approval rates rise monotonically across the seven groups (Table~\ref{tab:patent_groups}). This monotonic rise gives every cumulative threshold, including the upper thresholds receiving negative weights, its own strong first-stage variation.
 Finally, the per-instrument separate-threshold first stages yield nonnegative coefficients across pre-treatment subgroups (Appendix~\ref{app:patent_monotonicity_diag}), validating the monotonicity implication (Appendix~\ref{app:cumulative_monotone}).

\begin{figure}[htbp]
  \centering
  \includegraphics[width=0.65\textwidth]{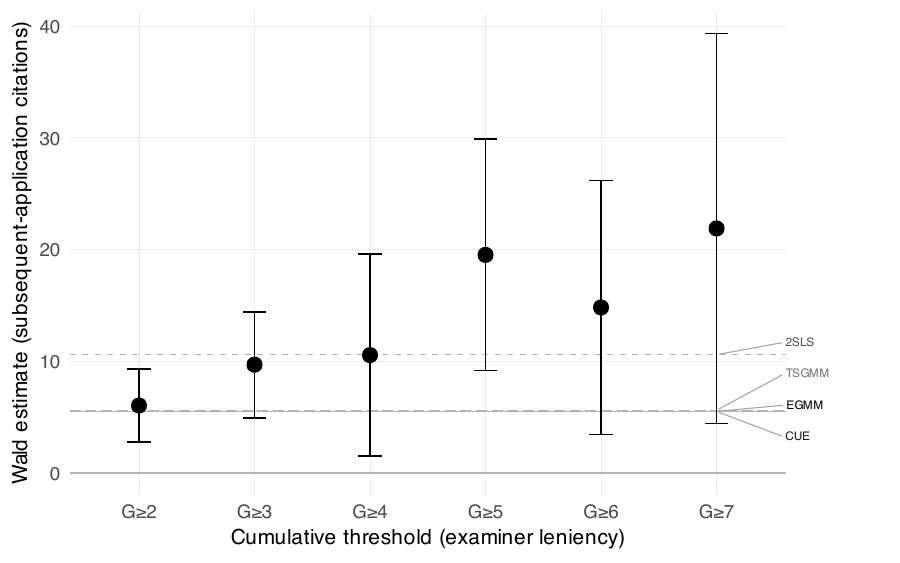}
  \caption{Threshold-specific Wald estimates (five-year citations to subsequent applications) with 95\% confidence intervals. Light horizontal reference lines mark the 2SLS, TSGMM, EGMM, and CUE estimands, labeled on the right margin. EGMM, TSGMM, and CUE lie below every individual Wald estimate; 2SLS lies within the Wald range.}
  \label{fig:patent_wald_estimates}
\end{figure}

The threshold-specific Wald estimates in Figure~\ref{fig:patent_wald_estimates} are generally larger at more lenient thresholds, with one visible nonmonotone step. The $J$-statistic decisively rejects Wald-estimand equality (Table~\ref{tab:patent_estimators}), with robustness checks across choices of $Q$ (Appendix~\ref{app:patent_robust}) supporting this pattern. Standard variance-adaptive weighting, however, profoundly distorts these results. The TSGMM, EGMM, and CUE Wald-share weights (Figure~\ref{fig:patent_weights}) concentrate heavily on the lowest threshold and assign severe negative coefficients to the upper thresholds ($G \geq 5$ and $G \geq 6$). This severe negative weighting drags the estimates below every individual Wald estimate and hence the underlying causal interpretability is not guaranteed. Because the TSGMM and EGMM estimands fall below every threshold-specific Wald estimate, and every RT target is a convex weighting of the Wald estimands, they do not match any RT target: the estimand-matching condition of Proposition~\ref{thm:tangent-matching} fails at every admissible target.

\begin{table}[t]
\centering
\caption{Effect of patent approval on citations to subsequent applications: 2SLS, TSGMM, EGMM, and CUE}
\label{tab:patent_estimators}
\begin{tabular}{lcccc}
\toprule
 & 2SLS & TSGMM & EGMM & CUE \\
\midrule
Estimate & 10.58 & 5.60 & 5.51 & 5.50 \\
 & (2.51) & (1.41) & (1.41) & (1.42) \\
\midrule
$J$-statistic & & 16.29 & 16.36 & 16.36 \\
$p$-value & & 0.0061 & 0.0059 & 0.0059 \\
$N$ & \multicolumn{4}{c}{34,434} \\
Clusters (examiners) & \multicolumn{4}{c}{5,915} \\
\bottomrule
\end{tabular}
\begin{minipage}{0.98\textwidth}
\footnotesize\textit{Notes:} Effect of patent approval on total citations to subsequent applications, counted over five years after each application's public disclosure. $L = 6$ cumulative instruments $Z_\ell = \ind\{G_i \geq \ell+1\}$ from seven examiner leniency groups; art unit $\times$ year fixed effects partialled out via FWL, standard errors clustered at the examiner level. Standard errors are misspecification-robust \citep{HallInoue2003, HansenLee2021}, consistent for the sampling variance around each estimator's probability limit under the Wald-estimand heterogeneity the $J$-statistic detects. 2SLS, TSGMM, EGMM, and CUE are the GMM estimators of Section~\ref{sec:egmm_weights}; the variance-adaptive columns place negative weights on the upper thresholds $G \geq 5$ and $G \geq 6$ (Figure~\ref{fig:patent_weights}), pulling TSGMM, EGMM, and CUE below every threshold-specific Wald estimate. The $J$-statistic is the cluster-robust Wald-estimand-equality diagnostic of Proposition~\ref{prop:jtest}.
\end{minipage}
\end{table}

\begin{figure}[htbp]
  \centering
  \includegraphics[width=0.72\textwidth]{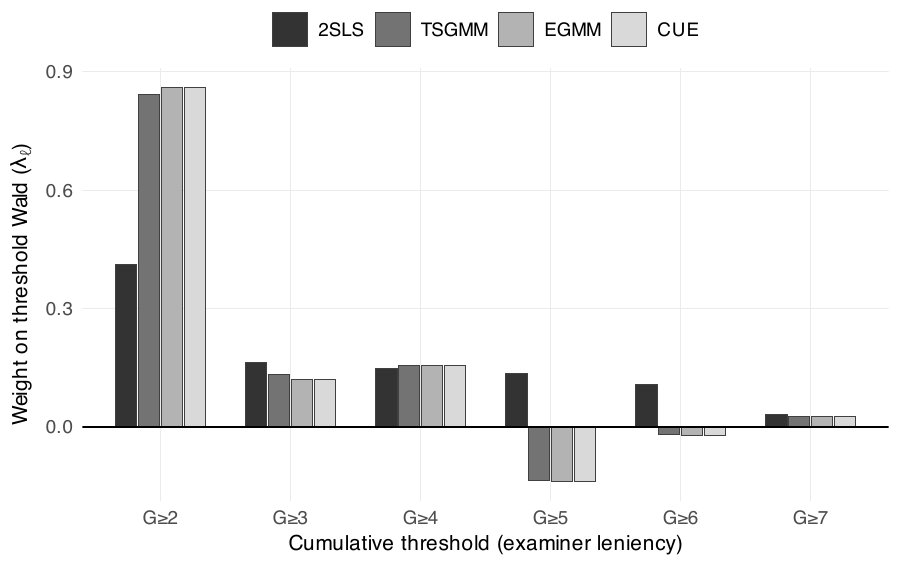}
  \caption{Implicit weights $\lambda_\ell$ on the threshold Wald estimands for 2SLS, TSGMM, EGMM, and CUE (bars by estimator; see legend). The three variance-adaptive maps concentrate weight on $G \geq 2$ (TSGMM $0.84$; EGMM and CUE $0.86$) and place negative weights on $G \geq 5$ and $G \geq 6$.}
  \label{fig:patent_weights}
\end{figure}

\begin{figure}[htbp]
  \centering
  \includegraphics[width=0.98\textwidth]{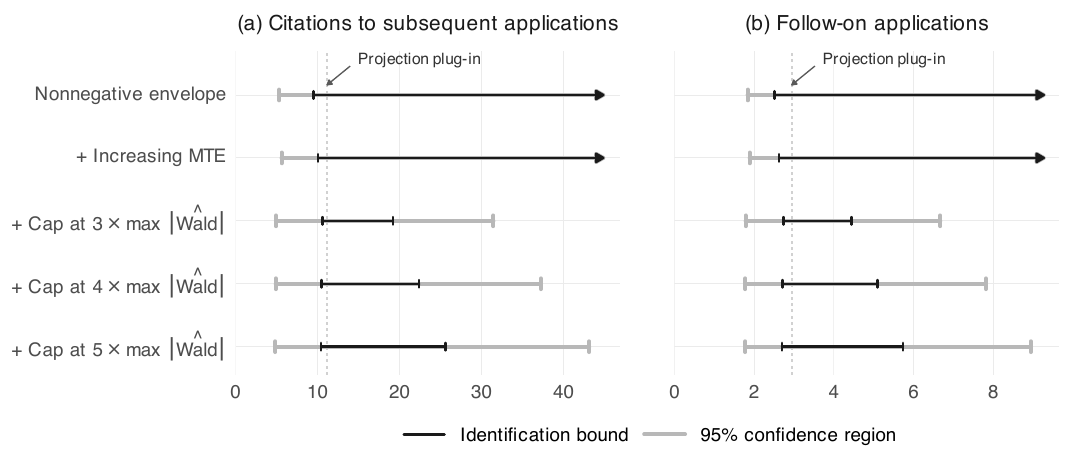}
  \caption{Projection surrogate and PRTE bounds. Black intervals are identification bounds. Gray intervals are 95\% confidence regions by multiplier bootstrap with 1,000 iterations where Appendix~\ref{app:patent_prte_bounds} describes inference. The first row imposes nonnegativity on the out-of-support $G=7\to8$ approval effects ($\underline M_t=0$), the second additionally orders all adjacent approval effects within each art-unit-by-year cell characterized by increasing MTE, and the final three additionally cap the $G=7\to8$ effects at $\overline M_t=\max_\ell c|\widehat{\Wald}_{\ell}|$, $c\in\{3,4,5\}$ (Propositions~\ref{prop:id-gap-distfree} and~\ref{prop:id-gap-mte}). Arrows indicate an unbounded upper endpoint.}
  \label{fig:patent_policy_bounds}
\end{figure}

We evaluate a staircase policy that moves every application one step toward a more lenient examiner group. For applications in groups $G=1,\ldots,6$, the counterfactual approval probability is the estimated approval probability in the next group. Group $G=7$ has no observed successor, and we form a counterfactual group $G=8$ whose approval probability equals that of group $G=7$ plus the average gain across the observed transitions, from $G=1\to2$ through $G=6\to7$.\footnote{Since not all art-unit-by-year cells contain applications at every leniency group, we maintain a common adjacent gain across cells within each technology group.}

To correctly read the plug-in as the surrogate for the policy effect, one additional outcome condition must hold. This is required because the policy counterfactual and its construction from the Wald estimates may average the effects of patent approval across art-unit-by-year cells using different weights: the former weights each cell by the complier mass it moves, whereas the latter weights each art-unit-by-year cell by its first-stage strength. On the within-support margin, we maintain that their average approval effects are equal, and report a supporting diagnostic in Appendix~\ref{app:patent_prte_conditions}. The condition does not restrict the out-of-support margin, to which every observed Wald estimand assigns zero weight. Because the policy assigns positive weight to the out-of-support margin, its target therefore lies outside the Wald span. Under this condition, the contribution to the PRTE of $G=1\to2$ through $G=6\to7$ is identified by the Wald reconstruction, leaving only the $G=7\to8$ contribution unidentified and calling for bounds.\footnote{Under this condition, in the notation of Proposition~\ref{prop:id-gap-distfree}, if $\mathcal U$ comprise only the $G=7\to8$ effects across art-unit-by-year cells, $\delta_0(P)$ is identified, and its defining equality is included in $\mathcal R(P)$ when copmuting the bounds.}

Figure~\ref{fig:patent_policy_bounds} shows the partially identified PRTE. Imposing nonnegativity on the out-of-support $G=7\to8$ approval effect preserves a positive lower bound of PRTE for both outcomes. Additionally, imposing the within-cell ordering of adjacent approval effects characterized by an increasing MTE slightly tightens the lower bound. To obtain a finite upper bound heuristically, we cap the $G=7\to8$ approval effect at a multiple of the largest observed Wald estimate, and the resulting bounds are informative.

\section{Conclusion}

When is GMM actually LATE? Any parameter-free weighting-matrix map delivers a Wald-share combination of the instrument-specific Wald estimands, with shares summing to one; positive regression dependence in turn renders every Wald estimand a convex combination of compliance-type LATEs, so nonnegative Wald shares make the estimand a nonnegatively weighted average of those LATEs. Variance-adaptive weighting offers no such guarantee: the heterogeneity penalty downweights high-dispersion instruments, off-diagonal weighting can drive Wald shares negative, and the estimand can exit the Wald range, where Proposition~\ref{thm:tangent-matching} shows the weighting is regular for no admissible target. CUE never enters the class; its criterion adds a variance-score remainder. Representativeness Targeting estimates each Wald estimand on its own moments and averages the estimates with declared nonnegative weights, attaining the local asymptotic minimax bound for the declared target.

The two applications isolate the two distortion channels. In STAR the Wald
shares stay positive and the penalty operates alone, pulling the EGMM estimate
of the small-class effect substantially below 2SLS; the $J$-test rejects
equality of the school-specific Wald estimands. The RT plug-ins report their
declared targets alongside, with the compliance-share ATE reproducing 2SLS and
the equal-weight ATE lying between the 2SLS and EGMM estimates. In the patent
leniency design the sign condition itself fails: TSGMM and EGMM place negative
Wald shares on two of the six thresholds, and the three variance-adaptive
estimates sit below every threshold-specific Wald estimate; no convex
weighting of the Wald estimands rationalizes them, even though every estimated
composite type weight happens to be positive in this application. RT instead
delivers the projected policy surrogate with identification-gap bounds.

For practice, our results operationalize the recommendation of \citet{AndrewsChen2025} that researchers state what their estimators estimate. The instrument-specific Wald estimates reveal the spread behind any pooled coefficient; the implied Wald shares, with the composite type weights they induce on the estimated compliance types, show whose treatment effects a parameter-free weighting's estimand represents; a $J$-rejection under maintained validity reports unequal Wald estimands and calls for a declared target rather than a discarded design; and RT estimates the declared target with full efficiency, or an unattainable one by projection, with bounds on the gap.

Our framework restricts attention to a binary treatment and binary instruments; multinomial treatments and continuous instruments would permit analogous compliance-type projections at the cost of additional structure \citep{AngristSantosTecchio2025, MogstadSantosTorgovitsky2018}, and the conditions for partially saturated covariate adjustment remain open \citep{BlandholBonneyMogstadTorgovitsky2022}. Within this scope, selecting a weighting matrix selects an estimand, and the selection is the researcher's to make deliberately.

\bibliography{efficient_gmm_late}

\appendix
\setcounter{lemma}{0}
\numberwithin{lemma}{section}
\setcounter{assumption}{0}
\numberwithin{assumption}{section}
\setcounter{definition}{0}
\numberwithin{definition}{section}
\setcounter{remark}{0}
\numberwithin{remark}{section}
\setcounter{equation}{0}
\numberwithin{equation}{section}
\setcounter{figure}{0}
\numberwithin{figure}{section}
\setcounter{table}{0}
\numberwithin{table}{section}

\makeatletter
\renewcommand{\theHsection}{App\thesection}
\renewcommand{\theHfigure}{App\thefigure}
\renewcommand{\theHtable}{App\thetable}
\renewcommand{\theHequation}{App\theequation}
\renewcommand{\theHlemma}{App\thelemma}
\renewcommand{\theHassumption}{App\theassumption}
\renewcommand{\theHdefinition}{App\thedefinition}
\renewcommand{\theHremark}{App\theremark}
\renewcommand{\theHcondition}{App\thecondition}
\makeatother

\begin{center}
{\Huge\textbf{Appendix}}
\end{center}
\section{Additional Assumptions and Regularity Conditions}\label{app:regularity}

\begin{assumption}[Sampling, realized equations, and regularity]\label{ass:sampling}\leavevmode
\begin{enumerate}[label=(\alph*)]
  \item \emph{(i.i.d.\ sampling.)} The observations $O_i=(Y_i,D_i,\bZ_i)$, $i = 1,\ldots,n$, are i.i.d.\ from the observed marginal of $P$ on $\mathbb{R} \times \{0,1\} \times \{0,1\}^L$.
  \item \emph{(Stable unit treatment value and realized treatment.)} $D_i = D_i(\bZ_i)$, as in Assumption~\ref{ass:realized_exclusion}, where $D_i(\cdot) : \{0,1\}^L \to \{0,1\}$ is the compliance type of Section~\ref{sec:framework}.
  \item \emph{(Exclusion restriction and observed outcome.)} $Y_i = D_i Y_i(1) + (1 - D_i) Y_i(0)$, as in Assumption~\ref{ass:realized_exclusion}, with $(Y_i(0), Y_i(1))$ the potential outcomes of Section~\ref{sec:framework}.
  \item \emph{(Finite moments.)} $\E[|Y_i(d)|^{2+\delta}] < \infty$ for $d \in \{0,1\}$ and some $\delta > 0$; by (c), $\E[|Y_i|^{2+\delta}] < \infty$.
  \item \emph{(Nondegenerate denominators.)} $\gamma_\ell(P) = \pi_\ell(P) \, p^Z_\ell(P)(1 - p^Z_\ell(P)) = \Cov_P(D_i, Z_{\ell i}) > 0$ for every $\ell = 1, \ldots, L$ (a consequence of Assumption~\ref{ass:relevance}, not an independent restriction), where $\pi_\ell(P) = \E[D_i \mid Z_{\ell i} = 1] - \E[D_i \mid Z_{\ell i} = 0]$ and $p^Z_\ell(P) = \mathbb{P}(Z_{\ell i} = 1)$. The first-stage covariance vector is $\bgamma(P) = (\gamma_1(P), \ldots, \gamma_L(P))'$.
\end{enumerate}
\end{assumption}

\begin{assumption}[Estimator regularity]\label{ass:gmm-reg}\leavevmode
\begin{enumerate}[label=(\alph*)]
  \item \emph{(Common domain.)} The EGMM fixed-point branch selection is restricted to $\mathcal B$.\\
  Under each fixed law  $P$ considered,
  \item \emph{(Interior points.)} $\beta^I(P)$, $\beta^{\TwoS}(P)$, $\beta^{\EGMM}(P)$, and $\beta^{\CUE}(P)$ lie in $\operatorname{int}(\mathcal B)$.
  \item \emph{(Nondegenerate weighting.)} $\bOmega(\beta;P)$ is positive definite for every $\beta\in\mathcal B$.
  \item \emph{(Unique pseudo-true values.)} The population CUE criterion has a unique minimizer over $\mathcal B$, and there is a unique population EGMM fixed point in $\mathcal B$.
\end{enumerate}
\end{assumption}

The cell parameter $\eta(P) = \{\mu_z, m_Y(z), p(z)\}_{z \in \mathcal{Z}}$ collects $\mu_z = \mathbb{P}(\bZ_i = z)$, $m_Y(z) = \E[Y_i \mid \bZ_i = z]$, $p(z) = \E[D_i \mid \bZ_i = z]$ over $\mathcal{Z} = \supp(\bZ_i)$. Let $\mathcal{T}_+(P_0)=\{t\in\mathcal{T}:\theta_t(P_0)>0\}$ denote the positive-mass compliance types at the baseline. For the projection specifications of Section~\ref{sec:specifications}, $\bA_{\mathrm{spec}}(P)$ denotes the type-restricted compliance-weight matrix with rows indexed by $\ell=1,\ldots,L$, columns indexed by $t\in\mathcal{T}_+(P_0)$, and entries $\alpha_{\ell,t}(P)$. In ordered-threshold specifications the columns are $\{t_0,\ldots,t_K\}$.

\begin{assumption}[Local ordered-threshold support]\label{ass:local-threshold-support}
For a threshold-indexed specification, every threshold interval entering it has positive length at $P_0$, equivalently positive type mass, and the realized propensity endpoints defining those intervals are locally separated with stable ordering, each interior endpoint attained by a unique cell of $\supp(\bZ_i)$. For the PRTE specification this condition applies to the full ordered-threshold partition: $0<p_1(P_0)<\cdots<p_K(P_0)<1$, equivalently $\theta_{t_k}(P_0)>0$ for every $k=0,\ldots,K$, with the same endpoint ordering on a neighborhood of $P_0$.
\end{assumption}

For PRTE, write $\mathcal{R}_{t_k}(P)=(p_k(P),p_{k+1}(P)]$, $k=0,\ldots,K$; the full latent-resistance domain is then $[0,1]=\bigcup_{k=0}^{K}\mathcal{R}_{t_k}(P)$ up to endpoints. Every PRTE display containing $\theta_{t_k}^{-1}$, an interval mean over $\mathcal{R}_{t_k}$, or an active threshold target is interpreted over all $k=0,\ldots,K$.

\begin{assumption}[LATE-model regularity]\label{ass:late-regularity}
For each $P \in \PLATE$, the type-conditional outcome distributions $\{F_{t, d}\}_{t \in \mathcal{T},\, d \in \{0, 1\}}$ admit densities $f_{t, d}$ with respect to a common $\sigma$-finite dominating measure on $\mathbb{R}$. The admissible perturbation directions are (i) an instrument-mass coordinate $\mu_z$ by a bounded signed measure with zero total mass, (ii) a positive-support type probability $\theta_r$ inside the relative simplex on $\mathcal{T}_+(P_0)$ by a bounded direction, and (iii) an outcome density $f_{r, d}$ for $r\in\mathcal{T}_+(P_0)$ by a bounded $L^2_0(F_{r, d})$ score direction. Each one-parameter LATE-compatible submodel $\{P_{s} : s \in (-\epsilon, \epsilon)\} \subset \PLATE$ through $P_0$ along such a direction is differentiable in quadratic mean at $s = 0$: there exists a square-integrable score $\dot\ell \in L^2_0(P_0)$ such that $\int \biggl[\sqrt{dP_{s} / dP_0} - 1 - \tfrac{s}{2}\, \dot\ell\biggr]^2 dP_0 \;=\; o(s^2)$ as  $s \to 0$.
\end{assumption}

The weight map $\omega(\cdot):\PLATE\to\Delta^{L-1}$ is Hadamard differentiable at $P_0$ along LATE-compatible local charts with ambient $L^2_0(P_0)$ extension if there is a continuous linear map $\dot\omega(\cdot;P_0):L^2_0(P_0)\to\mathbb{R}^L$ whose restriction to $\TLATE$ gives the pathwise derivative: $[\omega(P_{t_n,h_n})-\omega(P_0)]/t_n\to\dot\omega(h;P_0)$ whenever $h_n\to h$ in $L^2_0(P_0)$ along a LATE-compatible chart $P_{t_n,h_n}\in\PLATE$ with score $h_n$ and $t_n\downarrow 0$. The Riesz representers $\Riesz_\omega=(\Riesz_{\omega,1},\ldots,\Riesz_{\omega,L})'\in L^2_0(P_0)^L$ satisfy $\dot\omega_\ell(h;P_0)=\langle\Riesz_{\omega,\ell},h\rangle_{L^2(P_0)}$ and $\ind_L'\dot\omega(h;P_0)=0$. 

\begin{assumption}[Local-risk uniform integrability]\label{ass:local-risk-ui}
With $R_{n,h}\equiv n\bigl(\hat\beta^{\mathrm{RT}}_{\bomega}-\beta^{\mathrm{RT}}_{\bomega}(P_{n,h})\bigr)^2$, for every finite-dimensional $I\subset\TLATE$ and every $C<\infty$,
\[
\lim_{M\to\infty}\;
\limsup_{n\to\infty}\;
\sup_{\substack{h\in I\\ \|h\|_{L^2(P_0)}\le C}}
\E_{P_{n,h}}\!\left[
R_{n,h}\ind\{R_{n,h}>M\}
\right]=0.
\]
\end{assumption}

\begin{assumption}[Riesz representer consistency]\label{ass:riesz-consistency}
The estimated weight-map Riesz representers satisfy $\max_{\ell\leq L}\frac1n\sum_{i=1}^n\bigl(\hat\Riesz_{\omega,\ell}(O_i)-\Riesz_{\omega,\ell}(O_i;P_0)\bigr)^2=o_p(1). $
\end{assumption}

\begin{assumption}[Strict complementarity]\label{ass:strict-complementarity}
The projection Gram matrix induced by the specification's metric weights is $\bG(P)\equiv\bA_{\mathrm{spec}}(P)\,\diag(\bkappa(P))\,\bA_{\mathrm{spec}}(P)'$, with $\bG_{I_0I_0}(P)$ its submatrix on the active index set. At $P_0$, the projection solution $(\bomega^\dagger, \bnu^\dagger, \boldsymbol{m}^\dagger)(\bpsi^\star; P_0)$ satisfies $\omega^\dagger_\ell(\bpsi^\star; P_0) > 0$ for every $\ell \in I_0 \equiv  I(\bpsi^\star; P_0)$ and $m^\dagger_\ell(\bpsi^\star; P_0) > 0$ for every $\ell \in I_0^c$. Let $\mathcal{K}_{I_0}=\{v\in\mathbb{R}^{|I_0|}:\ind_{I_0}'v=0\}$. The active-face curvature condition is $v'\bG_{I_0 I_0}(P_0)v>0
$for every nonzero $v\in\mathcal{K}_{I_0}$. Equivalently, the bordered KKT matrix $\begin{psmallmatrix}\bG_{I_0I_0}(P_0)&\ind_{I_0}\\ \ind_{I_0}'&0\end{psmallmatrix}$ is nonsingular.
\end{assumption}

\begin{assumption}[Locally identified projection primitives]\label{ass:projection-dictionary}
For the projection specification, the positive-support type set is locally constant, $\mathcal{T}_+(P)=\mathcal{T}_+(P_0)$ on a neighborhood of $P_0$, and finite. For every $t\in\mathcal{T}_+(P_0)$ and every $\ell$, the instrument-type weight $\alpha_{\ell,t}(P)$, target weight $\psi^\star_t(P)$, and metric weight $\kappa_t(P)$ are point-identified observed-data functionals on a neighborhood of $P_0$. All denominators in these primitive formulas are bounded away from zero, and the three maps are Hadamard differentiable at $P_0$ with Riesz representers $\xi_{\ell,t}$, $\xi^\star_t$, and $\zeta_t$ in $L^2_0(P_0)$. The corresponding empirical plug-ins based on $\hat P_n$ obey the associated von Mises expansions with the same Riesz representers, uniformly over $\mathcal{T}_+(P_0)$.
\end{assumption}

\begin{assumption}[Projection tangent compatibility]\label{ass:projection-tangent}
For a LATE-efficiency claim about a projected target, the scalar representer obtained by inserting the active-face KKT derivative of $P\mapsto\bomega^\dagger(\bpsi^\star;P)$ into~\eqref{eq:eif-general} belongs to $\TLATEbar$.
\end{assumption}

\begin{assumption}[PRTE rank condition]\label{ass:prte-rank}
The PRTE Gram matrix under the inverse-type-mass metric,$
  \bG^{\mathrm{PRTE}}(P)
  \;=\;
  \bA_{\mathrm{spec}}(P)\,
  \diag\{\theta_{t_k}(P)^{-1}: k=0,\ldots,K\}\,
  \bA_{\mathrm{spec}}(P)',
$
satisfies
$
  v'\bG^{\mathrm{PRTE}}(P)v > 0 \quad\text{for every }v\neq0\text{ with }\ind_L'v=0.
$

\end{assumption}

\begin{assumption}[PRTE policy-weight regularity]\label{ass:prte-weight-regularity}
For the discrete-instrument policy $(F^Z_0\to F^Z_1)$, the threshold-interval mass $\psi^{\star,\mathrm{PRTE}}_{t_k}(P)$ of~\eqref{eq:prte_integral} has the closed ratio form $
\psi^{\star,\mathrm{PRTE}}_{t_k}(P)=\frac{\theta_{t_k}(P)\,\Delta^{\mathrm{PRTE}}_k(P)}{\bar\Delta(P)}, \bar\Delta(P)=\E_{F^Z_1}[D_i]-\E_{F^Z_0}[D_i]>0,
$
in which the treatment-shift numerator $\Delta^{\mathrm{PRTE}}_k(P)$ and the normalizer $\bar\Delta(P)$ are Hadamard-differentiable functionals of the realized cell propensities and cell masses $\{p(z)(P),\mu_z(P)\}_{z\in\mathcal{Z}}$ at $P_0$; the policy laws $F^Z_0,F^Z_1$ are fixed or depend on $P$ only through the baseline instrument distribution.
\end{assumption}

\clearpage

\begin{center}
\textbf{Online Appendix to ``When Is GMM Actually LATE? Weighting Matrices and Causal Interpretation in Overidentified IV''}\\
Chun Pang Chow and Hiroyuki Kasahara
\end{center}
\appendix
\numberwithin{lemma}{section}
\numberwithin{assumption}{section}
\numberwithin{condition}{section}
\numberwithin{definition}{section}
\numberwithin{remark}{section}
\numberwithin{equation}{section}
\numberwithin{figure}{section}
\numberwithin{table}{section}
\setcounter{section}{0}
\renewcommand{\thesection}{S.\arabic{section}}
\setcounter{lemma}{0}
\setcounter{assumption}{0}
\setcounter{condition}{0}
\setcounter{definition}{0}
\setcounter{remark}{0}
\setcounter{equation}{0}
\setcounter{figure}{0}
\setcounter{table}{0}

\makeatletter
\renewcommand{\theHsection}{Supp\thesection}
\renewcommand{\theHfigure}{Supp\thefigure}
\renewcommand{\theHtable}{Supp\thetable}
\renewcommand{\theHequation}{Supp\theequation}
\renewcommand{\theHlemma}{Supp\thelemma}
\renewcommand{\theHassumption}{Supp\theassumption}
\renewcommand{\theHdefinition}{Supp\thedefinition}
\renewcommand{\theHremark}{Supp\theremark}
\renewcommand{\theHcondition}{Supp\thecondition}
\makeatother

\section{Proofs for main manuscript}\label{app:proofs}

\subsection{Proof of Proposition~\ref{prop:wald_decomp}}

The reduced-form coefficient is $\rho_\ell = \E[Y_i \mid Z_\ell = 1] - \E[Y_i \mid Z_\ell = 0]$. Write $Y_i = Y_i(0) + (Y_i(1) - Y_i(0)) D_i(\bZ_i)$. Taking conditional expectations:
$\E[Y_i \mid Z_\ell = z_\ell] = \E[Y_i(0)] + \E\bigl[(Y_i(1) - Y_i(0))\, D_i(z_\ell, \bZ_{-\ell}) \mid Z_\ell = z_\ell\bigr].$
Applying joint independence (Assumption~\ref{ass:joint_indep}) cell by cell:
$\E\bigl[(Y_i(1) - Y_i(0))\, D_i(z_\ell, z_{-\ell}) \mid \bZ_i = (z_\ell, z_{-\ell})\bigr] = \E\bigl[(Y_i(1) - Y_i(0))\, D_i(z_\ell, z_{-\ell})\bigr].$
Summing over $z_{-\ell}$ with weights $\mathbb{P}(\bZ_{-\ell} = z_{-\ell} \mid Z_\ell = z_\ell)$ (the conditional probabilities $q_\ell$ and $q^0_\ell$ from the proposition statement) and conditioning on compliance type $D_i(\cdot) = t$:
$\E\bigl[(Y_i(1) - Y_i(0))\, D_i(z_\ell, z_{-\ell})\bigr] = \sum_{t \in \mathcal{T}_+(P)} \LATE_t \cdot \theta_t \cdot t(z_\ell, z_{-\ell}).$
Substituting and taking the difference across $z_\ell \in \{0,1\}$: $\rho_\ell = \sum_{t \in \mathcal{T}_+(P)} \LATE_t \cdot \theta_t \cdot \varphi_{\ell,t},$
where $\varphi_{\ell,t} = \sum_{z_{-\ell}} [t(1, z_{-\ell}) q_\ell(z_{-\ell}) - t(0, z_{-\ell}) q^0_\ell(z_{-\ell})]$. The same derivation without $(Y_i(1) - Y_i(0))$ gives $\pi_\ell = \sum_{t\in\mathcal{T}_+(P)} \theta_t \cdot \varphi_{\ell,t}$. Dividing: $\Wald_\ell = \rho_\ell/\pi_\ell = \sum_{t\in\mathcal{T}_+(P)} \LATE_t \cdot \alpha_{\ell,t}$, with $\alpha_{\ell,t} = \theta_t \varphi_{\ell,t}/\pi_\ell$ and $\sum_{t\in\mathcal{T}_+(P)} \alpha_{\ell,t} = 1$.  

\medskip

\subsection{Proof of Lemma~\ref{lem:di}}

Adding and subtracting $\sum_{z_{-\ell}} t(0, z_{-\ell})\, q_\ell(z_{-\ell})$ in the definition $\varphi_{\ell,t} = \sum_{z_{-\ell}} [t(1, z_{-\ell})\, q_\ell(z_{-\ell}) - t(0, z_{-\ell})\, q^0_\ell(z_{-\ell})]$ splits it as $\varphi_{\ell,t} = \varphi_{\ell,t}^D(P) + \varphi_{\ell,t}^I(P)$: the \emph{direct} component $\varphi_{\ell,t}^D(P) = \sum_{z_{-\ell}} [t(1, z_{-\ell}) - t(0, z_{-\ell})]\, q_\ell(z_{-\ell})$ switches $Z_\ell$ from $0$ to $1$ at the fixed conditional law $q_\ell$, and the \emph{indirect} component $\varphi_{\ell,t}^I(P) = \sum_{z_{-\ell}} t(0, z_{-\ell})\, [q_\ell(z_{-\ell}) - q^0_\ell(z_{-\ell})]$ moves that law from $q^0_\ell$ to $q_\ell$ at the fixed treatment $t(0, \cdot)$. Monotonicity (Assumption~\ref{ass:monotonicity}) gives $t(1, z_{-\ell}) \geq t(0, z_{-\ell})$ and $q_\ell(z_{-\ell}) \geq 0$, hence $\varphi_{\ell,t}^D(P) \geq 0$.

\subsection{Proof of Proposition~\ref{prop:prd}}

By Lemma~\ref{lem:di}, it suffices to show $\varphi_{\ell,t}^I(P) \geq 0$. The function $z_{-\ell} \mapsto t(0, z_{-\ell})$ is nondecreasing: monotonicity for all instruments (Assumption~\ref{ass:monotonicity}) requires $D_i(z)$ nondecreasing in each coordinate; hence, fixing $z_\ell = 0$, $t(0, z_{-\ell})$ is nondecreasing in each component of $z_{-\ell}$. Since $t(0, \cdot)$ is nondecreasing and $\bZ_{-\ell}$ is PRD on $Z_\ell$ (Assumption~\ref{ass:prd}), $\varphi_{\ell,t}^I(P) = \E[t(0, \bZ_{-\ell}) \mid Z_\ell = 1] - \E[t(0, \bZ_{-\ell}) \mid Z_\ell = 0] \geq 0$. Combined with $\varphi_{\ell,t}^D(P) \geq 0$, $\theta_t \geq 0$, and $\pi_\ell(P) > 0$ from the hypothesis of the proposition, we obtain $\alpha_{\ell,t} \geq 0$.

\subsection{Proof of Proposition~\ref{prop:weighted_iv}}

\emph{Part (a).} At any $P$ with $\bW(P)\succ0$ and $\gamma_\ell(P)\neq0$ for every $\ell$, the population criterion $\beta\mapsto\bg_\infty(\beta;P)'\bW(P)\bg_\infty(\beta;P)$, with $\bg_\infty(\beta;P)=\bg_\infty(0;P)-\beta\bgamma(P)$, is a strictly convex quadratic in $\beta$ minimized at $\beta^{\mathrm{GMM}}_{\bW}(P)=\tfrac{\bgamma(P)'\bW(P)\bg_\infty(0;P)}{\bgamma(P)'\bW(P)\bgamma(P)}$. Substituting $g_{\infty,\ell}(0;P)=\gamma_\ell(P)\Wald_\ell(P)$ yields the affine form~\eqref{eq:weighted_iv} with $\lambda_\ell(\bW(P);P)=\tfrac{\gamma_\ell(P)[\bW(P)\bgamma(P)]_\ell}{\bgamma(P)'\bW(P)\bgamma(P)}$; since $\sum_\ell\gamma_\ell(P)[\bW(P)\bgamma(P)]_\ell=\bgamma(P)'\bW(P)\bgamma(P)$, the weights sum to one.

\emph{Part (b).} Under Assumption~\ref{ass:sampling}(a),(d),(e) the law of large numbers gives $\hat\bgamma\to_p\bgamma(P_0)$ and $\widehat{\Cov}(Y_i,Z_{\ell i})\to_p \Cov_{P_0}(Y_i,Z_{\ell i})$; for the moment vector underlying $\bW(\cdot)$ (Definition~\ref{def:gmm-weight-map}), $m(\hat P_n)\to_p m(P_0)$. Assumption~\ref{ass:relevance} gives $\gamma_\ell(P_0)\neq0$, hence $\hat\gamma_\ell\neq0$ for every $\ell$ with probability tending to one. By continuity of $\tilde\bW$ at $m(P_0)$ and of $\lambda_{\min}(\cdot)$, $\bW(\hat P_n)=\tilde\bW(m(\hat P_n))\to_p\bW(P_0)$ and $\lambda_{\min}(\bW(\hat P_n))\to_p\lambda_{\min}(\bW(P_0))>0$; hence $\bW(\hat P_n)\succ0$ and $\hat\bgamma'\bW(\hat P_n)\hat\bgamma>0$ with probability tending to one. On that event the first-order condition for $Q_n(\beta)=\bg_n(\beta)'\bW(\hat P_n)\bg_n(\beta)$ together with $g_{n,\ell}(0)=\hat\gamma_\ell\widehat\Wald_\ell$ reproduces the part-(a) algebra:
$
\hat\beta^{\mathrm{GMM}}_{\bW}=\frac{\hat\bgamma'\bW(\hat P_n)\bg_n(0)}{\hat\bgamma'\bW(\hat P_n)\hat\bgamma}=\sum_{\ell=1}^L\lambda_\ell(\bW(\hat P_n);\hat P_n)\widehat\Wald_\ell$, $\sum_{\ell=1}^L\lambda_\ell(\bW(\hat P_n);\hat P_n)=1.
$\
The map $(\bgamma,\bW)\mapsto\gamma_\ell[\bW\bgamma]_\ell/(\bgamma'\bW\bgamma)$ is continuous on $\{\bW\succ0,\ \bgamma\neq0\}$, and $\widehat\Wald_\ell=\widehat{\Cov}(Y_i,Z_{\ell i})/\hat\gamma_\ell\to_p\Wald_\ell(P_0)$ at the nonzero limiting denominator, and the continuous mapping theorem gives $\lambda_\ell(\bW(\hat P_n);\hat P_n)\to_p\lambda_\ell(\bW(P_0);P_0)$ and $\hat\beta^{\mathrm{GMM}}_{\bW}\to_p\sum_\ell\lambda_\ell(\bW(P_0);P_0)\Wald_\ell(P_0)=\beta^{\mathrm{GMM}}_{\bW}(P_0)$.

\subsection{Proof of Proposition~\ref{prop:jtest}}

Throughout the proof, under each fixed law considered, after empirical demeaning each entry of $\widehat\bOmega(\beta)$ is a quadratic polynomial in $\beta$ whose coefficients obey the law of large numbers, and $\bOmega(\beta;P)$ is the corresponding population quadratic; hence $\sup_{\beta\in\mathcal B}\|\widehat\bOmega(\beta)-\bOmega(\beta;P)\|=o_p(1)$, and compactness of $\mathcal B$ with Assumption~\ref{ass:gmm-reg}(c) yields constants $0<c<C<\infty$ such that $c\leq\inf_{\beta\in\mathcal B}\lambda_{\min}\{\widehat\bOmega(\beta)\}\leq\sup_{\beta\in\mathcal B}\lambda_{\max}\{\widehat\bOmega(\beta)\}\leq C$ with probability tending to one (the eigenvalue event). The law of large numbers, Assumption~\ref{ass:sampling}(e), and Assumption~\ref{ass:gmm-reg}(b) give $\hat\beta^I\xrightarrow{p}\beta^I(P)\in\operatorname{int}(\mathcal B)$; hence $\hat\beta^I\in\mathcal B$ with probability tending to one.

\emph{Part (a): Null limit.} Under $H_0$, put $\bG_n\equiv\sqrt n\,\bg_n(\beta_0)$, $\bOmega_0\equiv\bOmega(\beta_0;P_0)$, and $\bW_*\equiv\bOmega_0^{-1}$; the central limit theorem for the demeaned sample covariance moments gives $\bG_n\Rightarrow N(\mathbf 0,\bOmega_0)$, and $\hat\bgamma\xrightarrow{p}\bgamma(P_0)\equiv\bgamma$. On the eigenvalue event, the defining CUE inequality gives
$
\frac{1}{C}\|\bg_n(\hat\beta^{\CUE})\|^2
\leq \widehat Q(\hat\beta^{\CUE})
\leq \widehat Q(\beta_0)
\leq \frac{1}{c}\|\bg_n(\beta_0)\|^2,
$
hence $\bg_n(\hat\beta^{\CUE})=O_p(n^{-1/2})$; the exact affine identity $\bg_n(\hat\beta^{\CUE})=\bg_n(\beta_0)-\hat\bgamma(\hat\beta^{\CUE}-\beta_0)$ and $\|\hat\bgamma\|\xrightarrow{p}\|\bgamma\|>0$ then give $\hat\beta^{\CUE}-\beta_0=O_p(n^{-1/2})$ without invoking the CUE first-order condition. For every fixed $H<\infty$, the affine identity and the uniform convergence of the inverse weighting matrix give $\sup_{|u|\leq H}\bigl|n\widehat Q(\beta_0+u/\sqrt n)-(\bG_n-\bgamma u)'\bW_*(\bG_n-\bgamma u)\bigr|=o_p(1)$; since $\sqrt n(\hat\beta^{\CUE}-\beta_0)=O_p(1)$ and the limiting quadratic has curvature $\bgamma'\bW_*\bgamma>0$, the argmin comparison yields $\sqrt n(\hat\beta^{\CUE}-\beta_0)=(\bgamma'\bW_*\bgamma)^{-1}\bgamma'\bW_*\bG_n+o_p(1)$. For TSGMM and EGMM, with $\bW_n$ the positive-definite matrix in the corresponding plug-in normal equation (its evaluation point lies in $\mathcal B$: $\hat\beta^I$ by the preliminary membership, $\hat\beta^{\EGMM}$ by the search restriction), the affine identity gives, on the same event and before any weighting-matrix limit is invoked, $|\hat\beta-\beta_0|=|\hat\bgamma'\bW_n\bg_n(\beta_0)|/(\hat\bgamma'\bW_n\hat\bgamma)\leq(C/c)\,\|\bg_n(\beta_0)\|/\|\hat\bgamma\|=O_p(n^{-1/2})$; uniform convergence of $\widehat\bOmega(\beta)$ and continuity of $\bOmega(\cdot;P_0)$ then send both weights to $\bW_*$, and substitution in the exact normal equations gives all three estimators the CUE expansion. Consequently, for $\mathrm{est}\in\{\TwoS,\EGMM,\CUE\}$,
$
\sqrt n\,\bg_n(\hat\beta^{\mathrm{est}})
=
\left\{\mathbf I_L-\bgamma(\bgamma'\bW_*\bgamma)^{-1}\bgamma'\bW_*\right\}\bG_n+o_p(1),
$
hence $J_n^{\mathrm{est}}\Rightarrow Y'\bM_{\bOmega}Y$ with $Y\sim N(\mathbf 0,\bOmega_0)$; since $\bM_{\bOmega}\bOmega_0\bM_{\bOmega}=\bM_{\bOmega}$ and $\operatorname{rank}(\bM_{\bOmega})=L-1$, this quadratic form is $\chi^2_{L-1}$ for all three estimators.

\emph{Part (b): Local-alternative limit.} The hypothesized limit gives $\bG_n=O_{P_{n,h}}(1)$, and contiguity transfers every part-(a) $o_p(1)$ remainder together with the preliminary events (the uniform weighting-matrix convergence, the eigenvalue event, and $\hat\beta^I\in\mathcal B$); the part-(a) criterion inequality and normal-equation bound then give $\hat\beta^{\mathrm{est}}-\beta_0=O_{P_{n,h}}(n^{-1/2})$ for $\mathrm{est}\in\{\TwoS,\EGMM,\CUE\}$, uniform convergence and continuity deliver the common limiting weight $\bW_*$, and the part-(a) expansion holds with $o_{P_{n,h}}(1)$ remainders. Substitution into the $J$-statistics yields $Y'\bM_{\bOmega}Y$ with $Y\sim N\{\dot{\bd},\bOmega_0\}$, hence $\chi^2_{L-1}(\delta^2)$ with $\delta^2=\dot{\bd}'\bM_{\bOmega}\dot{\bd}$. For the Wald-movement form, the exact identity $\bigl[\E_P\bg(\beta_0;P)\bigr]_\ell=\Cov_P(Y-\beta_0D,\,Z_\ell)=\gamma_\ell(P)\{\Wald_\ell(P)-\beta_0\}$ holds at every $P\in\PLATE$; product contiguity keeps $n$ times the squared Hellinger distance between the observed marginals of $P_{n,h}$ and $P_0$ bounded (else the product affinities $(1-h^2)^n$ vanish and the products separate entirely), those marginals therefore converge in total variation, and the bounded moments $\gamma_\ell(P_{n,h})\to\gamma_\ell(P_0)$; hence $\Wald_\ell(P_{n,h})=\beta_0+n^{-1/2}\alpha_\ell+o(n^{-1/2})$ gives $\sqrt n\,\E_{P_{n,h}}[\bg(\beta_0;P_{n,h})]\to\diag\{\gamma_\ell(P_0)\}\boldsymbol\alpha=\dot{\bd}$. Finally, $\bM_{\bOmega}\bgamma=\mathbf 0$ by direct substitution, and with $\bM_{\bOmega}\bOmega_0\bM_{\bOmega}=\bM_{\bOmega}$ and the overidentifying rank $L-1$ of the preamble, $\bM_{\bOmega}$ is positive semidefinite with kernel $\mathrm{span}(\bgamma)$; since every $\gamma_\ell(P_0)>0$, $\delta^2=0$ if and only if all $\alpha_\ell$ coincide, and $\delta^2>0$ otherwise.

\emph{Part (c): Fixed-alternative divergence.} Under $H_1:\Wald_\ell\neq\Wald_k$ for some $\ell,k$, no scalar $\beta$ zeros all $L$ population moments. The affine law of large numbers gives uniform convergence of $\bg_n(\beta)$ on $\mathcal B$; with the preliminary convergence of $\widehat\bOmega$ and Assumption~\ref{ass:gmm-reg}(c), $\widehat Q$ converges uniformly to $Q(\beta;P)\equiv\E[\bg(\beta;P)]'\bOmega(\beta;P)^{-1}\E[\bg(\beta;P)]$; compactness and uniqueness give CUE consistency by the consistency theorem for extremum estimators, the identity and TSGMM updates are continuous at their interior population solutions, and the EGMM sample update converges uniformly on $\mathcal B$ to the continuous population map, whose unique fixed point attracts any selected sample fixed point in $\mathcal B$ because the fixed-point residual is bounded away from zero off every neighborhood of it. All three estimators and their weight matrices thus converge to their pseudo-true values, and $n^{-1}J_n^{\mathrm{est}}\xrightarrow{p}c^{\mathrm{est}}_*$ with the constants of Proposition~\ref{prop:jtest}(c). Compactness, continuity, and the absence of a common moment root give $c^{\CUE}_*=\min_{\beta\in\mathcal B}Q(\beta;P)>0$; the same absence gives $\E[\bg(\beta^{\TwoS};P)]\neq\mathbf 0$, and positive definiteness of $\bOmega(\beta^I;P)$ (Assumptions~\ref{ass:gmm-reg}(b),(c)) gives $c^{\TwoS}_*>0$; and $\beta^{\EGMM}\in\mathcal B$ (Assumption~\ref{ass:gmm-reg}(b)) gives $c^{\EGMM}_*=Q(\beta^{\EGMM};P)\ge\min_{\beta\in\mathcal B}Q(\beta;P)=c^{\CUE}_*$.

\subsection{Proof of Corollary~\ref{cor:2sls_diag}}

Under Assumption~\ref{ass:diag_instruments}, $\bSigmaZ = \mathrm{diag}(p^Z_\ell(1-p^Z_\ell))$. Substituting into the 2SLS weight formula of Section~\ref{sec:2sls} gives $[\bSigmaZ^{-1}\bgamma]_\ell = \gamma_\ell / [p^Z_\ell(1-p^Z_\ell)] = \pi_\ell$, and $\lambda^{2\mathrm{SLS}}_\ell = \gamma_\ell \cdot \pi_\ell / \sum_k \gamma_k \pi_k = \pi_\ell^2 p^Z_\ell(1-p^Z_\ell) / \sum_k \pi_k^2 p^Z_k(1-p^Z_k) \geq 0$.

\subsection{Proof of Corollary~\ref{cor:egmm_diag}}\label{app:proof_cor_egmm_diag}

With $\bOmega(\beta^{\EGMM};P)$ diagonal (Lemma~\ref{lem:omega-diagonal}), Proposition~\ref{prop:weighted_iv} applied to $\bW^{\EGMM}(P) = \bOmega(\beta^{\EGMM};P)^{-1}$ yields $\lambda^{\EGMM}_\ell = \gamma_\ell^2/[\bOmega]_{\ell\ell} \cdot \bigl(\sum_k \gamma_k^2/[\bOmega]_{kk}\bigr)^{-1} = \pi_\ell^2 p^Z_\ell(1-p^Z_\ell)/\sigma^2_{\epsilon,\ell} \cdot \bigl(\sum_k \pi_k^2 p^Z_k(1-p^Z_k)/\sigma^2_{\epsilon,k}\bigr)^{-1}$.

\subsection{Proof of Corollary~\ref{cor:2step_diag}}\label{app:proof_cor_2step_diag}

The identity-weight FOC $\sum_\ell \gamma_\ell\,g_\ell(\beta;P) = 0$ at $\beta = \beta^I$ together with $g_\ell(\beta;P) = \gamma_\ell\bigl(\Wald_\ell - \beta\bigr)$ gives $\beta^I = \bgamma'\E[\bg(0;P)]/\|\bgamma\|^2 = \sum_\ell \gamma_\ell^2 \Wald_\ell / \|\bgamma\|^2$ and $\lambda^I_\ell = \gamma_\ell^2/\|\bgamma\|^2 = \pi_\ell^2 s_\ell^2 / \sum_k \pi_k^2 s_k^2$ under $\gamma_\ell = \pi_\ell s_\ell$. Applying Proposition~\ref{prop:weighted_iv}(a) to $\bW^{\TwoS}(P) = \bOmega(\beta^I(P);P)^{-1}$ in the diagonal case (Lemma~\ref{lem:omega-diagonal}) delivers~\eqref{eq:2step_diag_weights}; the closed-form denominator follows from $\bgamma'\bW^{\TwoS}\bgamma = \sum_k \pi_k^2 s_k/\sigma^2_{\epsilon,k}(\beta^I)$.

\subsection{Proof of Corollary~\ref{cor:cue_diag}}\label{app:proof_cor_cue_diag}

Under diagonality at $\beta^{\CUE}$ (Lemma~\ref{lem:omega-diagonal}) with $[\bOmega(\beta;P)]_{\ell\ell} = s_\ell \sigma^2_{\epsilon,\ell}(\beta)$, $\gamma_\ell = \pi_\ell s_\ell$, and $g_\ell(\beta;P) = \gamma_\ell\bigl(\Wald_\ell - \beta\bigr)$ (the last identity follows from $g(\beta;P) = \E[\bg(0;P)] - \beta\bgamma$ and $[\E[\bg(0;P)]]_\ell = \Cov(Y_i,Z_{\ell i}) = \gamma_\ell\Wald_\ell$),
$
  \bigl(\bOmega(\beta;P)^{-1}g(\beta;P)\bigr)_\ell \;=\; \frac{g_\ell(\beta;P)}{s_\ell\,\sigma^2_{\epsilon,\ell}(\beta)} \;=\; \frac{\pi_\ell(\Wald_\ell - \beta)}{\sigma^2_{\epsilon,\ell}(\beta)}.
$
Since $[\bOmega_\beta(\beta;P)]_{\ell\ell} = s_\ell\,{\partial \sigma^2_{\epsilon,\ell}(\beta)/\partial\beta}$ and all off-diagonal entries vanish, the variance-score correction defined in~\eqref{eq:cue_foc} is
$
  R(\beta;P) \;=\; \sum_{\ell=1}^L \bigl(\bOmega^{-1}g\bigr)_\ell^2\,[\bOmega_\beta]_{\ell\ell} \;=\; \sum_{\ell=1}^L \frac{\pi_\ell^2\,s_\ell\,(\Wald_\ell - \beta)^2\,{\partial \sigma^2_{\epsilon,\ell}(\beta)/\partial\beta}}{\sigma^2_{\epsilon,\ell}(\beta)^2}.
$
Dividing by $2\,\bgamma'\bOmega(\beta;P)^{-1}\bgamma = 2\sum_k \pi_k^2 s_k/\sigma^2_{\epsilon,k}(\beta)$ and using $(\partial \sigma^2_{\epsilon,\ell}(\beta)/\partial\beta)/\sigma^2_{\epsilon,\ell}(\beta) = \partial_\beta\log \sigma^2_{\epsilon,\ell}(\beta)$ yields~\eqref{eq:cue_diag_remainder}, with prefactor $\lambda^{\CUE}_\ell$ at $\beta = \beta^{\CUE}$. The Wald-share formula~\eqref{eq:cue_diag_weights} is Proposition~\ref{prop:weighted_iv}(a) applied to $\bW^{\CUE}(P) = \bOmega(\beta^{\CUE}(P);P)^{-1}$ in the diagonal case, with $\bgamma'\bW^{\CUE}\bgamma = \sum_k \pi_k^2 s_k/\sigma^2_{\epsilon,k}(\beta^{\CUE})$.  

\subsection{Proof of Lemma~\ref{lem:type_interval}}

Under Assumption~\ref{ass:latent_index}, $D_i = \ind\{p(\bZ_i) \geq U_i\}$ with the latent resistance normalized to $U_i \sim \mathrm{Unif}(0,1)$ independent of $\bZ_i$; hence $D_i(z) = \ind\{p(z) \geq U_i\}$. Let $p_1 < \cdots < p_K$ be the distinct ordered values of $\{p(z) : z \in \supp(\bZ)\}$, and set $p_0 = 0$ and $p_{K+1} = 1$, allowing $p_0 = p_1$ (no always-takers) or $p_K = p_{K+1}$ (no never-takers), in which case the corresponding boundary interval is empty and its type carries zero mass. For $U_i \in \mathcal{R}_{t_k} = (p_k, p_{k+1}]$, the response is $D_i(z) = \ind\{p(z) \geq p_{k+1}\}$, identical for every $z \in \supp(\bZ)$ and unchanged across resistances within the interval. The realized-support treatment-response patterns thus reduce to threshold patterns, one for each \(\mathcal{R}_{t_k}\), giving the stated bijection $t_k \leftrightarrow \mathcal{R}_{t_k}$ with $\theta_{t_k} = \mathbb{P}(U_i \in \mathcal{R}_{t_k}) = p_{k+1} - p_k$. The intervals are non-overlapping and exhaust $(0,1]$ by construction. Resistances $U_i \leq p_1$ are overcome at every $z$, and $U_i > p_K$ are overcome at no $z$.

\subsection{Proof of Proposition~\ref{prop:mte_representation}}

By Assumption~\ref{ass:latent_index}, $Y_i = Y_i(0) + (Y_i(1) - Y_i(0))\ind\{p(\bZ_i) \geq U_i\}$, with $\E[Y_i \mid \bZ_i = z] = \E[Y_i(0)] + \int_0^{p(z)} \MTE(u)\,du$. Taking conditional expectations given $Z_\ell = z_\ell$, differencing, and applying Fubini's theorem (licensed by the potential-outcome moment bound of Assumption~\ref{ass:sampling}(d), which makes $\E[Y_i(0)]$ and the MTE integral finite): $\rho_\ell = \int_0^1 \MTE(u)[\mathbb{P}(p(\bZ) \geq u \mid Z_\ell = 1) - \mathbb{P}(p(\bZ) \geq u \mid Z_\ell = 0)]\,du.$ Dividing by $\pi_\ell$ gives $\Wald_\ell = \int_0^1 \MTE(u)\,\alpha^{\MTE}_\ell(u)\,du$ with $\alpha^{\MTE}_\ell$ as in \eqref{eq:wald_mte}. By Proposition~\ref{prop:weighted_iv}(a) at the relevant weighting matrix map, $\beta^{\mathrm{GMM}}_{\bW(\cdot)}(P) = \sum_\ell \lambda_\ell(\bW(P);P)\, \Wald_\ell(P) = \int_0^1 \MTE(u)\,\psi^{\MTE}(u;\blambda(\bW(P);P))\,du$.

\subsection{Proof of Proposition~\ref{prop:prd_mte}}

Throughout, $p(z) \equiv \E[D_i(z)]$ denotes the structural propensity, well-defined for every $z \in \{0,1\}^L$ by Assumption~\ref{ass:joint_indep} and equal to $\mathbb{P}(D_i = 1 \mid \bZ_i = z)$ on $\supp(\bZ)$; the denominators are positive by Assumption~\ref{ass:relevance}. Decompose the numerator of $\alpha^{\MTE}_\ell(u)$ by adding and subtracting $\E[\ind\{p(0, Z_{-\ell}) \geq u\} \mid Z_\ell = 1]$ into a direct effect $\Delta^D_\ell(u) = \E[\ind\{p(1, Z_{-\ell}) \geq u\} - \ind\{p(0, Z_{-\ell}) \geq u\} \mid Z_\ell = 1]$ and an indirect effect $\Delta^I_\ell(u) = \E[\ind\{p(0, Z_{-\ell}) \geq u\} \mid Z_\ell = 1] - \E[\ind\{p(0, Z_{-\ell}) \geq u\} \mid Z_\ell = 0]$. Here $\Delta^D_\ell(u) \geq 0$ by monotonicity. For $\Delta^I_\ell(u)$: monotonicity for all instruments ensures $p(0, z_{-\ell})$ is nondecreasing in $z_{-\ell}$; hence $f_u(z_{-\ell}) = \ind\{p(0, z_{-\ell}) \geq u\}$ is nondecreasing. PRD gives $\E[f_u(Z_{-\ell}) \mid Z_\ell = 1] \geq \E[f_u(Z_{-\ell}) \mid Z_\ell = 0]$, i.e., $\Delta^I_\ell(u) \geq 0$. Since the denominator $\pi_\ell > 0$, $\alpha^{\MTE}_\ell(u) \geq 0$.

\subsection{Proof of Proposition~\ref{prop:rt_types}}

Substitute $\Wald_\ell = \sum_{t\in\mathcal{T}_+(P)} \alpha_{\ell,t} \, \LATE_t$ (Proposition~\ref{prop:wald_decomp}) into $\beta^{\mathrm{RT}}_{\bomega}(P) = \sum_\ell \omega_\ell \Wald_\ell(P)$ and exchange the order of summation to obtain~\eqref{eq:rt_type_decomp}. Non-negativity: PRD ensures $\alpha_{\ell,t} \geq 0$ for all $t, \ell$ (Proposition~\ref{prop:prd}), and $\omega_\ell \geq 0$ by construction; hence $\psi_t(\omega) \geq 0$. Summing: $\sum_{t\in\mathcal{T}_+(P)} \psi_t(\omega) = \sum_\ell \omega_\ell \sum_{t\in\mathcal{T}_+(P)} \alpha_{\ell,t} = \sum_\ell \omega_\ell = 1$.

\subsection{Proof of Proposition~\ref{prop:eif-decomp}}\label{app:proof_thm_pathwise}\label{app:proof_thm_lam_bound}

\noindent\emph{(i) LAM lower bound.} Fix $h_1, \ldots, h_m$ in a finite-dimensional subspace generated by bounded structural-coordinate scores: instrument-mass directions, positive-support type-probability directions inside the relative simplex on $\mathcal{T}_+(P_0)$, and bounded type-conditional outcome-density directions for positive-support types. The instrument-law and type-probability coordinates are finite-dimensional; their bounded generators span the same directions exactly. For each positive-mass outcome component $(r,d)$, $L^\infty(F_{r,d})\cap L^2_0(F_{r,d})$ is dense in $L^2_0(F_{r,d})$ by truncation and recentering. The structural-to-observed score map is a finite positive-mixture linear map, hence continuous in $L^2(P_0)$. Therefore any $L^2$ tangent-closure element can be approximated by bounded generated scores, and the lower-bound supremum passes to the limit by continuity of the $L^2$ inner product and of the projection norm. For local parameter $t\in\mathbb{R}^m$, construct $P_{n,t}$ directly in the LATE structural chart. Write the bounded coordinate directions as $a^Z_j(z)$ for the instrument law, $a^\Theta_j(r)$ for positive-support types $r\in\mathcal{T}_+(P_0)$ with $\sum_r \theta_r(P_0)a^\Theta_j(r)=0$, and $a^F_{j,r,d}(y)$ for type-outcome densities with $\int a^F_{j,r,d}\,dF_{r,d}=0$. Set $\mu_z^{(n,t)} =
\frac{\mu_z(P_0)\exp\{n^{-1/2}\sum_j t_j a^Z_j(z)\}}
{\sum_{z'}\mu_{z'}(P_0)\exp\{n^{-1/2}\sum_j t_j a^Z_j(z')\}} $,
$\theta_r^{(n,t)} =
\frac{\theta_r(P_0)\exp\{n^{-1/2}\sum_j t_j a^\Theta_j(r)\}}
{\sum_{r'\in\mathcal{T}_+(P_0)}\theta_{r'}(P_0)\exp\{n^{-1/2}\sum_j t_j a^\Theta_j(r')\}},$ where $r\in\mathcal{T}_+(P_0)$
$f_{r,d}^{(n,t)}(y) =
\frac{f_{r,d}(y)\exp\{n^{-1/2}\sum_j t_j a^F_{j,r,d}(y)\}}
{\int f_{r,d}(u)\exp\{n^{-1/2}\sum_j t_j a^F_{j,r,d}(u)\}\,du}.
$
Thus $P_{n,t}\in\PLATE$ for fixed $t$ and all large $n$. The exponential normalization gives the first-order expansions $\mu_z^{(n,t)}=\mu_z(P_0)\{1+n^{-1/2}\sum_j t_j a^Z_j(z)\}+O(n^{-1})$, $\theta_r^{(n,t)}=\theta_r(P_0)\{1+n^{-1/2}\sum_j t_j a^\Theta_j(r)\}+O(n^{-1})$, and similarly $f_{r,d}^{(n,t)}=f_{r,d}\{1+n^{-1/2}\sum_j t_j a^F_{j,r,d}\}+O(n^{-1})$ in quadratic mean. By the product-rule for QMD product experiments \citep{vanderVaart1998}, the induced observed-data model is QMD with score $\sum_j t_j h_j$ and information matrix $I_h=\E[hh']$. Therefore
$\int \biggl[\sqrt{dP_{n,t}/dP_0}-1-\frac{1}{2\sqrt n}\sum_j t_j h_j\biggr]^2dP_0=o(n^{-1}).$ The QMD-verified $m$-parameter structural submodel satisfies the LAN expansion \citep{vanderVaart1998}: $\log \frac{d(P_{n, t})^n}{dP_0^n}(O_1, \ldots, O_n) \;=\; t' \Delta_n - \tfrac12 t'\, I_h\, t + o_{P_0}(1),$
with $\Delta_n = n^{-1/2}\sum_{i=1}^n h(O_i) \xrightarrow{d} N(0, I_h)$ and $I_h = \E[h(O)h(O)']$. For each $h$ in the linear span of $\{h_1,\ldots,h_m\}$, write $t_h$ for its coordinates in this basis and set $P_{n,h}=P_{n,t_h}$.

By Lemma~\ref{lem:rt-pathwise}, $\beta^{\mathrm{RT}}_{\bomega}(P)$ is pathwise differentiable at $P_0$ with derivative $\dot\beta^{\mathrm{RT}}_{\bomega}(h; P_0) = \langle \eif_{\omega(\cdot)}, h \rangle_{L^2(P_0)}$. For any finite-dimensional $I \subset \TLATE$, the derivative restricted to $I$ is the linear functional $h \mapsto \langle \tilde\eif_I, h \rangle$ where $\tilde\eif_I$ is the orthogonal projection of $\eif_{\omega(\cdot)}$ onto $I$. The tangent-compatibility hypothesis gives $\eif_{\omega(\cdot)}\in\TLATEbar$; hence $\norm{\tilde\eif_I}_{L^2(P_0)}$ converges to $\norm{\eif_{\omega(\cdot)}}_{L^2(P_0)}$ as $I$ grows through finite-dimensional subspaces whose union is dense in $\TLATEbar$ (automatic, since $\TLATEbar$ is the closure of the structural tangent set). Theorem~\ref{thm:tc-primitive} and Corollary~\ref{cor:tc-finite-designs} give primitive sufficient conditions for the tangent-compatibility hypothesis itself.

Apply the local asymptotic minimax theorem \citep{vanderVaart1998} to the LAN sequence constructed above. Under the LAN-plus-pathwise-differentiability hypotheses verified above, the theorem yields, for the squared-error loss and any bounded generated finite-dimensional $I \subset \TLATE$,
$
\lim_{C\to\infty}\liminf_n \sup_{h \in I,\, \norm{h} \leq C} n\,\E\!\bigl[\bigl(\hat\beta_n - \beta^{\mathrm{RT}}_{\bomega}(P_{n,h})\bigr)^2\bigr] \;\geq\; \norm{\tilde\eif_I}_{L^2(P_0)}^2.
$
Taking the supremum over a nested family of finite-dimensional subspaces $I \subset \TLATE$ exhausting $\TLATEbar$ on the left, such that the per-subspace minimax bounds increase to the bound over the full tangent closure, and using the density of bounded generated subspaces and the convergence $\norm{\tilde\eif_I} \to \norm{\eif_{\omega(\cdot)}}$ established above on the right yields
$
\sup_{I \subset \TLATE,\, \dim I < \infty}
\lim_{C\to\infty}\liminf_n
\sup_{h \in I,\, \norm{h} \leq C}
n\,\E\!\bigl[\bigl(\hat\beta_n - \beta^{\mathrm{RT}}_{\bomega}(P_{n,h})\bigr)^2\bigr] 
\geq
\E[\eif_{\omega(\cdot)}(O)^2]
= V_{\omega(\cdot)}(P_0),
$
which is~\eqref{eq:lam-bound}. 

\smallskip\noindent\emph{(ii) Plug-in attainment.} The plug-in $\hat\beta^{\mathrm{RT}}_{\bomega} = \bomega(\hat P_n)' \widehat\bWald$ has the delta-method expansion $\hat\beta^{\mathrm{RT}}_{\bomega} - \beta^{\mathrm{RT}}_{\bomega}(P_0) = \bomega(\hat P_n)'\widehat\bWald - \bomega(P_0)'\bWald(P_0) = \bomega(P_0)'(\widehat\bWald - \bWald(P_0)) + (\bomega(\hat P_n) - \bomega(P_0))'\bWald(P_0) + R_n$, with cross-product remainder $R_n = (\bomega(\hat P_n) - \bomega(P_0))'(\widehat\bWald - \bWald(P_0)) = O_p(n^{-1})$ since both factors are $O_p(n^{-1/2})$ (the first by the assumed weight-map von Mises expansion, the second by the delta method for the sample covariance ratios under Assumption~\ref{ass:sampling}). The first term is $\sum_\ell \omega_\ell(P_0)(\widehat\Wald_\ell - \Wald_\ell(P_0)) = n^{-1}\sum_i \sum_\ell \omega_\ell(P_0) \phi_\ell(O_i; P_0) + o_p(n^{-1/2})$ by the same delta-method expansion, whose influence function is the Riesz representer $\phi_\ell$ of Lemma~\ref{lem:wald-pathwise}. The second term uses the same von Mises expansion:
$(\bomega(\hat P_n)-\bomega(P_0))'\bWald(P_0) = n^{-1}\sum_i \sum_\ell \Wald_\ell(P_0)\Riesz_{\omega,\ell}(O_i;P_0)+o_p(n^{-1/2}).$
For the fixed-weight map, $\Riesz_\omega \equiv \mathbf{0}$. The weight-map von Mises expansion is maintained as a hypothesis rather than derived from a particular construction. Combining,
$\sqrt n\,(\hat\beta^{\mathrm{RT}}_{\bomega} - \beta^{\mathrm{RT}}_{\bomega}(P_0)) = \frac{1}{\sqrt n}\sum_{i=1}^n \eif_{\omega(\cdot)}(O_i; P_0) + o_p(1),$
which establishes the asymptotic linearity of $\hat\beta^{\mathrm{RT}}_{\bomega}$ at $P_0$, the expansion underlying~\eqref{eq:plug-in-asylinear}. The CLT applied to $\eif_{\omega(\cdot)} \in L^2_0(P_0)$ gives $\sqrt n\,(\hat\beta^{\mathrm{RT}}_{\bomega} - \beta^{\mathrm{RT}}_{\bomega}(P_0)) \xrightarrow{d} N(0, V_{\omega(\cdot)}(P_0))$ under $P_0$. Along a contiguous submodel $\{P_{n,h}\}$, Le Cam's third lemma combined with the linearization above gives $\sqrt n\,(\hat\beta^{\mathrm{RT}}_{\bomega} - \beta^{\mathrm{RT}}_{\bomega}(P_0)) \xrightarrow{d, P_{n,h}} N(\inner{\eif_{\omega(\cdot)}}{h}_{L^2(P_0)}, V_{\omega(\cdot)}(P_0))$. The pathwise expansion gives $\sqrt n\,(\beta^{\mathrm{RT}}_{\bomega}(P_{n,h}) - \beta^{\mathrm{RT}}_{\bomega}(P_0)) \to \inner{\eif_{\omega(\cdot)}}{h}_{L^2(P_0)}$. Subtracting,
$\sqrt n\,(\hat\beta^{\mathrm{RT}}_{\bomega} - \beta^{\mathrm{RT}}_{\bomega}(P_{n,h})) \xrightarrow{d, P_{n,h}} N(0, V_{\omega(\cdot)}(P_0)),$
direction-free in $h$: regularity holds along every LATE-compatible contiguous submodel. The asymptotic linearization above has an $h$-free remainder: $\rho_n \equiv \sqrt n(\hat\beta^{\mathrm{RT}}_{\bomega}-\beta^{\mathrm{RT}}_{\bomega}(P_0))-n^{-1/2}\sum_i\eif_{\omega(\cdot)}(O_i;P_0)$ is a deterministic function of the $O_{P_0}(n^{-1/2})$ sample moments and the weight-map von~Mises remainder, not of the local direction. Fix a finite-dimensional $I\subset\TLATE$ and $C<\infty$, and take any sequence $\{h_n\}$ with $h_n\in I$ and $\|h_n\|_{L^2(P_0)}\le C$. By contiguity of $P_{n,h_n}^{n}$ to $P_0^{n}$ (Le Cam's first lemma), $\rho_n=o_{P_{n,h_n}}(1)$ along the sequence, and the functional Taylor remainder $q_n\equiv\beta^{\mathrm{RT}}_{\bomega}(P_{n,h_n})-\beta^{\mathrm{RT}}_{\bomega}(P_0)-n^{-1/2}\langle\eif_{\omega(\cdot)},h_n\rangle_{L^2(P_0)}$ satisfies $\sqrt n\,q_n\to0$: the Wald factor is smooth in $t$ along the exponential chart (moments of exponentially tilted laws under Assumption~\ref{ass:sampling}(d)), the weight factor is Hadamard differentiable along the chart by hypothesis, and Hadamard differentiability delivers the first-order expansion with remainder $o(t)$ uniformly over directions in the ball $\{h \in I : \|h\|_{L^2(P_0)} \leq C\}$, which is compact in the finite-dimensional subspace $I$; at $t = n^{-1/2}$ the remainder is therefore $o(n^{-1/2})$ along $\{h_n\}$. Hence the linearization above and Le Cam's third lemma give
$\sqrt n\,(\hat\beta^{\mathrm{RT}}_{\bomega}-\beta^{\mathrm{RT}}_{\bomega}(P_{n,h_n}))\;\xrightarrow{d,\,P_{n,h_n}}\;N(0,V_{\omega(\cdot)}(P_0)),$
a limit that does not depend on the sequence. The local-risk uniform-integrability condition (Assumption~\ref{ass:local-risk-ui}), whose supremum over $\{h\in I:\|h\|\le C\}$ sits inside the $M\to\infty$ limit, makes $\{n(\hat\beta^{\mathrm{RT}}_{\bomega}-\beta^{\mathrm{RT}}_{\bomega}(P_{n,h_n}))^2\}_n$ uniformly integrable along the sequence; with the distributional limit this gives $n\,\E[(\hat\beta^{\mathrm{RT}}_{\bomega}-\beta^{\mathrm{RT}}_{\bomega}(P_{n,h_n}))^2]\to V_{\omega(\cdot)}(P_0)$. Since this convergence holds along every such sequence with an identical limit, no sequence can keep the deviation above any $\varepsilon>0$, so $\sup_{\substack{h\in I,\ \|h\|_{L^2(P_0)}\le C}} \bigl| n\,\E[(\hat\beta^{\mathrm{RT}}_{\bomega}-\beta^{\mathrm{RT}}_{\bomega}(P_{n,h}))^2] - V_{\omega(\cdot)}(P_0) \bigr|\to0$.

\subsection{Proof of Proposition~\ref{prop:var-estimation}}\label{app:proof_var_estimation}

In the tangent-compatible case, the variance is $V_{\omega(\cdot)}(P_0) = \Var_{P_0}(\eif_{\omega(\cdot)}) = \E\bigl[(\eif_{\omega(\cdot)})^2\bigr]$ (zero mean). Expand and group by source:
$
V_{\omega(\cdot)}(P_0)
= \sum_{\ell, k} \omega_\ell\omega_k \E[\phi_\ell\phi_k]
 + 2\sum_{\ell,k}\omega_\ell\Wald_k\E[\phi_\ell\Riesz_{\omega,k}]
 + \sum_{\ell,k}\Wald_\ell\Wald_k
   \E[\Riesz_{\omega,\ell}\Riesz_{\omega,k}] 
= \bomega(P_0)'\bGamma^{\Wald}(P_0)\bomega(P_0)
 + 2\bomega(P_0)'\bUpsilon(\bomega; P_0)\bWald(P_0)
 + \bWald(P_0)'\bSigma_{\Riesz}(\bomega; P_0)\bWald(P_0),
$
with $\bUpsilon(\bomega; P_0)_{\ell k} = \Cov_{P_0}(\phi_\ell, \Riesz_{\omega,k})$ and $\bSigma_{\Riesz}(\bomega; P_0)_{\ell k} = \Cov_{P_0}(\Riesz_{\omega,\ell}, \Riesz_{\omega,k})$. The sample analog $\hat\phi_\ell(O_i)$ of the Wald influence function is the explicit polynomial $[(Y_i - \bar Y_n) - \widehat\Wald_\ell(D_i - \bar D_n)](Z_{\ell i} - p^Z_{\ell,n})/\hat\gamma_\ell$, with $\hat\phi_\ell(O_i) - \phi_\ell(O_i; P_0) = R_\ell(O_i; P_0, \hat P_n)$ a finite sum in which each term is the product of (i) between one and four of the estimation errors $(\bar Y_n - \mu_Y, \bar D_n - \mu_D, p^Z_{\ell,n} - p^Z_\ell, \widehat\Wald_\ell - \Wald_\ell(P_0), \hat\gamma_\ell - \gamma_\ell(P_0))$, (ii) a polynomial in $(Y_i, D_i, Z_{\ell i})$ with coefficients depending only on $P_0$, and (iii) either $1$ or $1/(\hat\gamma_\ell\,\gamma_\ell(P_0))$, the latter from $1/\hat\gamma_\ell - 1/\gamma_\ell = -(\hat\gamma_\ell - \gamma_\ell)/(\hat\gamma_\ell\gamma_\ell)$. Each estimation error is $O_p(n^{-1/2})$, and on the event $|\hat\gamma_\ell| \geq |\gamma_\ell(P_0)|/2$, which holds with probability tending to one because $\hat\gamma_\ell \to_p \gamma_\ell(P_0) \neq 0$ (Assumptions~\ref{ass:sampling}(e) and~\ref{ass:relevance}), the factor $1/(\hat\gamma_\ell\,\gamma_\ell(P_0))$ is bounded. Decompose
$\hat\Gamma^{\Wald}_{\ell k} \;=\; \frac{1}{n}\sum_i \phi_\ell(O_i; P_0)\phi_k(O_i; P_0) \;+\; \frac{1}{n}\sum_i \bigl[\hat\phi_\ell \hat\phi_k - \phi_\ell \phi_k\bigr](O_i),$
where the first term $\xrightarrow{p} \Gamma^{\Wald}_{\ell k}(P_0)$ by the law of large numbers under $\E[Y_i^2]<\infty$ (Assumption~\ref{ass:sampling}(d)). By the representation of $\hat\phi_\ell - \phi_\ell$ above, the second term is a finite sum of terms, each the product of estimation errors totalling $O_p(n^{-1/2})$, a factor bounded in probability on the event $|\hat\gamma_\ell| \geq |\gamma_\ell(P_0)|/2$, and a sample mean of a monomial in $(Y_i, D_i, Z_{\ell i})$ of degree at most two in $Y_i$, which is $O_p(1)$ by the law of large numbers under $\E[Y_i^2] < \infty$; hence the second term is $o_p(1)$. Hence $\hat\Gamma^{\Wald} \xrightarrow{p} \Gamma^{\Wald}(P_0)$. For the weight-drift terms, Assumption~\ref{ass:riesz-consistency} gives mean-square consistency of the estimated weight-map Riesz representers, and Cauchy--Schwarz plus the LLN for $\phi_\ell^2$ and $\Riesz_{\omega,\ell}^2$ yields $\hat\bUpsilon \xrightarrow{p} \bUpsilon(\bomega; P_0)$ and $\hat\bSigma_{\Riesz} \xrightarrow{p} \bSigma_{\Riesz}(\bomega; P_0)$. 

For the pointwise claim, fix a LATE-compatible contiguous submodel $\{P_{n,h}\}$ with score $h\in\TLATE$. Proposition~\ref{prop:eif-decomp} makes $\hat\beta^{\mathrm{RT}}_{\bomega}$ asymptotically linear with influence function $\eif_{\omega(\cdot)}$ and regular for $\beta^{\mathrm{RT}}_{\bomega}(\cdot)$; Le Cam's third lemma along $\{P_{n,h}\}$ then gives $\sqrt n\,(\hat\beta^{\mathrm{RT}}_{\bomega} - \beta^{\mathrm{RT}}_{\bomega}(P_{n,h})) \xrightarrow{d, P_{n,h}} N(0, V_\omega(P_0))$. The consistency above gives $\hat V_n \xrightarrow{p} V_\omega(P_0)$ under $P_0$, and contiguity transfers it to $P_{n,h}$; Slutsky's theorem makes the studentized pivot asymptotically standard normal, and the Wald interval covers $\beta^{\mathrm{RT}}_{\bomega}(P_{n,h})$ with asymptotic probability $1-\alpha$.

\subsection{Proof of Proposition~\ref{cor:witness}}\label{app:proof_witness_class}
Write $\bu=\bomega\oslash\bgamma$; then $\bgamma'\bu=\sum_\ell\omega_\ell=1$, and recall from Proposition~\ref{prop:weighted_iv} that $\lambda_\ell(\bW;P)=\gamma_\ell(\bW\bgamma)_\ell/(\bgamma'\bW\bgamma)$. The diagonal matrix $\bW^\dagger=\diag(\omega_\ell/\gamma_\ell^2)$ is positive definite for a locally interior target and satisfies $\bW^\dagger\bgamma=\bu$. Any member of the family has $\bW\bgamma=c\,\bW^\dagger\bgamma+\bK\bgamma=c\,\bu$, hence $\gamma_\ell(\bW\bgamma)_\ell=c\,\omega_\ell$ and $\bgamma'\bW\bgamma=c\,\bgamma'\bu=c>0$; hence $\blambda(\bW;P)=\bomega(P)$ at every $P$. A diagonal member $\diag(\bw)$ has $w_\ell\gamma_\ell=c\,\omega_\ell/\gamma_\ell$, that is $w_\ell=c\,\omega_\ell/\gamma_\ell^2$; thus $\bW^\dagger$ is the unique diagonal member up to scale.

\subsection{Proof of Proposition~\ref{thm:tangent-matching}}\label{app:proof_prop_tangent}

Write $\lambda^0_\ell=\lambda_{\bW,\ell}(P_0)$ and $\omega^0_\ell=\omega_\ell(P_0)$. Expanding both representers, the gap is
$\Delta_{\bW,\omega} = \sum_{\ell=1}^L(\lambda^0_\ell-\omega^0_\ell)\phi_\ell \;+\; \sum_{\ell=1}^L\Wald_\ell(P_0)\,[\psi^{\blambda}_{\bW,\ell}-\Riesz_{\omega,\ell}]$.

\emph{Linearization.} By Proposition~\ref{prop:weighted_iv}(b), $\hat\beta^{\mathrm{GMM}}_{\bW} = \blambda_{\bW}(\hat P_n)'\widehat\bWald$ with probability tending to one. The map $(\blambda,\bw)\mapsto\blambda'\bw$ is differentiable at $(\blambda_{\bW}(P_0),\bWald(P_0))$. Combining the expansion~\eqref{eq:W-asy-linear-main} with
$\widehat\Wald_\ell-\Wald_\ell(P_0) = n^{-1}\sum_i\phi_\ell(O_i;P_0)+o_p(n^{-1/2})$
(the delta method for the sample covariance ratios under Assumption~\ref{ass:sampling}, with influence function the Riesz representer $\phi_\ell$ of Lemma~\ref{lem:wald-pathwise} in the Supplementary Appendix), a first-order product expansion yields
$\sqrt n\,\bigl(\hat\beta^{\mathrm{GMM}}_{\bW} - \beta^{\mathrm{GMM}}_{\bW}(P_0)\bigr) \;=\; n^{-1/2}\sum_{i=1}^n \phi^{\bW}(O_i) + o_p(1)$,
with cross-product remainders $O_p(n^{-1})$ under Assumption~\ref{ass:sampling}'s $(2+\delta)$-th moments. Every component of $\phi^{\bW}$ is in $L^2_0(P_0)$.

\emph{Sufficiency.} Proposition~\ref{prop:eif-decomp} gives $\eif_{\omega(\cdot)}=\sum_{\ell}\omega^0_\ell\phi_\ell +\sum_{\ell}\Wald_\ell(P_0)\Riesz_{\omega,\ell}$; hence $\phi^{\bW}=\eif_{\omega(\cdot)}+\Delta_{\bW,\omega}$. Condition (i) aligns the centers. Along any LATE-compatible contiguous submodel $\{P_{n,h}\}$ with score $h \in \TLATE$, Le Cam's third lemma applied to the above first-order product expansion gives
$\sqrt n(\hat\beta^{\mathrm{GMM}}_{\bW} - \beta^{\mathrm{RT}}_{\bomega}(P_0)) \xrightarrow{d, P_{n,h}} N(\inner{\phi^{\bW}}{h}_{L^2(P_0)}, \Var_{P_0}(\phi^{\bW}))$, and the pathwise expansion of $\beta^{\mathrm{RT}}_{\bomega}$ gives $\sqrt n(\beta^{\mathrm{RT}}_{\bomega}(P_{n,h}) - \beta^{\mathrm{RT}}_{\bomega}(P_0)) \to \inner{\eif_{\omega(\cdot)}}{h}_{L^2(P_0)}$. Condition (ii) gives $\inner{\Delta_{\bW,\omega}}{h}_{L^2(P_0)}=0$ for every $h\in\TLATE$; subtracting yields an $h$-free limit:
$\sqrt n(\hat\beta^{\mathrm{GMM}}_{\bW} - \beta^{\mathrm{RT}}_{\bomega}(P_{n,h})) \xrightarrow{d, P_{n,h}} N(0,\Var_{P_0}(\phi^{\bW}))$. Thus $\hat\beta^{\mathrm{GMM}}_{\bW}$ is regular for $\beta^{\mathrm{RT}}_{\bomega}(\cdot)$.

\emph{Necessity.} Regularity at $P_0$ implies consistency along the baseline; hence $\beta^{\mathrm{GMM}}_{\bW}(P_0) = \beta^{\mathrm{RT}}_{\bomega}(P_0)$, equivalently $[\blambda_{\bW}(P_0) - \bomega(P_0)]'\bWald(P_0) \;=\; 0$,
which is condition (i). Along any LATE-compatible contiguous submodel $\{P_{n,h}\}$ with $h \in \TLATE$, Le Cam's third lemma applied to the above first-order product expansion and the pathwise expansion of $\beta^{\mathrm{RT}}_{\bomega}$ give, using condition (i),
$\sqrt n(\hat\beta^{\mathrm{GMM}}_{\bW} - \beta^{\mathrm{RT}}_{\bomega}(P_{n,h})) \;\xrightarrow{d, P_{n,h}}\; N\bigl(\langle\phi^{\bW} - \eif_{\omega(\cdot)}, h\rangle_{L^2(P_0)},\; \Var_{P_0}(\phi^{\bW})\bigr)$.
Regularity requires an $h$-free limit distribution; hence $\langle\Delta_{\bW,\omega}, h\rangle_{L^2(P_0)} = 0$ for every $h \in \TLATE$, and by continuity of the inner product this extends to $\overline{\mathcal{G}}$: $\Pi_{\overline{\mathcal{G}}}\Delta_{\bW,\omega}=0$.

\subsection{Proof of Lemma~\ref{lem:active-set-stability}}\label{app:proof_lem_active}

By Assumption~\ref{ass:projection-dictionary}, the entries of $\bA_{\mathrm{spec}}(P)$, $\bpsi^\star(P)$, and $\bkappa(P)$ on $\mathcal{T}_+(P_0)$ are continuous at $P_0$; write $A(P)$ for this type-restricted matrix (with entries $\alpha_{\ell,t}$) inside this proof. Hence
$G(P)=A(P)\diag(\bkappa(P))A(P)',\qquad c(P)=A(P)\diag(\bkappa(P))\bpsi^\star(P)$
are continuous at $P_0$ on the locally fixed positive-support set $\mathcal{T}_+(P_0)$. Let $\mathcal{K}_{I_0}=\{v\in\mathbb{R}^{|I_0|}:\ind_{I_0}'v=0\}$. If $|I_0|>1$, Assumption~\ref{ass:strict-complementarity} gives $b_0=\inf_{\substack{v\in\mathcal{K}_{I_0}\\ \|v\|=1}}v'G_{I_0I_0}(P_0)v>0.$
By continuity in finite dimension, a neighborhood $\mathcal{U}_0^{(1)}$ exists on which $v'G_{I_0I_0}(P)v\geq b_0/2$ for every unit $v\in\mathcal{K}_{I_0}$. If $|I_0|=1$, set $\mathcal{U}_0^{(1)}$ to any continuity neighborhood. Since $G(P)$ is a weighted Gram matrix, this tangent-positive condition is equivalent to nonsingularity of the bordered KKT matrix $M(P)=\left[\begin{smallmatrix}G_{I_0I_0}(P)&\ind_{I_0}\\ \ind_{I_0}'&0\end{smallmatrix}\right]$, and for $|I_0|=1$ this matrix is nonsingular directly. Thus $M(P)$ is invertible on $\mathcal{U}_0^{(1)}$. By the implicit function theorem, the restricted KKT solution $(\omega^\dagger_{I_0}(P), \nu^\dagger(P))$ is continuous in $P$ on $\mathcal{U}_0^{(1)}$. Define the inactive multipliers by the stationarity residual $m^\dagger_{I_0^c}(P)=G_{I_0^c,I_0}(P)\omega^\dagger_{I_0}(P)+\nu^\dagger(P)\ind_{I_0^c}-c_{I_0^c}(P)$; these are continuous as well. Let
$\epsilon_0 = \min\bigl\{\min_{\ell \in I_0}\omega^\dagger_\ell(P_0),\ \min_{\ell \in I_0^c}m^\dagger_\ell(P_0)\bigr\}$, with the convention $\min_{\ell\in\emptyset}=+\infty$. By Assumption~\ref{ass:strict-complementarity}, $\epsilon_0>0$. By continuity, there is a neighborhood $\mathcal{U}_0 \subseteq \mathcal{U}_0^{(1)}$ on which $\omega^\dagger_\ell(P) > \epsilon_0/2$ for $\ell \in I_0$ and, when $I_0^c\neq\emptyset$, $m^\dagger_\ell(P) > \epsilon_0/2$ for $\ell \in I_0^c$. Together with $\omega^\dagger_{I_0^c}(P)=0$ and $m^\dagger_{I_0}(P)=0$, these signs satisfy the full KKT system for the convex quadratic program; hence the restricted solution is the full projection solution. On $\mathcal{U}_0$ the active set is exactly $I_0$, strict complementarity holds, and the active-face bordered KKT matrix remains nonsingular.

\subsection{Proof of Proposition~\ref{prop:lam-projected}}\label{app:proof_prop_pathwise_omega}

Let $\bA_{\mathrm{spec}}(P)$, $\bpsi^\star_{\mathrm{spec}}(P)$, and $\bkappa_{\mathrm{spec}}(P)$ denote the corresponding restrictions; write $A(P)$ for $\bA_{\mathrm{spec}}(P)$ and suppress the analogous subscripts. Expanding the projection objective yields $f(\omega; P) = \tfrac12 \omega' G(P) \omega - c(P)' \omega + \tfrac12 \bpsi^\star(P)'\diag(\bkappa(P))\bpsi^\star(P)$ with $G(P) = A(P)\diag(\bkappa(P))A(P)'$ and $c(P) = A(P)\diag(\bkappa(P))\bpsi^\star(P)$, all on $\mathcal{T}_+(P_0)$. The Lagrangian with simplex multiplier $\nu$ and non-negativity multipliers $m \geq 0$ yields the KKT conditions
$
G(P) \omega^\dagger - c(P) + \nu^\dagger \ind_L - m^\dagger = 0$, 
$\ind_L' \omega^\dagger = 1$, $ \omega^\dagger \geq 0$, $m^\dagger \geq 0$, and $ m^\dagger_\ell\, \omega^\dagger_\ell = 0. $ With active set $I = I(\psi^\star; P)$, complementary slackness gives $m^\dagger_\ell = 0$ on $I$ and $\omega^\dagger_\ell = 0$ on $I^c$, so $G_{II}(P)\, \omega^\dagger_I - c_I(P) + \nu^\dagger \ind_I = 0$, and$ \ind_I' \omega^\dagger_I = 1.$
Under Assumption~\ref{ass:strict-complementarity} and Lemma~\ref{lem:active-set-stability}, the bordered matrix $M(P) = \begin{bmatrix} G_{I_0 I_0}(P) & \ind_{I_0} \\ \ind_{I_0}' & 0 \end{bmatrix} $
is invertible on $\mathcal{U}_0$; write $N(P)=M(P)^{-1}$. The KKT condition solves to $(\omega^\dagger_{I_0}(P), \nu^\dagger(P))' = N(P) (c_{I_0}(P), 1)'$ on $\mathcal{U}_0$. The pathwise derivative takes the form $\begin{bmatrix}
\dot\omega^\dagger_{I_0}(h; P_0)\\
\dot\nu^\dagger(h; P_0)
\end{bmatrix}
=
M(P_0)^{-1}
\begin{bmatrix}
\dot c_{I_0}(h; P_0)
- \dot G_{I_0 I_0}(h; P_0)\, \omega^\dagger_{I_0}(P_0)\\
0
\end{bmatrix},
$
  with $\dot\omega^\dagger_\ell(h; P_0) = 0$ for $\ell \in I_0^c$. Write
  $\dot\alpha_{\ell,t}(h;P_0)=\inner{\xi_{\ell,t}(\cdot;P_0)}{h}_{L^2(P_0)}$,
  $\dot\psi^\star_t(h;P_0)=\inner{\xi^\star_t(\cdot;P_0)}{h}_{L^2(P_0)}$, and
  $\dot\kappa_t(h;P_0)=\inner{\zeta_t(\cdot;P_0)}{h}_{L^2(P_0)}$. Then $  \dot G_{\ell,k}
  =
  \sum_{t\in\mathcal{T}_+(P_0)}
  \kappa_t(P_0)
  \bigl[
  \alpha_{k,t}(P_0)\dot\alpha_{\ell,t}(h;P_0)
  +\alpha_{\ell,t}(P_0)\dot\alpha_{k,t}(h;P_0)
  \bigr]
+
  \sum_{t\in\mathcal{T}_+(P_0)}
  \alpha_{\ell,t}(P_0)\alpha_{k,t}(P_0)\dot\kappa_t(h;P_0),
  \dot c_\ell
  =
  \sum_{t\in\mathcal{T}_+(P_0)}
  \kappa_t(P_0)\psi^\star_t(P_0)\dot\alpha_{\ell,t}(h;P_0)
  +\sum_{t\in\mathcal{T}_+(P_0)}
  \kappa_t(P_0)\alpha_{\ell,t}(P_0)\dot\psi^\star_t(h;P_0)
+
  \sum_{t\in\mathcal{T}_+(P_0)}
  \alpha_{\ell,t}(P_0)\psi^\star_t(P_0)\dot\kappa_t(h;P_0).
$
  The representer $\xi_{\ell,t}$ is supplied by Assumption~\ref{ass:projection-dictionary} for threshold types, while $\xi^\star_t$ and $\zeta_t$ represent the derivatives of the target weight and metric weight. For $P$-independent metric weights ($\bkappa \equiv \text{const}$), $\zeta_t\equiv0$; for $P$-independent targets ($\bpsi^\star \equiv \text{const}$), $\xi^\star_t\equiv0$. All three Riesz representers $\xi_{k,t}, \xi^\star_t, \zeta_t$ are computed in Lemma~\ref{lem:A_pathwise} and Appendix~\ref{app:proof_canonical_targets} for the canonical threshold specifications. The Riesz representer $\Riesz_{\omega^\dagger, \ell}$ of $\dot\omega^\dagger_\ell$ is zero for inactive instruments and, for $\ell\in I_0$, decomposes into three drift channels ($\bA$-drift, $\bpsi^\star$-drift, $\bkappa$-drift) that propagate through the bordered KKT system. With $N = M(P_0)^{-1}$, it equals $
  \Riesz_{\omega^\dagger, \ell}(\cdot; \bpsi^\star, \bkappa, P_0) \;=\; \sum_{k \in I_0}\sum_{t\in\mathcal{T}_+(P_0)} \bigl[\mathcal{X}^{(\ell)}_{k, t}\, \xi_{k, t}(\cdot; P_0) + \mathcal{E}^{(\ell)}_{k, t}\, \xi^\star_t(\cdot; P_0) + \mathcal{Z}^{(\ell)}_{k, t}\, \zeta_t(\cdot; P_0)\bigr],
$
  with coefficients $
  \mathcal{X}^{(\ell)}_{k,t}
  =
  \kappa_t(P_0)
  \bigl[
  N_{\ell,k}\psi^\star_t(P_0)
  -N_{\ell,k}\sum_{j\in I_0}A_{j,t}(P_0)\omega^\dagger_j(P_0)
  -\sum_{j\in I_0}N_{\ell,j}A_{j,t}(P_0)\omega^\dagger_k(P_0)
  \bigr],
  \mathcal{E}^{(\ell)}_{k,t}
  =\kappa_t(P_0)N_{\ell,k}\alpha_{k,t}(P_0),
  \mathcal{Z}^{(\ell)}_{k,t}
  =
  N_{\ell,k}\alpha_{k,t}(P_0)
  \bigl[
  \psi^\star_t(P_0)
  -\sum_{j\in I_0}A_{j,t}(P_0)\omega^\dagger_j(P_0)
  \bigr].
$
  The projected weight $P\mapsto\bomega^\dagger(\bpsi^\star;P)$ is the composition of the projection primitives $(\bA_{\mathrm{spec}},\bpsi^\star,\bkappa)$, the active-face Gram and linear term $(\bG_{I_0I_0},\bc_{I_0})$ they form, and the bordered-KKT inverse $N(P)=M(P)^{-1}$ of Lemma~\ref{lem:active-set-stability}; each link is continuously differentiable at $P_0$, the last by Assumption~\ref{ass:strict-complementarity}. Assumption~\ref{ass:projection-dictionary} supplies the von Mises expansions of the primitives $\alpha_{\ell,t}(P)$, $\psi^\star_t(P)$, and $\kappa_t(P)$ with Riesz representers $\xi_{\ell,t}$, $\xi^\star_t$, and $\zeta_t$; chaining them through this composition gives, for each $\ell$,
$\omega^\dagger_\ell(\bpsi^\star;\hat P_n)-\omega^\dagger_\ell(\bpsi^\star;P_0) = n^{-1}\sum_{i=1}^n \Riesz_{\omega^\dagger,\ell}(O_i;\bpsi^\star,\bkappa,P_0) +o_p(n^{-1/2}),$
   Substituting these representers into Proposition~\ref{prop:eif-decomp} gives the influence function for the projected target in the form~\eqref{eq:eif-general}. Assumption~\ref{ass:projection-tangent} makes this representer an element of the LATE tangent closure, and Assumptions~\ref{ass:joint_indep}, \ref{ass:monotonicity}, \ref{ass:realized_exclusion}, \ref{ass:relevance}, \ref{ass:sampling}, and~\ref{ass:late-regularity} supply the LATE-model conditions in Proposition~\ref{prop:eif-decomp}. Hence the projected target satisfies the differentiability (Lemma~\ref{lem:rt-pathwise}), von Mises, tangent-compatibility, and that finite-moment hypotheses, yielding asymptotic linearity and, under Assumption~\ref{ass:local-risk-ui}, LAM attainment. Finally,  for inactive coordinates, the general variance decomposition applies after restricting to $I_0$.

\subsection{Proof of Proposition~\ref{prop:id-gap-distfree}}\label{app:proof_prop_id_gap_distfree}

The Wald decomposition gives $\sum_{t\in\mathcal J(P_0)}\alpha_{\ell,t}(P)\LATE_t(P)=\Wald_\ell(P)$ for every $\ell$, and the maintained coordinate bounds place $(\LATE_t(P))_{t\in\mathcal J(P_0)}$ in $B^{\mathrm{LP}}(\underline M,\overline M)$. Its objective value is $\sum_t r_t\LATE_t(P)=\Delta(\bpsi^\star;P)$, so the infimum and supremum in~\eqref{eq:gap-bound-lp} bracket the gap. 

\subsection{Proof of Proposition~\ref{prop:id-gap-mte}}\label{app:proof_prop_id_gap_mte}

Under Assumptions~\ref{ass:latent_index} and~\ref{ass:local-threshold-support}, $\theta_{t_k}(P)=p_{k+1}(P)-p_k(P)>0$ and
$\LATE_{t_k}(P)=\theta_{t_k}(P)^{-1}\int_{p_k(P)}^{p_{k+1}(P)}\MTE(u;P)\,du$.
W.l.o.g, if $\MTE(\cdot;P)$ is weakly increasing, then for adjacent intervals $\MTE(u;P)\leq\MTE(v;P)$ whenever $u\in(p_k,p_{k+1}]$ and $v\in(p_{k+1},p_{k+2}]$. Averaging over the two intervals gives $\LATE_{t_k}(P)\leq\LATE_{t_{k+1}}(P)$. Conversely, given $b_{t_0}\leq\cdots\leq b_{t_K}$, define $m(u)=b_{t_k}$ on $(p_k,p_{k+1}]$, extend it by $b_{t_0}$ below $p_0$ and by $b_{t_K}$ above $p_{K+1}$, and set $\MTE(u;P)=m(u)$. This MTE is weakly increasing and has adjacent-interval averages $b_{t_k}$. The decreasing case is analogous.

\section{Diagonal Specialization and the Heterogeneity Penalty}\label{app:diagonal}

\subsection{Diagonality of the residual second-moment matrix}

\begin{lemma}[Diagonality of $\bOmega$ at any anchor]\label{lem:omega-diagonal}
Under Assumptions~\ref{ass:joint_indep}, \ref{ass:monotonicity}, \ref{ass:sampling}(b),(c),(d), \ref{ass:diag_instruments}, and~\ref{ass:diag_nonoverlap}, the residual second-moment matrix is diagonal at every anchor: $[\bOmega(\beta;P)]_{\ell k} = 0$ for every $\beta \in \mathbb{R}$ and every $\ell \neq k$.
\end{lemma}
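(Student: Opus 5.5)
The plan is to show that, for $\ell\neq k$, the entry $[\bOmega(\beta;P)]_{\ell k}=\E\bigl[\tilde V_i(\beta)^2(Z_{\ell i}-p^Z_\ell)(Z_{ki}-p^Z_k)\bigr]$ vanishes. The route is to prove that the conditional second moment $\E[\tilde V_i(\beta)^2\mid \bZ_i=z]$ is \emph{additively separable} across instrument coordinates. Mutual independence of the instruments then kills every cross moment.

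\emph{Step 1: each type depends on at most one instrument.} Fix a positive-support type $t$. By Assumption~\ref{ass:diag_nonoverlap}, $t$ belongs to at most one complier group $\mathcal{C}_{\ell(t)}$. For every $k\neq\ell(t)$, the definition of $\mathcal{C}_k$ then gives $t(1,\bz_{-k})\le t(0,\bz_{-k})$ for all $\bz_{-k}$. Assumption~\ref{ass:monotonicity} gives the reverse inequality, so $t(1,\bz_{-k})=t(0,\bz_{-k})$. Hence $t(z)=\tau_t(z_{\ell(t)})$ depends on $z$ only through $z_{\ell(t)}$. If $t$ lies in no complier group, $t(z)$ is constant in $z$.

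\emph{Step 2: separable conditional second moment.} Write $c(\beta)=\E[Y_i-\beta D_i]$. By Assumption~\ref{ass:sampling}(b),(c),
\[
\tilde V_i(\beta)=Y_i(0)-c(\beta)+\bigl(Y_i(1)-Y_i(0)-\beta\bigr)D_i(\bZ_i).
\]
Condition on $\bZ_i=z$ and on the type. Assumption~\ref{ass:joint_indep} lets us drop the conditioning on $z$ in the law of $(Y_i(0),Y_i(1),D_i(\cdot))$. This yields
\[
\E[\tilde V_i(\beta)^2\mid\bZ_i=z]=\sum_{t}\theta_t\,H_t\bigl(\beta,\tau_t(z_{\ell(t)})\bigr),
\]
where $H_t(\beta,d)=\E\bigl[(Y_i(0)-c(\beta)+(Y_i(1)-Y_i(0)-\beta)d)^2\mid D_i(\cdot)=t\bigr]$. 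Each $H_t$ is finite by Assumption~\ref{ass:sampling}(d). Grouping the types by $\ell(t)$ gives
\[
\E[\tilde V_i(\beta)^2\mid\bZ_i=z]=a_0(\beta)+\sum_{j=1}^L a_j(\beta;z_j)
\]
for some functions $a_j$ of a single binary coordinate.

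\emph{Step 3: cross moments vanish.} By iterated expectations,
\[
[\bOmega]_{\ell k}=\E\Bigl[\Bigl(a_0+\sum_j a_j(Z_j)\Bigr)(Z_\ell-p^Z_\ell)(Z_k-p^Z_k)\Bigr].
\]
Under Assumption~\ref{ass:diag_instruments}, take any term $a_j(Z_j)$ with $\ell\neq k$. If $j\neq k$, then $Z_k$ is independent of $(Z_j,Z_\ell)$, so the expectation factors and contains $\E[Z_k-p^Z_k]=0$. If $j=k$, the same argument applies with the roles of $\ell$ and $k$ swapped. The constant $a_0$ contributes $a_0\Cov(Z_\ell,Z_k)=0$. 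Since $\beta$ was arbitrary, this gives diagonality at every anchor.

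The main obstacle is Step 1, the reduction of each type to a single-coordinate response: non-overlap alone only says that $t$ is not a $\mathcal{C}_k$-complier, and it is monotonicity that upgrades this to exact invariance in $z_k$. A secondary point to check in Step 2 is that the demeaning constant $c(\beta)$ and cross terms such as $Y_i(0)\,(Y_i(1)-Y_i(0))\,d$ stay inside the type-$t$ block. They do, because $D_i(\bZ_i)$ enters the square only through $\tau_t(z_{\ell(t)})\in\{0,1\}$.
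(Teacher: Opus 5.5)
Your proof is correct and follows essentially the same route as the paper: non-overlap plus monotonicity makes each positive-mass type depend on at most one instrument coordinate, joint independence takes the instruments out of the outcome law within each type, and mutual independence of the instruments leaves a mean-zero centered factor in every cross moment. The only difference is the order of conditioning. You condition on $\bZ_i$ first to get an additively separable conditional second moment, whereas the paper conditions on the type first and integrates over $\bZ_i$ type by type.
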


\begin{proof}
Write the demeaned residual as $\tilde V_i(\beta) = Y_i(0) + (\tau_i - \beta)\,t(\bZ_i) - \E[Y_i - \beta D_i]$ with $\tau_i = Y_i(1) - Y_i(0)$ and $t = D_i(\cdot)$, by Assumption~\ref{ass:sampling}(b),(c). Fix a type $t \in \mathcal{T}_+(P)$ and a coordinate $m$ with $t \notin \mathcal{C}_m$: no $z_{-m}$ has $t(1, z_{-m}) > t(0, z_{-m})$, and Assumption~\ref{ass:monotonicity} rules out $t(1, z_{-m}) < t(0, z_{-m})$; hence $t$ is constant in coordinate $m$. Under Assumption~\ref{ass:diag_nonoverlap} each positive-mass type responds to at most one instrument; thus $t(\bZ) = h_t(Z_{j(t)})$ for a single coordinate $j(t)$, with $h_t$ constant for always- and never-takers. For $\ell \neq k$, decomposing by type, $[\bOmega(\beta)]_{\ell k} = \sum_{t\in\mathcal{T}_+(P)} \theta_t\, \E[\tilde V^2 (Z_\ell - p^Z_\ell)(Z_k - p^Z_k) \mid D_i(\cdot) = t]$. Conditional on the type, Assumption~\ref{ass:joint_indep} gives $(Y(0), \tau) \perp \bZ$ with the law of $\bZ$ unchanged, and Assumption~\ref{ass:diag_instruments} gives mutual independence of $Z_1, \ldots, Z_L$. Integrating over $\bZ$ at fixed $(Y(0), \tau)$, the integrand is a function of $Z_{j(t)}$ times $(Z_\ell - p^Z_\ell)(Z_k - p^Z_k)$; since $j(t)$ equals at most one of $\ell \neq k$, at least one centered factor is independent of the rest of the product and has mean zero; therefore the expectation vanishes. Summing over types gives the claim.
\end{proof}

\subsection{Variance decomposition and the heterogeneity penalty derivative}

The normalized residual second moment is $\sigma^2_{\epsilon,\ell} = [\bOmega]_{\ell\ell}/[p^Z_\ell(1-p^Z_\ell)]$. Under non-overlapping compliance, it decomposes as $\sigma^2_{\epsilon,\ell} = (1-p^Z_\ell)\,\pi_\ell\,\sigma^2_{\tau,\ell} + R_\ell$, where $R_\ell = \sigma^2_{\epsilon,\ell} - (1-p^Z_\ell)\pi_\ell\sigma^2_{\tau,\ell}$ collects baseline outcome variance, non-complier residual variance, LATE-deviation terms, and the within-complier baseline-effect covariance $\Cov(Y_i(0),\tau_i\mid D_i(\cdot)\in\mathcal{C}_\ell(P))$. Holding $\LATE_\ell$ and the evaluation point $\beta^*$ fixed (hence $\LATE_\ell - \beta^*$), together with the baseline-effect covariance, $R_\ell$ does not depend on $\sigma^2_{\tau,\ell}$ (this follows from the population decomposition of $\sigma^2_{\epsilon,\ell}$ into own-instrument-complier, other-instrument-complier, always-taker, and never-taker contributions, where $\sigma^2_{\tau,\ell}$ enters only through the own-complier term with coefficient $(1-p^Z_\ell)\pi_\ell$), and the heterogeneity penalty derivative follows from this decomposition alone.

Holding the residual-variance evaluation point $\beta^*$ and all other residual-variance components fixed, the partial heterogeneity-penalty derivative is $\partial \lambda^{\EGMM}_\ell / \partial \sigma^2_{\tau,\ell} = -\lambda^{\EGMM}_\ell (1 - \lambda^{\EGMM}_\ell)(1-p^Z_\ell)\pi_\ell/\sigma^2_{\epsilon,\ell}$, which is strictly negative when $L\ge2$ and all diagonal weights are positive. This is a partial derivative of the diagonal weight formula, not a total derivative of the iterated-GMM fixed point when changing $\sigma^2_{\tau,\ell}$ also changes $\beta^{\EGMM}$ or off-equilibrium residual components.

\subsection{Heterogeneity penalty under general \texorpdfstring{$\bOmega$}{Omega}}\label{app:general_omega}
\begin{proposition}[Heterogeneity penalty under general $\bOmega$]\label{prop:general_penalty}
Under Assumption~\ref{ass:relevance}, let $\bOmega \succ 0$ with $\lambda^{\EGMM}_\ell \neq 0$. Holding all other entries of $\bOmega$ fixed,
$\mathrm{sign}\!\left(\frac{\partial \lambda^{\EGMM}_\ell}{\partial [\bOmega]_{\ell\ell}}\right) = -\mathrm{sign}(\lambda^{\EGMM}_\ell)$; $|\lambda^{\EGMM}_\ell|$ strictly decreases in $[\bOmega]_{\ell\ell}$ for $L \geq 2$.
\end{proposition}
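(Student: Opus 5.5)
The plan is to treat $\lambda^{\EGMM}_\ell=\gamma_\ell[\bW\bgamma]_\ell/(\bgamma'\bW\bgamma)$ with $\bW=\bOmega^{-1}$, from Proposition~\ref{prop:weighted_iv}(a), as a function of the single entry $[\bOmega]_{\ell\ell}$. The vector $\bgamma$ and every other entry of $\bOmega$ are held fixed, and so is the evaluation point $\beta^{\EGMM}$; the statement is a partial, not a fixed-point, derivative. I would perturb $\bOmega(\varepsilon)=\bOmega+\varepsilon\,e_\ell e_\ell'$ along the range of $\varepsilon$ on which $\bOmega(\varepsilon)\succ0$. The rank-one update then makes everything explicit via the Sherman--Morrison formula
\[
\bW(\varepsilon)=\bW-\frac{\varepsilon\,\bW e_\ell e_\ell'\bW}{1+\varepsilon W_{\ell\ell}},
\]
where $W_{\ell\ell}=[\bW]_{\ell\ell}>0$.

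Next I would set $\bu=\bW\bgamma$ and $S=\bgamma'\bW\bgamma>0$, and compute the two pieces of $\lambda_\ell$ exactly. The numerator piece is $[\bW(\varepsilon)\bgamma]_\ell=u_\ell/(1+\varepsilon W_{\ell\ell})$. The denominator is $\bgamma'\bW(\varepsilon)\bgamma=S-\varepsilon u_\ell^2/(1+\varepsilon W_{\ell\ell})$. Clearing the common factor gives the closed form
\[
\lambda_\ell(\varepsilon)=\frac{\gamma_\ell u_\ell}{S+\varepsilon\,(W_{\ell\ell}S-u_\ell^2)}.
\]
The numerator $\gamma_\ell u_\ell$ does not depend on $\varepsilon$, so the sign of $\lambda_\ell$ is preserved along the path. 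The derivative at $\varepsilon=0$ is $-\lambda_\ell\,(W_{\ell\ell}S-u_\ell^2)/S$, and the same formula holds at every point of the path.

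The key step, and the only real obstacle, is showing $c\equiv W_{\ell\ell}S-u_\ell^2>0$. I would get this from the Cauchy--Schwarz inequality in the inner product $\langle x,y\rangle_{\bW}=x'\bW y$:
\[
(e_\ell'\bW\bgamma)^2\le(e_\ell'\bW e_\ell)(\bgamma'\bW\bgamma).
\]
Equality holds only if $\bgamma$ is proportional to $e_\ell$. Under Assumption~\ref{ass:relevance}, every $\gamma_k>0$, so for $L\ge2$ the vector $\bgamma$ has at least two nonzero entries and is not proportional to $e_\ell$. The inequality is therefore strict and $c>0$. (For $L=1$ we have $c=0$ and $\lambda\equiv1$, which is why the strict claim needs $L\ge2$.)

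Combining these steps, $\partial\lambda^{\EGMM}_\ell/\partial[\bOmega]_{\ell\ell}=-\lambda^{\EGMM}_\ell\,c/S$, and since $c/S>0$ this has sign $-\mathrm{sign}(\lambda^{\EGMM}_\ell)$ whenever $\lambda^{\EGMM}_\ell\neq0$. Moreover $|\lambda_\ell(\varepsilon)|=|\gamma_\ell u_\ell|/(S+\varepsilon c)$ has a constant numerator and a denominator that is strictly increasing in $\varepsilon$. Hence $|\lambda^{\EGMM}_\ell|$ strictly decreases in $[\bOmega]_{\ell\ell}$ across the whole positive-definite range, not only locally.
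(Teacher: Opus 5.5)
Your proposal is correct and is essentially the paper's proof. The paper differentiates $\lambda^{\EGMM}_\ell=\gamma_\ell a_\ell/S$ using $\partial[\bOmega^{-1}]_{jk}/\partial[\bOmega]_{\ell\ell}=-[\bOmega^{-1}]_{j\ell}[\bOmega^{-1}]_{\ell k}$, obtains $\gamma_\ell a_\ell\bigl(a_\ell^2-S[\bOmega^{-1}]_{\ell\ell}\bigr)/S^2$ (your $-\lambda_\ell c/S$), and signs $c$ by the same strict Cauchy--Schwarz inequality in the $\bOmega^{-1}$ inner product. Your Sherman--Morrison update is the exact, finite version of that differential identity. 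Its closed form $\gamma_\ell u_\ell/(S+\varepsilon c)$ additionally shows that $\lambda^{\EGMM}_\ell$ never changes sign and that $|\lambda^{\EGMM}_\ell|$ decreases over the whole positive-definite range, not just locally.
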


\begin{proof}
Using $\partial [\bOmega^{-1}]_{jk} / \partial [\bOmega]_{\ell\ell} = -[\bOmega^{-1}]_{j\ell}[\bOmega^{-1}]_{\ell k}$ and writing $a = \bOmega^{-1}\bgamma$, $S = \bgamma'a > 0$:
$\frac{\partial \lambda^{\EGMM}_\ell}{\partial [\bOmega]_{\ell\ell}} = \frac{\gamma_\ell a_\ell}{S^2}\bigl(a_\ell^2 - S \cdot [\bOmega^{-1}]_{\ell\ell}\bigr).$
Cauchy--Schwarz gives $a_\ell^2 = (\mathbf{e}_\ell'\bOmega^{-1}\bgamma)^2 \leq [\bOmega^{-1}]_{\ell\ell}\cdot S$, strict for $L \geq 2$; hence $a_\ell^2 - S\cdot [\bOmega^{-1}]_{\ell\ell} < 0$. The sign of the derivative is $-\mathrm{sign}(\gamma_\ell a_\ell) = -\mathrm{sign}(\lambda^{\EGMM}_\ell)$, using $\gamma_\ell > 0$ by Assumption~\ref{ass:relevance} and $\lambda^{\EGMM}_\ell = \gamma_\ell a_\ell / S$.
\end{proof}

\subsection{Diagonal Wald influence-function covariance}

\begin{lemma}[Diagonal Wald covariance under non-overlap]\label{lem:rt-diagonal-wald-cov}
Under Assumptions~\ref{ass:joint_indep}, \ref{ass:monotonicity}, \ref{ass:realized_exclusion}, \ref{ass:relevance}, and~\ref{ass:sampling}, and the diagonal specialization (Assumptions~\ref{ass:diag_instruments}--\ref{ass:diag_nonoverlap}), the Wald influence-function covariance matrix is diagonal: $\Gamma^{\Wald}_{\ell k}(P_0)=0$ for every $\ell\neq k$.
\end{lemma}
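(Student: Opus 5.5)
The plan mirrors the proof of Lemma~\ref{lem:omega-diagonal}: decompose the cross moment by compliance type and show that each conditional term has a centered instrument factor independent of everything else. Write $\tau_i=Y_i(1)-Y_i(0)$, $t=D_i(\cdot)$, and $\tilde Z_{\ell i}=Z_{\ell i}-p^Z_\ell$. For each $\ell$ set
\[
e_{\ell i}=(Y_i-\E Y_i)-\Wald_\ell(D_i-\E D_i),
\]
so that $\phi_\ell=e_{\ell i}\tilde Z_{\ell i}/\gamma_\ell$. Since $\gamma_\ell\gamma_k>0$, it suffices to show $\E[e_\ell e_k\tilde Z_\ell\tilde Z_k]=0$ for $\ell\neq k$.

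By Assumption~\ref{ass:realized_exclusion}, $Y_i=Y_i(0)+\tau_i t(\bZ_i)$ and $D_i=t(\bZ_i)$. Hence $e_{\ell i}=Y_i(0)+(\tau_i-\Wald_\ell)t(\bZ_i)-c_\ell$ with $c_\ell$ a constant, and similarly for $e_{k i}$.

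As in the proof of Lemma~\ref{lem:omega-diagonal}, monotonicity together with non-overlap (Assumption~\ref{ass:diag_nonoverlap}) implies that every positive-mass type satisfies $t(\bZ)=h_t(Z_{j(t)})$ for a single coordinate $j(t)$, or is constant for always- and never-takers. Conditioning on the type, Assumption~\ref{ass:joint_indep} makes $(Y(0),\tau)$ independent of $\bZ$ with the instrument law unchanged. Assumption~\ref{ass:diag_instruments} then makes the coordinates of $\bZ$ mutually independent. The integrand $e_\ell e_k$ is therefore a function of $(Y(0),\tau,Z_{j(t)})$ times $\tilde Z_\ell\tilde Z_k$.

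Because $\ell\neq k$, $j(t)$ equals at most one of them, say not $k$. Then $\tilde Z_k$ is independent of $(Y(0),\tau,Z_{j(t)},Z_\ell)$ and has mean zero, so the conditional expectation vanishes. Summing over types with weights $\theta_t$ gives $\Gamma^{\Wald}_{\ell k}(P_0)=0$.

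The remaining step is checking that the centering constants and the $\Wald_\ell$ scalars do not break the argument. They only enter $e_\ell$ and $e_k$ as functions of $(Y(0),\tau,Z_{j(t)})$, which leaves the factorization intact. Moments are finite by Assumption~\ref{ass:sampling}(d), so Fubini applies.
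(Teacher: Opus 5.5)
Your proposal is correct and takes essentially the same approach as the paper. The paper proves the lemma by citing the type-by-type argument of Lemma~\ref{lem:omega-diagonal}, with the cross product $\tilde\epsilon_{i,\ell}\tilde\epsilon_{i,k}$ of the instrument-specific Wald residuals in place of the squared residual. You write that argument out in full, including the within-type mutual independence of $(Y(0),\tau)$ and the instrument coordinates and the mean-zero centered factor $Z_k-p^Z_k$.
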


\begin{proof}
For $\ell \neq k$, $\Gamma^{\Wald}_{\ell k}$ involves $\E[\tilde\epsilon_{i,\ell}\,\tilde\epsilon_{i,k}\,(Z_{\ell i} - p^Z_\ell)(Z_{ki} - p^Z_k)]$ with the instrument-specific Wald residuals $\tilde{\epsilon}_{i,\ell} = Y_i - \Wald_\ell D_i - c_\ell$. Since, conditional on any type $t$, the residual product $\tilde\epsilon_{i,\ell}\tilde\epsilon_{i,k}$ depends on the instruments only through the type's single responding coordinate $Z_{j(t)i}$ (Assumption~\ref{ass:diag_nonoverlap}), this is the cross-instrument moment treated by the group-partition argument of Lemma~\ref{lem:omega-diagonal}, with $\tilde\epsilon_{i,\ell}\tilde\epsilon_{i,k}$ in place of the common squared residual; that argument zeroes the moment under Assumptions~\ref{ass:diag_instruments}--\ref{ass:diag_nonoverlap}.  
\end{proof}
\section{Latent Index, Threshold Types, and the PRTE Projection}\label{app:latent_prte}

\subsection{Threshold-type support and endpoint stability}

\begin{lemma}[Ordered threshold support and endpoint stability]\label{lem:threshold-types-equiv}
For a threshold-indexed specification over intervals of the realized propensity range, with $z^{(k)}$ a support cell attaining the interior endpoint $p_k(P_0)=p(z^{(k)})(P_0)$, and under Assumptions~\ref{ass:latent_index} and~\ref{ass:monotonicity}, the following are equivalent on a neighborhood of $P_0$: (a) every interval entering the specification has positive length, and each interior endpoint is attained by a unique cell $z^{(k)}$; (b) the propensity endpoints are strictly separated, with locally constant ordering and cell labels.
\end{lemma}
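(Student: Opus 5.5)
The plan is to read the lemma as an equivalence between two descriptions of the same finite configuration of cell propensities, and to prove both directions at $P_0$ first. Openness of strict inequalities and continuity of the cell propensities $p(z)(P)$ will then extend them to a neighborhood. Throughout, Lemma~\ref{lem:type_interval} supplies the dictionary: under Assumption~\ref{ass:latent_index}, the threshold types are indexed by the distinct ordered values $p_1<\cdots<p_K$ of $\{p(z):z\in\supp(\bZ)\}$, and $\theta_{t_k}=p_{k+1}-p_k$. Positive interval length is therefore the same thing as strict separation of consecutive endpoints. I will also use that $\supp(\bZ)\subseteq\{0,1\}^L$ is finite and that each $p(z)(P)=\E_P[D_i\mid\bZ_i=z]$ is continuous in $P$ whenever $\mu_z(P_0)>0$.

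For (a)$\Rightarrow$(b), start at $P_0$. Positive length of every interval in the specification gives $p_k(P_0)<p_{k+1}(P_0)$ for the relevant endpoints. Uniqueness of the attaining cell $z^{(k)}$ means every other support cell $z$ has $p(z)(P_0)\neq p_k(P_0)$. Let $\epsilon$ be the minimum, over all pairs of cells with distinct propensities at $P_0$, of the gap between those propensities; this is a finite minimum of positive numbers, so $\epsilon>0$. Continuity then gives a neighborhood on which each $|p(z)(P)-p(z)(P_0)|<\epsilon/3$. On that neighborhood no two cells with distinct values at $P_0$ can cross or merge, so the ordering of the endpoint cells is preserved and the same cells $z^{(k)}$ attain the endpoints. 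This yields locally constant ordering and cell labels with strict separation.

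For (b)$\Rightarrow$(a), strict separation at each $P$ in the neighborhood gives $p_{k+1}(P)-p_k(P)>0$, which is positive length, i.e.\ $\theta_{t_k}(P)>0$ by Lemma~\ref{lem:type_interval}. Locally constant cell labels mean each interior endpoint is $p(z^{(k)})(P)$ for a fixed cell. Uniqueness then needs an argument: suppose some other cell $z'$ ties with $z^{(k)}$ at $P_0$. Then I would show that, along perturbations of the cell propensities within $\PLATE$, the two values can separate in either order, so the labeling, or which cell attains the endpoint, is not locally constant. This contradicts (b).

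The main obstacle is this last tie-breaking step. Whether a tie can actually be split within $\PLATE$ depends on the monotonicity constraints (Assumption~\ref{ass:monotonicity}) linking comparable cells. I would handle it in two cases. If the tied cells are comparable in the instrument partial order, then Assumption~\ref{ass:latent_index} together with coordinatewise monotonicity still lets the larger cell's propensity be raised slightly, which breaks the tie in one direction. If they are incomparable, I would perturb the type probabilities $\theta_r$ as in Definition~\ref{def:late-tangent}(ii) to move either value. If neither case produces admissible perturbations, I would fall back on reading ``cell labels locally constant'' as including the requirement that the attaining cell be single-valued, which makes uniqueness part of (b) by definition.
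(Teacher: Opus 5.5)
Your skeleton is the paper's. For (a)$\Rightarrow$(b), the paper also just notes that positive interval lengths are strict inequalities among finitely many cell propensities. These persist on a neighborhood because each $p(z)(P)$ is a smooth functional on a positive-mass cell, and your minimum gap $\epsilon$ makes this explicit. One caveat: $\epsilon$ only protects pairs that are distinct at $P_0$. For a proper subcollection of intervals, a tie among cells outside the specification may split nearby and shift the global index $k$; what stays locally constant is the attaining cells $z^{(k)}$ and their order. For the full PRTE partition, (a) makes $p$ one-to-one on $\supp(\bZ)$ and the issue disappears. For the converse, the paper constructs no perturbation. It simply asserts that an adjacent equality or an unresolved label tie at $P_0$ leaves no locally constant ordered labeling, i.e.\ it reads single-valuedness of $z^{(k)}$ into (b). That is your fallback, and with it your argument coincides with the paper's.

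The perturbation route you propose first, however, contains a step that fails: the $\theta_r$ perturbation for incomparable cells cannot split a tie. Under Assumption~\ref{ass:latent_index}, every $t\in\mathcal{T}_+(P_0)$ is a threshold type $t_k(z)=\ind\{p(z)(P_0)\ge p_{k+1}(P_0)\}$. So if $p(z)(P_0)=p(z')(P_0)$, then $t(z)=t(z')$ for all such $t$. By joint independence, $p(z)(P)-p(z')(P)=\sum_t\theta_t(P)\{t(z)-t(z')\}=0$ for every $P$ reached by moving $\theta$ inside the simplex on $\mathcal{T}_+(P_0)$. Perturbing $\theta_r$ as in Definition~\ref{def:late-tangent}(ii) therefore moves the two tied values together and never separates them. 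The instrument-mass and outcome-density coordinates (i) and (iii) leave every $p(z)$ unchanged. Hence, along the paper's LATE-compatible charts a tie is itself locally constant, and uniqueness in (a) cannot be extracted from local constancy by perturbing within those charts.

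A genuine perturbation argument needs neighborhoods in $\PLATE$ that put mass on a type outside $\mathcal{T}_+(P_0)$. Replace $p$ by $\tilde p(w)=\min\{p(w)+\epsilon\,\ind\{w\ge z'\},1\}$. This map is still coordinatewise nondecreasing and is generated by the same $U_i$ via $D_i(w)=\ind\{\tilde p(w)\ge U_i\}$, an $O(\epsilon)$ change in the law. It gives $\tilde p(z')>\tilde p(z)$ whenever $z\not\ge z'$ and the tied value is interior. The construction covers incomparable cells, in either order, and comparable cells $z<z'$, in the only order monotonicity permits. One order suffices, because the split changes the number of distinct endpoints. Use this single up-set construction for both of your cases and drop the $\theta_r$ route.
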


\begin{proof}
Under Assumption~\ref{ass:monotonicity}, the threshold functions $t_k(z) = \ind\{p(z) \geq p_{k+1}\}$ are monotone-admissible because $p(\cdot)$ is coordinate-wise non-decreasing. For the complete ordered-threshold partition, positive interval lengths mean $\theta_{t_k}=p_{k+1}-p_k>0$ for $k=0,\ldots,K$ with $p_0=0$ and $p_{K+1}=1$, hence $0<p_1<\cdots<p_K<1$. The same strict-inequality argument applies to any displayed subcollection of threshold intervals. Because each endpoint is a smooth cell-propensity functional on positive-mass cells, strict inequalities at $P_0$ persist on a neighborhood; conversely, if an adjacent endpoint equality or unresolved label tie occurs at $P_0$, no locally constant ordered labeling exists for that interval.  
\end{proof}

\subsection{The PRTE projection equivalence}

\begin{lemma}[Full-domain $L^2$-PRTE equals weighted-$\ell^2$ projection under $\kappa = \theta^{-1}$]\label{lem:prte-equivalence}
Let $w^P(\cdot) = \psi^{\mathrm{PRTE}}(\cdot\,;P)$ denote the policy-shift density of~\eqref{eq:prte_integral}; then $\int_{\mathcal{R}_{t_k}} w^P(u)\,du = \psi^{\star,\mathrm{PRTE}}_{t_k}(P)$ for every $k$. Under Assumptions~\ref{ass:joint_indep}--\ref{ass:relevance}, Assumption~\ref{ass:latent_index}, and Assumption~\ref{ass:local-threshold-support}, for any $\bomega \in \Delta^{L-1}$,
$
\int_0^1 \bigl[\psi^{\MTE}(u; \bomega; P) - w^P(u)\bigr]^2\,du
=
\sum_{k=0}^{K} \theta_{t_k}(P)^{-1}
\bigl[
\textstyle\sum_\ell \omega_\ell\, \alpha_{\ell,t_k}(P)
- \psi^{\star,\mathrm{PRTE}}_{t_k}(P)
\bigr]^2 + R(P),
$
where $R(P) = \sum_{k=0}^{K}\int_{\mathcal{R}_{t_k}} (w^P(u) - \bar w^P_{t_k}(P))^2\,du$ measures the within-interval variation of $w^P$ around its interval-mean $\bar w^P_{t_k}(P) = \theta_{t_k}(P)^{-1}\psi^{\star,\mathrm{PRTE}}_{t_k}(P)$.
\end{lemma}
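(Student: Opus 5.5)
Under Assumption~\ref{ass:latent_index}, the plan is to show that each instrument weight function $\alpha^{\MTE}_\ell(\cdot;P)$ is constant on every threshold interval $\mathcal{R}_{t_k}=(p_k,p_{k+1}]$, and then to run an orthogonal (Pythagorean) decomposition interval by interval.

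\emph{Step 1: piecewise constancy.} The numerator of $\alpha^{\MTE}_\ell(u;P)$ in~\eqref{eq:wald_mte} is $\mathbb{P}(p(\bZ)\geq u\mid Z_\ell=1)-\mathbb{P}(p(\bZ)\geq u\mid Z_\ell=0)$. Since $p(\bZ)$ takes only the values $p_1<\cdots<p_K$, the event $\{p(\bZ)\geq u\}$ equals $\{p(\bZ)\geq p_{k+1}\}$ for every $u\in(p_k,p_{k+1}]$. Hence $\alpha^{\MTE}_\ell$ is constant on each $\mathcal{R}_{t_k}$; call the constant $a_{\ell,k}$. The paper already records, after~\eqref{eq:wald_mte}, that $\int_{\mathcal{R}_{t_k}}\alpha^{\MTE}_\ell(u)\,du=\alpha_{\ell,t_k}$. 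Lemma~\ref{lem:type_interval} gives $|\mathcal{R}_{t_k}|=\theta_{t_k}>0$, and Assumption~\ref{ass:local-threshold-support} makes this positive for every $k$. Therefore $a_{\ell,k}=\alpha_{\ell,t_k}/\theta_{t_k}$. I would re-verify the interval-mass identity directly from the threshold-type form $t_k(z)=\ind\{p(z)\geq p_{k+1}\}$ together with Proposition~\ref{prop:wald_decomp}. Both sides equal $\theta_{t_k}\varphi_{\ell,t_k}/\pi_\ell$, using $\pi_\ell=\E[p(\bZ)\mid Z_\ell=1]-\E[p(\bZ)\mid Z_\ell=0]$. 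By linearity, $\psi^{\MTE}(u;\bomega;P)=\theta_{t_k}^{-1}\sum_\ell\omega_\ell\alpha_{\ell,t_k}$ on $\mathcal{R}_{t_k}$.

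\emph{Step 2: orthogonal split on each interval.} On $\mathcal{R}_{t_k}$, write $\psi^{\MTE}-w^P=(\psi^{\MTE}-\bar w^P_{t_k})+(\bar w^P_{t_k}-w^P)$. The first term is constant on the interval. The second term integrates to zero over the interval, because $\int_{\mathcal{R}_{t_k}}w^P=\psi^{\star,\mathrm{PRTE}}_{t_k}=\theta_{t_k}\bar w^P_{t_k}$ by~\eqref{eq:prte_integral}. So the cross term vanishes, and the squared integral over $\mathcal{R}_{t_k}$ equals
\[
\theta_{t_k}\Bigl(\theta_{t_k}^{-1}\textstyle\sum_\ell\omega_\ell\alpha_{\ell,t_k}-\theta_{t_k}^{-1}\psi^{\star,\mathrm{PRTE}}_{t_k}\Bigr)^2+\int_{\mathcal{R}_{t_k}}(w^P-\bar w^P_{t_k})^2\,du .
\]
The first term is $\theta_{t_k}^{-1}\bigl[\sum_\ell\omega_\ell\alpha_{\ell,t_k}-\psi^{\star,\mathrm{PRTE}}_{t_k}\bigr]^2$. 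Summing over $k=0,\ldots,K$, and using that the intervals partition $(0,1]$ up to a null set (Lemma~\ref{lem:type_interval}), gives the displayed identity, with $R(P)$ independent of $\bomega$.

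\emph{Main obstacle.} The delicate point is Step 1. Piecewise constancy needs the propensity values to be exactly the interval endpoints, with the half-open conventions matching the weak inequality $p(\bZ)\geq u$, including the boundary intervals $t_0$ and $t_K$. It also needs $w^P$ to be square-integrable, so that $R(P)$ is finite. I would handle the boundary intervals by noting that they carry zero $\alpha^{\MTE}_\ell$ mass: always- and never-takers do not respond to the instruments.
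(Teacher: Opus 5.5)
Your proposal is correct and follows the paper's proof. Both arguments show that $\alpha^{\MTE}_\ell(\cdot;P)$ is a step function equal to $\theta_{t_k}^{-1}\alpha_{\ell,t_k}$ on each $\mathcal{R}_{t_k}$, so $\psi^{\MTE}(\cdot;\bomega;P)$ is constant there; they then use that $w^P-\bar w^P_{t_k}$ integrates to zero on $\mathcal{R}_{t_k}$ to remove the cross term, and sum over $k$. Your argument via $\{p(\bZ)\geq u\}=\{p(\bZ)\geq p_{k+1}\}$ for $u\in(p_k,p_{k+1}]$ just supplies the step-function representation that the paper cites without proof.
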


\begin{proof}
Under Assumptions~\ref{ass:latent_index} and~\ref{ass:local-threshold-support}, the ordered-threshold partition $\{\mathcal{R}_{t_k}(P)=(p_k(P),p_{k+1}(P)]:k=0,\ldots,K\}$ covers $[0,1]$ up to endpoints with positive lengths $\theta_{t_k}(P)=p_{k+1}(P)-p_k(P)$ (Lemma~\ref{lem:type_interval}), and each MTE-weight function is the step function $\alpha^{\MTE}_\ell(u;P)=\sum_{k=0}^{K}\theta_{t_k}(P)^{-1}\alpha_{\ell,t_k}(P)\,\ind\{u\in\mathcal{R}_{t_k}\}$ with $\int_{\mathcal{R}_{t_k}}\alpha^{\MTE}_\ell(u;P)\,du=\alpha_{\ell,t_k}(P)$ (the step-function representation of~\eqref{eq:wald_mte} under latent index). Hence $\psi^{\MTE}(u;\bomega;P)=\sum_\ell\omega_\ell\,\alpha^{\MTE}_\ell(u;P)$ is constant on each $\mathcal{R}_{t_k}$, equal to $\theta_{t_k}(P)^{-1}\psi_{t_k}(\bomega;P)$ with $\psi_{t_k}(\bomega;P)=\sum_\ell\omega_\ell\,\alpha_{\ell,t_k}(P)$. Let $\bar w^P_{t_k}(P)=\theta_{t_k}(P)^{-1}\int_{\mathcal{R}_{t_k}}w^P(u)\,du=\theta_{t_k}(P)^{-1}\psi^{\star,\mathrm{PRTE}}_{t_k}(P)$ be the interval mean of $w^P$. On $\mathcal{R}_{t_k}$ the residual $w^P-\bar w^P_{t_k}$ integrates to zero, hence is $L^2(\mathcal{R}_{t_k})$-orthogonal to the constant $\psi^{\MTE}-\bar w^P_{t_k}$; writing $\psi^{\MTE}-w^P=(\psi^{\MTE}-\bar w^P_{t_k})-(w^P-\bar w^P_{t_k})$ and expanding,
$\int_{\mathcal{R}_{t_k}}[\psi^{\MTE}(u;\bomega;P)-w^P(u)]^2\,du =\theta_{t_k}(P)\bigl(\theta_{t_k}^{-1}\psi_{t_k}(\bomega)-\theta_{t_k}^{-1}\psi^{\star,\mathrm{PRTE}}_{t_k}\bigr)^2 +\int_{\mathcal{R}_{t_k}}(w^P-\bar w^P_{t_k})^2\,du,$
the first term being $\theta_{t_k}(P)^{-1}[\psi_{t_k}(\bomega)-\psi^{\star,\mathrm{PRTE}}_{t_k}]^2$. Summing over $k=0,\ldots,K$ yields the stated decomposition.
\end{proof}

The PRTE projection solves $\bomega^{\dagger}(\bpsi^{\star,\mathrm{PRTE}};P) \;=\; \argmin_{\bomega \in \Delta^{L-1}} \int_0^1 \bigl[\psi^{\MTE}(u; \bomega; P) - w^P(u)\bigr]^2\,du.$
By Lemma~\ref{lem:prte-equivalence}, this $L^2$ objective equals the discrete inverse-mass projection objective plus the constant $R(P)$: $ \int_0^1
  [\psi^{\MTE}(u;\bomega;P)-w^P(u)]^2du
  =
  \omega'\bG^{\mathrm{PRTE}}(P)\omega
  -2\omega'\bc^{\mathrm{PRTE}}(P)
  + C^{\mathrm{PRTE}}(P),
$
where
$
  \bc^{\mathrm{PRTE}}(P)
  =
  \bA_{\mathrm{spec}}(P)\,
  \diag\{\theta_{t_k}(P)^{-1}: k=0,\ldots,K\}\,
  \bpsi^{\star,\mathrm{PRTE}}(P),
$
and $C^{\mathrm{PRTE}}(P)$ collects terms that do not depend on $\omega$, including the within-interval remainder $R(P)$; $\bA_{\mathrm{spec}}(P)$ is the type-restricted compliance-weight matrix of Appendix~\ref{app:regularity}. Assumption~\ref{ass:local-threshold-support} gives the full threshold partition of $[0,1]$; hence the displayed integral is the full $L^2[0,1]$ objective. If $\omega_1,\omega_2\in\Delta^{L-1}$ are distinct, then $v=\omega_1-\omega_2$ satisfies $\ind_L'v=0$ and $v\neq0$. Assumption~\ref{ass:prte-rank} gives $v'\bG^{\mathrm{PRTE}}(P)v>0$; thus the objective is strictly convex on the simplex. Compactness of $\Delta^{L-1}$ gives existence and strict convexity gives uniqueness.

\section{Covariates and Design Implementation}\label{app:covariates}

\subsection{Conditional and marginal RT with covariates}

\begin{assumption}[Conditional LATE conditions]\label{ass:conditional_late}\leavevmode
\begin{enumerate}[label=(\roman*)]
  \item  {Conditional joint independence:} $(Y_i(0), Y_i(1), D_i(\cdot)) \perp\!\!\!\perp \bZ_i \mid X_i$.
  \item  {SUTVA:} $Y_i = D_i Y_i(1) + (1-D_i)Y_i(0)$ and $D_i = D_i(\bZ_i, X_i)$.
  \item  {Conditional monotonicity:} For each instrument $\ell$, all~$i$, and all configurations $\bz_{-\ell} \in \{0,1\}^{L-1}$,
  $D_i(Z_{\ell i} = 1, \bz_{-\ell}, X_i)
    \geq D_i(Z_{\ell i} = 0, \bz_{-\ell}, X_i)$.
  \item  {Conditional relevance:} For each instrument $\ell$, $\E[D_i \mid Z_{\ell i} = 1, X_i = x] \neq \E[D_i \mid Z_{\ell i} = 0, X_i = x]$ on the covariate cells used for the conditional estimand; for pooled FWL estimands it is enough that the aggregated first stage defined below is nonzero.
  \item  {Full saturation:} The first-stage specification is fully saturated in $(Z_{\ell i}, X_i)$.
  \item  {Non-overlapping conditional compliance:} For each instrument $\ell$, $\mathcal{C}_\ell(x) \cap \mathcal{C}_k(x) = \emptyset$ for all $k \neq \ell$ and all $x$.
  \item  {Conditional PRD:} For each $\ell$ and each $x$, $Z_{-\ell}$ is positively regression dependent on $Z_\ell$ conditional on $X_i = x$.
\end{enumerate}
\end{assumption}

\noindent Assumption~\ref{ass:conditional_late} is maintained jointly with the sampling and regularity conditions of Assumption~\ref{ass:sampling}, applied to the augmented observation $(Y_i, D_i, \bZ_i, X_i)$.

\noindent Full saturation (Condition~(v)) supplies the rich covariates and monotonicity-correct first stage required by \citet{BlandholBonneyMogstadTorgovitsky2022}. Together with conditional monotonicity, it makes the fitted values equal to correctly ordered conditional treatment probabilities. The same logic applies to RT because it combines conditional Wald estimands; without saturation, the estimand need not be weakly causal.

\begin{proposition}[RT with covariates]\label{prop:covariates}
Suppose $X_i$ has finite support, or is coarsened into a finite saturated partition, and maintain Conditions~(i)--(v) and~(vii) of Assumption~\ref{ass:conditional_late}.
\begin{enumerate}[label=(\alph*)]
  \item For covariate-specific target weights $\omega_\ell(x)\in\mathrm{int}(\Delta^{L-1})$, conditional RT converges to
  $\beta^{\mathrm{RT}}_{\bomega}(x)=\sum_\ell\omega_\ell(x)\Wald_\ell(x)$.
  Under Condition~(vi) and conditional instrument independence, $\Wald_\ell(x)=\LATE_\ell(x)$ for a single complier type.
  \item If $\omega_\ell$ does not vary with $x$, averaging conditional Walds with cell shares gives
  $\beta^{\mathrm{RT}}_{\bomega}=\sum_\ell\omega_\ell\overline{\Wald}_\ell$, where $\overline{\Wald}_\ell=\E_X[\Wald_\ell(X_i)]$; under the same specialization, $\overline{\Wald}_\ell=\E_X[\LATE_\ell(X_i)]$. This cell-share average is generally distinct from both the raw marginal Wald ratio and the FWL ratio; it equals the latter when $\Cov_X(\gamma_\ell(X),\Wald_\ell(X))=0$. Appendix~\ref{app:fwl_bridge} characterizes the FWL estimand used in the applications.
\end{enumerate}
\end{proposition}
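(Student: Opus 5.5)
Part (a) reduces to applying the unconditional results cell by cell. Fix a covariate cell $x$ in the finite (or coarsened, saturated) support. Conditional on $X_i=x$, Conditions~(i)--(iii) of Assumption~\ref{ass:conditional_late} are exactly Assumptions~\ref{ass:joint_indep}--\ref{ass:realized_exclusion} for the conditional law $P(\cdot\mid X=x)$, and Condition~(iv) gives a nonzero cell first stage. So Proposition~\ref{prop:wald_decomp} applies within the cell and yields $\Wald_\ell(x)=\sum_t \alpha_{\ell,t}(x)\LATE_t(x)$. Condition~(vii) together with Proposition~\ref{prop:prd} makes these inner weights nonnegative. Full saturation (Condition~(v)) means the fitted first stage in cell $x$ is the cell-specific difference in means. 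Hence the sample cell Wald ratio, computed from cell moments, converges in probability to $\Wald_\ell(x)$ by the law of large numbers and the continuous mapping theorem, as in Proposition~\ref{prop:weighted_iv}(b). With $\omega_\ell(x)$ fixed and interior, conditional RT then converges to $\sum_\ell\omega_\ell(x)\Wald_\ell(x)$. For the specialization, conditional mutual independence of the instruments kills the indirect component of Lemma~\ref{lem:di} within the cell, and Condition~(vi) leaves a single complier type per instrument. By the argument preceding Corollary~\ref{cor:2sls_diag}, $\Wald_\ell(x)=\LATE_\ell(x)$.

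For part (b), the estimator averages the conditional Walds with empirical cell shares $\hat\mu_x$. Since $X$ has finitely many cells, $\hat\mu_x\to_p\mu_x$ and each $\widehat\Wald_\ell(x)\to_p\Wald_\ell(x)$. A finite sum of products then converges to $\sum_\ell\omega_\ell\sum_x\mu_x\Wald_\ell(x)=\sum_\ell\omega_\ell\overline\Wald_\ell$. Substituting part (a)'s specialization cell by cell gives $\overline\Wald_\ell=\E_X[\LATE_\ell(X)]$.

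For the comparison with FWL, I would write the FWL ratio after residualizing on saturated $X$ dummies as $\E[\Cov(Y,Z_\ell\mid X)]/\E[\Cov(D,Z_\ell\mid X)]=\E_X[\gamma_\ell(X)\Wald_\ell(X)]/\E_X[\gamma_\ell(X)]$, a $\gamma_\ell$-weighted average. The identity $\E[\gamma\Wald]=\E[\gamma]\E[\Wald]+\Cov_X(\gamma,\Wald)$ then shows the FWL ratio equals $\overline\Wald_\ell$ exactly when $\Cov_X(\gamma_\ell(X),\Wald_\ell(X))=0$. The raw marginal Wald ratio additionally picks up between-cell variation in $\E[Y\mid Z_\ell,X]$ through the dependence of $Z_\ell$ on $X$, so it is generally distinct. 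A two-cell example with $Z_\ell$ correlated with $X$ suffices to show this.

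The main obstacle is careful bookkeeping rather than depth. Two points need care: showing that the FWL residualization on saturated dummies reproduces the within-cell covariances, which is the content deferred to Appendix~\ref{app:fwl_bridge}; and ensuring that Condition~(iv) in the pooled case still delivers a nonzero denominator $\E_X[\gamma_\ell(X)]$. If cells with $\gamma_\ell(x)=0$ are allowed, I would restrict the cell-share average to cells satisfying conditional relevance.
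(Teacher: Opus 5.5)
Your proposal is correct and follows essentially the same route as the paper: the Wald decomposition applied cell by cell, conditional PRD for nonnegative weights, and the within-cell law of large numbers for (a); the diagonal specialization for $\Wald_\ell(x)=\LATE_\ell(x)$; and joint convergence of the finitely many cell shares for (b). Your explicit $\gamma_\ell$-weighted FWL comparison via $\E[\gamma_\ell\Wald_\ell]=\E[\gamma_\ell]\E[\Wald_\ell]+\Cov_X(\gamma_\ell,\Wald_\ell)$ is the characterization the paper gives in its FWL-bridge appendix (restricted to cells with $\gamma_\ell(x)\neq0$, as you suggest), rather than in the proof itself.
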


\begin{proof}
Fix a positive-probability cell $x$ satisfying conditional relevance. Conditional joint independence, SUTVA, and monotonicity give the within-cell decomposition
$\Wald_\ell(x) = \frac{\Cov(Y_i,Z_{\ell i}\mid X_i=x)}{\Cov(D_i,Z_{\ell i}\mid X_i=x)} = \sum_{t\in\mathcal{T}(x)}\alpha_{\ell,t}(x)\LATE_t(x).$
Conditional PRD makes the weights nonnegative, and the within-cell law of large numbers proves part~(a). Under Condition~(vi), each complier group contains one type; conditional instrument independence removes the indirect channel, so $\Wald_\ell(x)=\LATE_\ell(x)$. For part~(b), finite support gives $n_x/n\to_p\mathbb{P}(X=x)$ jointly across cells and hence
$\sum_x \frac{n_x}{n}\sum_\ell\omega_\ell\widehat\Wald_\ell(x) \;\to_p\; \sum_\ell\omega_\ell\sum_x \mathbb{P}(X=x)\Wald_\ell(x) =\sum_\ell\omega_\ell\overline{\Wald}_\ell.$
The same non-overlap and independence specialization gives $\overline{\Wald}_\ell=\E_X[\LATE_\ell(X)]$.
\end{proof}

\subsection{FWL residualization and conditional binary instruments}\label{app:fwl_bridge}

The binary-instrument theory applies within covariate cells, not to the residualized variables. For a raw binary instrument $Z_\ell$ and saturated $X$, the FWL moment $\E[(Y-\beta D)(Z_\ell-\E[Z_\ell\mid X])]=\E_X[\Cov(Y,Z_\ell\mid X)-\beta\Cov(D,Z_\ell\mid X)]$ gives the residualized Wald ratio
$
\Wald_\ell^{\mathrm{FWL}}=\frac{\E_X[\Cov(Y,Z_\ell\mid X)]}{\gamma_\ell^{\mathrm{FWL}}}=\sum_x\frac{\mathbb{P}(X=x)\,\gamma_\ell(x)}{\gamma_\ell^{\mathrm{FWL}}}\,\Wald_\ell(x),
\gamma_\ell^{\mathrm{FWL}}=\sum_x\mathbb{P}(X=x)\,\gamma_\ell(x),
$
a first-stage-weighted average of conditional raw-binary Walds over the cells $x \in \mathcal{X}_\ell^+ = \{x : \gamma_\ell(x) \neq 0\}$ with within-cell instrument variation (cells with $\gamma_\ell(x) = 0$ contribute $\Cov(Y, Z_\ell \mid X{=}x) = 0$ to the numerator and carry zero weight), equal to $\E_X[\Wald_\ell(X)]$ exactly when $\Cov_X\bigl(\gamma_\ell(X), \Wald_\ell(X)\bigr) = 0$, for instance when $\gamma_\ell(x)$ is constant across cells or the conditional Walds are. Under the conditional-on-$X_i$ analogs of joint independence, exclusion, relevance, monotonicity, and PRD (Assumption~\ref{ass:conditional_late}(i)--(v),(vii)), the same weights aggregate the conditional compliance shares: $\alpha^{\mathrm{FWL}}_{\ell,(x,t)}=\mathbb{P}(X=x)\gamma_\ell(x)/\gamma_\ell^{\mathrm{FWL}}\cdot\alpha_{\ell,t}(x)$, and every Section~\ref{sec:specifications} projection formula applies after replacing $(\bA,\bpsi^\star,\bkappa,\bGamma^{\Wald})$ by their stacked analogs over $\mathcal{T}^X=\{(x,t):t\in\mathcal{T}(x)\}$. This is the object used by the patent cumulative-threshold instruments after art-unit--year residualization. Without full saturation, the FWL ratio confounds the heterogeneity penalty with the first-stage misspecification of \citet{BlandholBonneyMogstadTorgovitsky2022}; saturation isolates the penalty as the sole mechanism.

\subsection{Monotonicity and the separate-threshold first stage}\label{app:cumulative_monotone}

This subsection shows that nested monotonicity implies a nonnegative separate-threshold first stage in any cumulative-threshold design. Let $G_i \in \{1, \ldots, Q\}$ denote a discrete ordered index, $Z_{\ell i} = \ind\{G_i \geq \ell + 1\}$ the $\ell$th cumulative threshold for $\ell = 1, \ldots, Q - 1$, and $X_i$ a vector of pre-treatment conditioning variables. Maintain SUTVA, $D_i = D_i(G_i)$ a.s., and suppose that, conditional on $X_i = x$, $G_i$ is independent of the potential-treatment profile $\{D_i(1), \ldots, D_i(Q)\}$ and that nested monotonicity holds: $D_i(g+1) \geq D_i(g)$ a.s.\ for every $g \in \{1, \ldots, Q-1\}$. Write $m_g(x) = \E[D_i(g) \mid X_i = x]$. For a single threshold $\ell$, on cells with $0 < \mathbb{P}(Z_{\ell i} = 1 \mid X_i = x) < 1$,
$
\delta_\ell(x)
= \E[D_i \mid Z_{\ell i} = 1, X_i = x] - \E[D_i \mid Z_{\ell i} = 0, X_i = x]
= \sum_{g = \ell+1}^{Q} \sum_{h = 1}^{\ell} a_{g\ell}(x)\, b_{h\ell}(x)\, \{m_g(x) - m_h(x)\} \geq 0,
$
where $a_{g\ell}(x) = \mathbb{P}(G_i = g \mid G_i \geq \ell + 1, X_i = x)$ and $b_{h\ell}(x) = \mathbb{P}(G_i = h \mid G_i \leq \ell, X_i = x)$ are the conditional level shares above and below threshold $\ell$; each summand is nonnegative under nested monotonicity, since $g \geq \ell + 1 > \ell \geq h$ implies $m_g(x) \geq m_h(x)$. By the Frisch--Waugh--Lovell theorem, the fixed-effect coefficient on $Z_{\ell i}$ from a regression of $D_i$ on $Z_{\ell i}$ that absorbs $X_i$ is
$\beta_\ell^{\mathrm{FE}} = \frac{\E[\Var(Z_{\ell i} \mid X_i)\, \delta_\ell(X_i)]} {\E[\Var(Z_{\ell i} \mid X_i)]} \geq 0$
whenever the denominator is positive; cells without threshold variation receive zero regression weight. The same argument, applied to any pre-treatment subsample on which conditional independence and nested monotonicity are maintained, yields a nonnegative within-subsample coefficient.

The joint first stage on the full vector $\bZ_i = (Z_{1 i}, \ldots, Z_{Q-1, i})'$ admits a parallel statement, with a sign-preservation property that depends on whether $X_i$ is absorbed. Without absorbing $X_i$, the regression of $D_i$ on $\bZ_i$ identifies the unconditional adjacent increments $\E[D_i \mid G_i = \ell + 1] - \E[D_i \mid G_i = \ell]$, because the cumulative indicators span the space of $G_i$-measurable functions; under unconditional independence of $G_i$ and the potential-treatment profile (a strengthening of the conditional independence maintained above, needed because the unabsorbed regression does not condition on $X_i$), these equal $\E[m_{\ell+1}(X_i)] - \E[m_\ell(X_i)] \geq 0$ by nested monotonicity. With $X_i$ absorbed, the within-cell joint regression at $X_i = x$ identifies $m_{\ell+1}(x) - m_\ell(x) \geq 0$ for every cell at which all $Q$ leniency groups have positive mass (full within-cell support of $G_i$), and by the Frisch--Waugh--Lovell theorem the population FE coefficient vector aggregates these as
$\boldsymbol\beta^{\mathrm{FE}}_{\mathrm{joint}} = \E[\Var(\bZ_i \mid X_i)]^{-1}\, \E[\Var(\bZ_i \mid X_i)\, \boldsymbol\beta(X_i)],$
where $\boldsymbol\beta(X_i) = (m_2(X_i) - m_1(X_i), \ldots, m_Q(X_i) - m_{Q-1}(X_i))'$. Each component of $\boldsymbol\beta(X_i)$ is nonnegative under nested monotonicity, but the matrix inverse $\E[\Var(\bZ_i \mid X_i)]^{-1}$ has off-diagonal entries of either sign, and the matrix-weighted aggregation need not preserve componentwise nonnegativity. The joint first stage with $X_i$ absorbed is therefore a supportive sign check rather than a formal implication of nested monotonicity; the separate-threshold first stages established above remain the proof-based diagnostic.

\section{Pathwise Differentiability, Tangent Compatibility, and Uniform Coverage}\label{app:tc-primitive-supp}

\subsection{Influence functions of the primitives}\label{app:pathwise-diff}

\subsubsection{Wald estimands}
\begin{lemma}[IF of a population covariance]\label{lem:cov-if}
For random variables $U, V \in L^2(P_0)$ with $\E_{P_0}[U^2 V^2] < \infty$ (in particular when one of $U,V$ is bounded and the other lies in $L^2(P_0)$, as in the Wald application below), the covariance functional $\rho_{UV}(P) = \Cov_P(U, V) = \E[UV] - \E[U] \E[V]$ is pathwise differentiable at every $P_0$ with Riesz representer
$\phi_{\rho_{UV}}(O; P_0) = \bigl(U - \mu_U(P_0)\bigr)\bigl(V - \mu_V(P_0)\bigr) - \rho_{UV}(P_0),$
where $\mu_U(P_0) = \E[U]$, $\mu_V(P_0) = \E[V]$.
\end{lemma}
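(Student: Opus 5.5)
The plan is a direct pathwise computation along the one-parameter submodels that generate $\TLATE$, followed by the product rule. Write $\rho_{UV}(P)=\E_P[UV]-\E_P[U]\E_P[V]$. Each term is a linear functional $P\mapsto\E_P[f]$ with $f\in\{UV,U,V\}$. So it suffices to show that, along a submodel $\{P_s\}$ with score $h$,
\[
\frac{d}{ds}\E_{P_s}[f]\Big|_{s=0}=\E_{P_0}\bigl[f\,h\bigr]=\E_{P_0}\bigl[(f-\E_{P_0}f)\,h\bigr].
\]
The second equality holds because $\E_{P_0}h=0$. Given this, the product rule gives the derivative $\E[UVh]-\mu_V\E[Uh]-\mu_U\E[Vh]$. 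Subtracting the constant $\rho_{UV}(P_0)$, which is orthogonal to $h$, this equals $\langle (U-\mu_U)(V-\mu_V)-\rho_{UV}(P_0),\,h\rangle_{L^2(P_0)}$. That is the displayed representer.

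The representer lies in $L^2_0(P_0)$ for the following reasons. It has mean zero by construction. It is square integrable because $(U-\mu_U)^2(V-\mu_V)^2$ expands into terms bounded by $U^2V^2$, $U^2$, $V^2$ and constants. These terms are integrable under $\E[U^2V^2]<\infty$ and $U,V\in L^2$. When one of $U,V$ is bounded, the condition $\E[U^2V^2]<\infty$ is automatic.

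The main obstacle is the differentiation step for unbounded $f$. Quadratic-mean differentiability alone yields $\frac{d}{ds}\E_{P_s}f=\E[fh]$ only for bounded $f$, or under extra uniform integrability. I would first establish the identity for the bounded-direction submodels of Definition~\ref{def:late-tangent}, in the exponentially tilted form used in the proof of Proposition~\ref{prop:eif-decomp}. There $dP_s/dP_0=1+s\,h+r_s$ with $|r_s|\le Cs^2$ uniformly, because $h$ is bounded and the normalizing constants are smooth in $s$. Then $\E_{P_s}f-\E_{P_0}f=s\E[fh]+O(s^2)\E|f|$, and it only remains to check that $f\in L^1(P_0)$. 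This follows from Cauchy--Schwarz for $f=UV$ and trivially for $f=U,V$.

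Linearity of the derivative in $h$ then extends it to the linear span $\TLATE$. Finally, the functional $h\mapsto\langle\phi_{\rho_{UV}},h\rangle$ is $L^2(P_0)$-continuous because $\phi_{\rho_{UV}}\in L^2$. So it extends uniquely to $\TLATEbar$ and identifies $\phi_{\rho_{UV}}$ as the Riesz representer. If a general QMD path is wanted instead, I would replace the bounded-density-ratio argument by a truncation $f\ind\{|f|\le M\}$. The truncation error would be controlled uniformly in small $s$ via the Hellinger bound $\bigl|\E_{P_s}g-\E_{P_0}g\bigr|\lesssim \|g\|_{L^2(P_s+P_0)}\,H(P_s,P_0)$, and this is the step I expect to need the most care.
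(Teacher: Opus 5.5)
Your proposal is correct and follows the paper's proof: apply the product rule to $\E_P[UV]-\E_P[U]\,\E_P[V]$, then recentre the derivative using $\E_{P_0}[h]=0$ to obtain the mean-zero representer $(U-\mu_U)(V-\mu_V)-\rho_{UV}(P_0)$. Your justification of $\frac{d}{ds}\E_{P_s}[f]=\E_{P_0}[fh]$ for unbounded $f$ through the bounded exponential-tilt charts fills in a step the paper's two-line argument takes for granted, and restricting to those paths is genuinely needed, not just convenient: along general QMD paths, a contamination of $P_0$ of Hellinger order $o(s)$ placed far in the tail can make $s^{-1}\{\E_{P_s}f-\E_{P_0}f\}$ diverge, so your optional truncation extension would need extra uniform-integrability control along the path.
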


\begin{proof}
Along $P_{s,h}$ with $h \in L^2_0(P_0)$, $\dot\rho_{UV}(h) = \E[(UV - U\mu_V - V\mu_U) h]$ by the product rule on $\rho_{UV} = \E[UV] - \E[U]\E[V]$. Since $h$ is mean-zero, any constant added to the bracketed function leaves the inner product unchanged; the mean-zero choice is $(U-\mu_U)(V-\mu_V) - \rho_{UV}$, with mean $\rho_{UV} - \rho_{UV} = 0 \in L^2_0(P_0)$.
\end{proof}

\begin{lemma}[Pathwise differentiability of $\Wald_\ell$]\label{lem:wald-pathwise}
Under Assumptions~\ref{ass:joint_indep}--\ref{ass:relevance} and Assumption~\ref{ass:sampling}, $\Wald_\ell(P) = \Cov_P(Y_i,Z_{\ell i})/\gamma_\ell(P)$ is pathwise differentiable at every $P_0 \in \PLATE$ with Riesz representer
$
\phi_\ell(O; P_0) = \frac{\bigl[(Y - \mu_Y) - \Wald_\ell(P_0)(D - \mu_D)\bigr](Z_\ell - p^Z_\ell)}{\gamma_\ell(P_0)},
$
with $\mu_Y = \E[Y]$, $\mu_D = \E[D]$, $p^Z_\ell = \E[Z_\ell]$. The covariance entry is $\Gamma^{\Wald}_{\ell k}(P_0)
=
\frac{1}{\gamma_\ell(P_0)\gamma_k(P_0)}
\E\Bigl[
\bigl\{Y-\mu_Y-\Wald_\ell(D-\mu_D)\bigr\}
\times
\bigl\{Y-\mu_Y-\Wald_k(D-\mu_D)\bigr\}
(Z_\ell-p^Z_\ell)(Z_k-p^Z_k)
\Bigr].
$
\end{lemma}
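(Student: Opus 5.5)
The plan is to write $\Wald_\ell(P)$ as a ratio of two population covariances, $\Wald_\ell(P)=\rho_{YZ_\ell}(P)/\gamma_\ell(P)$, with $\rho_{YZ_\ell}(P)=\Cov_P(Y_i,Z_{\ell i})$ and $\gamma_\ell(P)=\Cov_P(D_i,Z_{\ell i})$. Both factors are covariances of an $L^2$ variable with the bounded variable $Z_\ell\in\{0,1\}$, so Lemma~\ref{lem:cov-if} applies to each:
\[
\phi_{\rho}(O)=(Y-\mu_Y)(Z_\ell-p^Z_\ell)-\rho_{YZ_\ell}(P_0),
\qquad
\phi_{\gamma}(O)=(D-\mu_D)(Z_\ell-p^Z_\ell)-\gamma_\ell(P_0).
\]
The moment condition $\E[Y^2]<\infty$ comes from Assumption~\ref{ass:sampling}(d), and the cross moment $\E[(Y-\mu_Y)^2(Z_\ell-p^Z_\ell)^2]$ is finite because $Z_\ell$ is bounded.

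Next I apply the quotient rule for pathwise derivatives at $P_0$, where $\gamma_\ell(P_0)>0$ by Assumption~\ref{ass:sampling}(e). Along any submodel with score $h$,
\[
\dot\Wald_\ell(h)=\frac{\dot\rho(h)-\Wald_\ell(P_0)\,\dot\gamma(h)}{\gamma_\ell(P_0)},
\]
so the Riesz representer is $(\phi_\rho-\Wald_\ell\phi_\gamma)/\gamma_\ell$. The constant terms $-\rho_{YZ_\ell}+\Wald_\ell\gamma_\ell$ cancel, since $\rho_{YZ_\ell}=\Wald_\ell\gamma_\ell$. What remains is exactly
\[
\phi_\ell(O;P_0)=\frac{\bigl[(Y-\mu_Y)-\Wald_\ell(P_0)(D-\mu_D)\bigr](Z_\ell-p^Z_\ell)}{\gamma_\ell(P_0)}.
\]
This function is mean zero because its expectation is $(\rho-\Wald_\ell\gamma_\ell)/\gamma_\ell=0$, and it lies in $L^2_0(P_0)$.

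The only point needing care is that the quotient rule requires the denominator path to stay away from zero. Pathwise differentiability of $\gamma_\ell$ gives continuity along the path, so $\gamma_\ell(P_s)\to\gamma_\ell(P_0)>0$, and the difference quotient of the ratio converges. The LATE Assumptions~\ref{ass:joint_indep}--\ref{ass:relevance} play no role beyond guaranteeing relevance at $P_0$; the derivative is an ambient (nonparametric) one, valid along every QMD path with $h\in L^2_0(P_0)$.

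Finally, the covariance entry $\Gamma^{\Wald}_{\ell k}(P_0)=\E[\phi_\ell\phi_k]$ follows by direct multiplication of the two representers. Because both are mean zero, no centering correction appears, and the product gives the displayed formula with the factor $1/(\gamma_\ell\gamma_k)$. I expect no real obstacle. The main thing to check is the integrability needed by Lemma~\ref{lem:cov-if}, which the boundedness of $Z_\ell$ and $D$ secures.
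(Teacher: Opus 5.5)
Your proposal is correct and follows the paper's proof essentially step for step: you apply Lemma~\ref{lem:cov-if} to $\Cov(Y,Z_\ell)$ and $\gamma_\ell$, using boundedness of $Z_\ell$ and $D$ together with $\E[Y^2]<\infty$, then take the quotient rule and observe that the constant $\Cov(Y,Z_\ell)-\Wald_\ell\gamma_\ell$ vanishes. Your explicit checks are slightly more careful than the paper's wording: you verify that the denominator stays away from zero along the path, and you obtain $\Gamma^{\Wald}_{\ell k}=\E[\phi_\ell\phi_k]$ by directly multiplying the mean-zero representers, whereas the paper invokes a law of large numbers for what is a population identity.
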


\begin{proof}
 The bounded $Z_\ell, D\in\{0,1\}$ and $\E[Y^2]<\infty$ from Assumption~\ref{ass:sampling} verify the moment hypothesis of Lemma~\ref{lem:cov-if}; applying it to $\Cov(Y,Z_\ell)$ and $\gamma_\ell$ places $\phi_{\Cov(Y,Z_\ell)},\phi_{\gamma_\ell}\in L^2_0(P_0)$, with $\phi_{\Cov(Y,Z_\ell)} = (Y-\mu_Y)(Z_\ell - p^Z_\ell) - \Cov(Y,Z_\ell)$ and $\phi_{\gamma_\ell} = (D-\mu_D)(Z_\ell - p^Z_\ell) - \gamma_\ell$. The quotient rule for $\Wald_\ell = \Cov(Y,Z_\ell)/\gamma_\ell$ gives $\phi_\ell = (\phi_{\Cov(Y,Z_\ell)} - \Wald_\ell \phi_{\gamma_\ell})/\gamma_\ell$. The constant correction $\Cov(Y,Z_\ell) - \Wald_\ell \gamma_\ell = 0$ vanishes by definition of $\Wald_\ell$. Covariance is by taking the product $\phi_\ell\phi_k$ and applying the law of large numbers.
\end{proof}

   The FWL-residualized empirical implementations use the within-cell analog, with $Y_i$, $D_i$, and $Z_{\ell i}$ demeaned cell by cell and $\gamma^{\mathrm{FWL}}_\ell$ in the denominator, which is the influence function of the FWL functional of Appendix~\ref{app:fwl_bridge}; the display is its unconditional counterpart.

\subsubsection{Compliance-type weights}

\begin{lemma}[IF of a conditional expectation at a point of positive mass]\label{lem:cond-exp-if}
For $z \in \supp(\bZ)$ with $\mu_z(P_0) \equiv  \mathbb{P}_{0}(\bZ_i = z) > 0$, $p(z)(P) \equiv  \E[D_i \mid \bZ_i = z]$ is pathwise differentiable at $P_0$ with Riesz representer $\phi_{p(z)}(O; P_0) = \frac{\ind\{\bZ_i = z\}\,(D_i - p(z)(P_0))}{\mu_z(P_0)}.$
\end{lemma}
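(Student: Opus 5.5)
The plan is to write $p(z)(P)$ as a ratio of two simple moment functionals and apply the quotient rule, following the template of Lemma~\ref{lem:cov-if} and Lemma~\ref{lem:wald-pathwise}. Specifically,
\[
p(z)(P)=\frac{\nu_z(P)}{\mu_z(P)},\qquad \nu_z(P)\equiv \E_P[\ind\{\bZ_i=z\}D_i],\quad \mu_z(P)\equiv \E_P[\ind\{\bZ_i=z\}].
\]
Both numerator and denominator are means of bounded functions of the observation, and the ratio is well defined near $P_0$ because $\mu_z(P_0)>0$.

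First, handle the two means. Take any QMD submodel $P_{s}$ through $P_0$ with score $h\in L^2_0(P_0)$. For any bounded $f$, the standard argument gives $\frac{d}{ds}\E_{P_s}[f]\big|_{s=0}=\E_{P_0}[fh]$. The proof uses QMD of $\sqrt{dP_s/dP_0}$ together with Cauchy--Schwarz to control the remainder; boundedness of $f$ makes this routine. Since $h$ is mean zero, the Riesz representer of $\E[f]$ is $f-\E_{P_0}[f]$. This yields $\phi_{\nu_z}=\ind\{\bZ_i=z\}D_i-\nu_z(P_0)$ and $\phi_{\mu_z}=\ind\{\bZ_i=z\}-\mu_z(P_0)$.

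Second, apply the quotient rule at $\mu_z(P_0)>0$:
\[
\phi_{p(z)}=\frac{\phi_{\nu_z}-p(z)(P_0)\,\phi_{\mu_z}}{\mu_z(P_0)}.
\]
Substituting the two representers, the constants cancel because $\nu_z(P_0)=p(z)(P_0)\mu_z(P_0)$. What remains is $\ind\{\bZ_i=z\}\bigl(D_i-p(z)(P_0)\bigr)/\mu_z(P_0)$, which is the stated representer. It is bounded, and it has mean zero since $\E[D_i\mid \bZ_i=z]=p(z)(P_0)$, so it lies in $L^2_0(P_0)$.

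The only step needing care is the differentiability of the bounded means along QMD paths, which must be uniform enough for the quotient's chain rule. This is standard, e.g.\ via the van der Vaart (1998) lemma that bounded functionals are differentiable along QMD paths, so I expect no real obstacle. The positivity $\mu_z(P_0)>0$ is the only hypothesis doing work: it keeps the denominator away from zero along the path.
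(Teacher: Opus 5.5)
Your proposal is correct and follows essentially the same route as the paper: write $p(z)(P)=\E[D_i\ind\{\bZ_i=z\}]/\mathbb{P}(\bZ_i=z)$, take the centered integrands as the representers of numerator and denominator, and combine them by the quotient rule, with the constants cancelling because $\nu_z(P_0)=p(z)(P_0)\,\mu_z(P_0)$. One small remark: along a fixed QMD path the quotient step needs no uniformity. Ordinary one-variable calculus at $s=0$, together with $\mu_z(P_0)>0$, suffices.
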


\begin{proof}
$p(z)(P) = \E[D \ind\{\bZ = z\}] / \mathbb{P}(\bZ = z)$. By the influence function of a mean ratio (the IF of an expectation is its centered integrand, applied to numerator and denominator and combined by the quotient rule): the IF is $(D \ind\{\bZ = z\} - p(z)(P_0)\ind\{\bZ = z\})/\mu_z(P_0)$, which simplifies to the stated formula.
\end{proof}

\begin{lemma}[Pathwise differentiability of $\alpha_{\ell,t}(P)$]\label{lem:A_pathwise}
Under Assumptions~\ref{ass:joint_indep}--\ref{ass:relevance}, Assumption~\ref{ass:sampling}, Assumption~\ref{ass:latent_index}, and Assumption~\ref{ass:local-threshold-support} for the threshold intervals entering the displayed entry, each weight $\alpha_{\ell, t_k}(P)$ is pathwise differentiable at $P_0 \in \PLATE$ with Riesz representer
\begin{equation}\label{eq:A-if-explicit}
\begin{aligned}
\xi_{\ell, t_k}(O; P_0)
&=
\frac{
\phi_{\theta_{t_k}}(O; P_0)\,\varphi_{\ell,t_k}(P_0)
 + \theta_{t_k}(P_0)\,\phi_{\varphi_{\ell,t_k}}(O; P_0)}
{\pi_\ell(P_0)} - \frac{
\theta_{t_k}(P_0)\,\varphi_{\ell,t_k}(P_0)\,\phi_{\pi_\ell}(O; P_0)}
{\pi_\ell(P_0)^2},
\end{aligned}
\end{equation}
where:
\begin{align}
\phi_{\theta_{t_k}}(O; P_0) &=
\begin{cases}
\phi_{p(z^{(1)})}, & k = 0,\\
\phi_{p(z^{(k+1)})}-\phi_{p(z^{(k)})}, & 1 \leq k \leq K-1,\\
-\phi_{p(z^{(K)})}, & k = K,
\end{cases}
(O;P_0),\label{eq:theta-if-app}\\
\phi_{\pi_\ell}(O; P_0)
&= \frac{Z_\ell\,(D - \E[D \mid Z_\ell = 1])}{\mathbb{P}_{0}(Z_\ell = 1)} - \frac{(1 - Z_\ell)\,(D - \E[D \mid Z_\ell = 0])}{\mathbb{P}_{0}(Z_\ell = 0)} \notag \\
\phi_{\varphi_{\ell,t_k}}(O; P_0)
&= \sum_{z_{-\ell}}
\Bigl[
t_k(1, z_{-\ell})\,\phi_{q_\ell(z_{-\ell})}
- t_k(0, z_{-\ell})\,\phi_{q^0_\ell(z_{-\ell})}
\Bigr](O;P_0), \label{eq:varphi-if-app}
\end{align}
The representers $\phi_{\pi_\ell}$ and~\eqref{eq:varphi-if-app} follow from Lemma~\ref{lem:cond-exp-if} and the conditional-probability representers below. Here $q_\ell(z_{-\ell};P)=\mathbb{P}(\bZ_{-\ell}=z_{-\ell}\mid Z_\ell=1)$, $q^0_\ell(z_{-\ell};P)=\mathbb{P}(\bZ_{-\ell}=z_{-\ell}\mid Z_\ell=0)$, and the conditional-probability representers are
$
\phi_{q_\ell(z_{-\ell})}(O;P_0)
=
\frac{Z_\ell\,[\ind\{\bZ_{-\ell}=z_{-\ell}\}-q_\ell(z_{-\ell};P_0)]}
     {\mathbb{P}_{0}(Z_\ell=1)}$, and $
\phi_{q^0_\ell(z_{-\ell})}(O;P_0)
=
\frac{(1-Z_\ell)\,[\ind\{\bZ_{-\ell}=z_{-\ell}\}-q^0_\ell(z_{-\ell};P_0)]}
     {\mathbb{P}_{0}(Z_\ell=0)}.
$
\end{lemma}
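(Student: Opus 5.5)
The plan is to treat $\alpha_{\ell,t_k}(P)=\theta_{t_k}(P)\varphi_{\ell,t_k}(P)/\pi_\ell(P)$ as a smooth composition of finitely many primitive functionals and then chain their representers. The primitives are $\theta_{t_k}$, $\varphi_{\ell,t_k}$, and $\pi_\ell$. If each is pathwise differentiable at $P_0$ with a representer in $L^2_0(P_0)$, the product and quotient rules give \eqref{eq:A-if-explicit} directly. Those rules hold because the pathwise derivative is linear and obeys the Leibniz rule along any QMD submodel. The required nondegeneracy is $\pi_\ell(P_0)>0$, which is supplied by Assumption~\ref{ass:relevance}.

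\textbf{Type mass.} I would identify $\theta_{t_k}$ via Lemma~\ref{lem:type_interval}: under Assumption~\ref{ass:latent_index}, $\theta_{t_k}(P)=p_{k+1}(P)-p_k(P)$, with $p_0=0$ and $p_{K+1}=1$. Assumption~\ref{ass:local-threshold-support} (equivalently Lemma~\ref{lem:threshold-types-equiv}) ensures two things on a neighborhood of $P_0$. First, each interior endpoint is $p_k(P)=p(z^{(k)})(P)$ for a fixed, unique positive-mass cell $z^{(k)}$. Second, the ordering does not change. Along any submodel, $\theta_{t_k}(P_s)$ is therefore a fixed linear combination of cell propensities for small $s$. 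Lemma~\ref{lem:cond-exp-if} then yields \eqref{eq:theta-if-app}, with the boundary cases $k=0$ and $k=K$ handled by the constant endpoints.

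\textbf{Compliance contribution.} The type $t_k(z)=\ind\{p(z)\ge p_{k+1}\}$ has fixed $0/1$ values on $\{0,1\}^L$ locally, again by endpoint stability; no derivative of the indicator enters. I would check this stability step most carefully, since it is the main obstacle: the labelling of types is itself $P$-dependent, and a tie at $P_0$ would break differentiability. Strict separation and locally constant cell labels rule this out. For configurations off $\supp(\bZ)$, which carry zero weight in $q_\ell$ and $q_\ell^0$, their values are irrelevant. Given fixed indicators, $\varphi_{\ell,t_k}$ is linear in the conditional probabilities $q_\ell(z_{-\ell})$ and $q^0_\ell(z_{-\ell})$. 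Each of these is a ratio of means, $\E[Z_\ell\ind\{\bZ_{-\ell}=z_{-\ell}\}]/\E[Z_\ell]$, and the mean-ratio argument of Lemma~\ref{lem:cond-exp-if} gives the stated representers $\phi_{q_\ell}$ and $\phi_{q^0_\ell}$. Summing produces \eqref{eq:varphi-if-app}.

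\textbf{First stage and assembly.} The first stage is $\pi_\ell=\E[D\mid Z_\ell=1]-\E[D\mid Z_\ell=0]$. Each term is a conditional mean at a positive-mass event, since $p^Z_\ell\in(0,1)$. The same ratio argument as in Lemma~\ref{lem:cond-exp-if}, with $\{Z_\ell=1\}$ in place of $\{\bZ=z\}$, gives $\phi_{\pi_\ell}$. Combining the three representers by the product and quotient rules gives \eqref{eq:A-if-explicit}. Each representer is mean zero and square integrable, because it is built from bounded variables divided by denominators bounded away from zero. The result is therefore in $L^2_0(P_0)$, and the corresponding von Mises expansion of the plug-in follows by the delta method on finitely many sample means.
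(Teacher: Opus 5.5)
Your proposal is correct and follows the paper's proof essentially step for step. Like the paper, you apply the product and quotient rules to $\theta_{t_k}\varphi_{\ell,t_k}/\pi_\ell$, and you use Assumption~\ref{ass:local-threshold-support} (via Lemma~\ref{lem:threshold-types-equiv}) to fix the endpoint labels, so that $\theta_{t_k}$ is an affine combination of single-cell propensities and the indicators $t_k(\cdot)$ enter without derivative; the factor representers come from the mean-ratio argument of Lemma~\ref{lem:cond-exp-if}. Your extra remarks — that off-support configurations carry zero $q_\ell$-weight, and that boundedness of $(D,\bZ)$ gives square integrability directly — are small refinements of the same argument.
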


\begin{proof}
Apply the product/quotient rule to $\alpha_{\ell, t_k} = \theta_{t_k} \varphi_{\ell,t_k} / \pi_\ell$ using the three factor IFs~\eqref{eq:theta-if-app}--\eqref{eq:varphi-if-app}. Each factor IF is in $L^2_0(P_0)$ by Lemmas~\ref{lem:cov-if} and~\ref{lem:cond-exp-if} and by the displayed conditional-probability ratio formula; the linear combination~\eqref{eq:A-if-explicit} is in $L^2_0(P_0)$. Assumption~\ref{ass:local-threshold-support} ensures local stability of the relevant propensity ordering: by Lemma~\ref{lem:threshold-types-equiv}, the endpoints defining threshold intervals have positive minimum gaps at $P_0$, and the gaps remain positive on a neighborhood by continuity. The labelling $z^{(k)}$ is therefore well-defined locally for the endpoints. Local uniqueness of the endpoint labels (Assumption~\ref{ass:local-threshold-support}, via Lemma~\ref{lem:threshold-types-equiv}) rules out ties: each endpoint value $p_k$ is attained by a single cell $z^{(k)}$, and $\phi_{p(z^{(k)})}$ is the influence function of a single-cell conditional mean. The same local stability makes the type functions $t_k(z) = \ind\{p(z) \geq p_{k+1}\}$ locally constant in $P$ on $\supp(\bZ)$, which is why they enter~\eqref{eq:varphi-if-app} derivative-free. For interior types, the IF of $\theta_{t_k}(P) = p(z^{(k+1)})(P) - p(z^{(k)})(P)$ is the difference of two conditional-expectation IFs from Lemma~\ref{lem:cond-exp-if}; for the boundary types, $\theta_{t_0}=p(z^{(1)})-0$ and $\theta_{t_K}=1-p(z^{(K)})$, and the constants $0$ and $1$ have zero derivative. This gives~\eqref{eq:theta-if-app}.
\end{proof}

\subsubsection{Riesz representers of canonical \texorpdfstring{$(\bpsi^\star(P), \bkappa(P))$}{(psi*(P), kappa(P))} specifications}\label{app:proof_canonical_targets}

\begin{lemma}[Smoothness of canonical metric maps]\label{lem:kappa-pathwise}
Under Assumption~\ref{ass:latent_index}, Assumption~\ref{ass:local-threshold-support} for threshold intervals, and Assumption~\ref{ass:late-regularity}, each of the canonical metric maps $\bkappa^{\mathrm{CS}}(P) \equiv \mathbf{1}$, $\bkappa^{\mathrm{EW}}(P) \equiv \mathbf{1}$, and $\bkappa^{\mathrm{PRTE}}_{t_k}(P) = \theta_{t_k}(P)^{-1}$ on PRTE intervals is Hadamard-differentiable at $P_0$ tangentially to $L^2_0(P_0)$. The Riesz representer $\zeta_t(\cdot; P_0)$ of the pathwise derivative $\dot\kappa_t(h; P_0) = \inner{\zeta_t(\cdot; P_0)}{h}_{L^2(P_0)}$ lies in $L^2_0(P_0)$ for each coordinate used by the specification.
\end{lemma}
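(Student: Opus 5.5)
The two constant maps are handled directly, and the real work is the PRTE metric, which I obtain by the chain rule from the threshold-mass representers of Lemma~\ref{lem:A_pathwise}.

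\emph{Constant maps.} For $\bkappa^{\mathrm{CS}}\equiv\mathbf 1$ and $\bkappa^{\mathrm{EW}}\equiv\mathbf 1$, the map $P\mapsto\kappa_t(P)$ does not depend on $P$. Its Hadamard derivative is therefore the zero linear map on $L^2_0(P_0)$, and the Riesz representer is $\zeta_t\equiv 0\in L^2_0(P_0)$. No support condition is needed here.

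\emph{PRTE metric: structure.} The map is the composition $P\mapsto\theta_{t_k}(P)\mapsto\theta_{t_k}(P)^{-1}$. I will use the same threshold representation as in Lemma~\ref{lem:A_pathwise}. By Assumption~\ref{ass:latent_index} and Lemma~\ref{lem:type_interval}, $\theta_{t_k}=p_{k+1}-p_k$. By Assumption~\ref{ass:local-threshold-support} and Lemma~\ref{lem:threshold-types-equiv}, on a neighborhood of $P_0$ each interior endpoint is attained by a unique positive-mass cell $z^{(k)}$ with a locally constant label. Hence, locally, $\theta_{t_k}(P)$ equals a fixed linear combination of finitely many cell propensities $p(z^{(j)})(P)$; the boundary constants $0$ and $1$ contribute nothing.

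\emph{PRTE metric: differentiability of the mass.} Each $p(z)(P)=\E[D\ind\{\bZ=z\}]/\mu_z(P)$ is a ratio of two means with $\mu_z(P_0)>0$. Mean functionals are Hadamard differentiable tangentially to $L^2_0(P_0)$ along the QMD charts of Assumption~\ref{ass:late-regularity}. Their derivative is $h\mapsto\E[f h]$ for bounded $f$, which is continuous in $h\in L^2$. The quotient rule then gives Lemma~\ref{lem:cond-exp-if}'s representer, and summing yields $\phi_{\theta_{t_k}}$ as in~\eqref{eq:theta-if-app}.

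\emph{PRTE metric: inversion.} Assumption~\ref{ass:local-threshold-support} gives $\theta_{t_k}(P_0)>0$, and by continuity the mass stays bounded away from zero near $P_0$. So $x\mapsto x^{-1}$ is $C^1$ there, and the chain rule for Hadamard derivatives gives $\dot\kappa_{t_k}(h)=-\theta_{t_k}(P_0)^{-2}\langle\phi_{\theta_{t_k}},h\rangle$. The representer is therefore $\zeta_{t_k}=-\theta_{t_k}(P_0)^{-2}\phi_{\theta_{t_k}}$. This is a finite combination of the functions $\ind\{\bZ=z\}(D-p(z))/\mu_z$. Each such function is bounded and mean zero, so $\zeta_{t_k}\in L^2_0(P_0)$.

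The main obstacle is the step where the labels $z^{(k)}$ are frozen. Without it, the endpoints are a max or min over cells and are only directionally differentiable at ties. I would make that step explicit by noting that strict separation at $P_0$, combined with $p(z)(P_n)\to p(z)(P_0)$ along charts $P_{t_n,h_n}$, makes the ordering constant for all large $n$. This gives exact Hadamard differentiability along every such sequence $h_n\to h$.
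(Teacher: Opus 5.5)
Your proposal is correct and follows the paper's proof essentially step by step. The steps match: the zero representer for the constant CS and EW metrics, freezing the endpoint labels locally so that $\theta_{t_k}$ is a fixed affine function of cell propensities, Hadamard differentiability of the cell means and their ratio, and inversion at the positive mass giving $\zeta_{t_k}=-\theta_{t_k}(P_0)^{-2}\phi_{\theta_{t_k}}$. The only differences are cosmetic: the paper writes out the explicit QMD remainder bound for bounded-mean functionals and uses an explicit quotient expansion for the inversion, whereas you cite the standard fact and the chain rule.
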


\noindent \emph{Proof.} For $\bkappa^{\mathrm{CS}}$ and $\bkappa^{\mathrm{EW}}$ the map is constant; hence $\dot\kappa_t \equiv 0$ and $\zeta_t \equiv 0 \in L^2_0(P_0)$. For $\bkappa^{\mathrm{PRTE}}$, the threshold-type probabilities are smooth functions of the ordered cell propensities, with boundary convention $\theta_{t_0}(P)=p_1(P)-0$, $\theta_{t_k}(P)=p_{k+1}(P)-p_k(P)$ for $1\leq k\leq K-1$, and $\theta_{t_K}(P)=1-p_K(P)$. Each observed propensity $p(z)(P) = \E[D_i \mid \bZ_i = z]$ is pathwise differentiable at $P_0$ with Riesz representer $\phi_{p(z)} \in L^2_0(P_0)$ when $\mu_z(P_0)>0$ (Lemma~\ref{lem:cond-exp-if}). Since the relevant propensity ordering is locally constant at $P_0$ under Assumption~\ref{ass:local-threshold-support} (Lemma~\ref{lem:threshold-types-equiv}), the Riesz representer of $\dot\theta_{t_k}(\cdot; P_0)$ is $\phi_{\theta_{t_k}}(\cdot; P_0)$ of~\eqref{eq:theta-if-app}; the artificial endpoints $0$ and $1$ are constants and have zero derivative. Hadamard differentiability strengthens this pathwise calculus. For bounded $f$ and quadratic-mean-differentiable paths $\{P_{s,h_s}\}$ with scores $h_s \to h$ in $L^2_0(P_0)$, writing $dP_{s,h_s} = (1 + \tfrac{s}{2}h_s + r_s)^2\,dP_0$ with $\|r_s\|_{L^2(P_0)} = o(s)$ bounds $s^{-1}\{\E_{P_{s,h_s}}[f] - \E_{P_0}[f]\} - \inner{f}{h_s}_{L^2(P_0)}$ in absolute value by $\|f\|_\infty\, s^{-1}\bigl(2\|r_s\| + \tfrac{s^2}{4}\|h_s\|^2 + s\|r_s\|\|h_s\| + \|r_s\|^2\bigr) \to 0$; hence $s^{-1}\{\E_{P_{s,h_s}}[f] - \E_{P_0}[f]\} \to \inner{f}{h}_{L^2(P_0)}$. Each cell mass $\mu_z(P) = \E_P[\ind\{\bZ=z\}]$ and joint moment $\E_P[D_i\ind\{\bZ=z\}]$ is of this form; the propensity $p(z)(P)$ is their ratio at the positive denominator $\mu_z(P_0) > 0$, and on the neighborhood of locally constant ordering and labels each $\theta_{t_k}(P)$ is a fixed affine function of the $p(z)(P)$; the chain rule for Hadamard-differentiable maps \citep[Theorem~20.9]{vanderVaart1998} therefore makes $\theta_{t_k}$ Hadamard-differentiable tangentially to $L^2_0(P_0)$ with representer $\phi_{\theta_{t_k}}$. For $\kappa^{\mathrm{PRTE}}_{t_k} = \theta_{t_k}^{-1}$, the quotient expansion
$
\frac{1}{\theta_{t_k}(P_{s,h_s})} - \frac{1}{\theta_{t_k}(P_0)}
= -\frac{\theta_{t_k}(P_{s,h_s}) - \theta_{t_k}(P_0)}{\theta_{t_k}(P_0)^2}
+ \frac{\{\theta_{t_k}(P_{s,h_s}) - \theta_{t_k}(P_0)\}^2}{\theta_{t_k}(P_0)^2\,\theta_{t_k}(P_{s,h_s})}
$
completes the proof: $\theta_{t_k}(P_0) > 0$ under Assumption~\ref{ass:local-threshold-support}, the Hadamard differentiability of $\theta_{t_k}$ makes $\theta_{t_k}(P_{s,h_s}) - \theta_{t_k}(P_0) = O(s)$ along every converging-score sequence; hence $\theta_{t_k}(P_{s,h_s}) \geq \theta_{t_k}(P_0)/2$ for all small $s$ and the second term is $O(s^2) = o(s)$; hence $\dot\kappa^{\mathrm{PRTE}}_{t_k}(h; P_0) = -\theta_{t_k}(P_0)^{-2}\dot\theta_{t_k}(h; P_0)$ with Riesz representer $\zeta^{\mathrm{PRTE}}_{t_k}(\cdot; P_0) = -\theta_{t_k}(P_0)^{-2}\phi_{\theta_{t_k}}(\cdot; P_0) \in L^2_0(P_0)$.

For the three canonical specifications of Section~\ref{sec:specifications}, the Riesz representers $\xi^\star_t(\cdot; P_0)$ of $\dot\psi^\star_t(\cdot; P_0)$ and $\zeta_t(\cdot; P_0)$ of $\dot\kappa_t(\cdot; P_0)$ are computed from the structural parameterization.

\emph{CS-ATE} ($\bkappa^{\mathrm{CS}} \equiv 1$, hence $\zeta_t^{\mathrm{CS}} \equiv 0$): $\psi^{\star,\mathrm{CS}}_{t_k}(P) = \theta_{t_k}(P)\ind\{t_k \in \mathcal{C}_{\mathrm{ever}}(P)\}/S_C(P)$ with $S_C(P)=\sum_{t' \in \mathcal{C}_{\mathrm{ever}}(P)}\theta_{t'}(P)$. Under local support positivity and stable propensity ordering on the ever-complier set, $\mathcal{C}_{\mathrm{ever}}(P)$ is locally constant, and the Riesz representer of $\dot\psi^{\star,\mathrm{CS}}_{t_k}(\cdot;P_0)$ is
$\xi^{\star,\mathrm{CS}}_{t_k}(\cdot; P_0) = S_C(P_0)^{-1}\bigl[\phi_{\theta_{t_k}}(\cdot; P_0) - \psi^{\star,\mathrm{CS}}_{t_k}(P_0)\textstyle\sum_{t'\in\mathcal{C}_{\mathrm{ever}}(P_0)} \phi_{\theta_{t'}}(\cdot; P_0)\bigr]$
on $\mathcal{C}_{\mathrm{ever}}(P_0)$ and zero elsewhere, where $\phi_{\theta_{t_k}}(\cdot; P_0)$ is the Riesz representer of the pathwise derivative of $\theta_{t_k}$.

\emph{EW-ATE} ($\bkappa^{\mathrm{EW}} \equiv 1$, hence $\zeta_t^{\mathrm{EW}} \equiv 0$): $\psi^{\star,\mathrm{EW}}_{t_k}(P) = \ind\{t_k \in \mathcal{C}_{\mathrm{ever}}(P)\}/|\mathcal{C}_{\mathrm{ever}}(P)|$ is locally constant at $P_0$ under local support positivity and stable propensity ordering; hence $\xi^{\star,\mathrm{EW}}_{t} \equiv 0$ as well.

\emph{PRTE} ($\bkappa^{\mathrm{PRTE}}_{t_k}(P) = \theta_{t_k}(P)^{-1}$ on threshold intervals): both $\xi^\star$ and $\zeta$ contribute. By Lemma~\ref{lem:prte-equivalence} the projection depends on the policy weight only through the interval masses $\psi^{\star,\mathrm{PRTE}}_{t_k}(P)$; hence the moving-endpoint integral in~\eqref{eq:prte_integral} is differentiated through its closed ratio form (Assumption~\ref{ass:prte-weight-regularity})
$\psi^{\star,\mathrm{PRTE}}_{t_k}(P)=\frac{\theta_{t_k}(P)\,\Delta^{\mathrm{PRTE}}_k(P)}{\bar\Delta(P)}, \qquad \bar\Delta(P)=\E_{F^Z_1}[D_i]-\E_{F^Z_0}[D_i],$
rather than through a Leibniz expansion requiring pointwise endpoint values of $w^P$. By Lemma~\ref{lem:type_interval} type $t_k$ is treated exactly when $p(\bZ)\ge p_{k+1}$, and $\bZ$ is independent of $U$; hence the type-conditional shift of~\eqref{eq:prte_integral} equals the unconditional threshold-crossing shift $\Delta^{\mathrm{PRTE}}_k(P)=\mathbb{P}_{F^Z_1}(p(\bZ)\ge p_{k+1})-\mathbb{P}_{F^Z_0}(p(\bZ)\ge p_{k+1})$, locally constant under the stable propensity ordering of Assumption~\ref{ass:local-threshold-support}. For fixed policy laws only $\theta_{t_k}(P)$ and $\bar\Delta(P)=\sum_z(F^Z_1(z)-F^Z_0(z))\,p(z)(P)$ carry derivatives, and the quotient rule gives the Riesz representer
$\xi^{\star,\mathrm{PRTE}}_{t_k}(\cdot;P_0) = \frac{\Delta^{\mathrm{PRTE}}_k(P_0)}{\bar\Delta(P_0)}\,\phi_{\theta_{t_k}}(\cdot;P_0) - \frac{\theta_{t_k}(P_0)\,\Delta^{\mathrm{PRTE}}_k(P_0)}{\bar\Delta(P_0)^2}\,\phi_{\bar\Delta}(\cdot;P_0),$
with $\phi_{\theta_{t_k}}$ from~\eqref{eq:theta-if-app}, the normalizer representer $\phi_{\bar\Delta}=\sum_z(F^Z_1(z)-F^Z_0(z))\,\phi_{p(z)}$, and the realized-propensity representers $\phi_{p(z)}$ from Lemma~\ref{lem:cond-exp-if}. For a generic within-support staircase whose baseline law $F^Z_0$ is the population instrument distribution and whose top group is fixed, the swept-margin shifts become cell masses, $\Delta^{\mathrm{PRTE}}_k(P) = \mu_{z^{(k)}}(P)$ for $k=1,\ldots,K-1$, and the normalizer $\bar\Delta(P)=\sum_{j=1}^{K-1}\theta_{t_j}(P)\,\mu_{z^{(j)}}(P)$ inherits the same dependence; the chain rule then adds the corresponding cell-mass representers to both channels of $\xi^{\star,\mathrm{PRTE}}_{t_k}$. This calculation describes the observed component of the patent staircase, not the complete target that also moves group $G=7$ to the synthetic group $G=8$. The metric representer $\zeta^{\mathrm{PRTE}}_{t_k}$ is that of Lemma~\ref{lem:kappa-pathwise}. In all three cases $\xi^\star_t, \zeta_t \in L^2_0(P_0)$ under Assumption~\ref{ass:late-regularity} and Assumption~\ref{ass:local-threshold-support} where a threshold inverse is used.

\subsubsection{The RT estimand}

\begin{lemma}[Pathwise differentiability of the RT estimand]\label{lem:rt-pathwise}
Under Assumptions~\ref{ass:relevance} and~\ref{ass:sampling}, if $\bomega(\cdot)$ is pathwise differentiable at $P_0$ with Riesz representers $\Riesz_{\omega,\ell}\in L^2_0(P_0)$, then $\beta^{\mathrm{RT}}_{\bomega}(P) = \bomega(P)'\bWald(P)$ is pathwise differentiable at $P_0$ with observed-law Riesz representer $\eif_{\omega(\cdot)}$ of~\eqref{eq:eif-general}, each summand lying in $L^2_0(P_0)$.
\end{lemma}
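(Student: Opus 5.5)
The argument is a product rule for pathwise derivatives, with Lemma~\ref{lem:wald-pathwise} supplying the Wald factor and the hypothesis supplying the weight factor. Fix a score $h\in L^2_0(P_0)$ and a path $\{P_s\}$ through $P_0$ with that score, as in Assumption~\ref{ass:late-regularity}. By Lemma~\ref{lem:wald-pathwise}, whose hypotheses hold under Assumptions~\ref{ass:relevance} and~\ref{ass:sampling}, each $\Wald_\ell$ satisfies
\[
\Wald_\ell(P_s)-\Wald_\ell(P_0)=s\inner{\phi_\ell}{h}+o(s).
\]
By hypothesis, each weight satisfies
\[
\omega_\ell(P_s)-\omega_\ell(P_0)=s\inner{\Riesz_{\omega,\ell}}{h}+o(s).
\]

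The first step is the algebraic identity
\[
\bomega(P_s)'\bWald(P_s)-\bomega(P_0)'\bWald(P_0)=\bomega(P_0)'\{\bWald(P_s)-\bWald(P_0)\}+\{\bomega(P_s)-\bomega(P_0)\}'\bWald(P_0)+\{\bomega(P_s)-\bomega(P_0)\}'\{\bWald(P_s)-\bWald(P_0)\}.
\]
The two expansions above make the last term a product of two $O(s)$ factors, hence $o(s)$. Dividing by $s$ and letting $s\to0$ gives the derivative
\[
\sum_\ell\omega_\ell(P_0)\inner{\phi_\ell}{h}+\sum_\ell\Wald_\ell(P_0)\inner{\Riesz_{\omega,\ell}}{h}=\inner{\eif_{\omega(\cdot)}}{h}.
\]
This holds by linearity of the inner product, with $\bomega(P_0)$ and $\bWald(P_0)$ finite constants. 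The derivative is linear and continuous in $h$, so $\eif_{\omega(\cdot)}$ is its Riesz representer.

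The second step is membership in $L^2_0(P_0)$. Each $\phi_\ell$ is mean zero because $\Cov(Y,Z_\ell)-\Wald_\ell\gamma_\ell=0$. It is square integrable because $Z_\ell,D$ are bounded, $\E[Y^2]<\infty$ by Assumption~\ref{ass:sampling}(d), and $\gamma_\ell(P_0)>0$ by Assumption~\ref{ass:sampling}(e). Each $\Riesz_{\omega,\ell}\in L^2_0(P_0)$ holds by hypothesis. Scalar multiples of these summands stay in the closed subspace $L^2_0(P_0)$. One further check: the normalization $\ind_L'\dot\omega=0$ means the drift term is consistent with the simplex, but it is not needed for the identity.

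The main obstacle is ensuring the $O(s)$ control of $\bWald(P_s)-\bWald(P_0)$ along a generic QMD path, since $Y$ is unbounded and a mere pathwise derivative does not immediately give a uniform bound. I would handle this through the covariance form: $\E_{P_s}[YZ_\ell]$ and $\E_{P_s}[Y]$ are differentiable along QMD paths when $\E[Y^2]<\infty$. The reason is that writing $dP_s=(1+\tfrac s2h+r_s)^2dP_0$ bounds the remainder by $\|Y\|_{L^2}$ times $o(s)$ terms, plus an $s^2\E[Y h^2]$-type term. That last term I would control by the $(2+\delta)$ moment, or by restricting to bounded scores, which suffices by the density argument used in the proof of Proposition~\ref{prop:eif-decomp}. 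The quotient by $\gamma_\ell(P_s)\to\gamma_\ell(P_0)>0$ then completes the argument.
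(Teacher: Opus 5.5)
Your proposal is correct and is essentially the paper's proof: the paper uses the same three-term product expansion of $\bomega(P)'\bWald(P)$, with Lemma~\ref{lem:wald-pathwise} giving the Wald derivative, the hypothesis giving the weight derivative, the cross term being $o(s)$, and the representer read off by linearity. Your extra paragraph on unbounded $Y$ is more careful than the paper, which simply takes Lemma~\ref{lem:wald-pathwise} as given along LATE-compatible submodels generated by bounded directions; of your two fixes, only that bounded-direction one works, since a $(2+\delta)$ moment does not by itself control $\E[|Y|h^2]$ for arbitrary $h\in L^2_0(P_0)$, and the omitted $\E[Y r_s^2]$ remainder also needs the bounded exponential chart rather than a generic QMD path.
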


\begin{proof}
Write $\beta^{\mathrm{RT}}_{\bomega}(P) = \bomega(P)' \bWald(P)$ as a scalar product of two pathwise-differentiable functionals: $\bWald: \PLATE \to \mathbb{R}^L$ and $\bomega(\cdot): \PLATE \to \Delta^{L-1}$. Both are pathwise differentiable in $\mathbb{R}^L$; hence their scalar product is pathwise differentiable. Along a LATE-compatible contiguous submodel $P_{s,h}\in\PLATE$ with score $h \in \TLATE$, $\bomega(P_{s,h}) = \bomega(P_0) + s\,\dot\bomega(h; P_0) + o(s)$ and $\bWald(P_{s,h}) = \bWald(P_0) + s\,\dot\bWald(h; P_0) + o(s)$, with $\dot\Wald_\ell(h; P_0) = \inner{\phi_\ell}{h}_{L^2(P_0)}$ and $\dot\omega_\ell(h; P_0) = \inner{\Riesz_{\omega,\ell}}{h}_{L^2(P_0)}$. The same algebra gives the ambient observed-law derivative along observed-law paths for which the ambient extension is valid. The product expansion gives
$
\beta^{\mathrm{RT}}_{\bomega}(P_{s,h}) = \beta^{\mathrm{RT}}_{\bomega}(P_0) + s\bigl[\bomega(P_0)'\dot\bWald(h;P_0) + \dot\bomega(h;P_0)'\bWald(P_0)\bigr] + o(s).
$
The pathwise derivative is $\dot\beta^{\mathrm{RT}}_{\bomega}(h;P_0) = \bomega(P_0)'\dot\bWald(h;P_0) + \dot\bomega(h;P_0)'\bWald(P_0)$. By linearity of Riesz representation in $L^2_0(P_0)$ and the representations of $\dot\Wald_\ell$ and $\dot\omega_\ell$ above, the Riesz representer of $\dot\beta^{\mathrm{RT}}_{\bomega}(h; P_0)$ is $\eif_{\omega(\cdot)}(O_i; P_0) = \sum_\ell \omega_\ell(P_0) \phi_\ell(O_i) + \sum_\ell \Wald_\ell(P_0) \Riesz_{\omega,\ell}(O_i),$
which is~\eqref{eq:eif-general}. Each summand is in $L^2_0(P_0)$.
\end{proof}

\subsection{Tangent compatibility}

With the cell parameter $\eta(P) = \{\mu_z, m_Y(z), p(z)\}_{z\in\mathcal{Z}}$ defined in Appendix~\ref{app:regularity}, the canonical cell influence functions are
\begin{equation}\label{eq:cell-ifs-paper}
\begin{aligned}
\chi_z^\mu(O) &= \ind\{\bZ=z\}-\mu_z,\\
\chi_z^D(O) &= \frac{\ind\{\bZ=z\}(D-p(z))}{\mu_z},\\
\chi_z^Y(O) &= \frac{\ind\{\bZ=z\}(Y-m_Y(z))}{\mu_z}.
\end{aligned}
\end{equation}
Throughout, $\overline{\mathcal{G}}$ denotes the LATE tangent space at $P_0$ (Definition~\ref{def:late-tangent}), $\mathcal{T}_+(P_0)$ the positive-mass compliance types, $\theta_t(P_0)=P_0(D_i(\cdot)=t)$, and $f_{t,d}$ the type-conditional density of $Y(d)$ under $P_0$ (Assumption~\ref{ass:late-regularity}).

\begin{condition}[Cell-score liftability]\label{cond:cell-lift-paper}
For every $z\in\mathcal{Z}$ and every cell influence function $\chi\in\{\chi_z^\mu,\chi_z^D,\chi_z^Y\}$ in~\eqref{eq:cell-ifs-paper}, $\chi\in\overline{\mathcal{G}}$. A sufficient condition is that there exist functions $a_Z(\cdot)$ on $\mathcal{Z}$, $a_T(\cdot)$ on $\mathcal{T}_+(P_0)$, and $\{b_{t,d}(\cdot)\}_{t\in\mathcal{T}_+(P_0),\,d\in\{0,1\}}$ satisfying
$\sum_{z'\in\mathcal{Z}}\mu_{z'}\,a_Z(z')=0,\quad \sum_{t\in\mathcal{T}_+(P_0)}\theta_t\,a_T(t)=0,\quad \int b_{t,d}(y)\,f_{t,d}(y)\,dy=0,$
together with the square-integrability bound
$\sum_{z'}\mu_{z'}\,a_Z(z')^2+\sum_t\theta_t\,a_T(t)^2 +\sum_{t,d}\theta_t\int b_{t,d}(y)^2\,f_{t,d}(y)\,dy<\infty,$
and, for $P_0$-a.e.\ $(y,d,z')$,
\begin{equation}\label{eq:cell-lift-eq-paper}
\chi(y,d,z') = a_Z(z')
+\sum_{t:\,t(z')=d}
\frac{\theta_t\,f_{t,d}(y)}{\sum_{r:\,r(z')=d}\theta_r\,f_{r,d}(y)}\,
\{a_T(t)+b_{t,d}(y)\},
\end{equation}
with rows of zero denominator omitted because they lie outside the observed support.    
\end{condition}

\begin{condition}[Cell-smooth RT weight map]\label{cond:cell-smooth-paper}
On a neighborhood of $P_0$ the target weight map admits the representation
$\bomega(P)=g\bigl(\eta(P),\,r(P)\bigr), \qquad \eta(P)=\{\mu_z,m_Y(z),p(z)\}_{z\in\mathcal{Z}},$
where $g$ is continuously differentiable at $(\eta(P_0),r(P_0))$ and each additional primitive $r_j(P)$ is pathwise differentiable with Riesz representer in $\overline{\mathcal{G}}$. Fixed weights are included as the special case $g\equiv\bomega_0$, $r$ absent.
\end{condition}

\begin{theorem}[Primitive sufficient conditions for tangent compatibility]\label{thm:tc-primitive}
Assume the maintained Assumptions~\ref{ass:joint_indep}, \ref{ass:monotonicity}, \ref{ass:realized_exclusion}, \ref{ass:relevance}, \ref{ass:sampling}, and~\ref{ass:late-regularity}, the finite-moment and nonzero-denominator clauses in Proposition~\ref{prop:eif-decomp}, Conditions~\ref{cond:cell-lift-paper} and~\ref{cond:cell-smooth-paper}, and the plug-in von Mises expansion required by Proposition~\ref{prop:eif-decomp}. Then the representer in~\eqref{eq:eif-general} satisfies
$\eif_{\omega(\cdot)}(\cdot;P_0) = \sum_{\ell=1}^L \omega_\ell(P_0)\,\phi_\ell(\cdot;P_0) + \sum_{\ell=1}^L \Wald_\ell(P_0)\,\Riesz_{\omega,\ell}(\cdot;P_0) \in\overline{\mathcal{G}}$; the tangent-compatibility conclusion of Proposition~\ref{prop:eif-decomp} follows. For fixed weights $\bomega(\cdot)\equiv\bomega_0$, the second sum is absent and the conclusion follows from Condition~\ref{cond:cell-lift-paper} alone.
\end{theorem}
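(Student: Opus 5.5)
The membership $\eif_{\omega(\cdot)}\in\overline{\mathcal{G}}$ follows from two facts. First, $\overline{\mathcal{G}}$ is a closed linear subspace of $L^2_0(P_0)$. Second, every summand of~\eqref{eq:eif-general} is a finite linear combination of the canonical cell influence functions~\eqref{eq:cell-ifs-paper}, plus terms already assumed to lie in $\overline{\mathcal{G}}$. Condition~\ref{cond:cell-lift-paper} places each $\chi_z^\mu,\chi_z^D,\chi_z^Y$ in $\overline{\mathcal{G}}$. If only the displayed sufficient lift~\eqref{eq:cell-lift-eq-paper} is given, I would first check that its right-hand side is the observed score of the structural submodel perturbing $(\mu_z,\theta_t,f_{t,d})$ in directions $(a_Z,a_T,b_{t,d})$. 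The mixture formula for the observed density of $(Y,D)\mid \bZ=z'$ is $\sum_{t:t(z')=d}\theta_t f_{t,d}(y)$. Differentiating its logarithm along exponential tilts gives exactly the posterior-weighted sum in~\eqref{eq:cell-lift-eq-paper}, with the $a_Z(z')$ term coming from the instrument law. Bounded directions generate $\mathcal{G}$, and the square-integrability bound together with truncation-density, as in the proof of Proposition~\ref{prop:eif-decomp}(i), extends this to the closure.

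For the Wald part, I would express $\phi_\ell$ as a linear combination of cell IFs. Since $\bZ$ is discrete with finite support $\mathcal{Z}$, $\Cov(Y,Z_\ell)=\sum_z \mu_z m_Y(z)(z_\ell-p^Z_\ell)$ with $p^Z_\ell=\sum_z\mu_z z_\ell$. Likewise $\gamma_\ell=\sum_z\mu_z p(z)(z_\ell-p^Z_\ell)$. So $\Wald_\ell$ is a smooth function of $\eta(P)$ near $\eta(P_0)$, because $\gamma_\ell(P_0)>0$. By the chain rule for pathwise derivatives, its representer is $\sum_z[\partial_{\mu_z}\Wald_\ell\,\chi_z^\mu+\partial_{m_Y(z)}\Wald_\ell\,\chi_z^Y+\partial_{p(z)}\Wald_\ell\,\chi_z^D]$. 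This uses the fact that the IFs of $\mu_z$, $m_Y(z)$, $p(z)$ are exactly $\chi_z^\mu,\chi_z^Y,\chi_z^D$ (Lemma~\ref{lem:cond-exp-if} and its outcome analog). Uniqueness of the Riesz representer in $L^2_0(P_0)$ identifies this combination with the $\phi_\ell$ of Lemma~\ref{lem:wald-pathwise}. A direct check is to decompose $\phi_\ell$ cell by cell into its conditional-mean and residual parts. Hence each $\phi_\ell\in\overline{\mathcal{G}}$, and the fixed-weight case is done.

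For the weight-drift part, Condition~\ref{cond:cell-smooth-paper} writes $\bomega(P)=g(\eta(P),r(P))$ with $g$ of class $C^1$. The chain rule gives $\Riesz_{\omega,\ell}=\sum_z\nabla_\eta g_\ell\cdot(\chi_z^\mu,\chi_z^Y,\chi_z^D)+\sum_j\partial_{r_j}g_\ell\,\phi_{r_j}$. The first group lies in $\overline{\mathcal{G}}$ by the previous step, and the $\phi_{r_j}$ lie there by hypothesis. I would then invoke the von Mises expansion assumed in Proposition~\ref{prop:eif-decomp} to confirm that this chain-rule representer coincides with the $\Riesz_{\omega,\ell}$ entering~\eqref{eq:eif-general}. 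Both are representers of the same derivative along LATE paths, and the ambient extension makes them agree in $L^2_0(P_0)$. Linearity of $\overline{\mathcal{G}}$ then gives $\eif_{\omega(\cdot)}\in\overline{\mathcal{G}}$.

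The main obstacle is this identification step. Pathwise derivatives are a priori only defined along $\mathcal{G}$, so two representers agreeing on $\mathcal{G}$ are equal only modulo $\overline{\mathcal{G}}^\perp$. I would handle it by working throughout with the ambient observed-law derivative, as in the Hadamard extension of Appendix~\ref{app:regularity}: every primitive ($\mu_z$, $m_Y(z)$, $p(z)$, $r_j$) is differentiable along all observed-law paths, so its representer is unique in $L^2_0(P_0)$. Once that is in place, the membership claim reduces to a chain-rule computation.
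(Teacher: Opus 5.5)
Your proposal is correct and follows essentially the paper's route: write $\phi_\ell$ and $\Riesz_{\omega,\ell}$ as finite linear combinations of the cell influence functions $\chi_z^\mu,\chi_z^D,\chi_z^Y$ (plus the primitive representers of the $r_j$), then invoke Condition~\ref{cond:cell-lift-paper}, Condition~\ref{cond:cell-smooth-paper}, and closedness of $\overline{\mathcal{G}}$. The paper differs in two minor ways. It gets the cell decomposition of $\phi_\ell$ from the explicit identity $Y-\mu_Y-\Wald_\ell(P_0)(D-\mu_D)=\{Y-m_Y(\bZ)\}-\Wald_\ell(P_0)\{D-p(\bZ)\}+\{m_Y(\bZ)-\mu_Y-\Wald_\ell(P_0)(p(\bZ)-\mu_D)\}$, which is your ``direct check'', rather than your chain-rule-plus-uniqueness argument. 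It also simply takes $\Riesz_{\omega,\ell}$ to be the delta-method representer and does not discuss the identification issue you flag.
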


\begin{proof}
Write $p^Z_\ell=\E[Z_\ell]$, $\mu_Y=\E[Y]$, $\mu_D=\E[D]$, and $\beta_\ell=\Wald_\ell(P_0)$, and use the identity
$Y-\mu_Y-\beta_\ell(D-\mu_D) =\{Y-m_Y(\bZ)\}-\beta_\ell\{D-p(\bZ)\} +\{m_Y(\bZ)-\mu_Y-\beta_\ell(p(\bZ)-\mu_D)\}.$
Multiplying through by $(Z_\ell-p^Z_\ell)/\gamma_\ell$, taking expectations across the cells, and rearranging gives the cell-IF decomposition
$\phi_\ell(O;P_0) =\sum_{z\in\mathcal{Z}}\mu_z\frac{z_\ell-p^Z_\ell}{\gamma_\ell}\,\chi_z^Y(O) -\sum_{z\in\mathcal{Z}}\mu_z\frac{\beta_\ell(z_\ell-p^Z_\ell)}{\gamma_\ell}\,\chi_z^D(O) +\sum_{z\in\mathcal{Z}}c_{\ell,z}\,\chi_z^\mu(O),$
with $c_{\ell,z}=(z_\ell-p^Z_\ell)\{m_Y(z)-\mu_Y-\beta_\ell(p(z)-\mu_D)\}/\gamma_\ell$. The third sum has mean zero because $\E[(Z_\ell-p^Z_\ell)\{m_Y(\bZ)-\mu_Y-\beta_\ell(p(\bZ)-\mu_D)\}]=\Cov(Y,Z_\ell)-\beta_\ell\Cov(D,Z_\ell)=0$. Under Condition~\ref{cond:cell-lift-paper}, each cell score $\chi_z^\mu,\chi_z^D,\chi_z^Y\in\overline{\mathcal{G}}$. Because $\overline{\mathcal{G}}$ is a closed linear subspace of $L^2_0(P_0)$, the finite linear combination gives $\phi_\ell\in\overline{\mathcal{G}}$ for every $\ell$. If $\bomega(\cdot)\equiv\bomega_0$, then $\Riesz_{\omega,\ell}\equiv 0$ in~\eqref{eq:eif-general} and the representer reduces to $\sum_\ell\omega_{0,\ell}\phi_\ell$. By the cell decomposition above and closedness of $\overline{\mathcal{G}}$, this is in $\overline{\mathcal{G}}$.

Under Condition~\ref{cond:cell-smooth-paper}, the finite-dimensional delta method applied to $\bomega(P)=g(\eta(P),r(P))$ gives, for each $\ell$,
$\Riesz_{\omega,\ell}(O;P_0) =\sum_{z\in\mathcal{Z}}\bigl\{a_{\ell z}^\mu\chi_z^\mu(O)+a_{\ell z}^D\chi_z^D(O)+a_{\ell z}^Y\chi_z^Y(O)\bigr\} +\sum_j b_{\ell j}\,\xi_{r_j}(O;P_0),$
with deterministic coefficients $(a_{\ell z}^\mu,a_{\ell z}^D,a_{\ell z}^Y,b_{\ell j})$ read off the partial derivatives of $g$ at $(\eta(P_0),r(P_0))$, and $\xi_{r_j}(\cdot;P_0)\in\overline{\mathcal{G}}$ by Condition~\ref{cond:cell-smooth-paper}. Therefore $\Riesz_{\omega,\ell}\in\overline{\mathcal{G}}$. Combining with the cell decomposition above and closure of $\overline{\mathcal{G}}$ under finite linear combinations yields $\eif_{\omega(\cdot)}(\cdot;P_0)\in\overline{\mathcal{G}}$.
\end{proof}

\begin{corollary}[Projected RT and Assumption~\ref{ass:projection-tangent}]\label{cor:tc-projected}
Under the maintained assumptions of Theorem~\ref{thm:tc-primitive}, suppose Assumptions~\ref{ass:strict-complementarity} and~\ref{ass:projection-dictionary} hold, with each primitive representer $\xi_{\ell,t}$, $\xi^\star_t$, and $\zeta_t$ either expressible as a finite linear combination of the cell influence functions in~\eqref{eq:cell-ifs-paper} or otherwise liftable to a LATE structural score in the sense of~\eqref{eq:cell-lift-eq-paper}. Then the projected-weight Riesz representer $\Riesz_{\omega^\dagger,\ell}(\cdot;P_0)$, evaluated at the locally active set $I_0$ and zero elsewhere, lies in $\overline{\mathcal{G}}$. Consequently, Assumption~\ref{ass:projection-tangent} is satisfied under these primitive sufficient conditions.
\end{corollary}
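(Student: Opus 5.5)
The plan is to reduce the corollary to Theorem~\ref{thm:tc-primitive} by exhibiting the projected-weight Riesz representer $\Riesz_{\omega^\dagger,\ell}$ as a finite linear combination of elements already known to lie in $\TLATEbar$. The starting point is the explicit formula obtained in the proof of Proposition~\ref{prop:lam-projected}. By Lemma~\ref{lem:active-set-stability}, Assumptions~\ref{ass:strict-complementarity} and~\ref{ass:projection-dictionary} make the active set locally constant at $I_0$. They also make the bordered KKT matrix $M(P_0)$ invertible, and on the active coordinates
\[
\Riesz_{\omega^\dagger,\ell}=\sum_{k\in I_0}\sum_{t\in\mathcal{T}_+(P_0)}\bigl[\mathcal{X}^{(\ell)}_{k,t}\,\xi_{k,t}+\mathcal{E}^{(\ell)}_{k,t}\,\xi^\star_t+\mathcal{Z}^{(\ell)}_{k,t}\,\zeta_t\bigr],
\]
with $\Riesz_{\omega^\dagger,\ell}=0$ for $\ell\notin I_0$. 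The coefficients $\mathcal{X},\mathcal{E},\mathcal{Z}$ are deterministic constants built from $N=M(P_0)^{-1}$, $\kappa_t(P_0)$, $\alpha_{k,t}(P_0)$, $\psi^\star_t(P_0)$ and $\omega^\dagger(P_0)$. Because $\mathcal{T}_+(P_0)$ is finite, each $\Riesz_{\omega^\dagger,\ell}$ is therefore a \emph{finite} linear combination of the primitive representers $\xi_{k,t}$, $\xi^\star_t$, $\zeta_t$.

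Next I would show that each primitive representer lies in $\TLATEbar$, treating the two cases in the hypothesis separately. In the first case the representer is a finite linear combination of the cell influence functions $\chi^\mu_z,\chi^D_z,\chi^Y_z$ in~\eqref{eq:cell-ifs-paper}. Condition~\ref{cond:cell-lift-paper}, maintained in Theorem~\ref{thm:tc-primitive}, places each of these cell scores in $\TLATEbar$, and closure under finite linear combinations gives membership. This case covers, for instance, the threshold representers of Lemma~\ref{lem:A_pathwise}, which are built from $\phi_{p(z)}=\chi^D_z$, from $\phi_{q_\ell}$ and $\phi_{q^0_\ell}$ (combinations of the $\chi^\mu_z$), and from $\phi_{\pi_\ell}$. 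It also covers the canonical $\xi^\star$ and $\zeta$ of Lemma~\ref{lem:kappa-pathwise} and Appendix~\ref{app:proof_canonical_targets}. In the second case the representer is liftable in the sense of~\eqref{eq:cell-lift-eq-paper}. Then it is the observed score of a LATE-compatible structural direction $(a_Z,a_T,b_{t,d})$ with the stated square-integrability. A bounded truncation-and-recentering approximation, as in the proof of Proposition~\ref{prop:eif-decomp}(i), places it in the closure of the span of bounded structural scores, that is, in $\TLATEbar$.

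Since $\TLATEbar$ is a closed linear subspace of $L^2_0(P_0)$, the finite combination $\Riesz_{\omega^\dagger,\ell}$ lies in $\TLATEbar$, and the zero inactive coordinates trivially do as well. Theorem~\ref{thm:tc-primitive} already gives $\phi_\ell\in\TLATEbar$. Hence the scalar representer obtained by inserting $\Riesz_{\omega^\dagger,\ell}$ into~\eqref{eq:eif-general}, namely $\sum_\ell\omega^\dagger_\ell(P_0)\phi_\ell+\sum_\ell\Wald_\ell(P_0)\Riesz_{\omega^\dagger,\ell}$, belongs to $\TLATEbar$. This is exactly Assumption~\ref{ass:projection-tangent}.

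The step I expect to be the main obstacle is the liftable case. There I must check that the structural direction in~\eqref{eq:cell-lift-eq-paper}, which is only square-integrable rather than bounded, is approximated by the bounded generators of Definition~\ref{def:late-tangent}. I would also check that the induced observed scores converge in $L^2(P_0)$. I would rely on continuity of the finite-mixture structural-to-observed score map, already invoked in the proof of Proposition~\ref{prop:eif-decomp}. A secondary point is confirming that the KKT derivative formula is valid on the locally stable active face, but Lemma~\ref{lem:active-set-stability} supplies this directly.
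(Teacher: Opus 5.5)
Your proposal is correct and follows the paper's own proof essentially step for step. The active-face KKT derivative from the proof of Proposition~\ref{prop:lam-projected} writes $\Riesz_{\omega^\dagger,\ell}$ as a finite linear combination of $\xi_{k,t}$, $\xi^\star_t$, $\zeta_t$ (zero off $I_0$). Each primitive lies in $\TLATEbar$ either through Condition~\ref{cond:cell-lift-paper} or through the structural score lift with truncation and recentering of the $b_{t,d}$, and closedness of $\TLATEbar$ finishes the argument. You also invoke $\phi_\ell\in\TLATEbar$ from Theorem~\ref{thm:tc-primitive} to pass from the weight representer to the full scalar representer in~\eqref{eq:eif-general}, which is what Assumption~\ref{ass:projection-tangent} actually concerns; the paper leaves that step implicit.
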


\begin{proof}
Suppose $q\in L^2_0(P_0)$ admits the representation~\eqref{eq:cell-lift-eq-paper} for some $(a_Z,a_T,\{b_{t,d}\})$ satisfying the centering and square-integrability conditions of Condition~\ref{cond:cell-lift-paper}. Take the LATE-compatible submodel that tilts, along the exponential chart of Appendix~\ref{app:proof_thm_lam_bound}, the instrument law by $a_Z$, the positive-support type probabilities by $a_T$, and each positive-support type-conditional outcome marginal $f_{t,d}$ by $b_{t,d}$; the centering conditions make every tilt mean-zero and the construction perturbs only the instrument law, the type-probability simplex, and the per-arm outcome marginals; hence the chart remains inside $\PLATE$. The observed-data density in cell $(d,z)$ is the type mixture $\mu_z\sum_{t:\,t(z)=d}\theta_t\,f_{t,d}(y)$, which depends on the type-conditional outcome laws only through the per-arm marginals $f_{t,d}$. Differentiating its logarithm along the submodel gives the observed-data score
$a_Z(z)+\sum_{t:\,t(z)=d}\frac{\theta_t\,f_{t,d}(y)}{\sum_{r:\,r(z)=d}\theta_r\,f_{r,d}(y)}\bigl\{a_T(t)+b_{t,d}(y)\bigr\}=q(O),$
the right side of~\eqref{eq:cell-lift-eq-paper}; the unobserved counterfactual arm $Y(1-d)$ does not enter, because the observed law is a mixture of the per-arm marginals and so carries no dependence on the within-type copula of $(Y(0),Y(1))$. For bounded $b_{t,d}$ this exhibits $q$ as an element of $\TLATE$; for merely square-integrable $b_{t,d}$, truncate each at level $k$, recenter, and let $k\to\infty$: by the $L^2(P_0)$-continuity of the structural-to-observed score map the truncated observed scores converge to $q$, hence $q\in\overline{\mathcal{G}}$.

Under Assumption~\ref{ass:strict-complementarity} the active set $I_0$ is locally stable and the bordered KKT system characterizing $\bomega^\dagger(\bpsi^\star;P)$ has a nonsingular matrix at $(\bomega^\dagger(\bpsi^\star;P_0),\bpsi^\star(P_0),P_0)$. The implicit function theorem applied to the active-face KKT conditions, as in the proof of Lemma~\ref{lem:active-set-stability} (Appendix~\ref{app:proof_lem_active}), makes $P\mapsto\bomega^\dagger(\bpsi^\star;P)$ continuously differentiable on a neighborhood of $P_0$, with derivative given by the active-face KKT formula for $\Riesz_{\omega^\dagger,\ell}$ in the proof of Proposition~\ref{prop:lam-projected} (Appendix~\ref{app:proof_prop_pathwise_omega}). The active-face derivative is therefore a finite linear combination of the primitive representers $\xi_{\ell,t}$ for $\alpha_{\ell,t}(P)$, $\xi^\star_t$ for $\psi^\star_t(P)$, and $\zeta_t$ for $\kappa_t(P)$. Under the hypothesis of Corollary~\ref{cor:tc-projected}, each primitive representer is either covered by Condition~\ref{cond:cell-lift-paper} (in which case it is in $\overline{\mathcal{G}}$ by the cell-IF decomposition in the proof of Theorem~\ref{thm:tc-primitive}) or satisfies the lift~\eqref{eq:cell-lift-eq-paper} directly (in which case it is in $\overline{\mathcal{G}}$ by the score-lift argument above). Inactive coordinates $\ell\in I_0^c$ have zero derivative by complementary slackness. Hence $\Riesz_{\omega^\dagger,\ell}(\cdot;P_0)\in\overline{\mathcal{G}}$ for every $\ell$, and Assumption~\ref{ass:projection-tangent} follows.
\end{proof}

\begin{corollary}[Tangent compatibility in the paper's finite one-to-one designs]\label{cor:tc-finite-designs}
In the ordered-threshold design, with positive-mass threshold types $\{t_0,\ldots,t_K\}$ and a realized propensity map one-to-one on $\supp(\bZ)$, and in the single-instrument non-overlap design, with complier, always-taker, and never-taker types of positive mass, suppose that within every cell $(d,z)$ the arm-$d$ outcome densities of the types appearing in that cell have bounded likelihood ratios: each positive-weight type's arm-$d$ density and the cell's mixture density are bounded above and below by positive multiples of one another on the mixture's support. Then the cell influence functions $\chi_z^\mu, \chi_z^D, \chi_z^Y$ of~\eqref{eq:cell-ifs-paper} lie in $\overline{\mathcal{G}}$; hence Condition~\ref{cond:cell-lift-paper} holds.
\end{corollary}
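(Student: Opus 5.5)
The plan is to verify Condition~\ref{cond:cell-lift-paper} directly: for each cell score $\chi\in\{\chi_z^\mu,\chi_z^D,\chi_z^Y\}$, I will construct a structural triple $(a_Z,a_T,\{b_{t,d}\})$ satisfying the centering and square-integrability requirements and reproducing $\chi$ through the lift equation~\eqref{eq:cell-lift-eq-paper}. Membership of $\chi$ in $\overline{\mathcal{G}}$ then follows from the score-lift argument in the proof of Corollary~\ref{cor:tc-projected}, including the truncation step for unbounded $b_{t,d}$.

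\textbf{The mass score.} The instrument-mass score is immediate. Take $a_Z(z')=\ind\{z'=z\}-\mu_z$, with $a_T\equiv0$ and $b\equiv0$. This is centered under $\mu$, and the lift equation returns $\chi_z^\mu$ exactly.

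\textbf{The treatment score.} For $\chi_z^D$, I would use type-probability tilts only, so $a_Z\equiv0$ and $b\equiv0$. With $b=0$, the lift equation in cell $(d,z')$ reads
\[
\sum_{t:\,t(z')=d} w_{t,d,z'}(y)\,a_T(t),
\]
a posterior-weighted average of $a_T$. The difficulty is that these weights depend on $y$. I would handle this by pairing $a_T$ with compensating outcome tilts $b_{t,d}(y)=c_{t,d}-a_T(t)$, chosen so that $a_T(t)+b_{t,d}(y)$ is constant across types within each arm. That makes the $y$-dependence disappear.

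The one-to-one propensity map is what makes the design tractable. In the threshold design, the types treated at $z$ are exactly $\{t_0,\ldots,t_{k}\}$ with $p(z)=p_{k+1}$, and distinct cells have distinct cutoffs. So I would choose $a_T(t_k)$ as the difference of $\pm$ indicator-type coefficients at consecutive cutoffs, mimicking the perturbation of the single cell propensity $p(z)$. This sends mass across the cutoff at $z$ only, i.e.\ it uses the derivative of $\theta_{t_k}=p_{k+1}-p_k$. It then yields a score equal to $\ind\{\bZ=z\}(D-p(z))/\mu_z$ on cell $z$ and zero elsewhere, up to the within-type averaging that the constant-by-arm device removes.

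The single-instrument non-overlap design is the special case with three types, where complier, always-taker and never-taker masses map one-to-one to $(p(0),p(1))$.

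\textbf{The outcome score.} For $\chi_z^Y$, take $a_Z=0$ and $a_T=0$, and set $b_{t,d}(y)=\ind\{\cdot\}$-weighted $(y-m_Y(z))$ corrections that reproduce $\ind\{\bZ=z\}(Y-m_Y(z))/\mu_z$. The problem is that the posterior mixing weights $w_{t,d,z}(y)$ in cell $z$ must be inverted. I would take $b_{t,d}(y)$ proportional to $(y-m_Y(z))/\mu_z$ restricted to the types appearing in cell $(d,z)$, recentered under $f_{t,d}$.

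Since the same type appears in several cells, the recentering constants and cross-cell leakage must be cancelled. I would cancel them by adjusting $a_T$ and using the triangular (nested) structure of the threshold types: solve cell by cell along the propensity ordering, which gives a triangular, hence invertible, linear system.

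\textbf{Main obstacle.} The step I expect to be hardest is the square-integrability of the resulting $b_{t,d}$. Dividing by the mixture weights requires the bounded likelihood ratio hypothesis: it makes $w_{t,d,z}(y)$ bounded away from zero and infinity on the mixture support. This keeps $\int b_{t,d}^2f_{t,d}<\infty$ given $\E[Y^2]<\infty$ from Assumption~\ref{ass:sampling}(d). If the exact inversion proves messy, I would fall back to an $L^2$-closure argument: show that the orthocomplement of the lifted scores within the span of cell scores is trivial, using the bounded ratios to transfer orthogonality from structural to observed directions.
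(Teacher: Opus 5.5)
\emph{Verdict: genuine gap.} The failing step is $\chi_z^D$, and even for $\chi_z^Y$ you never check that your solution is an admissible structural perturbation.

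\textbf{Where the proposal breaks.} Your lift of $\chi_z^\mu$ is the paper's. The cell-by-cell triangular solve you sketch for $\chi_z^Y$ is also the right idea. The step for $\chi_z^D$, however, fails.

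Your compensating tilt $b_{t,d}(y)=c_{t,d}-a_T(t)$ is constant in $y$. The centering requirement $\int b_{t,d}f_{t,d}\,dy=0$ of Condition~\ref{cond:cell-lift-paper} therefore forces $b_{t,d}\equiv0$, leaving only pure type-mass tilts, and these cannot reproduce $\chi_z^D$. Write $m_{j,d}=\sum_{r:\,r(z^{(j)})=d}\theta_r f_{r,d}$ for the cell--arm mixture. Shifting mass from $t_j$ to $t_{j-1}$ (your perturbation of $p(z^{(j)})$) has two effects.
\begin{itemize}
\item In cell $z^{(j)}$ it gives a score proportional to $f_{t_{j-1},1}/m_{j,1}$ on the treated arm and to $f_{t_j,0}/m_{j,0}$ on the untreated arm. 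These generally vary with $y$, while $\chi^D_{z^{(j)}}$ is constant on each arm.
\item The types $t_{j-1}$ and $t_j$ share an arm in every other cell. The shift therefore moves those cells' outcome mixtures by multiples of $f_{t_{j-1},d}-f_{t_j,d}$, where $\chi^D_{z^{(j)}}$ is zero.
\end{itemize}
For example, with one binary instrument, raising $p(1)$ by turning never-takers into compliers changes the untreated outcome mixture in cell $z=0$, unless the two types have the same untreated outcome density. So $\chi_z^D$ needs $y$-dependent outcome tilts built from type-to-mixture likelihood ratios, just as $\chi_z^Y$ does. The bounded-ratio hypothesis is needed for both scores.

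\textbf{The paper's construction and what you are missing.} The paper handles both scores with one recursion. It sets $a_Z\equiv0$, takes $g_{t,d}\equiv a_T(t)+b_{t,d}$ as the unknown, and solves $\sum_{t:\,t(z^{(j)})=d}\theta_t f_{t,d}\,g_{t,d}=m_{j,d}\,\kappa_{j,d}$, where $\kappa_{j,d}$ is the value of $\chi$ on arm $d$ of cell $z^{(j)}$. Arm $1$ of $z^{(j)}$ contains exactly one more type than arm $1$ of $z^{(j-1)}$, so differencing gives $g_{t_{j-1},1}=\kappa_{j,1}+(\kappa_{j,1}-\kappa_{j-1,1})\,m_{j-1,1}/(\theta_{t_{j-1}}f_{t_{j-1},1})$, and symmetrically in arm $0$.

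Even for $\chi_z^Y$, your plan lacks the check that this output is an admissible structural triple. An interior type is treated in some cells and untreated in others, so the two arm recursions fix $g_{t,1}$ and $g_{t,0}$ separately. But $a_T(t)$ is one number shared by both arms, and each $b_{t,d}$ must be centered. One therefore needs $\int g_{t,1}f_{t,1}\,dy=\int g_{t,0}f_{t,0}\,dy$ and $\sum_t\theta_t a_T(t)=0$. ``Adjusting $a_T$'' cannot impose this. The paper proves it by integrating the recursion, which gives $\theta_{t_k}\{\int g_{t_k,1}f_{t_k,1}-\int g_{t_k,0}f_{t_k,0}\}=S_{k+1}-S_k$ with $S_j=\E_{P_0}[\chi\mid\bZ=z^{(j)}]=0$. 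The conditional mean-zero property of the cell scores is thus the key input, and you never use it.

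\textbf{The fallback.} As phrased, it does not close the gap. For $S$ the span of the cell scores, $S\cap\overline{\mathcal{G}}^{\perp}=\{0\}$ does not imply $S\subseteq\overline{\mathcal{G}}$; you need $\overline{\mathcal{G}}^{\perp}\subseteq S^{\perp}$. A duality argument of that form does work here and avoids the recursion:
\begin{itemize}
\item Orthogonality to the outcome tilts, the nested type structure, and the common supports implied by the bounded ratios force any $q\perp\overline{\mathcal{G}}$ to depend on $(D,\bZ)$ only.
\item Adjacent types differ in a single cell, so orthogonality to the type tilts makes $q$ equal on the two arms of each cell.
\item Orthogonality to the instrument tilts then forces $q=0$.
\end{itemize}
So $\overline{\mathcal{G}}=L^2_0(P_0)$ in these designs. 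You would, however, have to carry out that argument, not the one you sketch.
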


\begin{proof}
The cell mass score lifts through the instrument-law coordinate ($a_Z(z')=\ind\{z'=z\}-\mu_z$, $a_T\equiv b\equiv 0$, with $\sum_{z'}\mu_{z'}a_Z(z')=\mu_z-\mu_z=0$). For $\chi_z^D$ and $\chi_z^Y$, take $a_Z \equiv 0$; both scores have zero conditional mean in every cell. Order the support cells so that $p(z^{(1)}) < \cdots < p(z^{(K)})$ (the propensity map is one-to-one; the non-overlap design is the two-cell chain, its propensities distinct because the complier mass is positive), write $g_{t,d} \equiv a_T(t) + b_{t,d}$, let $\mathcal{A}_{j,1} = \{t_0, \ldots, t_{j-1}\}$ and $\mathcal{A}_{j,0} = \{t_j, \ldots, t_K\}$ collect the types in arm $d$ of cell $z^{(j)}$, let $m_{j,d} = \sum_{r \in \mathcal{A}_{j,d}} \theta_r f_{r,d}$, and let $\kappa_{j,d}$ denote the value of $\chi$ on arm $d$ of cell $z^{(j)}$, constant in $y$ for $\chi_z^D$ and linear in $y$ for $\chi_z^Y$. In cell $z^{(j)}$ and arm $d$ the lift equation~\eqref{eq:cell-lift-eq-paper} is
$
\sum_{t \in \mathcal{A}_{j,d}} \theta_t\, f_{t,d}(y)\, g_{t,d}(y) \;=\; m_{j,d}(y)\, \kappa_{j,d}(y).
$
The base equations are the pure boundary cells, $g_{t_0,1} = \kappa_{1,1}$ and $g_{t_K,0} = \kappa_{K,0}$. Since $m_{j,1} = m_{j-1,1} + \theta_{t_{j-1}} f_{t_{j-1},1}$, subtracting the cell-$z^{(j-1)}$ from the cell-$z^{(j)}$ equation in arm $1$, and the cell-$z^{(j+1)}$ from the cell-$z^{(j)}$ equation in arm $0$, solves for the entering type:
$
g_{t_{j-1},1} = \kappa_{j,1} + \bigl(\kappa_{j,1} - \kappa_{j-1,1}\bigr) \frac{m_{j-1,1}}{\theta_{t_{j-1}} f_{t_{j-1},1}},
$
$
g_{t_j,0} = \kappa_{j,0} + \bigl(\kappa_{j,0} - \kappa_{j+1,0}\bigr) \frac{m_{j+1,0}}{\theta_{t_j} f_{t_j,0}};
$
previously solved coefficients enter each new equation only through the already-satisfied previous one; by induction every cell-arm equation holds, and each displayed ratio is the previously solved mixture over the entering type's arm density, bounded above and below by the bounded-likelihood-ratio hypothesis. For the centering, set $a_T(t) = \int g_{t,d}(y) f_{t,d}(y)\,dy$ and $b_{t,d} = g_{t,d} - a_T(t)$, so that $\int b_{t,d} f_{t,d}\,dy = 0$ by construction. The definition is arm-consistent: integrating the updates gives $\theta_{t_k}\{\int g_{t_k,1} f_{t_k,1} - \int g_{t_k,0} f_{t_k,0}\} = S_{k+1} - S_k$ with $S_j \equiv \int \kappa_{j,1}\, m_{j,1} + \int \kappa_{j,0}\, m_{j,0} = \E_{P_0}[\chi \mid \bZ = z^{(j)}] = 0$, and $\sum_t \theta_t\, a_T(t)$ telescopes to $S_K = 0$. The recursion therefore writes each $b_{t,d}$ as a finite linear combination, with bounded coefficients, of cell influence functions evaluated in the type's solving cell; the cell influence functions lie in $L^2(P_0)$, and the ratio bounds transfer square-integrability between each cell mixture and each positive-weight type's arm density, hence every $b_{t,d}$ lies in $L^2(F_{t,d})$ and $\sum_{t,d}\theta_t\|b_{t,d}\|^2_{L^2(F_{t,d})}$ is finite. The lifted structural score solving~\eqref{eq:cell-lift-eq-paper} is square-integrable and hence, by the truncation argument in the preceding proof, lies in $\overline{\mathcal{G}}$; because it equals $\chi_z^D$ (respectively $\chi_z^Y$) by construction, both cell influence functions lie in $\overline{\mathcal{G}}$, which verifies the membership requirement of Condition~\ref{cond:cell-lift-paper}.
\end{proof}

\noindent In particular, the PRTE representers $\xi^{\star,\mathrm{PRTE}}_{t_k}$ and $\zeta^{\mathrm{PRTE}}_{t_k}$ of Appendix~\ref{app:proof_canonical_targets} are finite combinations of cell-propensity and cell-mass influence functions and lie in $\overline{\mathcal{G}}$; the Riesz representer $\Riesz_{\omega^\dagger,\ell}$ is then the LATE representer.

\subsection{Uniform coverage}\label{app:uniform-coverage}

\begin{assumption}[Uniform structural LAN for a restricted LATE score class]\label{ass:late-regularity-uniform}
For every $C > 0$, fix structural coordinate score classes $\mathcal{A}_Z(C)$, $\mathcal{A}_{\Theta}(C)$, and $\mathcal{A}_{r,d}(C)$ as follows. Elements $a^Z\in\mathcal{A}_Z(C)$ satisfy $\sum_z\mu_z(P_0)a^Z(z)=0$; elements $a^\Theta\in\mathcal{A}_{\Theta}(C)$ satisfy $\sum_{r\in\mathcal{T}_+(P_0)}\theta_r(P_0)a^\Theta(r)=0$ and perturb only the relative simplex on $\mathcal{T}_+(P_0)$; elements $a^F_{r,d}\in\mathcal{A}_{r,d}(C)$ lie in $L^2_0(F_{r,d})\cap L^\infty(F_{r,d})$ for positive-support types. All these classes have a common envelope $M(C)<\infty$. Let $\mathcal{H}^{\mathrm{unif}}_C(P_0)$ be the nonempty class of observed-data scores generated from triples $\bigl(a^Z, a^\Theta, (a^F_{r,d})\bigr)$ of these structural coordinate scores by the LATE structural map, restricted to those with $\|h\|_{L^2(P_0)}\le C$. For every $h \in \mathcal{H}^{\mathrm{unif}}_C(P_0)$, there exists a LATE-compatible triangular submodel $\{P_{n,h}\}_{n\geq1}\subset\PLATE$ through $P_0$ whose structural coordinates are perturbed by the exponential chart in Appendix~\ref{app:proof_thm_lam_bound}, and the observed-data likelihood ratios satisfy, for every sequence $\{h_n\}\subset\mathcal{H}^{\mathrm{unif}}_C(P_0)$,
$\log \frac{dP_{n,h_n}^n}{dP_0^n} = \frac{1}{\sqrt n}\sum_{i=1}^n h_n(O_i) -\frac12\|h_n\|_{L^2(P_0)}^2 + o_{P_0}(1).$
\end{assumption}

\begin{assumption}[Uniform Hadamard differentiability of the weight map]\label{ass:omega-uniform-hadamard}
In addition to the baseline Hadamard differentiability of the weight map (Appendix~\ref{app:regularity}), the derivative is uniform on the restricted score classes from Assumption~\ref{ass:late-regularity-uniform}: for every $C > 0$ there exist $K_\omega(C) < \infty$ and $t_0(C)>0$ such that, for every $h \in \mathcal{H}^{\mathrm{unif}}_C(P_0)$ and every LATE-compatible path $P_{t,h}$ through $P_0$ with score $h$,
$\bigl\|\omega(P_{t,h}) - \omega(P_0) - t\, \dot\omega(h; P_0)\bigr\|_{\ell^\infty} \;\leq\; K_\omega(C)\, t^2 \quad \text{for all } |t|\leq t_0(C).$
\end{assumption}

\noindent The conditions above strengthen the pointwise coverage of Proposition~\ref{prop:var-estimation} to uniform validity over the restricted local class.

\begin{proposition}[Uniform coverage of the RT interval]\label{prop:uniform-coverage}
Under the conditions of Proposition~\ref{prop:var-estimation} and Assumptions~\ref{ass:late-regularity-uniform} and~\ref{ass:omega-uniform-hadamard}, with $V_\omega(P_0)>0$, for each fixed $C<\infty$ the RT Wald interval is uniformly valid over the restricted local class: $
  \liminf_{n \to \infty}\; \inf_{h \in \mathcal{H}^{\mathrm{unif}}_C(P_0)} \mathbb{P}_{n,h}^{n}\!\bigl(\beta^{\mathrm{RT}}_{\bomega}(P_{n,h}) \in \hat\beta^{\mathrm{RT}}_{\bomega} \pm z_{1-\alpha/2}\sqrt{\hat V_n/n}\bigr) \;\geq\; 1 - \alpha.
$
\end{proposition}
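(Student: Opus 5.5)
The plan is to reduce uniform coverage to uniform asymptotic normality of the studentized pivot along arbitrary sequences $\{h_n\}\subset\mathcal{H}^{\mathrm{unif}}_C(P_0)$. By the subsequence principle, the liminf of the infimum over $h$ equals the infimum over sequences $h_n$ of $\liminf_n \mathbb{P}^n_{n,h_n}(\cdot)$. It therefore suffices to show that for every such sequence
\[
T_n \equiv \sqrt n\,\bigl(\hat\beta^{\mathrm{RT}}_{\bomega}-\beta^{\mathrm{RT}}_{\bomega}(P_{n,h_n})\bigr)/\sqrt{\hat V_n}\ \xrightarrow{d,\,P_{n,h_n}}\ N(0,1).
\]
Suppose this fails. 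Then some sequence and subsequence keep the coverage below $1-\alpha-\varepsilon$, and we derive a contradiction. Along a further subsequence we may assume $h_n\to h$ in $L^2(P_0)$, by weak compactness combined with tightness; if only weak convergence is available, we work with the limit of the linear terms $\langle \eif_{\omega(\cdot)},h_n\rangle$ directly.

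The first step is contiguity and a Le Cam third lemma argument that holds uniformly. Assumption~\ref{ass:late-regularity-uniform} gives the LAN expansion $\log dP^n_{n,h_n}/dP_0^n=n^{-1/2}\sum_i h_n(O_i)-\tfrac12\|h_n\|^2+o_{P_0}(1)$ with $\|h_n\|\le C$. Hence $P^n_{n,h_n}$ is contiguous to $P_0^n$. Jointly with the asymptotically linear expansion $\sqrt n(\hat\beta^{\mathrm{RT}}_{\bomega}-\beta^{\mathrm{RT}}_{\bomega}(P_0))=n^{-1/2}\sum_i\eif_{\omega(\cdot)}(O_i)+o_{P_0}(1)$ from Proposition~\ref{prop:eif-decomp}(ii), the pair (linear term, log-likelihood ratio) is asymptotically bivariate normal under $P_0$, with covariance $\langle\eif_{\omega(\cdot)},h_n\rangle$ along the subsequence. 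We use the Lindeberg CLT here; the common envelope $M(C)$ of the score classes supplies the Lindeberg condition for $h_n$. Le Cam's third lemma then gives $\sqrt n(\hat\beta^{\mathrm{RT}}_{\bomega}-\beta^{\mathrm{RT}}_{\bomega}(P_0))\rightsquigarrow N(\lim\langle\eif_{\omega(\cdot)},h_n\rangle,V_\omega(P_0))$ under $P_{n,h_n}$. The $o_{P_0}(1)$ remainder transfers by contiguity, and $\hat V_n\to_p V_\omega(P_0)$ (Proposition~\ref{prop:var-estimation}) transfers the same way.

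The second step is the deterministic centering. We need $\sqrt n(\beta^{\mathrm{RT}}_{\bomega}(P_{n,h_n})-\beta^{\mathrm{RT}}_{\bomega}(P_0))-\langle\eif_{\omega(\cdot)},h_n\rangle\to0$ uniformly in $h_n$. We handle the product $\bomega'\bWald$ as in Lemma~\ref{lem:rt-pathwise}, with separate expansions of the two factors. For the weight factor, Assumption~\ref{ass:omega-uniform-hadamard} with $t=n^{-1/2}$ bounds the remainder by $K_\omega(C)/n$, which is $o(n^{-1/2})$ uniformly. For the Wald factor, $\Wald_\ell$ is a smooth ratio of moments of tilted laws along the exponential chart, and the scores are bounded by $M(C)$. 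A second-order Taylor expansion of the covariances in $t$ therefore has remainder $O(t^2)$ uniformly over the class; the $(2+\delta)$ moments of $Y$ control the outcome-density tilt. The cross term is $O(1/n)$. Subtracting this centering from the first step gives $T_n\rightsquigarrow N(0,1)$ under $P_{n,h_n}$ by Slutsky, with $V_\omega(P_0)>0$ ensuring that the studentization is well defined. This contradicts the assumed failure, and the portmanteau theorem gives the stated liminf bound.

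I expect the main obstacle to be the uniform control of the Wald-factor remainder along the exponential chart. The weight factor is handled directly by assumption, but the Wald factor needs an explicit second-order bound. To get it, I would write each tilted covariance $\E_{P_{t,h}}[f]$ as a ratio of exponential-tilt integrals, differentiate twice in $t$, and bound the second derivative by $e^{2|t|M(C)}M(C)^2\E_{P_0}[|f|]$ with $f$ square-integrable. The denominators $\gamma_\ell(P_{t,h})$ stay bounded away from zero for small $t$ by the same bound, which completes the argument.
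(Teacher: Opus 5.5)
Your proposal is correct and follows essentially the same route as the paper. Both arguments proceed as follows:
\begin{itemize}
\item a contradiction argument that reduces the claim to arbitrary sequences $\{h_n\}\subset\mathcal{H}^{\mathrm{unif}}_C(P_0)$;
\item contiguity from the uniform LAN expansion, with the envelope $M(C)$ supplying the Lindeberg condition;
\item transfer of the $h$-free asymptotic linearization and of $\hat V_n\to_p V_\omega(P_0)$ by contiguity;
\item Le Cam's third lemma;
\item a uniform $O(1/n)$ centering remainder, obtained from a second-order Taylor bound on the tilted Wald moments together with Assumption~\ref{ass:omega-uniform-hadamard} at $t=n^{-1/2}$.
\end{itemize}
One small correction: bounded sequences in $L^2(P_0)$ need not have norm-convergent subsequences, so drop the claim that you may take $h_n\to h$ strongly. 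As your own fallback indicates, and as the paper does, the argument only uses weak convergence along a subsequence, together with convergence of $\|h_n\|^2_{L^2(P_0)}$ and of $\langle\eif_{\omega(\cdot)},h_n\rangle_{L^2(P_0)}$.
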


\begin{proof}
Fix $C > 0$ and write $\mathcal{H}_C = \mathcal{H}^{\mathrm{unif}}_C(P_0)$. Every $h \in \mathcal{H}_C$ obeys the $P_0$-a.s.\ envelope $\|h\|_{L^\infty(P_0)} \leq 3M(C)$: each structural block is bounded by the common envelope $M(C)$, and the type-posterior weights combining them sum to one. The envelope makes the uniform LAN of Assumption~\ref{ass:late-regularity-uniform} the standard consequence of uniform quadratic-mean differentiability of the bounded structural charts, verifiable in the finite designs of the previous subsection; the alternatives $P_{n,h}$ are built from the structural exponential chart of Appendix~\ref{app:proof_thm_lam_bound} and remain in $\PLATE$ for all large $n$. Suppose the claim fails: there are $\varepsilon > 0$, a subsequence $n_k \uparrow \infty$, and $h_{n_k} \in \mathcal{H}_C$ along which the coverage probability stays at distance at least $\varepsilon$ from $1 - \alpha$. Extending $\{h_{n_k}\}$ to a full sequence by arbitrary elements of $\mathcal{H}_C$, it is enough to show that the coverage probability converges to $1 - \alpha$ along every sequence $\{h_n\} \subset \mathcal{H}_C$.

\emph{Contiguity.} Along any subsequence there is a further subsequence, not relabeled, with $h_n \rightharpoonup h_\infty$ weakly in $L^2_0(P_0)$ and $\|h_n\|_{L^2(P_0)}^2 \to \sigma_h^2 \leq C^2$. The uniform LAN expansion and the triangular-array Lindeberg condition supplied by the envelope give $\log(dP_{n,h_n}^n/dP_0^n) \xrightarrow{d} N(-\sigma_h^2/2, \sigma_h^2)$ under $P_0$; the limiting likelihood ratio has expectation one, and Le Cam's first lemma yields $P_{n,h_n}^n \triangleleft P_0^n$ along every subsequence, hence along the original sequence: whenever $\mathbb{P}_{0}^{n}(A_n) \to 0$, also $\mathbb{P}_{n,h_n}^{n}(A_n) \to 0$.

\emph{Uniform linearization.} Under $P_0$, Proposition~\ref{prop:eif-decomp} gives $\sqrt n\,(\hat\beta^{\mathrm{RT}}_{\bomega} - \beta^{\mathrm{RT}}_{\bomega}(P_0)) = n^{-1/2}\sum_i \eif_{\omega(\cdot)}(O_i; P_0) + \rho_n$ with $\rho_n = o_{P_0}(1)$; $\rho_n$ is a deterministic polynomial in the $O_{P_0}(n^{-1/2})$ sample moments and the weight-map von Mises remainder and does not depend on $h_n$, and contiguity turns $\rho_n = o_{P_0}(1)$ into $\rho_n = o_{P_{n,h_n}}(1)$. On the target side, $\beta^{\mathrm{RT}}_{\bomega}(P_{n,h_n}) - \beta^{\mathrm{RT}}_{\bomega}(P_0) = n^{-1/2} \langle \eif_{\omega(\cdot)}, h_n \rangle_{L^2(P_0)} + q_n$, where the remainder has two uniform sources: a second-order Taylor bound for the tilted moment ratios $\Wald_\ell = \Cov(Y, Z_\ell)/\Cov(D, Z_\ell)$ under the bounded envelope, which keeps $|\Cov_{P_{n,h_n}}(D, Z_\ell)| \geq \gamma_\ell(P_0)/2$ for all large $n$ and bounds the Wald remainder by $K_{\Wald}(C)/n$ uniformly over $\mathcal{H}_C$, with $\phi_\ell$ the representer of Lemma~\ref{lem:wald-pathwise}; and the uniform Hadamard bound of Assumption~\ref{ass:omega-uniform-hadamard}, which at $t = n^{-1/2}$ bounds the $\omega$-remainder by $K_\omega(C)/n$. Through the product rule for $\bomega(P)'\bWald(P)$,
$|q_n| \;\leq\; \bigl\{\|\bomega(P_0)\|_{\ell^1} K_{\Wald}(C) + \|\bWald(P_0)\|_{\ell^1} K_\omega(C)\bigr\}/n + O(n^{-1})$; hence $\sqrt n\, q_n \to 0$ at a rate independent of the sequence. Combining the two expansions, $\sqrt n\,(\hat\beta^{\mathrm{RT}}_{\bomega} - \beta^{\mathrm{RT}}_{\bomega}(P_{n, h_n})) \;=\; \frac{1}{\sqrt n}\sum_{i=1}^n \eif_{\omega(\cdot)}(O_i; P_0) \;-\; \langle \eif_{\omega(\cdot)}, h_n \rangle_{L^2(P_0)} \;+\; \tilde\rho_n,
$ with $\tilde\rho_n = o_{P_{n, h_n}}(1)$.

\emph{Gaussian limit.} Along the weakly convergent subsequence, the Cram\'er--Wold device and the Lindeberg condition give the joint Gaussian limit of $(n^{-1/2}\sum_i \eif_{\omega(\cdot)}(O_i; P_0), \log(dP_{n,h_n}^n/dP_0^n))$ with cross-covariance $\langle \eif_{\omega(\cdot)}, h_\infty \rangle$, and Le Cam's third lemma \citep{vanderVaart1998} gives, under $P_{n,h_n}$,
$n^{-1/2}\textstyle\sum_i \eif_{\omega(\cdot)}(O_i; P_0) - \langle \eif_{\omega(\cdot)}, h_n \rangle_{L^2(P_0)} \;\xrightarrow{d}\; N(0, V_\omega(P_0)).$
The limit is subsequence-free: the weak limit $h_\infty$ enters only through the shift, which the centering removes, and $\sigma_h^2$ enters only the likelihood coordinate, which the centered quantity does not see. Since every subsequence admits a further weakly convergent subsequence with the same limit, the convergence holds along the full sequence, and apply Slutsky's theorem on the linerarization gives $\sqrt n\,(\hat\beta^{\mathrm{RT}}_{\bomega} - \beta^{\mathrm{RT}}_{\bomega}(P_{n, h_n})) \xrightarrow{d, P_{n, h_n}} N(0, V_\omega(P_0))$ along any $\{h_n\} \subset \mathcal{H}_C$.

\emph{Pivot.} The consistency $\hat V_n \xrightarrow{p, P_0} V_\omega(P_0)$ of Proposition~\ref{prop:var-estimation} transfers to $P_{n,h_n}$ by contiguity. The limiting variance is positive by hypothesis. Slutsky's theorem makes the studentized pivot $\sqrt n\,(\hat\beta^{\mathrm{RT}}_{\bomega} - \beta^{\mathrm{RT}}_{\bomega}(P_{n,h_n}))/\sqrt{\hat V_n}$ asymptotically standard normal along the sequence, and $\{|t| \leq z_{1-\alpha/2}\}$ is a continuity set of the standard normal; hence the coverage probability converges to $1 - \alpha$ along every sequence in $\mathcal{H}_C$, a contradiction.
\end{proof}

\section{RT and the GMM Weighting-Matrix Class}\label{app:witness-class-supp}

\subsection{The complete family of RT-implementing maps}
\begin{proposition}[RT-implementing weighting matrix maps]\label{prop:witness-class}
Maintain Assumption~\ref{ass:relevance} and a locally interior target: $\omega_\ell(P) > 0$ for every $\ell$ on a neighborhood $\mathcal{U}_0$ of $P_0$, with $\gamma_\ell(P) \neq 0$ throughout (shrinking $\mathcal{U}_0$ if needed). The implicit weights $\blambda(\bW; P) = \bgamma(P) \odot \bv / \bigl(\bgamma(P)'\bv\bigr)$ depend on $\bW$ only through the \emph{effective weight vector} $\bv = \bW(P)\,\bgamma(P)$. A map $\bW(\cdot)$ with $\bW(P) \succ 0$ on $\mathcal{U}_0$ yields the target weights, $\blambda(\bW(P); P) = \bomega(P)$ for every $P$, if and only if
$\bW(P)\,\bgamma(P) \;\propto\; \bomega(P) \oslash \bgamma(P).$
\end{proposition}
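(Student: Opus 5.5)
The plan is to work pointwise in $P\in\mathcal{U}_0$ from the closed form of Proposition~\ref{prop:weighted_iv}(a). First, rewrite $\lambda_\ell(\bW;P)=\gamma_\ell[\bW\bgamma]_\ell/(\bgamma'\bW\bgamma)$ with $\bv\equiv\bW(P)\bgamma(P)$. This gives $\blambda=\bgamma\odot\bv/(\bgamma'\bv)$. The denominator is positive because $\bW(P)\succ0$ and $\bgamma(P)\neq\mathbf 0$. The expression therefore depends on $\bW$ only through $\bv$, which settles the first assertion.

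For sufficiency, suppose $\bv=c(P)\,\bomega\oslash\bgamma$ for some scalar $c(P)$. Then $\bgamma'\bv=c\sum_\ell\omega_\ell=c$. This quantity equals $\bgamma'\bW\bgamma>0$, so $c>0$ automatically. Since $\gamma_\ell v_\ell=c\,\omega_\ell$, we get $\blambda=\bomega$. This is the same computation as in the proof of Proposition~\ref{cor:witness}.

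For necessity, assume $\blambda(\bW(P);P)=\bomega(P)$ and set $c\equiv\bgamma'\bv>0$. Then $\gamma_\ell v_\ell=c\,\omega_\ell$ for every $\ell$. Because $\gamma_\ell(P)\neq0$ on $\mathcal{U}_0$, we can divide to get $v_\ell=c\,\omega_\ell/\gamma_\ell$, i.e. $\bv=c\,\bomega\oslash\bgamma$ with a positive constant. The only care needed is to interpret ``$\propto$'' as proportionality with a positive (indeed $P$-dependent) scalar and to note that the sign is forced by positive definiteness.

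I would add a remark that local interiority ($\omega_\ell>0$) is not needed for the equivalence itself; it guarantees that the set of such maps is nonempty. Indeed, $\diag(\omega_\ell/\gamma_\ell^2)\succ0$ is a member, and the general member is $c\bW^\dagger+\bK$ with $\bK\bgamma=\mathbf 0$, recovering Proposition~\ref{cor:witness}. There is no real obstacle: the main point requiring attention is the division by $\gamma_\ell$, which Assumption~\ref{ass:relevance} and the shrinking of $\mathcal{U}_0$ secure, together with the positivity of $c$.
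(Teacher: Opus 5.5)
Your proposal is correct and follows essentially the same route as the paper's proof: rewrite the implicit weights as $\gamma_\ell v_\ell/(\bgamma'\bv)$ with $\bv=\bW(P)\bgamma(P)$, use $\bgamma'\bv=\bgamma'\bW\bgamma>0$ to fix the sign of the proportionality constant, and divide by $\gamma_\ell\neq0$ for necessity, all pointwise on $\mathcal{U}_0$. Your side remarks are also sound: the equivalence does not use $\omega_\ell>0$, and writing $\bK=\bW-c\bW^\dagger$ shows that every implementing map has the form $c\bW^\dagger+\bK$ with $\bK\bgamma=\mathbf 0$. Interiority is only a sufficient condition for the family to be nonempty, since it is what makes the diagonal member $\bW^\dagger$ available.
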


\begin{proof}
From Proposition~\ref{prop:weighted_iv}, $\lambda_\ell(\bW;P)=\gamma_\ell v_\ell/(\bgamma'\bv)$ with $\bv=\bW(P)\bgamma(P)$; hence the implicit weights depend on $\bW$ only through $\bv$, and $\bgamma'\bv=\bgamma'\bW\bgamma>0$ by positive definiteness. Fix $P\in\mathcal{U}_0$. If $\blambda(\bW(P);P)=\bomega(P)$, then $\gamma_\ell v_\ell=(\bgamma'\bv)\,\omega_\ell$ for every $\ell$, and dividing by $\gamma_\ell\neq0$ gives $\bv=(\bgamma'\bv)\,\bomega\oslash\bgamma$, a positive multiple of $\bomega\oslash\bgamma$. Conversely, if $\bv=c \,\bomega\oslash\bgamma$ for some $c \neq0$, then $\bgamma'\bv=c \sum_\ell\omega_\ell=c >0$ and $\lambda_\ell=\gamma_\ell(c\,\omega_\ell/\gamma_\ell)/c=\omega_\ell$ for every $\ell$. The equivalence holds pointwise on $\mathcal{U}_0$, which is the claim.
\end{proof}

\FloatBarrier
\section{Additional STAR Results}\label{app:star_robustness}

Tables~\ref{tab:star_attrition} and~\ref{tab:star_robustness} report attrition and robustness.

\begin{table}[t]
\centering
\caption{Attrition Analysis: Analysis Sample vs.\ Excluded Observations}
\label{tab:star_attrition}
\begin{tabular}{lccccc}
\toprule
 & In sample & Excluded & Difference & SE & $p$-value \\
\midrule
Female & 0.488 & 0.478 & 0.010 & (0.013) & 0.427 \\
African American & 0.331 & 0.348 & -0.017 & (0.037) & 0.652 \\
White & 0.663 & 0.646 & 0.016 & (0.037) & 0.662 \\
\midrule
Treatment share & 0.457 & 0.529 & & & \\
$N$ & 15,056 & 2,150 & & & \\
\bottomrule
\end{tabular}
\begin{minipage}{0.9\textwidth}
\footnotesize\textit{Notes:} Analysis sample: the 15,056 student$\times$grade observations entering the estimation: small or regular arm, non-missing math score, school with at least 10 students and 3 per arm, student observed in a single school, and a school$\times$grade cell with both arms. Excluded: all other small- or regular-arm observations. Standard errors for the difference, clustered at the school level, in parentheses.
\end{minipage}
\end{table}
\begin{table}[t]
\centering
\caption{Robustness Across Outcomes and Samples}
\label{tab:star_robustness}
\footnotesize\setlength{\tabcolsep}{2pt}
\begin{tabular}{lccccccccc}
\toprule
 & $N$ & $L$ & 2SLS & TSGMM & EGMM & CUE & CS-ATE & $J$ & $p$ \\
\midrule
Math & 15,056 & 80 & 0.224 & 0.205 & 0.172 & 0.176 & 0.224 & 136.2 & $< 0.001$ \\
Reading & 14,876 & 80 & 0.238 & 0.211 & 0.173 & 0.173 & 0.239 & 125.3 & $< 0.001$ \\
Math, pre-switch & 14,357 & 80 & 0.226 & 0.202 & 0.168 & 0.166 & 0.225 & 134.7 & $< 0.001$ \\
\bottomrule
\end{tabular}
\begin{minipage}{0.97\textwidth}
\footnotesize\textit{Notes:} Each row is a separate specification on the single STAR cohort. Outcome: Stanford Achievement Test scaled score for the indicated subject, standardized within grade. Small class (13--17 students) versus regular class (22--25 students). All specifications use within-school$\times$grade FWL and school-specific assignment comparisons, with the second-moment matrix $\hat\bOmega$ two-way clustered at the classroom and student level.
\end{minipage}
\end{table}
 
\subsection{Many-instruments robustness}\label{app:star_mib}

Table~\ref{tab:star_mib} verifies that the 2SLS--EGMM gap is not a many-instruments artifact.

\begin{table}[t]
\centering
\caption{Many-Instruments Robustness}
\label{tab:star_mib}
\begin{tabular}{lccccc}
\toprule
 & 2SLS & LIML & JIVE & LOO-IV & EGMM \\
\midrule
Estimate & 0.224 & 0.224 & 0.224 & 0.224 & 0.172 \\
 & (0.027) & (0.027) & (0.027) & (0.037) & (0.031) \\
\midrule
$L/N$ & \multicolumn{5}{c}{0.005 \;(80 / 15,056)} \\
$\hat\kappa_{\text{LIML}}$ (LIML eigenvalue) & & 1.0307 & & & \\
$D_{\text{dm}} \in \text{col}(Z)$ & \multicolumn{5}{c}{Yes (first stage exact)} \\
\bottomrule
\end{tabular}
\begin{minipage}{0.92\textwidth}
\footnotesize\textit{Notes:} Effect of attending a small class (13--17 students) versus a regular class (22--25 students) on the grade-standardized Stanford math score, single STAR cohort. $L = 80$ within-school assignment comparisons ($Z_s = \tilde{D} \cdot \mathbf{1}\{\text{school} = s\}$), $N = 15,056$ student$\times$grade observations. All specifications use within-school$\times$grade FWL; standard errors are two-way cluster-robust at the classroom and student level (jackknife SE for LOO-IV).
\end{minipage}
\end{table}
 
\subsection{Calibrated simulation}\label{app:mc_star}

The compensatory-PRTE estimator of Section~\ref{sec:star} is validated on a stacked-STAR DGP calibrated to the estimation sample ($L = 80$ schools) and drawn at twice its size each replication. The cross-fit point retains a bias from the classroom and cohort shocks common to both half-samples and from the convex weight function applied to the noisily estimated baselines; the doubled size keeps it small relative to the interval widths, and the coverage comparison isolates the standard errors.
 Per school $\ell$, from the estimation sample: enrollment share $\theta_\ell$, assignment probability $p_\ell$, regular-class baseline $\mu_{0\ell} = \E[Y(0) \mid S = \ell]$, school effect $\LATE_\ell$, and the two arm variances. The within-school residual variance splits into a persistent student component, a classroom component, a school$\times$grade$\times$year component, and idiosyncratic noise.

In each replication: First, resample student records with replacement to twice the realized size (two independent cohort-year blocks); enrollment shares vary across replications. Then, draw $Y_i = \mu_{0\ell} + \LATE_\ell D_i + \alpha_{s(i)} + \gamma_{c(i)} + \delta_{\ell g} + e_i$ with the calibrated component variances: $\alpha$ persistent student (one draw per student), $\gamma$ classroom, $\delta$ school$\times$grade$\times$year (common to both arms), $e$ idiosyncratic. Next, Residualize $Y, D$ on school$\times$grade-year cells (FWL) and set $Z_{\ell, i} = D^{dm}_i \ind\{S_i = \ell\}$, giving $\bSigmaZ$ diagonal by school. Then, Cross-fit the compensatory PRTE: each school's baseline and Wald come from independent half-samples of its students ($40$ splits averaged). Form the naive standard error (weights fixed, Wald channel only) and the full-influence-function standard error, both two-way cluster-robust at the classroom and student level. Finally, score the nominal-$95\%$ intervals against the population target $\sum_\ell \omega_\ell(\mu_{0\ell})\,\LATE_\ell$.

\section{Additional Patent Results}\label{app:patent_supp}

\subsection{Patent examiner design: supplementary tables}

Table~\ref{tab:patent_groups} summarizes the seven leniency groups.

\begin{table}[t]
\centering
\caption{Examiner Leniency Group Summary Statistics}
\label{tab:patent_groups}
\begin{tabular}{lcccccc}
\toprule
 & $N$ & Examiners & Lenience & Approved & Citations & Follow-on \\
\midrule
G1 (Strict) & 4,920 & 1,683 & 0.098 & 0.263 & 1.90 & 1.13 \\
G2 & 4,930 & 1,792 & 0.347 & 0.511 & 3.84 & 2.01 \\
G3 & 4,908 & 1,329 & 0.514 & 0.613 & 4.55 & 2.16 \\
G4 & 4,919 & 1,120 & 0.615 & 0.692 & 3.88 & 2.13 \\
G5 & 4,920 & 1,073 & 0.690 & 0.758 & 7.87 & 2.63 \\
G6 & 4,918 & 973 & 0.756 & 0.820 & 6.65 & 2.74 \\
G7 (Lenient) & 4,919 & 944 & 0.845 & 0.861 & 14.35 & 4.16 \\
\midrule
Total & 34,434 & 5,915 & & 0.646 & 6.15 & 2.42 \\
\bottomrule
\end{tabular}
\begin{minipage}{0.92\textwidth}
\footnotesize\textit{Notes:} Examiners grouped into seven quantiles by leave-one-out approval rate. Group-row examiner counts are distinct examiners appearing in that application-level leniency group. Citations: total citations received by all subsequent patent applications filed by the same startup, counted over the five years following each application's public disclosure. Follow-on: total patent applications filed after the first-action decision on the startup's first application.
\end{minipage}
\end{table}
 
\subsection{First-stage monotonicity diagnostics}\label{app:patent_monotonicity_diag}

Table~\ref{tab:patent_binned_monotonicity} reports first-stage monotonicity diagnostics for the seven-bin patent cumulative design with art-unit-by-year cells as the conditioning vector $X_i$.

\begin{table}[t]
\centering
\caption{First-stage monotonicity diagnostics for the patent cumulative-leniency design}
\label{tab:patent_binned_monotonicity}
\small
\begin{tabular}{l r c c}
\toprule
Subgroup & $N$ & Single-threshold min $\hat\beta_\ell$ & Joint cumulative min $\hat\beta_\ell$ \\
\midrule
All & 34,434 & $0.132$ (0.009) & $0.036$ (0.009) \\
\midrule
Early cohort (2001--04) & 20,464 & $0.122$ (0.010) & $0.036$ (0.011) \\
Late cohort (2005--09) & 13,970 & $0.151$ (0.016) & $0.036$ (0.017) \\
\midrule
TC16 Biotech & 2,468 & $0.110$ (0.057) & $0.021$ (0.076) \\
TC17 Chemical & 3,659 & $0.094$ (0.034) & $0.003$ (0.023) \\
TC21--28 Comp/EE/IT & 10,947 & $0.142$ (0.012) & $0.036$ (0.021) \\
TC36 Business methods & 9,165 & $0.152$ (0.018) & $0.023$ (0.024) \\
TC37 Mechanical & 8,195 & $0.115$ (0.019) & $0.008$ (0.017) \\
\bottomrule
\end{tabular}
\begin{minipage}{0.95\textwidth}
\footnotesize\textit{Notes:} First-stage regressions of patent approval on the cumulative-threshold leniency instruments $Z_\ell = \ind\{G \geq \ell + 1\}$ for $\ell = 1, \ldots, 6$, with art-unit$\times$year fixed effects. \emph{Single-threshold}: six separate regressions, one threshold at a time; ``min $\hat\beta_\ell$'' is the smallest of the six (the proof-based diagnostic of Appendix~\ref{app:cumulative_monotone}). \emph{Joint cumulative}: one regression on all six thresholds simultaneously; the partial coefficients aggregate cell-level adjacent increments by FWL (supplemental evidence of Appendix~\ref{app:cumulative_monotone}). The parenthetical entry is the examiner-clustered standard error.
\end{minipage}
\end{table}
 
\subsection{Robustness across specifications}\label{app:patent_robust}
 Figure~\ref{fig:patent_jtest_robustness} re-partitions examiners into $Q$ groups and recomputes the cluster-robust $J$-statistic for each $Q$.

\begin{figure}[!h]
  \centering
  \includegraphics[width=0.85\textwidth]{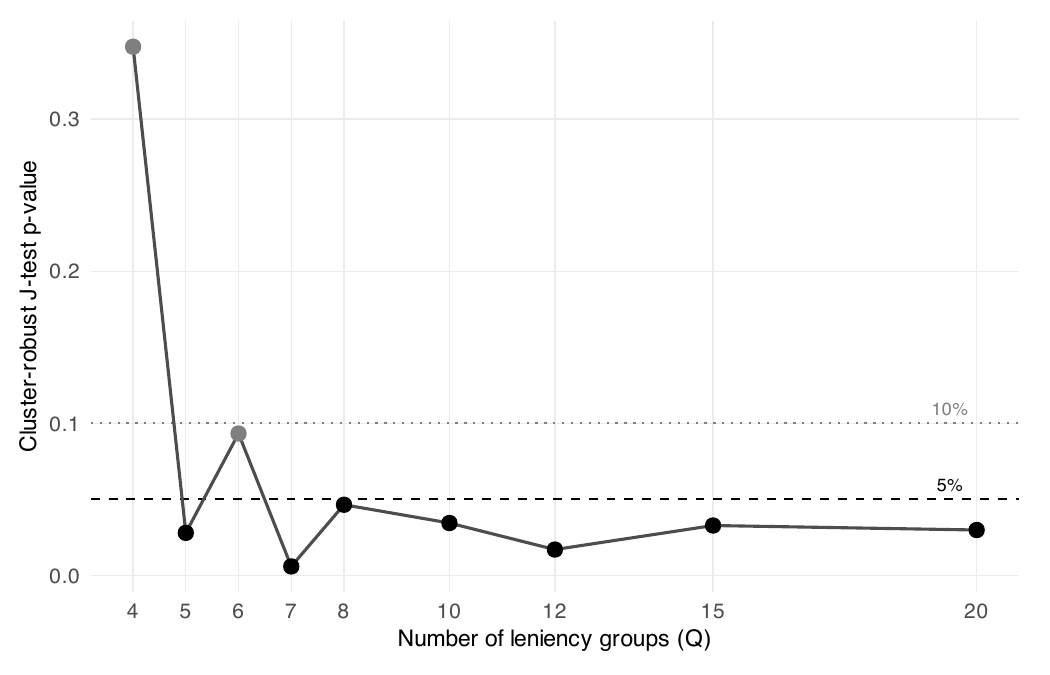}
  \caption{Cluster-robust $J$-diagnostic $p$-value against the number of leniency groups $Q$; the dashed and dotted lines mark the $5\%$ and $10\%$ levels. }
  \label{fig:patent_jtest_robustness}
\end{figure}

\subsection{Policy and Wald reconstruction weights}\label{app:patent_prte_conditions}

The policy and its Wald reconstruction load the same six observed leniency transitions but can weight art-unit-by-year cells differently. We therefore maintain the following condition. To state the condition, let $x$ index an art-unit-by-application-year cell, let $\mu_x=\Pr(X=x)$ and $s_{xg}=\Pr(G=g\mid X=x)$, and let $\mathsf T(x)$ denote the broad technology group containing cell $x$. We estimate $D_i=\alpha^D_{x(i)}+\sum_{g=2}^{7}\delta_{\mathsf T(x(i)),g}\ind\{G_i=g\}+v_i$, normalizing $\delta_{\mathsf T(x),1}=0$ for every $x$. We index each observed transition by its destination group: $\theta_{xj}\equiv\theta_{\mathsf T(x),j}=\delta_{\mathsf T(x),j}-\delta_{\mathsf T(x),j-1}$, $j=2,\ldots,7$. The detailed decomposition below maintains this pooled first-stage model for $\E[D\mid G,X]$ on the observed cell--group support. All raw estimated adjacent gains are positive. The normalized observed staircase weight on cell--transition pair $(x,j)$ is $h^{\mathrm{obs}}_{xj}=\mu_xs_{x,j-1}\theta_{xj}/\{\sum_{x'}\sum_{r=2}^{7}\mu_{x'}s_{x',r-1}\theta_{x'r}\}$, and its aggregate transition weight is $\psi^{\mathrm{obs}}_j=\sum_x h^{\mathrm{obs}}_{xj}$, $j=2,\ldots,7$. Let $b_{xj}$ denote the average effect of approval for the cell-$x$ applicants induced into approval by the transition from group $j-1$ to group $j$. For threshold instrument $Z_\ell$, define its detailed Wald weight as $w_{\ell xj}=\mu_x\theta_{xj}C_{\ell,j-1,x}/\{\sum_{x'}\sum_{r=2}^{7}\mu_{x'}\theta_{x'r}C_{\ell,r-1,x'}\}$, where $C_{\ell k,x}=\Cov(Z_\ell,Z_k\mid X=x)$. Under the maintained IV model and pooled first-stage restriction, $\Wald_\ell=\sum_x\sum_{j=2}^{7}w_{\ell xj}b_{xj}$ and $A_{\ell j}=\sum_xw_{\ell xj}$. When $A$ has full rank, $\boldsymbol\rho^{\mathrm{obs}}=A^{-T}\bpsi^{\mathrm{obs}}$ is the unique linear combination of the six Wald estimands whose aggregate weights match the policy weights on every observed margin. In particular, define $g^{\mathrm{obs}}_{xj}=\sum_{\ell=1}^6\rho^{\mathrm{obs}}_\ell w_{\ell xj}$; then $\sum_xg^{\mathrm{obs}}_{xj}=\sum_xh^{\mathrm{obs}}_{xj}=\psi^{\mathrm{obs}}_j$. The exact maintained condition is
\begin{equation}
\underbrace{\sum_x\sum_{j=2}^{7}h^{\mathrm{obs}}_{xj}b_{xj}}_{\text{observed policy component}}
=
\underbrace{{\boldsymbol\rho^{\mathrm{obs}}}'\bWald}_{\text{Wald reconstruction}}
\quad\Longleftrightarrow\quad
\sum_x\sum_{j=2}^{7}(h^{\mathrm{obs}}_{xj}-g^{\mathrm{obs}}_{xj})b_{xj}=0.
\label{eq:patent_weighting_agreement}
\end{equation}
Equation~\eqref{eq:patent_weighting_agreement} permits arbitrary heterogeneity across cells and margins provided that it is orthogonal, on average, to the difference between the two weighting schemes.

The condition applies only to the six observed transitions. For the new group-$7$ to group-$8$ transition, the extrapolation sets $\theta_{x8}\equiv\frac{1}{6}\sum_{j=2}^{7}\theta_{xj}$. The staircase itself therefore fixes the frontier share $\eta=\{\sum_x\mu_xs_{x7}\theta_{x8}\}/\{\sum_x\mu_x[\sum_{j=2}^{7}s_{x,j-1}\theta_{xj}+s_{x7}\theta_{x8}]\}$ and its cell weights $\nu_{x8}=\mu_xs_{x7}\theta_{x8}/\{\sum_{x'}\mu_{x'}s_{x'7}\theta_{x'8}\}$. We assess~\eqref{eq:patent_weighting_agreement} with the diagnostic that asks whether cells weighted differently by the staircase policy and its reconstruction have systematically different approval effects. Using all policy cells, let $\widehat m_x^{\mathrm{obs}}=\widehat\mu_x^{-1}\sum_{j=2}^{7}\widehat h^{\mathrm{obs}}_{xj}$ denote the sample policy density, let $\widehat c_{\ell x}=\widehat{\Cov}(Z_\ell,D\mid X=x)$, and let $\widehat{\bar c}_\ell=\sum_x\widehat\mu_x\widehat c_{\ell x}$. The sample reconstruction density is $\widehat g_x^{\mathrm{obs}}=\sum_{\ell=1}^L(\widehat\rho^{\mathrm{obs}}_\ell/\widehat{\bar c}_\ell)\widehat c_{\ell x}$, and its difference from the policy density is $\widehat r_x^{\mathrm{obs}}=\widehat g_x^{\mathrm{obs}}-\widehat m_x^{\mathrm{obs}}$. For each outcome, we estimate the regression $Y_i
=
\alpha^Y_{x(i)}
+q_3(\widehat p_i)
+D_i\widetilde q_3(\widehat p_i)
+\gamma_{\mathrm{weight}}D_i(\widehat r_{x(i)}^{\mathrm{obs}}-\overline{\widehat r}^{\mathrm{obs}})
+\varepsilon_i,
$
where $\alpha^Y_{x(i)}$ are outcome-regression cell fixed effects, $\widehat p_i$ is the fitted first-stage index, $\overline{\widehat r}^{\mathrm{obs}}$ is the sample mean of $\widehat r_{x(i)}^{\mathrm{obs}}$, and $q_3$ and $\widetilde q_3$ are cubics. The flexible index terms allow both outcome levels and approval effects to vary with fitted leniency. The coefficient $\gamma_{\mathrm{weight}}$ asks whether the cell-weight difference is associated with larger or smaller approval effects. Table~\ref{tab:patent_prte_conditions} reports the diagnostic for both outcomes.

\begin{table}[!b]
\centering
\caption{Staircase policy--reconstruction weighting check}
\label{tab:patent_prte_conditions}
\small
\setlength{\tabcolsep}{3pt}
\begin{tabular}{lcc}
\toprule
 & Citations to subsequent applications & Follow-on applications \\
\midrule
Bootstrap $p$-value & 0.195 & 0.365 \\
\bottomrule
\end{tabular}
\begin{minipage}{0.97\textwidth}
\footnotesize\textit{Notes:} Patent sample, art-unit-by-year cells. Each entry tests $H_0\colon\gamma_{\mathrm{weight}}=0$. The $p$-values use an examiner-level pairs cluster bootstrap ($B=200$ for each outcome).
\end{minipage}
\end{table}
 
\subsection{Computation of identification-gap bounds and confidence regions}\label{app:patent_prte_bounds}

Let $\mathcal X_8=\{x:s_{x7}\theta_{x8}>0\}$ denote the cells with positive $7\to8$ policy flow. Every observed Wald allocation assigns zero weight to this new transition, whereas $\eta>0$, so the complete target lies outside the Wald span. Under~\eqref{eq:patent_weighting_agreement}, the identification gap is $\Delta=\{(1-\eta)\boldsymbol\rho^{\mathrm{obs}}-\bomega^\dagger\}'\bWald+\eta\sum_{x\in\mathcal X_8}\nu_{x8}b_{x8}$, where the first term is identified. The nonnegative envelope restricts only $b_{x8}\geq0$ on $\mathcal X_8$ and gives the one-sided bound $\Delta\in[\{(1-\eta)\boldsymbol\rho^{\mathrm{obs}}-\bomega^\dagger\}'\bWald,\infty)$. The increasing-MTE specification adds $b_{x2}\leq\cdots\leq b_{x7}$ in every cell and $b_{x7}\leq b_{x8}$ on $\mathcal X_8$. The sensitivity rows further impose $b_{x8}\leq c\max_\ell|\widehat\Wald_{\ell}|$.

Under the full-rank condition above and Assumption~\ref{ass:strict-complementarity}, the lower endpoint under the nonnegative envelope, $\delta_0=\{(1-\eta)\boldsymbol\rho^{\mathrm{obs}}-\bomega^\dagger\}'\bWald$, is smooth and supplies an internal influence-function check. A non-singleton optimal face can make an LP value map directionally, rather than fully, Hadamard differentiable. We instead apply the numerical directional delta method of \citet{HongLi2018} to the complete endpoint map, subject to the conditions below. Let $\vartheta$ collect the joint cell-by-group probabilities, the technology-specific adjacent first-stage gains, and the six outcome Wald moments. For each Gaussian examiner-cluster multiplier direction $G^*$, the same multipliers move every component of $\vartheta$, preserving their covariance. With small $\epsilon_n$, a positive probability coordinate follows $\widehat p_j(\epsilon_n)=\widehat p_j\{1+\epsilon_nG^*_{p,j}/(2\widehat p_j)\}^2/\sum_k\widehat p_k\{1+\epsilon_nG^*_{p,k}/(2\widehat p_k)\}^2$, a positive first-stage coordinate follows $\widehat\theta_j(\epsilon_n)=\widehat\theta_j\{1+\epsilon_nG^*_{\theta,j}/(2\widehat\theta_j)\}^2$, and $\widehat W_{\ell,Y}(\epsilon_n)=\widehat W_{\ell,Y}+\epsilon_nG^*_{W,\ell,Y}$.

For every draw, we reconstruct all detailed Wald weights, policy flows, the weighting-agreement equality, and $M_Y$ from these perturbed primitives, recompute the convex projection, and solve both endpoints of every applicable LP. Thus the endpoint direction is $\{\Phi(\widehat\vartheta(\epsilon_n))-\Phi(\widehat\vartheta)\}/\epsilon_n$. The step obeys $\epsilon_n\to0$ and $\sqrt n\epsilon_n\to\infty$. For inference, we maintain that the complete endpoint map $\Phi$ is Hadamard directionally differentiable at $\vartheta_0$, tangentially to the support of the examiner-cluster Gaussian limit, and that the multiplier law consistently estimates that limit. We also maintain a fixed finite dictionary, locally stable cell, leniency-group, technology-profile, and frontier-support labels, local feasibility and finiteness of each reported endpoint, and that every primitive coordinate retained as positive is locally bounded away from zero. Writing $R_{\widehat\vartheta}(\epsilon h)$ for the displayed joint path, these conditions give $\sup_{h\in K}\| \{R_{\widehat\vartheta}(\epsilon_nh)-\widehat\vartheta\}/\epsilon_n-h\|\to_p0$ for every compact tangent set $K$, so the path is first-order equivalent to the additive perturbation in \citet{HongLi2018}. The numerical directional delta method therefore consistently estimates the pointwise joint limit law. One-sided regions use the 95th percentile of the lower-endpoint direction; finite regions use the 95th percentile of the maximum of the standardized lower- and upper-endpoint errors.

\end{document}